\DeclarePairedDelimiter{\ceil}{\lceil}{\rceil}
\DeclarePairedDelimiter{\floor}{\lfloor}{\rfloor}
\DeclarePairedDelimiter{\inn}{\langle}{\rangle}
\DeclarePairedDelimiter{\norm}{\lVert}{\rVert}
\newcommand{\R}{\mathbb{R}}
\newcommand{\CC}{\mathbb{C}}
\newcommand{\N}{\mathbb{N}}
\newcommand{\cl}{\overline}
\newcommand\Define[1]{\textbf{#1}}
\DeclareMathOperator\Split{Split}
\newcommand{\sa}{\text{sa}}
\newcommand{\opp}{\text{op}}
\newcommand{\im}{\text{im}\,}
\newcommand{\asrt}{\text{asrt}}
\newcommand{\sse}{\subseteq}
\newcommand{\ie}{i.e.\xspace}
\newcommand{\id}{\text{id}}
\newcommand\after{\mathop{\circ}}
\newcommand\ssquare{\scaleobj{0.6}{\square}}
\theoremstyle{definition}
\newtheorem{theorem}{Theorem}
\newtheorem*{theorem*}{Theorem}
\newtheorem{proposition}[theorem]{Proposition}
\newtheorem{lemma}[theorem]{Lemma}
\newtheorem{definition}[theorem]{Definition}
\newtheorem{corollary}[theorem]{Corollary}
\newtheorem{example}[theorem]{Example}
\newtheorem{remark}[theorem]{Remark}
\newcommand{\mult}{\mathrel{\&}}
\newcommand{\commu}{\mathrel{\lvert}}
\tikzstyle{box}=[shape=rectangle, text height=1.5ex, text depth=0.25ex, yshift=0.5mm, fill=white, draw=black, minimum height=5mm, yshift=-0.5mm, minimum width=5mm, font={\small}]
\tikzstyle{Z dot}=[inner sep=0mm, minimum size=2mm, shape=circle, draw=black, fill={rgb,255: red,221; green,255; blue,221}]
\tikzstyle{Z phase dot}=[minimum size=1.2em, font={\footnotesize\boldmath}, shape=rectangle, rounded corners=0.5em, inner sep=0.2em, outer sep=-0.2em, scale=0.8, tikzit shape=circle, draw=black, fill={rgb,255: red,221; green,255; blue,221}, tikzit draw=blue]
\tikzstyle{X dot}=[Z dot, shape=circle, draw=black, fill={rgb,255: red,255; green,136; blue,136}]
\tikzstyle{X phase dot}=[Z phase dot, tikzit shape=circle, tikzit draw=blue, fill={rgb,255: red,255; green,136; blue,136}, font={\footnotesize\boldmath}]
\tikzstyle{hadamard}=[fill=yellow, draw=black, shape=rectangle, inner sep=0.6mm, minimum height=1.5mm, minimum width=1.5mm]
\tikzstyle{vertex}=[inner sep=0mm, minimum size=1mm, shape=circle, draw=black, fill=black]
\tikzstyle{vertex set}=[inner sep=0mm, minimum size=1mm, shape=circle, draw=black, fill=white, font={\footnotesize\boldmath}]
\tikzstyle{target}=[inner sep=0mm, minimum size=3mm, shape=circle, draw=black]
\tikzstyle{hadamard edge}=[-, dashed, dash pattern=on 2pt off 1.5pt, thick, draw={rgb,255: red,68; green,136; blue,255}]
\tikzstyle{brace edge}=[-, tikzit draw=blue, decorate, decoration={brace,amplitude=1mm,raise=-1mm}]
\tikzstyle{diredge}=[->]
\tikzstyle{dashed edge}=[-, dashed, dash pattern=on 2pt off 0.5pt, draw=black]
\newcommand{\EA}{\mathbf{EA}}
\newcommand{\cat}[1]{\mathbf{#1}}
\newcommand{\catC}{\cat{C}}
\newcommand{\catD}{\cat{D}}
\newcommand{\truth}{\mathbf{1}}
\newcommand{\falsity}{\mathbf{0}}
\newcommand{\pproj}{\mathord{\vartriangleright}}
\newcommand{\Pred}{\mathrm{Pred}}
\title{A computer scientist's reconstruction of quantum theory}
\author{Bas Westerbaan}
\email{bas@westerbaan.name}
\affiliation{Cloudflare}
\thanks{The majority of the work was carried out while employed
    at University College London and Radboud Universiteit.}
\author{John van de Wetering}
\email{john@vandewetering.name}
\affiliation{Radboud Universiteit Nijmegen}
\affiliation{Oxford University}
\begin{document}

% \title{A computer scientist's reconstruction of quantum theory}

% \author{%
% \IEEEauthorblockN{\censor{Bas Westerbaan}}
% \IEEEauthorblockA{\censor{Radboud Universiteit Nijmegen}\\
%                 \censor{bas@westerbaan.name}}
% \and\IEEEauthorblockN{\censor{John van de Wetering}}
% \IEEEauthorblockA{\censor{Radboud Universiteit Nijmegen}\\
%                 \censor{john@vdwetering.name}}}

\maketitle

\begin{abstract}
The rather unintuitive nature of quantum theory has led numerous people to develop sets of (physically motivated) principles that can be used to derive quantum mechanics from the ground up, in order to better understand where the structure of quantum systems comes from.
From a computer scientist's perspective we would like to study quantum theory in a way that allows interesting transformations and compositions of systems and that also includes infinite-dimensional datatypes. Here we present such a compositional reconstruction of quantum theory that includes infinite-dimensional systems.
This reconstruction is noteworthy for three reasons: it is only one of a few that includes no restrictions on the dimension of a system; it allows for both classical, quantum, and mixed systems; and it makes no a priori reference to the structure of the real (or complex) numbers. This last point is possible because we frame our results in the language of category theory, specifically the categorical framework of effectus theory. 
\end{abstract}

\clearpage
\tableofcontents

\clearpage
\section{Introduction}

Quantum theory is famously unintuitive. Furthermore, it is not a priori clear why its mathematical machinery (complex Hilbert spaces, bounded operators, tensor products) should lead to a correct description of nature.
This has led numerous people throughout the last hundred years to try and reconstruct quantum theory from first principles.
The idea here being that if one can find a set of reasonable assumptions that are only satisfied by quantum theory and not by any other hypothetical physical theory then one has a better grasp on understanding why this mathematics describe nature so well.

There are many such \emph{reconstructions of quantum theory}.
Much early work was based on the orthomodular lattices of von Neumann's \emph{quantum logic}~\cite{birkhoff1936logic}. These approaches focused on the \emph{sharp} observables (projections) of a quantum system and mostly considered an infinite-dimensional system in isolation, i.e.~one which is not composable with other systems (see~\cite{coecke2000operational} for a review).
In contrast, much of the work on reconstructions in the last two decades has instead focused on finite-dimensional systems that can interact with each other and be combined into composite systems. 
Most of these results take an \emph{operational} approach, which entails
that they fundamentally presuppose the nature of classical probability
theory in order to describe classical interactions such as measurement
and probabilistic mixtures of
processes~\cite{barrett2007information,dariano2017firstprinciples}.
This requires the a priori usage of real numbers and convex sets in their frameworks.
The principles themselves in these approaches come in many different guises: some are based on information processing properties~\cite{chiribella2011informational,barnum2014higher,clifton2003characterizing,fivel2012derivation}, others on properties of entanglement~\cite{chiribella2016entanglement,masanes2014entanglement}, properties of pure processes~\cite{tull2016reconstruction,selby2018reconstructing,wetering2018reconstruction}, or on any other of a multitude of properties~\cite{hohn2017quantum,hardy2001quantum,hardy2011reformulating,krumm2017thermodynamics,masanes2011derivation,niestegge2020simple,wetering2018sequential}.
Each of these approaches sheds new light on how quantum theory is `special' among a large selection of hypothetical physical theories.

From a computer scientist's point of view, the compositional nature of many modern works that shed light on the interactions of systems is preferable over the older work that dealt with systems in isolation. However, the restriction to finite dimension is less desirable, as many natural datatypes to describe programs require infinite-dimensional systems
Indeed, in order to describe, say, the natural numbers type in a quantum programming language we require an infinite-dimensional algebra~\cite{cho2016semantics,pechoux2020quantum,cho2016neumann}.
In addition, most reconstructions of quantum theory employ principles that are only satisfied by quantum systems, but not mixed classical-quantum systems. This prevents the inclusion of systems needed to describe quantum programming languages that have a `quantum data/classical control' architecture~\cite{selinger2004towards,green2013quipper}.

One then wonders whether there is a reconstruction of quantum theory that includes infinite-dimensional systems and also allows for mixed classical-quantum systems. In addition it would be desirable if we could sidestep the a priori usage of real numbers and convex sets and instead work in a more abstract categorical setting.
To phrase this question more concretely:
\begin{quote}
 Are there nice assumptions on a category
    such that any such category must be
        a category of quantum types with quantum programs between them?
\end{quote}
In this paper we present
    a reconstruction with these three desirable properties.
Firstly, our assumptions hold for infinite-dimensional types,
in contrast to the axioms of the vast majority of the existing
    reconstructions.
Secondly, our axioms include classical types and mixed quantum-classical types, whereas most reconstructions
    restrict to purely-quantum types.
Finally, our axioms do not presuppose the real or complex numbers.
As far as we are aware, this is the first reconstruction with all these desirable properties together.
Additionally, the core of our reconstruction does not assume a symmetric monoidal structure. We only need the presence of a tensor product for the final step.

Our axioms split roughly into three groups.  The first group
	specifies our basic framework: the category is an \emph{effectus}~\cite{cho2015introduction,kentathesis}, a basic type of structure that has very minimal assumptions while still allowing us to speak of states and predicates.
	We make the additional familiar assumption that the predicate spaces are directed-complete (i.e.~that they form a \emph{dcpo}).

The second group deals with additional categorical structure, \emph{filters} and \emph{comprehensions}~\cite{cho2015quotient}, that imposes the well-behavedness of certain filters and pure maps. 
    We believe these assumptions and the structure they imply might be of interest in their own right, and so we give a name to the sort of effectus with these properties: a \emph{$\diamond$-effectus} (pronounced `diamond-effectus').

Finally, in the third group, we require some more operationally motivated axioms
 which state the well-behavedness of the operation of \emph{sequential measurement} (to wit, we require our predicate spaces to form \emph{sequential effect algebras}~\cite{gudder2002sequential}).

Our assumptions are satisfied by the category of von Neumann algebras with normal positive linear contractions in the opposite direction (representing quantum theory) and also by the category of complete Boolean algebras (representing deterministic classical logic). Our main result is a rough converse to this.
% Nevertheless, our results show that it is indeed possible to have a reconstruction that includes infinite-dimensional classical-quantum systems and doesn't a priori require the use of real numbers.
% We hope that with further study these axioms could be replaced by more `clean' categorical assumptions.
More formally, our reconstruction proceeds in three steps.
First, we show that a category satisfying our assumptions embeds into the product of the category of complete Boolean algebras and the category of directed-complete \emph{JB-algebras}~\cite{hanche1984jordan}. JB-algebras are a type of infinite-dimensional Jordan algebra that are closely related to C$^*$-algebras. 
The reason our category embeds into a product category of Boolean algebras and JB-algebras is because the scalars in the category are `spatial' and can be probabilistic in one part of the space and sharp in another part. By restricting to categories with `irreducible' scalars we show that it either embeds into the category of Boolean algebras \emph{or} the category of \emph{JBW-algebras},
a particularly well-behaved type of JB-algebra that is closely related to von Neumann algebras.
Hence, for irreducible scalars we get a dichotomy between classical deterministic logic, and a theory of quantum systems.
Finally, we impose additional symmetric monoidal structure on the category, so that we can form composite systems.
This forces each JBW-algebra to be a \emph{JW-algebra}, a Jordan algebra that embeds into a von Neumann algebra.

We give a schematic overview of the proof and the different intermediate results and structures in Figure~\ref{fig:flowchart}.

\begin{figure}[!htb]
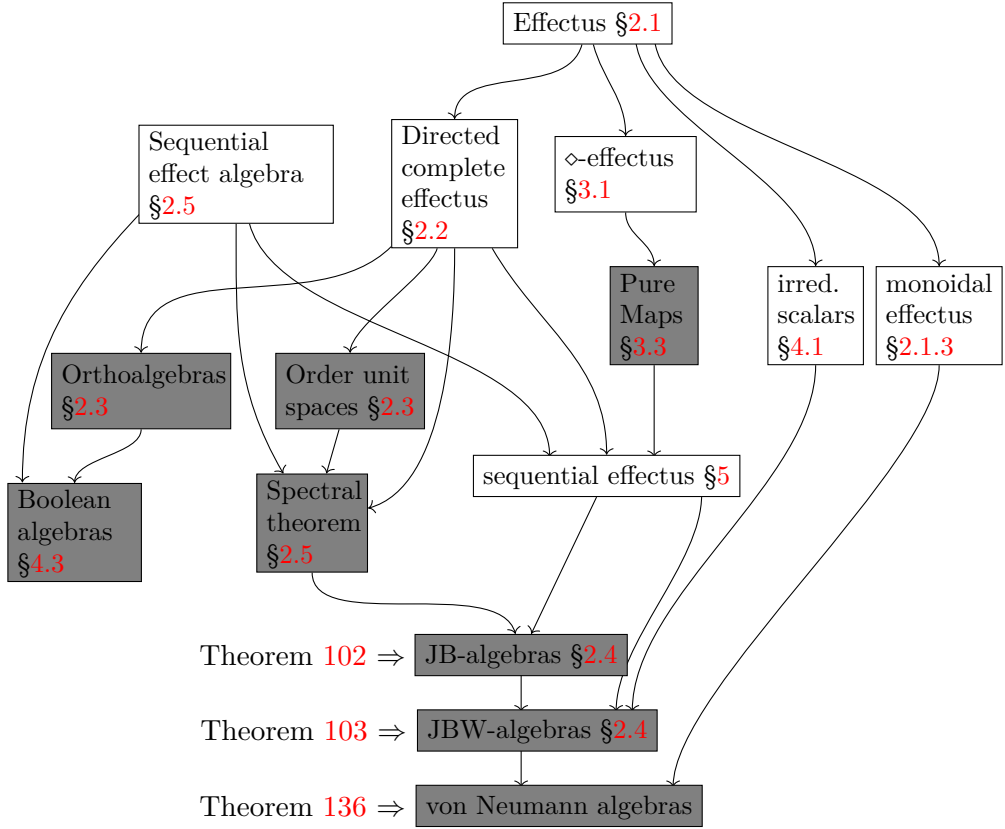

    \ctikzfig{flowchart}
    \caption{Structure of the proof. White boxes represent assumptions, while grey boxes represent derived concepts and structures. An arrow from $A$ to $B$ denote that the concept or proof for $B$ depends on $A$.}
    \label{fig:flowchart}
\end{figure}

While systems in our category correspond to either Boolean algebras or JBW-algebras, our assumptions don't force any of the JBW-algebras to be `quantum-like'. For instance, the category of associative JBW-algebras (or equivalently, commutative von Neumann algebras) and normal positive linear contractions maps satisfies all our assumptions. However, these algebras are all classical in the sense that they correspond to measurable spaces. We see the possibility of fully classical examples as a strength of our approach, as it means our assumptions capture those properties that are shared between classical, quantum, and quantum-classical systems, without restricting to some subset of these systems a priori. The existence of quantum systems can be forced on the category by assuming any of a multitude of assumptions that are only satisfied by quantum systems. For instance, we could assume that each map can be dilated~\cite[Section~3.7.1]{basthesis}, 
similar to the requirement of the existence of purifications in~\cite{chiribella2011informational,tull2016reconstruction,selby2018reconstructing}.

\subsection{Related work}

This reconstruction essentially combines two previous reconstructions by one of the authors~\cite{wetering2018reconstruction,wetering2018sequential}. The first of these~\cite{wetering2018reconstruction} also used the effectus framework and used assumptions related to pure maps. The second~\cite{wetering2018sequential} used assumptions based on sequential measurement. Both of these reconstructions relied on the convex structure imposed by the real numbers and were restricted to finite dimension.
In this paper we combine the assumptions of these reconstructions. This allows us to remove these restrictions on dimension and convexity.

Some other reconstructions that are similar in that they are framed in the language of category theory are that of Tull~\cite{tull2016reconstruction} and Selby et al.~\cite{selby2018reconstructing}. These are both inspired by the Oxford school of categorical quantum mechanics and as such deal with symmetric monoidal categories, dagger structures, and compact closure (i.e.~\emph{cups and cap}s, also known as \emph{map-state duality} or the \emph{Choi-Jamio\l{}kowski isomorphism}). Tull's reconstruction is almost entirely categorical, retrieving a category of matrices over a particular type of ring. To retrieve quantum theory one then only has to impose that the ring in question is the complex numbers. The assumptions of the reconstruction are essentially those of the Pavia reconstruction~\cite{chiribella2011informational}, but then translated into the language of category theory. 
The reconstruction of Selby et al.~\cite{selby2018reconstructing} imposes a more standard GPT framework at the start of the reconstruction, but the assumptions themselves are all clearly motivated from a categorical viewpoint.
These two reconstructions are inherently restricted to finite dimension as compact closure is a core property of them, although it is conceivable that there is a way around that by using non-standard analysis~\cite{EPTCS236.4}. Another significant difference is that these reconstructions rely on the Oxford school of categorical quantum mechanics, whilst our reconstruction is more closely aligned to `standard' category theory in the sense that many of our assumptions can be framed in terms of universal properties.

One selling point of our work is that we don't need to assume the structure of the real numbers a priori. Some other ways to get the correct set of scalars are known as well.
A classical result is that of Sol\`er~\cite{soler1995characterization}, who showed that if an infinite-dimensional generalised Hilbert space over some division ring is orthomodular, then the ring in question must be the real numbers, complex numbers or the quaternions. 
Another approach is given by the work of Heunen~\cite{heunen2009embedding} and Vicary~\cite{vicary2011completeness}. They both derive (related) sets of conditions under which the scalars of a suitable dagger-category embed into the complex numbers, and in Heunen's case, under which the category itself embeds into the category of complex Hilbert spaces. Whereas we work in the setting of effectuses and impose an order-theoretic condition, directed completeness, they work in dagger categories and impose a cardinality condition, that the number of scalars is at most equal to the continuum.
A drawback of their results is that the scalars only \emph{embed} into the complex numbers. For instance, the field of rational numbers is allowed in their results and so is the (non-Archimedean) field of rational functions. This embedding generally does not preserve the ordering of the elements.
Very recently, Heunen and Kornell improved upon the result by Heunen and found a set of categorical conditions that force a category to be equivalent to the category of real or complex Hilbert spaces (containing both finite- and infinite-dimensional spaces)~\cite{heunen2021axioms}. Their result uses Sol\`er's theorem to show the ring of scalars is the field of real or complex numbers.
Their axioms are categorically natural and based on the theory of dagger monoidal categories with dagger biproducts.

While most (modern) reconstructions focus on finite-dimensional systems, there are some exceptions. A particularly relevant one is the work of Alfsen and Shultz~\cite{alfsen2012state,alfsen2012geometry}. They find geometric conditions for when a convex set is isomorphic to the state space of a quantum system or, more generally, a Jordan operator algebra. Our proof works essentially by showing that our spaces satisfy (something similar to) the conditions they find.
A number of reconstructions of infinite-dimensional quantum theory rely heavily on the work of Alfsen and Shultz, for instance~\cite{niestegge2012conditional,landsman1997poisson,guz1981conditional}. Especially this last one resembles our work in that they also assume a completeness condition for the order on predicates, and that they assume the existence of filters, although some other assumptions of~\cite{niestegge2012conditional} do not have a clear motivation.

\subsection{Structure of the paper}
We recall all the definitions and some known results we will need in Section~\ref{sec:prelim}.
In particular, we recall the basic definitions of effectus theory (Section~\ref{sec:effectus}), the notion of directed completeness (Section~\ref{sec:dc-effectus}), order unit spaces (Section~\ref{sec:OUS}), Jordan operator algebras (Section~\ref{sec:jordan-algebra}), and sequential effect algebras (Section~\ref{sec:seqprod}).
Then in Section~\ref{sec:diamond} we will see some consequences of having well-behaved filters and comprehensions, leading to the new definition of a $\diamond$-effectus.
In Section~\ref{sec:decomposition} we show how some of our assumptions, in particular directed completeness, conspire to force an effectus to split into a sharp part and a convex part, which forms the backbone of our reconstruction.
Then in Section~\ref{sec:reconstruction} we present the main results of our reconstruction: that an effectus satisfying our assumptions embeds into the product category of Boolean algebras and JB-algebras. We finish our reconstruction by also considering a tensor product in Section~\ref{sec:monoidal-reconstruction}.
We end the paper with some concluding remarks in Section~\ref{sec:conclusion}.

\section{Preliminaries}\label{sec:prelim}

The assumptions of our reconstruction and the steps in our proof rely on definitions from several somewhat disparate fields, namely effectus theory, sequential effect algebras and Jordan operator algebras. In this section we will recall all these concepts.

\subsection{Effectus theory}\label{sec:effectus}
The basic assumption of our reconstruction is that our category~$\catC$ is an~\emph{effectus}~\cite{cho2015introduction,kentathesis}.
This is a weak structure that allows for a basic notion of state and predicate.
The requirement that a category be an effectus should be compared to the requirement that a set be a topological space: 
one rarely considers just an arbitrary topological space as it has so little structure. The strength of topological spaces
        though, is that they allow for the definition of many
        important notions on top of it.
            Similarly, an effectus on its own has little structure,
                but allows for the definition of many interesting notions.
% Just like how topological spaces can be axiomatised in several equivalent ways,
%     where the most convenient method depends on the application,
An effectus can be defined in two ways:
    either axiomatising a category of total maps or of partial maps.
Though we will not use it in the rest of the paper, 
we will give the definition of the total form first as it has the cleanest definition.
Although seemingly obscure at first,
    many categories with a coproduct that behaves as a probabilistic
    disjunction are effectuses.
\begin{definition}
A category~$\catC$ is an \Define{effectus in total form}
iff
\begin{enumerate}
\item $\catC$ has finite coproducts (hence an initial object~$0$)
        and a final object~$1$;
    \item all diagrams of the following
        form\footnote{%
        We write~$\kappa_i$ for coproduct coprojections;
            square brackets~$[f,g]$ for coproduct cotupling;
            $h+k = [\kappa_1 \after h, \kappa_2 \after k]$
            and~$!$ for the unique maps associated to
            either the final object~$1$ or initial object~$0$.} are pullbacks
\begin{equation}
    \vcenter{\xymatrix{
        X+Y \ar[r]^{\id+{!}} \ar[d]_{!+\id} & X+1\ar[d]^{!+\id} \\
    1+Y\ar[r]_{\id+!} & 1+1
}}
    \qquad
    \vcenter{\xymatrix{
    X \ar[r]^{!} \ar[d]_{\kappa_1} & 1 \ar[d]^{\kappa_1} \\
    X+Y\ar[r]_{{!}+{!}} & 1+1
}}
\end{equation}
\item and the following two arrows are jointly monic.
    \begin{equation*}
        \xymatrix@C+2pc  {
            1+1+1  \ar@/^/[r]^{[\kappa_1,\kappa_2,\kappa_2]}
                    \ar@/_/[r]_{[\kappa_2,\kappa_1,\kappa_2]} & 1+1
        }
    \end{equation*}
\end{enumerate}
A \Define{partial map} from $X \to Y$ is an arrow~$X\to Y+1$;
        a \Define{state} on~$X$ is an arrow~$1 \to X$;
        a \Define{predicate} on~$X$ is an arrow~$X \to 1+1$.
The partial maps of an effectus can be composed in the obvious way, and hence we get a category $\text{Par}(\catC)$ of partial maps (formally, $\text{\textunderscore}+1$ is the maybe monad on $\catC$ and $\text{Par}(\catC)$ is its Kleisli category).
\end{definition}

\begin{example}\label{ex:effectus}
    We just give a few examples.  For a more comprehensive
        list, see~\cite{cho2015introduction}.
\begin{enumerate}
\item
    The category of sets and functions is an effectus in total
        form. The states of a set $A$ correspond to the elements
	of $A$ and the predicates correspond to the subsets of $A$.
\item
    The category of sets and probabilistic functions\footnote{I.e.~the Kleisli
            category of the finite distribution monad (but note that the Kleisli category of the Giry monad on measurable spaces is also an example of an effectus).} is an effectus
            in total form.  The states on a set~$A$ correspond to probability distributions on~$A$
                and predicates are maps~$A \to [0,1]$.
\item
	The opposite category of (finite-dimensional) C$^*$-algebras
        with positive unital linear maps forms an effectus in total form.
        States of an
	algebra $\mathfrak{A}$ are positive unital linear maps
	$\omega:\mathfrak{A}\rightarrow \CC$, and the predicates
	correspond to elements of $[0,1]_{\mathfrak{A}}$.
\end{enumerate}
\end{example}

\begin{remark}
    In this last example we used the opposite category of C$^*$-algebras. This is because C$^*$-algebras are spaces of observables (and hence predicates), while effectuses are defined in terms of states. Using the language of physicists we would say that effectuses are in the Schr\"odinger picture, while C$^*$-algebras are in the Heisenberg picture.
    In this paper we will often see the necessity of working with an opposite category for this reason.
\end{remark}

For our purposes it will be more convenient to work with the category
    of partial maps of an effectus.
That category can be axiomatised on its own
    as an \emph{effectus in partial form},
    but that requires some preparation.
To start, that category comes with a partial addition on the maps.

\begin{definition}
\label{def:pcm}
A \Define{partial commutative monoid}
(\Define{PCM})
is a set $M$ with an element~$0\in X$ 
and a \emph{partial} binary
        operation~$\ovee\colon M\times M\rightharpoonup M$
such that for all $x,y,z \in M$
\begin{itemize}
	\item $(x\ovee y)\ovee z = x\ovee(y\ovee z)$ (associativity),
    \item $x\ovee y= y\ovee x$ (commutativity), \emph{and}
	\item $0\ovee x = x$ (unitality).
\end{itemize}
Here `$=$' is taken to be a \emph{Kleene equality}.\footnote{
Kleene equality: if either side is defined, then so is the other,
    and they are equal.
Hence an equation like $x\ovee y = z$ is taken to
mean both that $x\ovee y$ is defined, as well 
    as that we have the equality $x\ovee y = z$.}
We write~$x\perp y$ to denote $x\ovee y$ is defined.
A function $f:M\rightarrow N$
    between PCMs
    is \Define{additive} if $f(0)=0$ and
$f(x)\ovee f(y) = f(x\ovee y)$ for all $x\perp y$ in $M$.
The Cartesian product~$M\times N$ of two PCMs is again
    a PCM in the obvious way.  A map~$g\colon M\times N\to L$
        is \Define{biadditive}
            if its restrictions
                $g(x,-)$, $g(-,y)$ for arbitrary~$x$ and~$y$ are additive.
We say a category is \Define{enriched over PCMs} if each homset is a PCM and the composition maps are biadditive.
\end{definition}

A category enriched over PCMs has a partial addition operation defined on its morphisms that interacts suitably with composition. This acts as an abstraction and generalisation of the \emph{coarse-graining} operation present in, for instance, generalised probabilistic theories~\cite{barrett2007information}. When the sum of two morphisms $f$ and $g$ is defined, it means that there is a sense in which $f$ and $g$ can coexist as different branches of a probabilistic process. The sum morphism $f\ovee g$ then corresponds to their coarse-graining where we forget which of the two processes actually happened.
In an effectus we also have coproducts to model the probabilistic disjunction of systems, and the coarse-graining operation interacts suitably with these coproducts.

\begin{definition}
A category $\catC$ with zero morphisms $0\colon A\to B$
(such as when it is enriched over PCMs) has for
each coproduct $\coprod_{j\in J} A_j$
 \Define{partial projections}
$\pproj_i\colon \coprod_{j\in J} A_j\to A_i$
characterized by $\pproj_i\circ \kappa_i=\id$
and $\pproj_i\circ \kappa_k=0$ for $k\neq i$.
% Here $\kappa_i\colon A_i\to \coprod_{j\in J} A_j$
% denote coprojections.
A family $(f_j\colon B\to A_j)_{j\in J}$ of morphisms
in $\catC$ is \Define{compatible} if there exists an
$f\colon B\to \coprod_{j\in J} A_j$ such that
$\pproj_j\circ f= f_j$ for each $j\in J$.

A \Define{finitely partially additive category} (finPAC)~\cite{cho2015total}
is a category with finite coproducts that is enriched over PCMs 
    so that the coproduct and PCM operations interact suitably:
\begin{itemize}
\item \Define{Compatible sum axiom}:
Compatible pairs of morphisms $f,g\colon A\to B$
are summable in $\catC(A,B)$.
\item \Define{Untying axiom}:
If $f,g\colon A\to B$ are summable,
then $\kappa_1\circ f,\kappa_2\circ g\colon A\to B+B$
are summable too.
\end{itemize}
\end{definition}

An effectus in partial form is a finPAC, but we need a bit more:
    that the predicates, the homsets $\catC(A,I)$, are a special kind of PCM.

\begin{definition}
\label{def:EA}
    An \Define{effect algebra}~\cite{foulis1994effect} is a PCM $(E,\ovee, 0)$
with a `top' element $1\in E$ such that for each $x\in E$,
    \begin{itemize}
        \item there is a unique $x^\bot\in E$
    (called the orthosupplement) satisfying $x\ovee x^\bot=1$, and
    \item $x\perp 1$ implies $x=0$.
    \end{itemize}
For~$x,y \in E$ we write~$x \leq y$ whenever there is a~$z \in E$
    with~$x \ovee z = y$.
This turns~$E$ into a poset with minimum~$0$ and maximum~$1=0^\perp$.
The map~$x \mapsto x^\perp$ is an order anti-isomorphism.
Furthermore~$x \perp y$ if and only if~$x \leq y^\perp$.
% If~$x \leq y$, then the element~$z$ with~$x \ovee z = y$
%     is unique and is denoted by~$y \ominus x$~\cite{foulis1994effect}.
We write $\EA$ for the category of
effect algebras and additive maps.
Note that additive maps automatically preserve the order (i.e.~are monotone).
%    Its opposite~$\EA^\opp$ is an effectus~\cite[191\textsubscript{II}]{basthesis}.
%We write $\EA_1$ for the category of \Define{unital} maps, i.e.~those that satisfy $f(1)=1$.
%Such a map necessarily preserves the complement, $f(x^\perp) = f(x)^\perp$.
% A morphism is an \Define{embedding} when it is also \Define{order reflecting}:
% if $f(x)\leq f(y)$ implies $x\leq y$.
% Observe that an embedding is automatically injective.
\end{definition}

\begin{example}\label{ex:orthomodularlattice}
    Let $(B,0,1,\wedge, \vee, (\ )^\perp)$ be an orthomodular lattice. Then $B$ is an effect algebra with the partial addition defined by 
    $x\perp y \iff x\wedge y = 0$ and in that case  $x\ovee y = x\vee y$.
    The orthosupplement $(\ )^\perp$ is given by the orthocomplement itself.
    The lattice order coincides with the effect algebra order (defined
    above). See e.g.~\cite[Prop.~27]{basmaster}.
\end{example}

\begin{example}
	For a unital C$^*$-algebra $\mathfrak{A}$, the set of effects $[0,1]_{\mathfrak{A}}$ is an effect algebra. This is the motivating example.
\end{example}

\begin{definition}
\label{def:effectus}
    An \Define{effectus in partial form} is a finPAC $\catC$
    with a distinguished \Define{unit} object $I\in\catC$
satisfying
the following conditions.
\begin{itemize}
\item
    % TODO Kalmbach?
The PCM $\catC(A,I)$ is an effect algebra for all~$A$.
We write $\truth_A$
and~$\falsity_A=0_{A I}$
for the top and bottom of $\catC(A,I)$.
\item
$\truth_B\circ f=\falsity_A$ implies $f=0_{A B}$
for all $f\colon A\to B$.
\item
$\truth_B\circ f\perp \truth_B\circ g$
implies $f\perp g$
for all $f,g\colon A\to B$.
\end{itemize}
    We call a map~$f: A\to B$ \Define{total} when~$\truth_B \circ f = \truth_A$.
\end{definition}

Viewing an effectus (in partial form) as an abstraction of a generalised probabilistic theory, we can give an interpretation to these axioms. That the predicates form an effect algebra means, first, that we have a \emph{deterministic} predicate $\truth_A$ for every system $A$ so that the processes in the theory are non-signalling~\cite{coecke2014terminality}, and second, that for every predicate~$p$ we have its \emph{negation}~$p^\perp$. The existence of negations in non-signalling GPTs is usually a consequence of the ability to coarse-grain measurements.
The second and third axioms can be interpreted as a weak form of \emph{operational equivalence}, stating that maps are zero, respectively summable, when they are zero, respectively summable, on every predicate~\cite{tull2016}.

We can freely switch between effectuses in total and partial
    form:
\begin{remark}
	Let $\catC$ be an effectus in total form. Then $\text{Par}(\catC)$
	is an effectus in partial form. Conversely, for an effectus
	in partial form $\catD$, the category of \Define{total maps}
	$\text{Tot}(\catD)$ is an effectus in total form.
    This is, in fact, a 2-categorical equivalence between the category of
        effectuses in total form and the category of effectuses in partial form~\cite{kentathesis}.
\end{remark}

\begin{example}
	Adapting Example~\ref{ex:effectus} to the partial case we see that the category of sets and partial functions is an effectus in partial form. So is the Kleisli category of the subdistribution monad and the opposite category of C$^*$-algebras with contractive positive linear maps.
	The category $\textbf{EA}^\opp$ is also an effectus in partial form.
\end{example}

For the remainder of the paper we will work solely with effectuses in
    partial form and simply refer to them as effectuses.
For clarity, let us translate some of the important notions:
    in an effectus (in partial form),
    \begin{itemize}
    \item a \Define{predicate} is a map~$A \to I$,
    \item a \Define{state} is a total map~$I \to A$, and
        %a \Define{substate} is a map~$I \to A$ and
    \item a \Define{scalar} is a map~$I \to I$.
    \end{itemize}

\begin{definition}
    For any object~$A$ in an effectus we write~$\Pred(A)$ for the effect algebra of predicates
    on~$A$. For a morphism $f\colon A\rightarrow B$ we write $\Pred(f)\colon\Pred(B)\rightarrow \Pred(A)$ for the map defined by $\Pred(f)(p) := p\circ f$.
\end{definition}
It is clear that $\Pred$ is a functor from $\catC$ to $\textbf{EA}^\opp$.
        The image~$\Pred(\catC)$ is an effectus,
            and it is equivalent to~$\catC$
                iff~$\Pred$ is faithful, which is equivalent to the following.
\begin{definition}[{cf.~\cite{sigma}}]
    We say an effectus $\catC$ is \Define{separated by predicates}
    if for a pair of morphisms $f,g\colon A\rightarrow B$ we have $f=g$
    when~$p\circ f=p\circ g$ for all $p\in \Pred(B)$.
\end{definition}
Separation by predicates is analogous to the condition of \emph{local tomography} in the setting of generalised probabilistic theories~\cite{barrett2007information}. We also have a dual definition, which asks the same, but for states.

\begin{definition}
    Let $\catC$ be an effectus. We say it is \Define{separated by states} when for all pairs of morphisms $f,g\colon A\rightarrow B$ we have $f=g$ iff $f\circ \omega = g\circ \omega$ for all states $\omega\colon I\rightarrow A$.
\end{definition}

Just as with predicates, we can also construct a `state functor', which goes into a category of \emph{abstract convex sets}~\cite{basthesis,jacobs2015states,sigma}, but we will not need this in this paper.

\subsubsection{Effect monoids}

The set of scalars~$\catC(I,I)$ in an effectus has a rich structure:
    as a set of predicates on~$I$ it's an effect algebra,
    but it also has a multiplication that comes from
    the composition of scalars.  Its structure is axiomatised as follows.

\begin{definition}\label{def:effectmonoid}
    An \Define{effect monoid}\footnote{The
    category of effect algebras has an algebraic
    tensor product that makes the category symmetric
    monoidal~\cite{jacobs2012coreflections}. The monoids in the
    category of effect algebras resulting from this tensor product
    are the effect monoids, hence the name.}
    \cite{jacobs2011probabilities}
    is an effect algebra $(M,\ovee, 0, ^\perp,
    \,\cdot\,)$ with an additional (total) binary operation $\,\cdot\,$,
such that the following conditions hold 
    for all~$a,b,c \in M$.
\begin{itemize}
\item Unit: $a\cdot 1 = a = 1\cdot a$.
\item Distributivity: if $b\perp c$, then $a\cdot b\perp  a\cdot c$,\quad
    $b\cdot a\perp c\cdot a$,
        $$a\cdot (b\ovee c) \ =\  (a\cdot b) \ovee (a\cdot c),\quad \text{and}
        \quad (b\ovee c)\cdot a \ =\  (b\cdot a) \ovee (c\cdot a).$$
Or, in other words:
    the operation~$\,\cdot\,$ is bi-additive.
\item Associativity: $a \cdot (b\cdot c) = (a \cdot b) \cdot c$.
\end{itemize}
% We call an effect monoid~$M$ \Define{commutative} if $a\cdot b = b \cdot a$
%     for all~$a,b \in M$;
 We call an element $p$ of~$M$ \Define{idempotent}
 whenever~$p^2 := p \cdot p = p$.
% elements~$a$, $b$  of~$M$ \Define{orthogonal}
% when $a\cdot b = b\cdot a = 0$;
% and we denote the set of idempotents of $M$ by~$P(M)$.
% We call~$M$ \Define{sharp} if every element is idempotent.
% We call~$M$ \Define{mixed} when for every non-zero idempotent~$p \in M$,
%         we can find a non-zero~$s \leq p$ which is not idempotent.
\end{definition}

% \begin{notation}
% Following the usual binding conventions for addition and multiplication,
%     the expression~$a\cdot b \ovee c$
%     denotes~$(a\cdot b) \ovee c$.
% \end{notation}

\begin{example}\label{ex:booleanalgebra}
Any Boolean algebra  $(B,0,1,\wedge,\vee,(\ )^\perp)$,
    being an orthomodular lattice,
    is an effect algebra by Example~\ref{ex:orthomodularlattice},
and, moreover,  a commutative effect monoid with
    multiplication defined by $x \cdot y= x \wedge y$.
    % Conversely,
    % any orthomodular lattice
    % for which~$\wedge$ distributes over~$\ovee$ (and thus~$\vee$)
    % is a Boolean algebra.
\end{example}

\begin{example}
   In any effectus, the set of scalars is an effect monoid
        with~$s \cdot  t := s \circ t$.
\end{example}

\begin{example}\label{ex:CX}
    % The unit interval $[0,1]_R$ of any (partially)
    % ordered unital ring~$R$
    % (in which the sum $a+b$ and product\footnote{There are many interesting
    %     rings (like non-commutative C$^*$-algebras) whose additive group is ordered,
    %     but where the product of positive elements need not be positive.
    %     The unit intervals of those rings are \emph{not}
    %     effect monoids.
    %     Note however that the unit interval of an ordered ring does not have to be commutative; in~\cite[Ex.~4.3.9]{kentathesis} and~\cite[Cor.~51]{basmaster}
    %     two different non-commutative effect monoids are constructed.
    %     }
    % $a\cdot b$
    % of positive elements~$a$ and~$b$ are again positive)
    % is an effect monoid.
    Let $X$ be a compact Hausdorff space and denote its space of
    continuous functions into the complex numbers by
    $C(X):=\{f\colon X\rightarrow \CC, f\text{ continuous}\}$. This is
    a commutative unital C$^*$-algebra
    (and conversely by the Gel'fand theorem, any
    commutative C$^*$-algebra with unit is of this form). Its
    unit interval $[0,1]_{C(X)} = C(X,[0,1])$ consisting of continuous functions $f\colon X\rightarrow [0,1]$ is a
    commutative effect monoid.

\end{example}

\begin{remark}
    A physical or logical theory which has probabilities of the form $[0,1]_{C(X)}$ can be seen as a theory with a natural notion of space, where probabilities are allowed to vary continuously over the space $X$. This is explored
    in for instance Ref.~\cite{moliner2017space}.
\end{remark}

\begin{example}
    Given two effect algebras/monoids~$E_1$ and $E_2$ we define
    their \Define{direct sum} $E_1\oplus E_2$ as the Cartesian
    product with pointwise operations. This is again an effect
    algebra/monoid. Effect algebras/monoids that cannot be written as
    a non-trivial direct sum we call \Define{irreducible}.
\end{example}

\begin{example}\label{cornersexample}
    Let~$M$ be an effect monoid and let~$p \in M$ be some idempotent.
    Define $pM := \{p \cdot a; \ a\in M\}$. This
        is an effect monoid with~$(p\cdot a)^\perp := p \cdot a^\perp$
            and all other operations inherited from~$M$.
    The map~$a \mapsto (p\cdot a, p^\perp \cdot a)$
        is an isomorphism~$M \cong p M \oplus p^\perp M$~\cite{first}.
In particular, an effect monoid is irreducible iff it has no non-trivial idempotents.
\end{example}

\subsubsection{Filters and comprehensions}\label{sec:quotient-comprehension}

So far we have discussed the general structure of an effectus, which is present in a large class of examples. Now we will look at additional structure that is more specialised.

We will require the existence of certain universal maps
    into and out of subsystems,
    which can be motivated operationally~\cite{wetering2018reconstruction} as
        filters and arise categorically as adjunctions~\cite{cho2015quotient}
        (cf.~Remark~\ref{chainofadjs}). Additionally, filters and comprehensions are closely related to the categorical notion of (co)kernels (cf.~Remark~\ref{rem:kernels}) and hence to the notion of \emph{ideal compressions} of~\cite{chiribella2011informational} (see~\cite[Section~4.4.2]{tull2019phdthesis} for details).

\begin{definition}\label{def:comprehension}
    Let $p\colon A\rightarrow I$ be a predicate in an effectus. A
    \Define{comprehension} for $p$ consists of an object $A_p$ and
    a map $\pi_p\colon A_p\rightarrow A$ such that $\truth_A\circ \pi_p
    = p\circ \pi_p$ that is \Define{final} with this property:
    whenever $f\colon B\rightarrow A$ is such that $\truth_A\circ f =
    p\circ f$ then there is a unique $\bar{f}\colon B\rightarrow A_p$
    with~$\pi_p \circ \bar{f} = f$, that is:
    \[\begin{tikzcd}[ampersand replacement = \&]
    A_p \arrow{r}{\pi_p}\& A \\
    B \arrow[dotted]{u}{\bar{f}}\arrow{ru}[swap]{f}\&
    \end{tikzcd}\] 
    We say an effectus \Define{has comprehensions} when every predicate has a comprehension.
\end{definition}

\begin{definition}\label{def:filter}
    Let $p\colon A\rightarrow I$ be a predicate in an effectus. A \Define{filter}\footnote{%
        A filter for~$p$
            is exactly the same thing as what is called a \Define{quotient}
                for~$p^\perp$ in many other papers on effectuses~\cite{cho2015introduction}.
        In those papers~$\xi_p$ correspond to our~$\xi^{p^\perp}$.}
    for $p$ is an object $A^p$ and map $\xi^p\colon A\rightarrow A^p$
    such that $\truth\circ\xi^p\leq p$ which is \Define{initial}
    for this property: for any map $f\colon A\rightarrow B$ which satisfies
    $\truth\circ f\leq p$ there is a unique $\bar{f}\colon A^p\rightarrow
    B$ with~$\bar{f} \circ \xi^p = f$, that is:
    \[\begin{tikzcd}[ampersand replacement = \&]
    A^p \arrow[dotted,swap]{d}{\bar{f}}\&\arrow[swap]{l}{\xi^p} A\arrow{dl}{f} \\
     B\&
    \end{tikzcd}\]
    We say an effectus \Define{has filters} when every predicate has a filter.
\end{definition}

The reason we call these maps filters is because applying a filter $\xi^p$ corresponds in our categories of interest to the `post-selection' of the predicate $p$, so that after application we have `filtered' the state to ensure $p$ is true.

Note that as filters and comprehensions are defined by a universal properties, that they are unique up to unique isomorphism.

\begin{example}\label{ex:quotient-comprehension}
In~\cite{cho2015quotient} many examples of categories
    with filters and comprehension are given.
Here we will restrict ourselves
    to discussing them for the `quantum' example
        of the C$^*$-algebra~$B(\mathcal{H})$ of bounded operators on a Hilbert space
        with positive linear contractions in opposite direction between them.
    Let $p\in B(\mathcal{H})$ be an effect, i.e.~$0\leq p \leq 1$. Denote
    by~$P$ the largest projection (idempotent effect)
        below~$p$, \ie~$P$ projects to the eigenspace
            of~$p$ of eigenvalue~$1$. Denote this space by~$\mathcal{K}\sse \mathcal{H}$. Then the \Define{standard comprehension}
    of~$p$ is the map $\pi_p\colon B(\mathcal{H})\rightarrow B(\mathcal{K})$ given by~$\pi_p(B) = PBP$.
    Now let~$\mathcal{K}'\sse \mathcal{H}$ be~$\mathcal{K}'=(\ker
    p)^\perp$, \ie~the closure of the eigenspaces of $p$ of
    non-zero eigenvalue. Then $p$'s \Define{standard filter} is the map $\xi^p\colon  B(\mathcal{K}')\rightarrow
    B(\mathcal{H})$ given by $\xi^p(q) = \sqrt{p}q\sqrt{p}$.
\end{example}

\begin{remark}\label{rem:kernels}
    It can be shown that an effectus has comprehensions iff it
    has \emph{kernels}~\cite[\S 200]{basthesis}. An effectus has \emph{cokernels} iff
    all maps have an \emph{image} and every \emph{sharp} effect
    has a filter~\cite[\S 205]{basthesis}. We will give definitions of the image and
    sharpness later (in Section~\ref{sec:diamond}), but for now let us note that we can hence interpret 
    filters as `fuzzy cokernels'.
\end{remark}

Filters and comprehensions have a different categorical characterisation
    due to Jacobs.
\begin{remark}\label{chainofadjs}
    Let $\catC$ be an effectus. Let $\Pred_{\ssquare}(\catC)$ denote its \Define{Grothendieck category} which has as objects pairs $(A\in\catC, p\in \Pred(A))$ and morphisms $f\colon (A,p)\rightarrow (B,q)$ given by $f\colon A\rightarrow B$ satisfying $p\leq (q^\perp\circ f)^\perp$.
    There is an obvious forgetful functor $U\colon \Pred_{\ssquare}(\catC)\rightarrow \catC$. Conversely there are two canonical ways to embed $\catC$ into $\Pred_{\ssquare}(\catC)$, namely by mapping an object $A$ to $(A,\falsity)$ and by mapping $A$ to $(A,\truth)$. These two embeddings turn out to be left and right adjoint to the forgetful functor~\cite{cho2015quotient}:
% https://q.uiver.app/?q=WzAsMixbMCwwLCJcXFByZWRfe1xcc3NxdWFyZX0oXFxjYXRDKSJdLFswLDIsIlxcY2F0QyJdLFsxLDAsIjAiLDEseyJvZmZzZXQiOi0yLCJjdXJ2ZSI6LTJ9XSxbMSwwLCIxIiwxLHsib2Zmc2V0IjoxLCJjdXJ2ZSI6Mn1dLFswLDEsIlxceGkiLDEseyJjdXJ2ZSI6NX1dLFswLDEsIlxccGkiLDEseyJjdXJ2ZSI6LTV9XSxbMCwxLCJVIiwxXSxbMiw2LCIiLDAseyJsZXZlbCI6MSwic3R5bGUiOnsibmFtZSI6ImFkanVuY3Rpb24ifX1dLFs2LDMsIiIsMCx7ImxldmVsIjoxLCJzdHlsZSI6eyJuYW1lIjoiYWRqdW5jdGlvbiJ9fV0sWzMsNSwiIiwwLHsibGV2ZWwiOjEsInN0eWxlIjp7Im5hbWUiOiJhZGp1bmN0aW9uIn19XSxbNCwyLCIiLDAseyJsZXZlbCI6MSwic3R5bGUiOnsibmFtZSI6ImFkanVuY3Rpb24ifX1dXQ==
\[\begin{tikzcd}
    {\Pred_{\ssquare}(\catC)} \\
    \\
    {\catC}
    \arrow["{\falsity}"{name=0, description}, from=3-1, to=1-1, shift left=0, curve={height=-30pt}]
    \arrow["{\truth}"{name=1, description}, from=3-1, to=1-1, shift right=0, curve={height=30pt}]
    \arrow["{\xi}"{name=2, description}, from=1-1, to=3-1, curve={height=80pt}]
    \arrow["{\pi}"{name=3, description}, from=1-1, to=3-1, curve={height=-80pt}]
    \arrow["{U}"{name=4, description}, from=1-1, to=3-1]
    \arrow["\dashv"{rotate=0}, from=0, to=4, phantom]
    \arrow["\dashv"{rotate=0}, from=4, to=1, phantom]
    \arrow["\dashv"{rotate=0}, from=1, to=3, phantom]
    \arrow["\dashv"{rotate=0}, from=2, to=0, phantom]
\end{tikzcd}\]
    The $\falsity$ embedding has a left adjoint
    iff $\catC$ has filters\footnote{%
        This is the reason that filters for~$p$ are referred to
            as \emph{quotients} for~$p^\perp$ in the effectus literature.}, and the $\truth$ embedding has a right adjoint iff $\catC$ has comprehensions~\cite[Chapter~5]{kentathesis}.
\end{remark}

We note a number of properties of filters and comprehension that we will use without further reference.
\begin{proposition}[{\cite{cho2015introduction}}]
    Let $\catC$ be an effectus which has filters and comprehensions.
    \begin{itemize}
        \item Every filter is epic, every comprehension is monic.
        \item If $\xi$ is a filter for $a$, then $\truth\circ\xi = a$.
        \item Comprehensions are total: $\truth\circ\pi = \truth$.
    \end{itemize}
\end{proposition}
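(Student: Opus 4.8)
The plan is to treat the three bullets separately, handling the epic/monic properties first since they are immediate formal consequences of the universal properties, and then the two sharp identities, which carry the real content.

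First I would prove that filters are epic and comprehensions are monic purely from the defining universal properties together with the fact that each $\Pred(f)$ is additive, hence monotone. For a filter $\xi^p\colon A\to A^p$, suppose $g\circ\xi^p=h\circ\xi^p=:f$ for $g,h\colon A^p\to B$. Since $\truth_B\circ g\le\truth_{A^p}$, monotonicity of precomposition by $\xi^p$ gives $\truth_B\circ f=(\truth_B\circ g)\circ\xi^p\le\truth_{A^p}\circ\xi^p\le p$, where the last inequality is the defining property of the filter. Thus $f$ satisfies $\truth_B\circ f\le p$, so by initiality its factorisation through $\xi^p$ is unique; as both $g$ and $h$ are such factorisations, $g=h$. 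Dually, for a comprehension $\pi_p\colon A_p\to A$, if $\pi_p\circ g=\pi_p\circ h=:f$ then $\truth_A\circ f=(\truth_A\circ\pi_p)\circ g=(p\circ\pi_p)\circ g=p\circ f$, so $f$ meets the comprehension condition and factors uniquely through $\pi_p$, forcing $g=h$.

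The engine for both sharp identities is the cancellativity of effect algebras (differences are unique) applied to $\truth=p\ovee p^\perp$. Precomposing with a comprehension $\pi_p$ and using $\truth_A\circ\pi_p=p\circ\pi_p$ gives $\truth_A\circ\pi_p=\truth_A\circ\pi_p\ovee p^\perp\circ\pi_p$, whence cancellation yields $p^\perp\circ\pi_p=0$. To prove totality ($\truth_A\circ\pi_p=\truth_{A_p}$) I would set $d:=(\truth_A\circ\pi_p)^\perp\in\Pred(A_p)$ and aim to show $d=0$. Taking the comprehension $\pi_d$ of $d$ inside $A_p$, the same cancellation argument gives $d^\perp\circ\pi_d=0$; but $d^\perp=\truth_A\circ\pi_p$, so $\truth_A\circ(\pi_p\circ\pi_d)=d^\perp\circ\pi_d=0=\falsity$. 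The effectus axiom $\truth\circ k=\falsity\Rightarrow k=0$ then forces $\pi_p\circ\pi_d=0$, and since $\pi_p$ is monic, $\pi_d=0$. It then remains to see that a predicate whose comprehension map is the zero map must itself be $0$, which I would extract from the finality of $\pi_d$.

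For the filter identity $\truth\circ\xi^a=a$, the inequality $\truth\circ\xi^a\le a$ is definitional, and the strategy for the reverse is dual: use the initiality of $\xi^a$ together with the comprehension of $a$ (available since we assume both structures) to test the competitor map realising $a\le\truth\circ\xi^a$, again closing the gap with cancellativity and the interaction axioms. The hard part throughout is precisely this upgrade from the defining \emph{inequalities/conditions} to the \emph{sharp equalities}: totality of $\pi_p$ and $\truth\circ\xi^a=a$ do not follow from effect-algebra arithmetic alone but require invoking the universal property against a carefully chosen test map and then eliminating the residual ``defect'' predicate via the effectus axiom $\truth\circ k=\falsity\Rightarrow k=0$ and the monic-ness of comprehensions established in the first step.
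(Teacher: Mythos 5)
Your first bullet is correct and is the standard argument: a map through which factorisations are unique is monic/epic, and your verification that the relevant composites again satisfy the filter/comprehension condition is exactly right. The cancellativity observation $p^\perp\circ\pi_p=0$ is also correct and useful, as is the chain $d^\perp\circ\pi_d=0$, hence $\truth\circ(\pi_p\circ\pi_d)=\falsity$, hence $\pi_p\circ\pi_d=0$, hence $\pi_d=0$ by monicness. But the closing step of your totality proof is a genuine gap: the lemma you defer to the end --- ``a predicate whose comprehension map is the zero map must itself be $0$'' --- is \emph{false} in a general effectus. A comprehension only detects the sharp part of a predicate: $\pi_d=0$ says precisely that $\floor{d}=\im{\pi_d}=0$ (in the paper's later notation), not that $d=0$. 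Concretely, in the effectus of von Neumann algebras with positive subunital maps in the opposite direction, take $d=\frac12\truth$ on any nonzero algebra: the condition $\truth\circ f=d\circ f$ is equivalent to $\frac12(\truth\circ f)=0$, hence to $f=0$, so the zero map out of the zero object is a perfectly good (final) comprehension of $d$, yet $d\neq 0$. Since your $d=(\truth\circ\pi_p)^\perp$ is not known to be sharp, finality of $\pi_d$ cannot close the argument, and the proof fails at exactly the point you flagged as ``remaining''. (Note the paper itself gives no proof but cites~\cite{cho2015introduction}, so the comparison here is with the standard arguments.)

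Both sharp identities genuinely need the interaction of the two structures --- the proposition assumes filters \emph{and} comprehensions --- and your sketch for the filter identity is also incomplete, since you never exhibit a test map. There, however, the fix is a one-liner and needs no comprehension at all: test the predicate $a$ itself, viewed as a map $a\colon A\rightarrow I$. Since $\truth_I=\id_I$ (the top scalar is the unit of the effect monoid of scalars), we have $\truth_I\circ a=a\leq a$, so $a=\bar{a}\circ\xi^a$ for some $\bar{a}\in\Pred(A^a)$, whence $a=\bar{a}\circ\xi^a\leq\truth\circ\xi^a\leq a$. For totality a route that works within the stated hypotheses is: put $q:=\truth\circ\pi_p$ and factor $\pi_p=m\circ\xi^q$ through a filter for $q$ (legitimate as $\truth\circ\pi_p\leq q$); from $p^\perp\circ m\circ\xi^q=p^\perp\circ\pi_p=0$ and epicness of $\xi^q$ deduce $p^\perp\circ m=0$, so by your cancellativity remark $m$ satisfies the comprehension condition and $m=\pi_p\circ\bar{m}$; monicness of $\pi_p$ then gives $\bar{m}\circ\xi^q=\id$, and epicness of $\xi^q$ upgrades this split to $\xi^q$ being an isomorphism. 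Isomorphisms are total (if $\theta$ is an isomorphism then $\truth\geq\truth\circ\theta\geq(\truth\circ\theta^{-1})\circ\theta=\truth$), so $\truth=\truth\circ\xi^q\leq q$, i.e.\ $q=\truth$. This is the missing idea in your proposal: totality is extracted from the filter of $\truth\circ\pi_p$ being forced to be an isomorphism, not from the comprehension of its complement, which is too coarse an instrument to see non-sharp defects.
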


\subsubsection{Monoidal effectuses}\label{sec:monoidal-effectus}

In most works dealing with GPTs, the notion of a composite system is important. To talk about composite systems in a category we need monoidal structure: a tensor product.
Effectuses don't need to have monoidal structure, and our main result does also not require the existence of a tensor product. However, to make the final jump in our reconstruction from general JBW-algebras to von Neumann algebras, we will require a tensor product. So let us give a definition of a monoidal effectus.

\begin{definition}\label{def:monoidal-effectus}
    We say an effectus is \Define{monoidal} when it has a symmetric monoidal structure $(\otimes, I)$ such that 
    \begin{itemize}
        \item the tensor unit $I$ is also the designated unit object of the effectus,
        \item the tensor product is `biadditive', i.e.~for any morphisms~$f,g,h$ with~$f\perp g$ we have~$(f\ovee g)\otimes h = (f\otimes h)\ovee (g\otimes h)$ and $0\otimes h = 0$,
        \item and the tensor product preserves $\truth$ --- that is: $\truth_A\otimes\truth_B = \truth_{A\otimes B}$.
    \end{itemize}
\end{definition}

Let $\lambda_A\colon I\otimes A\rightarrow A$ denote the natural isomorphism for the tensor unit and let~$s,t\colon I\rightarrow I$ be some scalars. Then for any morphism $f\colon A\rightarrow B$ we can define the map $s\cdot f$ as the composition $s\cdot f := \lambda_B\circ(s\otimes f)\circ\lambda_A^{-1}$. This gives us a scalar multiplication on morphisms in a monoidal effectus. 
Let us note the following straightforwardly verifiable facts.
\begin{lemma}\label{lem:monoidal-effectus-facts}
    Let $\catC$ be a monoidal effectus, and let $s,t\colon I\rightarrow I$ be scalars. Then the following holds.
    \begin{itemize}
        \item Scalar multiplication respects composition: for any $f\colon A\rightarrow B$ and $g\colon B\rightarrow C$ we have $g\circ (s\cdot f) = s\cdot (g\circ f) = (s\cdot g)\circ f$.
        \item Scalar multiplication respects addition: for any $f \perp g\colon A\rightarrow B$ we have $(s\ovee t)\cdot f = s\cdot f \ovee t\cdot f$ and $s\cdot (f\ovee g) = s\cdot f \ovee s\cdot g$.
        \item For any predicate $p\colon A\rightarrow I$ we have $s\cdot p = s\circ p$. In particular $s\cdot t = s\circ t$ so that $s\cdot (t\cdot f) = (s\cdot t)\cdot f = (s\circ t)\cdot f$.
    \end{itemize}
\end{lemma}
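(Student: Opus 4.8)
The plan is to reduce all three statements to three elementary ingredients of the symmetric monoidal structure: naturality of the left unitor $\lambda$, the interchange (bifunctoriality) law $(a\otimes b)\circ(c\otimes d)=(a\circ c)\otimes(b\circ d)$, and biadditivity of $\otimes$ combined with the fact that $\catC$ is enriched over PCMs (so that pre- and post-composition with a fixed morphism are additive). The only genuinely external fact I would invoke is the standard unit-coherence identity $\lambda_I=\rho_I$ on $I\otimes I$, which is where the distinction between scalar multiplication and scalar composition disappears.

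For the first bullet I would simply unfold $s\cdot f=\lambda_B\circ(s\otimes f)\circ\lambda_A^{-1}$ and chase the unitors. To get $g\circ(s\cdot f)=s\cdot(g\circ f)$, use naturality of $\lambda$ to rewrite $g\circ\lambda_B=\lambda_C\circ(\id_I\otimes g)$, then interchange to combine $(\id_I\otimes g)\circ(s\otimes f)=s\otimes(g\circ f)$. Symmetrically, for $(s\cdot g)\circ f=s\cdot(g\circ f)$ I would push $f$ through the unitor via $\lambda_B^{-1}\circ f=(\id_I\otimes f)\circ\lambda_A^{-1}$ and again apply interchange, $(s\otimes g)\circ(\id_I\otimes f)=s\otimes(g\circ f)$. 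Both sides collapse to $\lambda_C\circ(s\otimes(g\circ f))\circ\lambda_A^{-1}$.

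For the second bullet the point is that biadditivity of $\otimes$ gives $(s\ovee t)\otimes f=(s\otimes f)\ovee(t\otimes f)$ in the left argument directly, and $s\otimes(f\ovee g)=(s\otimes f)\ovee(s\otimes g)$ in the right argument after conjugating by the symmetry $\sigma$ (which is natural, so $\sigma\circ(-)\circ\sigma$ is additive). Since additive maps preserve summability, the right-hand sides are defined whenever the left ones are. Conjugating these equalities by the isomorphisms $\lambda_B\circ(-)$ and $(-)\circ\lambda_A^{-1}$ --- each additive by PCM-enrichment --- then yields the two displayed identities, including the implicit summability $s\cdot f\perp t\cdot f$ and $s\cdot f\perp s\cdot g$.

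The third bullet is the only place requiring a little care, and I would handle it in stages. First establish $s\cdot\id_I=s$: here $s\cdot\id_I=\lambda_I\circ(s\otimes\id_I)\circ\lambda_I^{-1}$, and replacing $\lambda_I$ by $\rho_I$ (unit coherence) and applying naturality of $\rho$, namely $\rho_I\circ(s\otimes\id_I)=s\circ\rho_I$, collapses this to $s$. Then $s\cdot p=(s\cdot\id_I)\circ p=s\circ p$ for any predicate $p\colon A\to I$ by the first bullet, and taking $p=t$ gives $s\cdot t=s\circ t$. Finally, for the nested identity I would write $t\cdot f=(t\cdot\id_B)\circ f$ and $s\cdot(t\cdot f)=(s\cdot\id_B)\circ(t\cdot f)$ (both instances of the first bullet), and then compute $(s\cdot\id_B)\circ(t\cdot\id_B)=\lambda_B\circ\big((s\otimes\id_B)\circ(t\otimes\id_B)\big)\circ\lambda_B^{-1}=\lambda_B\circ((s\circ t)\otimes\id_B)\circ\lambda_B^{-1}=(s\circ t)\cdot\id_B$ using interchange; post-composing with $f$ gives $s\cdot(t\cdot f)=(s\circ t)\cdot f=(s\cdot t)\cdot f$. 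The main obstacle is conceptual rather than computational: one must remember to invoke the coherence $\lambda_I=\rho_I$ for the unit object, since it is precisely this identity (and not the effectus axioms) that forces scalar multiplication on $I$ to agree with composition; everything else is routine naturality and interchange chasing.
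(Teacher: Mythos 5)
Your proof is correct and is precisely the routine verification the paper omits (the lemma is introduced there as a list of ``straightforwardly verifiable facts'' with no proof given): naturality of the unitors, the interchange law, biadditivity of $\otimes$ combined with PCM-enrichment so that conjugation by $\lambda_B\circ(-)\circ\lambda_A^{-1}$ and by the symmetry is additive and preserves summability, are exactly the intended ingredients. Your isolation of the unit coherence $\lambda_I=\rho_I$ (itself a standard consequence of the monoidal axioms) as the one non-trivial input --- it is what forces $s\cdot\id_I=s$ and hence $s\cdot p=s\circ p$ --- correctly identifies the only delicate point in the argument.
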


\subsection{Directed completeness}\label{sec:dc-effectus}
We will require
    the predicates to form a \emph{dcpo}: a directed-complete poset.
This requirement turns out to be surprisingly strong.

\begin{definition}
	We say an effectus~$\catC$ is \Define{directed complete} when all
    the the predicate spaces $\Pred(A)$ are directed complete.\footnote{%
        An effect algebra~$E$ is said to be directed complete,
            when every upwards-directed subset~$U \subseteq E$
            (i.e.~where for every~$x,y\in U$ there exists $z\geq x,y$ in $U$)
            has a supremum. 
            As~$(\ )^\perp$ is an order anti-automorphism
                this upwards-directed completeness is equivalent to
                downwards-directed completeness.}
    If in addition these suprema are preserved
        by all maps (i.e.~the maps are \emph{Scott continuous})
    then we say~$\catC$ is \Define{normal}.
\end{definition}

\begin{example}
	The category of sets is a  normal effectus, as the predicate spaces are all complete Boolean algebras.
	The category of finite-dimensional C$^*$-algebras and positive linear contractions
        (in the opposite direction) is normal.
    This is not the case when including infinite-dimensional algebras.
    However, the category of von Neumann algebras with (normal)
        positive linear contractions is a (normal) directed-complete effectus.\footnote{%
            In fact, a C$^*$-algebra is a von Neumann algebra
                iff its unit interval is directed-complete
                and it is separated by its normal states.}
\end{example}

The scalars in a directed-complete effectus
    form a directed-complete effect monoid.
    In contrast to arbitrary effect monoids,
        the directed-complete ones are well-understood.

\begin{example}
    Any complete Boolean algebra is a directed-complete effect monoid.
\end{example}

\begin{example}
    Let $X$ be an \Define{extremally-disconnected} compact Hausdorff
    space, i.e.~where the closure of every open set is  open.
    Then $C(X, [0,1])$ is a directed-complete effect monoid.
\end{example}

\begin{theorem}[\cite{first}]\label{thm:effect-monoids}
	Let $M$ be a directed-complete effect monoid. Then there
	exists a complete Boolean algebra $B$ and an
    extremally-disconnected compact Hausdorff space $X$ such that
    $M\cong B\oplus C(X, [0,1])$.
\end{theorem}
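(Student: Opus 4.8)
The plan is to build the isomorphism entirely out of the idempotents of $M$, whose behaviour the directed-completeness assumption controls very tightly, and to let the real interval $[0,1]$ emerge from the order structure rather than assuming it. First I would set up the algebraic scaffolding in two parts. The idempotents $P(M) = \{p : p^2 = p\}$ form a Boolean algebra, with meet $p \wedge q = p \cdot q$ and complement given by the orthosupplement; because $M$ is directed complete and directed suprema of idempotents remain idempotent, $P(M)$ is in fact a \emph{complete} Boolean algebra. Second, one must prove that $M$ is commutative (the conclusion $B \oplus C(X,[0,1])$ is commutative, so this is genuinely part of the theorem). Centrality of idempotents is immediate from the splitting $M \cong pM \oplus p^\perp M$ of Example~\ref{cornersexample}; full commutativity of arbitrary elements I would obtain only \emph{after} the spectral approximation below, by writing each element as a directed supremum of simple elements built from central idempotents and passing commutativity through the supremum.

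Next I would manufacture the real numbers. The feature distinguishing the continuous summand from the Boolean one is the presence of \emph{halvable} elements, i.e.\ $h$ with $h \ovee h$ defined. Where a half-unit $h \ovee h = \truth$ is available I would generate the dyadic rationals $k/2^n$ as $2^{-n}$-scaled finite sums of the unit, and then use directed completeness to take suprema, producing an order-embedding of the real interval $[0,1]$ into $M$; Dedekind completeness of the reals is exactly what the directed-completeness of $M$ supplies. With this scalar action in hand I would prove a spectral theorem: for each $a$ and each dyadic $q$ there is a largest idempotent $e_q$ with $e_q \cdot q \le a$, these form a monotone family, and $a = \bigovee_i \lambda_i p_i$ is the directed supremum of the associated simple elements.

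The splitting into $B \oplus C(X,[0,1])$ then isolates where halving is available. I would let $z^\perp$ be the largest idempotent such that $z^\perp M$ is Boolean (every element idempotent), so that $M \cong zM \oplus z^\perp M$ exhibits $B := z^\perp M$ — a directed-complete Boolean algebra, hence a complete Boolean algebra — as one summand, while $zM$ has no nonzero Boolean summand and therefore carries the $[0,1]$-action everywhere. For the other summand I would take $X$ to be the Stone space of the complete Boolean algebra of idempotents of $zM$; by Stone duality completeness of that algebra makes $X$ an extremally-disconnected compact Hausdorff space. The spectral resolution defines a map $zM \to C(X,[0,1])$ sending $a$ to the continuous function encoding its spectral idempotents; directed completeness makes this well-defined, injective (elements are separated by idempotents), an effect-monoid homomorphism, and surjective, since on the convex part every continuous $[0,1]$-valued function is the supremum of its dyadic step functions, each realised by a genuine element of $zM$.

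I expect the main obstacle to be twofold. First, the commutativity argument has a bootstrapping subtlety: centrality of idempotents is easy, but reducing general commutativity to it requires the spectral decomposition, which in turn should not secretly presuppose commutativity, so the order of operations must be arranged carefully. Second, and more serious, is the surjectivity of the spectral map onto \emph{all} of $C(X,[0,1])$: one must match the order-completeness of $M$ precisely to the extremal disconnectedness of $X$, so that exactly the continuous functions arise — no fewer (completeness produces every dyadic-approximable function) and no more (the suprema built in $M$ land in $C(X,[0,1])$ rather than some larger space of Borel or merely bounded functions). The homomorphism and injectivity properties are comparatively routine once separation of elements by idempotents and the spectral theorem are in place.
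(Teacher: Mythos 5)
First, a point of reference: this paper does not prove Theorem~\ref{thm:effect-monoids} at all --- it is imported wholesale from~\cite{first} --- so your proposal has to be measured against the proof given there. Your architecture tracks that proof quite closely: centrality of idempotents, the complete Boolean algebra of idempotents, splitting off a largest Boolean corner, generating a $[0,1]$-action from halves plus directed completeness, spectral approximation of arbitrary elements by central idempotents to bootstrap commutativity, and representation of the convex corner as $C(X,[0,1])$ with $X$ extremally disconnected. I would add that you misjudge where the difficulty sits. The surjectivity you flag as the ``more serious'' obstacle is comparatively routine: a directed-complete order unit space is automatically Banach (the paper cites~\cite{wright1972measures} for this), the spectral map is isometric, and on any Stone space the clopen step functions are uniformly dense in $C(X,[0,1])$, so the image is closed and dense; extremal disconnectedness of $X$ is immediate from Stone duality for complete Boolean algebras. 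Similarly, your silent use of normality of the multiplication (needed both for ``directed suprema of idempotents remain idempotent'' and for passing commutativity through suprema) is true but must be proved; it is not hard: from $a\leq 1$ one gets $a\cdot x \leq x$, hence $a\cdot b \ominus a\cdot d \leq b\ominus d$ for $d\leq b$, and taking directed suprema gives Scott continuity in each argument.

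Two genuine gaps remain. First, your justification of centrality of idempotents is circular: the splitting $M\cong pM\oplus p^\perp M$ of Example~\ref{cornersexample} is a \emph{consequence} of centrality (in~\cite{first} centrality is proved first, by a direct multi-step argument starting from $p\cdot p^\perp = p^\perp\cdot p = 0$, which follows from $p = p\cdot p \ovee p\cdot p^\perp$ and cancellativity; the corner decomposition comes after), so you cannot read centrality off from the splitting. Second, and more seriously, the leap from ``$zM$ has no nonzero Boolean summand'' to ``$zM$ carries the $[0,1]$-action everywhere'' is exactly where the hard work of~\cite{first} lives, and your sketch contains no argument for it. You must actually produce a half of the unit of $zM$; the route in~\cite{first} goes via a maximal self-summable element $h$ (Zorn plus directed completeness for chains), showing that $(h\ovee h)^\perp$ heads a Boolean corner --- which itself requires the lemma that a corner with no nonzero self-summable elements consists entirely of idempotents, together with commutation facts such as $a\cdot b = 0 \iff b\cdot a = 0$ that are \emph{not} available for free before commutativity is established --- and then that halves are unique and central, so that the dyadic, hence real, scalar action is well defined and compatible with the product. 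Without this block the decomposition $M\cong B\oplus C(X,[0,1])$ never gets off the ground: as it stands your proposal is a correct skeleton of the known proof with its hardest vertebra missing.
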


\begin{corollary}\label{corscalars}
   If~$\catC$ is a directed-complete effectus,
        then there is an extremally-disconnected compact Hausdorff spaces~$X$
        and a complete Boolean algebra~$B$
        with~$\Pred (I) \cong B \oplus C(X,[0,1])$.
\end{corollary}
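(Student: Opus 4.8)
The plan is to combine the structural theorem on directed-complete effect monoids (Theorem~\ref{thm:effect-monoids}) with the observation that the scalars of a directed-complete effectus form such a monoid. First I would note that, by definition, the scalars $\catC(I,I)$ form an effect monoid (Example following Definition~\ref{def:effectmonoid}), and that this effect monoid is exactly $\Pred(I)$ as an effect algebra, since a scalar is precisely a map $I \to I$ and the predicates on $I$ are maps $I \to I$ with the composition providing the multiplication. The key point to verify is that $\Pred(I)$ is directed complete: this is immediate from the hypothesis that $\catC$ is directed complete, which by Definition means every predicate space $\Pred(A)$ is directed complete, and $I$ is one such object $A$.

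Next I would confirm that directed completeness of $\Pred(I)$ as an effect algebra upgrades to directed completeness as an \emph{effect monoid} in the sense required by Theorem~\ref{thm:effect-monoids}. Since the multiplication on scalars is inherited from composition and the order and suprema are those of the underlying effect algebra, a directed-complete effect monoid is just an effect monoid whose underlying effect algebra is directed complete; so no extra work is needed here beyond identifying the structures. With $\Pred(I)$ exhibited as a directed-complete effect monoid, I would apply Theorem~\ref{thm:effect-monoids} directly to $M := \Pred(I)$, which yields a complete Boolean algebra $B$ and an extremally-disconnected compact Hausdorff space $X$ with $\Pred(I) \cong B \oplus C(X,[0,1])$.

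In short, the corollary is essentially an instantiation of the structure theorem at the single object $I$, with the only genuine content being the identification $\catC(I,I) \cong \Pred(I)$ as effect monoids and the transfer of directed completeness from the effectus hypothesis to this particular predicate space. I do not expect any real obstacle: the proof is a one-line specialisation once the bookkeeping about which effect monoid we are applying the theorem to is in place. The only point demanding a moment's care is making sure the multiplicative structure used in Theorem~\ref{thm:effect-monoids} coincides with composition of scalars rather than some other operation, but this is exactly how the effect monoid structure on scalars was defined, so it is automatic.
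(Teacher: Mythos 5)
Your proof is correct and matches the paper exactly: the corollary is stated without a separate proof precisely because, as you observe, the scalars $\Pred(I) = \catC(I,I)$ form an effect monoid under composition, directed completeness of this particular predicate space is immediate from the hypothesis, and Theorem~\ref{thm:effect-monoids} then applies verbatim. Your side remarks (that directed completeness of the effect monoid means nothing more than directed completeness of its underlying effect algebra, and that the multiplication in question is composition of scalars) are also consistent with the paper's conventions, so there is no gap.
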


The characterisation result of Theorem~\ref{thm:effect-monoids} has a corollary for irreducible effect monoids, also proven in~\cite{first}.
\begin{theorem}
    Let $M$ be an irreducible directed-complete effect monoid. Then $M$ is isomorphic (as an effect monoid) to $\{0\}$, $\{0,1\}$ or $[0,1]$.
\end{theorem}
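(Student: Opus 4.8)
The plan is to deduce the statement directly from the structure theorem for directed-complete effect monoids (Theorem~\ref{thm:effect-monoids}), which gives $M \cong B \oplus C(X,[0,1])$ for a complete Boolean algebra $B$ and an extremally-disconnected compact Hausdorff space $X$. Since irreducibility is equivalent to the absence of non-trivial idempotents (Example~\ref{cornersexample}), the whole argument reduces to controlling the idempotents of each summand. First I would note that the element $(1_B, 0)$ is always an idempotent of the direct sum, and it is trivial (equal to $0 = (0_B,0)$ or to $1 = (1_B, 1_{C(X)})$) only when one of the two summands is the trivial effect monoid $\{0\}$ in which $0 = 1$. Hence irreducibility of $M$ forces either $C(X,[0,1])$ or $B$ to collapse, splitting the proof into two cases.

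In the first case $C(X,[0,1])$ is trivial, which happens exactly when $X = \emptyset$, so $M \cong B$. Every element of a Boolean algebra is idempotent under its multiplication $x \cdot y = x \wedge y$ (Example~\ref{ex:booleanalgebra}), so irreducibility leaves only $0$ and $1$ in $B$; thus $M \cong \{0\}$ or $M \cong \{0,1\}$. In the second case $B$ is trivial, so $M \cong C(X,[0,1])$. Here the idempotents are precisely the $\{0,1\}$-valued continuous functions, \ie the indicator functions of clopen subsets of $X$ (cf.~Example~\ref{ex:CX}), so the absence of non-trivial idempotents says exactly that $X$ has no clopen subsets other than $\emptyset$ and $X$ --- that is, $X$ is connected (or empty).

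The one genuinely topological step, and the part I expect to be the main obstacle, is then to show that a connected extremally-disconnected compact Hausdorff space has at most one point, forcing $C(X,[0,1]) \cong \{0\}$ or $C(X,[0,1]) \cong [0,1]$. For this I would argue by contradiction: given two distinct points $x \neq y$, Hausdorffness supplies disjoint open neighbourhoods $U \ni x$ and $V \ni y$; extremal disconnectedness makes $\overline{U}$ clopen, and since $V$ is an open neighbourhood of $y$ missing $U$ we have $y \notin \overline{U}$, so $\overline{U}$ is a clopen set with $\emptyset \neq \overline{U} \neq X$, contradicting connectedness. Collecting the cases yields $M \cong \{0\}$, $\{0,1\}$, or $[0,1]$, and the only subtlety left to double-check is the bookkeeping of the degenerate subcases (\eg $X = \emptyset$) so that each lands in the stated list.
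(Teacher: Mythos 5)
Your proposal is correct and takes essentially the same route as the paper: the paper presents this statement precisely as a corollary of the structure theorem (Theorem~\ref{thm:effect-monoids}), with the proof in the cited reference~\cite{first} proceeding just as you do --- irreducibility kills the idempotent $(1_B,0)$ so one summand collapses, a Boolean algebra without non-trivial idempotents is $\{0\}$ or $\{0,1\}$, and idempotents of $C(X,[0,1])$ are indicator functions of clopen sets, so irreducibility forces $X$ connected, hence empty or a singleton. Your closure-of-a-neighbourhood argument that a connected extremally-disconnected (compact) Hausdorff space has at most one point is the standard one and is correct, including the degenerate-case bookkeeping.
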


Hence, in a directed-complete effectus with irreducible scalars we have three possibilities for the scalars. These three different possibilities were analysed in~\cite{sigma}.
If $\Pred(I)\cong\{0\}$ the entire category is equivalent to the trivial one-object category, so we can safely ignore this possibility.
If $\Pred(I)\cong \{0,1\}$, then we don't have any immediate useful consequences, however if we assume the effectus is separated by states or separated by predicates, then this implies a lot of structure, namely that all the predicate spaces are \emph{orthoalgebras}. We will look at these in the next section, together with their counterpart, \emph{convex} effect algebras that arise when $\Pred(I)\cong[0,1]$.

\subsection{Orthoalgebras, convexity and order unit spaces}\label{sec:OUS}

\begin{definition}\label{def:orthoalgebra}
    Let $E$ be an effect algebra. It is an \Define{orthoalgebra}
        when~$0$ is the only self-summable element;
        i.e.~when for every~$a$ with~$a \perp a$, we have~$a = 0$.
    We denote the full subcategory of~$\textbf{EA}$ consisting of the
    orthoalgebras by~$\textbf{OA}$.
\end{definition}
Examples of orthoalgebras include orthomodular lattices and Boolean
    algebras.

\begin{proposition}[{\cite{sigma}}]
    Let $\catC$ be an effectus separated by states and where $\Pred(I)\cong \{0,1\}$. Then every predicate space is an orthoalgebra.
\end{proposition}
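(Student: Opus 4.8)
The plan is to unwind the definition of orthoalgebra and push the entire statement down to the scalar level using separation by states. Fix an object $A$; I must show that the only self-summable predicate on $A$ is $\falsity_A$. So suppose $p \in \Pred(A)$ satisfies $p \perp p$, and I aim to conclude $p = \falsity_A$.

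By separation by states it suffices to check that $p \circ \omega = \falsity_A \circ \omega$ for every state $\omega \colon I \to A$, i.e.\ that the scalar $p \circ \omega \in \Pred(I)$ vanishes. Here I would use that $\catC$ is a finPAC, and in particular that composition is biadditive, so that precomposition with $\omega$ is an additive map $\Pred(A) \to \Pred(I)$. An additive map sends summable pairs to summable pairs, so from $p \perp p$ I obtain $(p\circ\omega) \perp (p\circ\omega)$; that is, $p \circ \omega$ is itself self-summable in the effect algebra of scalars.

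Now I invoke the hypothesis $\Pred(I) \cong \{0,1\}$. The two-element effect algebra is an orthoalgebra: its only self-summable element is $0$, since $1 \perp 1$ would force $1 \leq 1^\perp = 0$. Hence $p \circ \omega = 0 = \falsity_A \circ \omega$. As $\omega$ was arbitrary, separation by states yields $p = \falsity_A$, and therefore $\Pred(A)$ contains no nonzero self-summable element, i.e.\ it is an orthoalgebra.

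The only delicate point—and what I would treat as the crux rather than a genuine obstacle—is transporting self-summability along a state: this rests precisely on precomposition respecting the PCM-structure, which is exactly the biadditivity of composition in a finPAC. Everything else is elementary, once one sees that separation by states lets any identity of predicates be tested on scalars, where the hypothesis pins the scalars down to $\{0,1\}$. As a sanity check on the edge case: if $A$ admits no states the conclusion still holds, now vacuously, since separation by states then forces \emph{every} predicate to equal $\falsity_A$, so $\Pred(A) \cong \{0\}$ is trivially an orthoalgebra.
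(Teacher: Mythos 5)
Your proof is correct and is essentially the same argument the paper relies on: the paper itself defers this proposition to~\cite{sigma}, but your exact technique—using additivity of precomposition with a state to transport self-summability down to the scalars, noting $1\perp 1$ fails in $\{0,1\}$, and concluding $p=\falsity$ by separation by states—is precisely how the paper proves the orthoalgebra half of its generalisation, Proposition~\ref{prop:dc-ortho-convex}. Your vacuous edge case is also fine, since an object without states has the trivial one-element predicate algebra, which is an orthoalgebra.
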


\begin{definition}\label{def:convex}
    An effect algebra is \Define{convex}~\cite{gudder1999convex} %\footnote{%
        %Categorically speaking, when it has an action of the effect monoid~$[0,1]$.}
      when there is
   a map $\cdot :  [0,1]\times E\rightarrow E$, 
   where $[0,1]$ is the regular unit interval, 
   obeying
    the following axioms for all $x,y\in E$ and~$\lambda, \mu \in
    [0,1]$:
    \begin{itemize}
        \item $\lambda\cdot (\mu\cdot x) = (\lambda\mu)\cdot x$.
    \item If $\lambda+\mu \leq 1$, then $\lambda \cdot x \perp
    \mu \cdot x$ and $\lambda \cdot x \ovee \mu \cdot x =
    (\lambda+\mu)\cdot x$.
        \item $1\cdot x = x$.
        
        \item $\lambda\cdot (x\ovee y) \ = \ \lambda\cdot x \ovee \lambda\cdot y$.
    \end{itemize}
    Denote by $\textbf{EA}_c$, respectively $\textbf{DCEA}_c$, the subcategory of $\textbf{EA}$ consisting of (directed-complete) convex effect algebras and morphisms that preserve the convex action.
\end{definition}

\begin{example}
    Let $V$ be an ordered real vector space (such as the space of self-adjoint elements of a C$^*$-algebra). Then any interval $[0,u]_V$ where $u\geq 0$ is a convex effect algebra with the obvious action of the real unit interval. 
    Conversely, for any convex effect algebra~$E$, we can find an ordered real vector space $V$ and $u\in V$ such that $E$ is isomorphic as a convex effect algebra to $[0,u]_V$~\cite{gudder1998representation}.
This is, in fact, an equivalance of categories~\cite{jacobs2016expectation}.
\end{example}

Ordered real vector spaces
    play a central role in \emph{generalised probabilistic
    theories}~\cite{barrett2007information} that are often
        used in operational reconstructions of quantum
    theory, cf.~\cite{tull2016,gogioso2018categorical}.
For directed-complete convex effect algebras, this equivalence
    restricts to a more specific type of vector space.

\begin{definition}
    An \Define{order unit space} (OUS) $(V,\leq,1)$ is an ordered vector space $(V,\leq)$ with a designated \Define{order unit} $1$ such that the induced semi-norm defined by $\norm{v} := \inf\{\lambda\in \R~;~-\lambda 1 \leq v \leq \lambda 1\}$ is a norm and where the positive cone of~$V$ is closed in its topology. A \Define{Banach} OUS is furthermore complete in its norm.
    We say an OUS is \Define{directed complete} when its unit interval is.%
    \footnote{This is equivalent to the whole OUS being \emph{bounded} directed complete, i.e.~where every bounded directed subset has a supremum.}
    Denote by \textbf{DCOUS} the category of directed-complete order unit spaces and positive linear contractions. Note that any directed-complete OUS is Banach~\cite[Lemma~1.1]{wright1972measures}.
\end{definition}

\begin{proposition}[{cf.~\cite[Prop.~55]{sigma}}]\label{prop:convex-OUS-equiv}
The equivalence between ordered vector spaces and convex effect algebras restricts to an equivalence $\textbf{DCOUS} \cong \textbf{DCEA}_c$.
\end{proposition}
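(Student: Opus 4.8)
The plan is to exhibit the claimed equivalence as the restriction of the already-established equivalence between order unit spaces and convex effect algebras (recalled just above, cf.~\cite{gudder1998representation,jacobs2016expectation} and \cite{sigma}) to the full subcategories of directed-complete objects on either side. Write $F\colon \textbf{OUS}\to\textbf{EA}_c$ for the functor $V\mapsto [0,1]_V$ and $G\colon\textbf{EA}_c\to\textbf{OUS}$ for the representation functor, so that $F$ and $G$ form an equivalence with natural isomorphisms $GF\cong\id$ and $FG\cong\id$. Since $\textbf{DCOUS}$ and $\textbf{DCEA}_c$ are by definition \emph{full} subcategories of $\textbf{OUS}$ and $\textbf{EA}_c$ (their morphisms being, respectively, all positive linear contractions and all convex-action-preserving additive maps, with no extra continuity imposed), it suffices to check that $F$ maps $\textbf{DCOUS}$ into $\textbf{DCEA}_c$ and that $G$ maps $\textbf{DCEA}_c$ into $\textbf{DCOUS}$; the restricted natural isomorphisms then witness the equivalence.

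The key step is to see that the two notions of directed completeness agree under $F$. For an OUS $V$, I would first record that the effect-algebra order on $[0,1]_V$ coincides with the order inherited from $V$: for $x,y\in[0,1]_V$ we have $x\leq y$ in the effect algebra precisely when there is $z\in[0,1]_V$ with $x\ovee z=y$, and since $\ovee$ is the (restricted) vector-space sum this says exactly $z=y-x\geq 0$, which for $x\leq y$ in $V$ automatically lies in $[0,1]_V$ because $0\leq y-x\leq y\leq 1$. Hence the effect-algebra poset $[0,1]_V$ and the order-theoretic unit interval of $V$ are literally the same poset. Consequently $[0,1]_V$ is directed complete as an effect algebra if and only if it is directed complete as a poset, which by definition is exactly the statement that $V$ is a directed-complete OUS. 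Thus $F$ carries $\textbf{DCOUS}$ into $\textbf{DCEA}_c$ (and in fact reflects the property).

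For the reverse direction, let $E$ be a directed-complete convex effect algebra. The representation functor provides an OUS $V=G(E)$ together with a convex-effect-algebra isomorphism $E\cong[0,1]_V$. Every such isomorphism and its inverse are additive, hence monotone, so this is an isomorphism of posets; directed completeness therefore transfers from $E$ to $[0,1]_V$, giving $V\in\textbf{DCOUS}$ by the definition used in the previous paragraph. Hence $G$ restricts to a functor $\textbf{DCEA}_c\to\textbf{DCOUS}$.

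Putting the two restrictions together, $F$ and $G$ restrict to functors between $\textbf{DCOUS}$ and $\textbf{DCEA}_c$, and the natural isomorphisms $GF\cong\id$, $FG\cong\id$ restrict along with them, their components being isomorphisms of objects that already lie in the relevant subcategories. This yields the desired equivalence $\textbf{DCOUS}\cong\textbf{DCEA}_c$. The only genuinely non-formal point is the order-coincidence observation of the second paragraph; everything else is bookkeeping about restricting an equivalence to full subcategories, and even that observation is routine once one notices that the partial sum $\ovee$ on $[0,1]_V$ is just the ambient vector-space addition. I expect no real obstacle here — the main thing to be careful about is not to accidentally demand Scott-continuity of the morphisms, which neither side's definition requires.
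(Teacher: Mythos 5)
There is a genuine gap, and it sits exactly where you declare the proof to be ``bookkeeping''. The ambient equivalence you propose to restrict is not an equivalence between $\textbf{OUS}$ and $\textbf{EA}_c$: the representation theorem the paper recalls (Gudder--Pulmannov\'a, made functorial by Jacobs) pairs convex effect algebras with \emph{arbitrary} ordered real vector spaces equipped with a positive element $u$, not with order unit spaces. A general convex effect algebra is not the unit interval of any OUS, so your functor $G\colon \textbf{EA}_c\to\textbf{OUS}$ does not exist. Concretely, take $V=\R^2$ with the lexicographic order and $u=(1,0)$: then $E=[0,u]_V$ is a convex effect algebra in which $v=(0,1)$ satisfies $v>0$ while the $n$-fold sum $v\ovee\cdots\ovee v$ is defined and $\leq 1$ for every $n$. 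In the unit interval of an OUS this would give $0\leq v\leq \frac{1}{n}1$ for all $n$, hence $\norm{v}=0$ and $v=0$ because the order-unit seminorm is a norm; so $E$ is not isomorphic to $[0,1]_W$ for any OUS $W$. Your sentence ``the representation functor provides an OUS $V=G(E)$'' therefore assumes precisely the substantive content of the proposition, and your framing of $\textbf{DCOUS}$, $\textbf{DCEA}_c$ as full subcategories of two equivalent categories is based on an equivalence that is not there.

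What actually has to be proved in the reverse direction is that \emph{directed completeness of the convex effect algebra forces its representing ordered vector space $(V,u)$ to be an order unit space}: the order-unit seminorm must be a norm (Archimedeanity) and the positive cone must be closed (the paper's cited Wright lemma then even makes $V$ Banach). Both follow from directed completeness by monotone-net arguments; for instance, if $0\leq v$ and $nv\leq u$ for all $n$, the increasing sequence $(nv)$ in $[0,u]$ has a supremum $s$, and $s\geq (n+1)v$ gives that $s-v\in[0,u]$ is again an upper bound of $(nv)$, whence $s\leq s-v$, i.e.\ $v\leq 0$ and so $v=0$; a similar $\varepsilon$-perturbation argument with the downward-directed net $(v+\varepsilon u)_{\varepsilon\downarrow 0}$ yields closedness of the cone. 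Your forward direction is fine --- the order-coincidence observation for $[0,1]_V$ is correct and, with the paper's definition of a directed-complete OUS via its unit interval, immediately gives that $V\mapsto[0,1]_V$ lands in $\textbf{DCEA}_c$ --- and the morphism-level bookkeeping (positive subunital maps between OUS are exactly contractions) is indeed routine. Note also that the paper itself gives no inline proof but delegates to the cited Prop.~55 of the $\sigma$-effectus paper, and that citation carries exactly the non-formal half your proposal skips.
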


\begin{proposition}[{\cite{sigma}}]
    Let $\catC$ be an effectus with $\Pred(I)\cong [0,1]$. Then all predicate spaces are convex effect algebras with the convex action given by $\lambda\cdot p := \lambda\circ p$ for $\lambda\in\Pred(I)$.
\end{proposition}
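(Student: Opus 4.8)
The plan is to verify the four axioms of a convex effect algebra (Definition~\ref{def:convex}) directly on each $\Pred(A)$, using only the effect-monoid structure of the scalars and the PCM-enrichment of the effectus. First I would fix the effect-monoid isomorphism $\Pred(I)\cong[0,1]$ and use it to identify each real $\lambda\in[0,1]$ with a scalar $\lambda\colon I\to I$. Under this identification the multiplication of scalars (which is composition, $s\cdot t = s\circ t$) becomes ordinary real multiplication, the partial sum $\ovee$ becomes truncated addition (so $\lambda\perp\mu$ iff $\lambda+\mu\le 1$, in which case $\lambda\ovee\mu=\lambda+\mu$), and the multiplicative unit of the scalar effect monoid, namely $\id_I$, is its top element and corresponds to $1\in[0,1]$. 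I would then observe that the proposed action $\lambda\cdot p := \lambda\circ p$ sends a scalar $\lambda\colon I\to I$ and a predicate $p\colon A\to I$ to a predicate $\lambda\circ p\colon A\to I$, so it is at least well-typed.

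The bulk of the proof is then four short computations. Associativity ($\lambda\cdot(\mu\cdot x)=(\lambda\mu)\cdot x$) is just associativity of composition together with the fact that composition of scalars realises their product: $\lambda\circ(\mu\circ x)=(\lambda\circ\mu)\circ x=(\lambda\mu)\circ x$. The unit law ($1\cdot x=x$) follows because the relevant scalar is $\id_I$, so $1\cdot x=\id_I\circ x=x$. Distributivity over $\ovee$ in the predicate argument ($\lambda\cdot(x\ovee y)=\lambda\cdot x\ovee\lambda\cdot y$) is precisely additivity of the map $(-)\mapsto\lambda\circ(-)$, which holds because composition in a PCM-enriched category is biadditive.

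The one axiom that needs a little care is the mixed distributivity: if $\lambda+\mu\le 1$ then $\lambda\cdot x\perp\mu\cdot x$ and $\lambda\cdot x\ovee\mu\cdot x=(\lambda+\mu)\cdot x$. Here I would invoke additivity of composition in the \emph{scalar} slot: $\lambda+\mu\le 1$ means $\lambda\perp\mu$ as scalars, so by biadditivity the Kleene equality $(\lambda\circ x)\ovee(\mu\circ x)=(\lambda\ovee\mu)\circ x$ holds; since the right-hand side is defined (as $\lambda\ovee\mu=\lambda+\mu$ exists), this simultaneously gives $\lambda\circ x\perp\mu\circ x$ and identifies the sum with $(\lambda+\mu)\circ x=(\lambda+\mu)\cdot x$.

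The main thing to get right throughout is that all the arithmetic---multiplication, truncated addition, and the orthosupplement $\lambda^\perp = 1-\lambda$---is transported faithfully across the effect-monoid isomorphism $\Pred(I)\cong[0,1]$, and that biadditivity is applied in the correct argument of composition (the predicate slot for the fourth axiom, the scalar slot for the second). The rest is mechanical, so I expect no genuine obstacle beyond bookkeeping; the content of the statement is really just that left-composition by a scalar is the convex action, which the effect-monoid and enrichment axioms make automatic.
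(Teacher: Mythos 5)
Your proof is correct and takes essentially the same route as the source: the paper states this proposition without proof (deferring to~\cite{sigma}), and the verification there is exactly your four computations --- biadditivity of the PCM-enriched composition in the predicate slot gives the fourth axiom, biadditivity in the scalar slot gives the mixed one, associativity of composition gives the first, and $\truth_I = \id_I$ gives the unit law. Your one implicit assumption, that the isomorphism $\Pred(I)\cong[0,1]$ respects the multiplication, is harmless: the effect-monoid structure on $[0,1]$ is unique (for fixed $s$ the map $t\mapsto s\cdot t$ is additive, hence monotone, and unital on dyadic rationals, forcing $s\cdot t = st$), and in the paper's usage the isomorphism is one of effect monoids anyway.
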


This last result in particular implies that if $\catC$ is additionally directed complete that all the predicate spaces are the unit intervals of directed-complete order unit spaces, and thus that the predicate functor gives a functor from $\catC$ to $\textbf{DCOUS}^\opp$.

\subsection{Jordan operator algebras}\label{sec:jordan-algebra}

Our reconstruction of quantum theory will show that our category embeds into a category of \emph{Jordan algebras}. These are a type of algebras originally introduced as a generalisation of a quantum system~\cite{jordan1993algebraic}, but were quickly found to be very close to regular quantum systems. Indeed, the type of infinite-dimensional Jordan algebras we consider here, JBW-algebras, can be shown to embed into a von Neumann algebra up to a so-called `exceptional ideal', and hence we don't lose much by working with Jordan algebras instead of C$^*$-algebras.

In this section we introduce the necessary concepts related to JBW-algebras.
First, let us introduce the `Jordan version' of a C$^*$-algebra.

\begin{definition}[{\cite[Proposition~3.1.6]{hanche1984jordan}}]
    \label{def:JB-algebra}
    A \Define{JB-algebra} $(A,*,1,\leq)$ is a Banach order unit space equipped with a binary operation $*\colon A\times A\rightarrow A$ satisfying for all $a,b,c\in A$:
    \begin{itemize}
        \item Commutativity: $a*b=b*a$.
        \item Unit: $a*1 = 1*a= a$.
        \item The \Define{Jordan identity}: $(a*b)*(a*a) = a*(b*(a*a))$.
        \item If $-1\leq a \leq 1$, then $0\leq a*a \leq 1$.
    \end{itemize}
\end{definition}

\begin{example}
    Let $\mathfrak{A}$ be a unital C$^*$-algebra. Let $\mathfrak{A}_{\sa}$ denote the space of self-adjoint elements and write $a*b := \frac12(ab+ba)$ for the \Define{special Jordan product}. Then $(\mathfrak{A}_{\sa},*,1,\leq)$ where $\leq$ is the standard order of a C$^*$-algebra is a JB-algebra. Its norm is the regular C$^*$-norm.
\end{example}

\begin{example}
    A finite-dimensional JB-algebra $A$ is a \Define{Euclidean Jordan algebra} (EJA): a Jordan algebra equipped with an inner product $\inn{\cdot,\cdot}\colon A\times A\rightarrow \R$ such that $\inn{a*b,c} = \inn{b,a*c}$ for all $a,b,c\in A$ (and vice versa any Euclidean Jordan algebra is a JB-algebra).
    The Euclidean Jordan algebras have been fully classified~\cite{jordan1993algebraic}: they are direct sums of simple EJAs, and these are either matrix algebras over the real, complex or quaternionic fields, a type of algebra known as a \emph{spin factor}, or the so-called \emph{exceptional} algebra of $3\times 3$ Hermitian matrices over the octonions. Except for this last one, each of these algebras can be embedded into a C$^*$-algebra.
\end{example}

% \begin{definition}
%   Let $A$ be a JB-algebra. We say $A$ is a \Define{JC-algebra} when there exists a C$^*$-algebra $\mathfrak{A}$ so that $A$ is isometrically isomorphic to a norm-closed subset of~$\mathfrak{A}_{\sa}$.
% \end{definition}

% The isometry $\phi: A\rightarrow \mathfrak{A}_{\sa}$ mapping a JC-algebra into its C$^*$-algebra is necessarily a Jordan homomorphism~\cite{wright1978isometries}. When studying JC-algebras as a Jordan algebra we can then hence without loss of generality assume it be a Jordan subalgebra of $\mathfrak{A}_{\sa}$.

% \begin{definition}
%   Let $f:A\rightarrow B$ be a linear map between JB-algebras. 
%   It is \Define{positive} when $f(a)\geq 0$ if $a\geq 0$. 
%   It is \Define{unital} when $f(1)=1$, and \Define{sub-unital} when $f(1)\leq 1$. 
%   We denote by \textbf{JB}$_{\text{psc}}$ the category of JB-algebras with positive sub-unital maps.
%   The \Define{states} of $A$ are positive unital maps $f:A\rightarrow \R$.
% \end{definition}

We are interested not in JB-algebras, but in \emph{JBW-algebras}, which are a class of JB-algebras that have more structure. They relate to JB-algebras in an analogous manner to how von Neumann algebras (i.e.~W$^*$-algebras) relate to C$^*$-algebras, hence the `W' in `JBW'.

\begin{definition}\label{def:order-separating}
    Let $A$ be an order unit space (such as a JB-algebra). A \Define{state} of $A$ is a positive unital linear map $\omega\colon A\rightarrow \R$. We say a state (or more generally any positive linear map) is \Define{normal} when it preserves suprema of directed sets: $\omega(\bigvee S) = \bigvee_{s\in S} \omega(s)$ for any directed~$S$. We say $A$ has a \Define{separating set} of normal states when for any two~$a,b\in A,a \neq b$ we can find a normal state $\omega$ such that $\omega(a)\neq \omega(b)$.
\end{definition}

\begin{definition}\label{def:JBW-algebra}
  A JB-algebra $A$ is a \Define{JBW-algebra} when it is directed complete and has a separating set of normal states. We denote by \textbf{JBW}$_{\text{pc}}$ the category of JBW-algebras with positive linear contractions, and by \textbf{JBW}$_{\text{npc}}$ for the wide subcategory of normal positive linear contractions.
\end{definition}

% The category \textbf{JBW}$_{\text{npc}}$ is particularly well-behaved. It satisfies (apart from finite tomography) all the conditions that we have introduced so far: it is a normal $\diamond$-effectus with separating predicates and states and the pure maps form a dagger-category in the manner described in Remark~\ref{rem:reconstruction-finite}. It furthermore satisfies the other conditions we will introduce, specifically those of Def.~\ref{def:sequential-effectus}. The proof of these claims can be found in~\cite[Chapter~4]{vandewetering2021thesis}.

\begin{example}
    Let $\mathfrak{A}$ be a \Define{von Neumann algebra}, i.e.~a C$^*$-algebra that is directed complete and has a separating set of normal states~\cite{kadison1956operator}. Then its space of self-adjoint elements $\mathfrak{A}_{\sa}$ equipped with the special Jordan product is a JBW-algebra.
\end{example}

JBW-algebras are very close to the more familiar von Neumann algebras. Indeed, a large class of JBW-algebras comes from von Neumann algebras:
\begin{definition}
    A JBW-algebra $A$ is a \Define{JW-algebra} when it is
    Jordan-isomorphic to an ultraweakly
    closed subset of the
    self-adjoint elements of a von Neumann algebra.
\end{definition}

The counterpart to such `well-behaved' algebras are the \emph{exceptional} Jordan algebras.

\begin{definition}
    Let $A$ be a JB-algebra. We call $A$ \Define{purely exceptional}
  when any Jordan homomorphism $\phi\colon A\rightarrow \mathfrak{A}_{\sa}$ into a C$^*$-algebra $\mathfrak{A}$ is necessarily zero.
\end{definition}

\begin{theorem}[{\cite[Theorem 7.2.7]{hanche1984jordan}}]\label{thm:JBW-decomposition}
    Let $A$ be a JBW-algebra. Then there is a unique decomposition $A=A_{\text{sp}} \oplus A_{\text{ex}}$ where~$A_{\text{sp}}$ is a JW-algebra and~$A_{\text{ex}}$ is a purely exceptional JBW-algebra.
\end{theorem}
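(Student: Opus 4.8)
The plan is to produce the decomposition from a single central projection: I will exhibit a central projection $c \in A$ such that $cA$ is purely exceptional and $c^\perp A$ is a JW-algebra, then set $A_{\text{ex}} = cA$ and $A_{\text{sp}} = c^\perp A$. The structural starting point is that in a JBW-algebra the ultraweakly closed (Jordan) ideals are in bijection with central projections, each such ideal being of the form $cA$ and splitting off as a direct summand $A = cA \oplus c^\perp A$; this reduces the theorem to locating the correct central projection.

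To find it I consider the family $\Phi$ of all Jordan homomorphisms $\phi \colon A \to \mathfrak{A}_{\sa}$ into self-adjoint parts of C$^*$-algebras, and set $J := \bigcap_{\phi \in \Phi} \ker \phi$, the part of $A$ annihilated by every special representation. By construction $J$ is a norm-closed Jordan ideal; equivalently, it is the kernel of the universal specialization $\psi_u \colon A \to C^*_u(A)_{\sa}$, through which every member of $\Phi$ factors. The first technical step is to show $J$ is in fact ultraweakly closed, so that $J = cA$ for a central projection $c$. That $cA$ is then purely exceptional is a short argument: any Jordan homomorphism $\psi \colon cA \to \mathfrak{A}_{\sa}$ precomposes with the central compression $a \mapsto c \smul a$ (itself a Jordan homomorphism of $A$ onto $cA = J$) to give a member of $\Phi$, which must annihilate $J$; since $c$ is the unit of $J$, for $j \in J$ we get $\psi(j) = \psi(c \smul j) = 0$, whence $\psi = 0$. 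The complementary summand $c^\perp A \cong A/J$ embeds, via the induced map, into a C$^*$-algebra, so it is at least \emph{special}.

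The main obstacle is to upgrade ``special'' to ``JW'': to show that $A_{\text{sp}} := c^\perp A$ is Jordan-isomorphic to an \emph{ultraweakly closed} subalgebra of the self-adjoint part of a von Neumann algebra. Here the separating family of special representations of $A_{\text{sp}}$ must be assembled into a single faithful \emph{normal} Jordan homomorphism $\pi$ into some $\mathfrak{M}_{\sa}$, and one must prove that its image is ultraweakly closed. Normality is arranged by restricting to the normal states, which separate the points of $A_{\text{sp}}$ because it is a direct summand of the JBW-algebra $A$, and forming the associated direct-sum representation; the delicate point is that the range of a faithful normal Jordan morphism out of a JBW-algebra is ultraweakly closed. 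I would handle this either by a supremum argument---the central projections $p$ with $pA$ a JW-algebra are closed under suprema, since inside a fixed ambient von Neumann representation the ultraweak closure of an increasing union of ultraweakly closed Jordan subalgebras is again one, so that $\bigvee\{p : pA \text{ is JW}\}$ furnishes a largest JW summand---or by invoking the classification of purely exceptional JBW-algebras as $C(X)$-valued copies of the exceptional Albert algebra $H_3(\mathbb{O})$, which confines all exceptional behaviour to a type-$I_3$ summand and forces its complement to be reversible, hence JW.

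Finally, uniqueness follows from maximality. In any decomposition $A = A_{\text{sp}}' \oplus A_{\text{ex}}'$ of the stated form, the purely exceptional summand $A_{\text{ex}}'$ is annihilated by every special representation, so $A_{\text{ex}}' \sse J = A_{\text{ex}}$; while the JW summand $A_{\text{sp}}'$ carries a faithful special representation and therefore meets $J$ trivially, giving $A_{\text{sp}}' \sse A_{\text{sp}}$. Since in both cases the two summands are complementary direct summands of $A$, these inclusions are forced to be equalities, so the central projection $c$---and hence the decomposition---is unique.
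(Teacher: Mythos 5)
The paper does not prove this statement at all: it is imported verbatim from Hanche-Olsen and St{\o}rmer (their Theorem~7.2.7), so your proposal must be measured against the standard textbook proof. Your skeleton matches that proof in outline (split $A$ along a central projection isolating the ``kernel of specialization''), and your soft steps are correct: the identification $J = \ker \psi_u$, the argument that $cA$ is purely exceptional (precompose any $\psi\colon cA \to \mathfrak{A}_{\sa}$ with the central compression $a \mapsto c \smul a$ to obtain a member of $\Phi$, which must annihilate $J$), and the uniqueness argument via $A_{\text{ex}}' \sse J$ together with $A_{\text{sp}}' \cap J = \{0\}$ are all sound.

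However, both load-bearing analytic steps are left unproven, and your proposed fallbacks would not close them. First, because your family $\Phi$ consists of arbitrary, possibly non-normal, Jordan homomorphisms, there is no direct reason why $J = \bigcap_{\phi\in\Phi} \ker \phi$ should be ultraweakly closed: each $\ker\phi$ is only norm-closed, and norm-closed Jordan ideals of a JBW-algebra need not be ultraweakly closed. This is exactly where the real content of the theorem sits. The textbook route avoids it by taking the kernel of the canonical \emph{normal} map $\psi \colon A \to W^*(A)_{\sa}$ --- $\sigma$-weakly closed for free since $\psi$ is normal --- and then proving the complementary hard fact that this kernel admits no homomorphism into any C$^*$-algebra whatsoever, which is established through the type decomposition of JBW-algebras (splitting off the type~I$_3$ summand with octonionic fibres) together with Shultz's theorem. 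Your choice of $J$ buys pure exceptionality cheaply but concentrates all of the difficulty into the unproved closedness claim; the two formulations are two faces of the same problem, and neither face is dispatched in your sketch. Second, your fallbacks: the supremum argument fails as stated, since it places all JW central summands ``inside a fixed ambient von Neumann representation'' --- but no such common faithful normal representation is available; it is precisely the object the theorem constructs. And invoking the classification of purely exceptional JBW-algebras (the paper's Theorem~\ref{thm:purely-exceptional-char}) is circular at this point: that classification applies to algebras already known to be purely exceptional, whereas you need the structure theory that isolates the exceptional summand in the first place. Note also that ``reversible, hence JW'' is not a valid implication; the fact you actually need --- every \emph{special} JBW-algebra is a JW-algebra --- is itself a nontrivial theorem (provable via biduals: if $A_{\text{sp}} \sse B_{\sa}$ then $A_{\text{sp}}^{**}$ identifies with an ultraweakly closed Jordan subalgebra of $B^{**}_{\sa}$, and a JBW-algebra splits off as a central direct summand of its bidual), and supplying it would be the clean way to upgrade your summand $c^\perp A \cong A/J$ from ``special'' to JW. Likewise the ``delicate point'' you flag --- ultraweak closedness of the range of a faithful normal Jordan homomorphism --- is true (via Krein--Smulian applied to the image of the unit ball) but is asserted rather than proven.
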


Interestingly, purely exceptional JBW-algebras only come in one type. To state this result, we need some more definitions.

\begin{definition}
    Let $X$ be a Stonean space (i.e.~an extremally disconnected compact Hausdorff space). We call $X$ \Define{hyperstonean} when $C(X,\R):=\{f:X\rightarrow \R~\text{continuous}\}$ is separated by normal states.
\end{definition}

Note that a compact Hausdorff space $X$ is hyperstonean if and only if $C(X,\R)$ is an associative JBW-algebra (or equivalently when $C(X,\CC)$ is a commutative von Neumann algebra). 
%Since the self-adjoint part of a commutative von Neumann algebra is also isomorphic to $C(X,\R)$ with $X$ hyperstonean, there is a correspondence between commutative von Neumann algebras and associative JBW-algebras.

\begin{example}[{\cite{shultz1979normed}}]
    Let $X$ be a hyperstonean space and let $E = M_3(\mathbb{O})_{\sa}$ denote the exceptional Albert algebra of $3\times 3$ self-adjoint matrices of octonions $\mathbb{O}$ equipped with the standard Jordan product. Denote by $C(X,E)$ the set of continuous functions $f\colon X\rightarrow E$. Then $C(X,E)$ is a purely exceptional JBW-algebra with
    the Jordan product given pointwise by~$(f*g)(x) = f(x)*g(x)$.
\end{example}

% The above example is actually the only type of purely exceptional JBW-algebra:

\begin{theorem}[{\cite{shultz1979normed}}]\label{thm:purely-exceptional-char}
    Let $A$ be a purely exceptional JBW-algebra. Then there
    exists a hyperstonean space $X$, such that~$A\cong C(X,M_3(\mathbb{O}))$.
\end{theorem}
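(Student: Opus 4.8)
The plan is to classify $A$ via the type-decomposition theory of JBW-algebras, following Shultz~\cite{shultz1979normed} and the structure theory of~\cite{hanche1984jordan}. The backbone is the classification of JBW-factors: every JBW-factor is either special (hence a JW-algebra embedding into a von Neumann algebra) or is the $27$-dimensional exceptional Albert algebra $M_3(\mathbb{O})_{\sa}$, which is the \emph{unique} exceptional JBW-factor. Granting this factor-level dichotomy, the strategy is to (i) identify the center of $A$ as the continuous functions on a hyperstonean space, (ii) argue that purely exceptional algebras are confined to type $\mathrm{I}_3$, and (iii) globalise the fibrewise Albert-algebra structure over the center to obtain the claimed form.

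First I would analyse the center $Z(A)$. Since $A$ is a JBW-algebra, $Z(A)$ is an associative JBW-algebra, and every associative JBW-algebra is of the form $C(X,\R)$ for a hyperstonean space $X$ --- this is exactly the characterisation recalled just before the statement ($C(X,\R)$ is an associative JBW-algebra iff $X$ is hyperstonean). This already produces the candidate base space $X$ and shows it is hyperstonean. Next, using the type decomposition $A = A_{\mathrm{I}} \oplus A_{\mathrm{II}} \oplus A_{\mathrm{III}}$, I would show that the purely exceptional hypothesis forces $A = A_{\mathrm{I}}$ of type $\mathrm{I}_3$. The key input is that the type $\mathrm{II}$ and $\mathrm{III}$ parts, and the type $\mathrm{I}_n$ parts for $n \neq 3$, are all special, so each admits a nonzero Jordan homomorphism into the self-adjoint part of a von Neumann algebra; purely exceptional then makes them vanish. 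For type $\mathrm{I}_n$ this is pinned down by Jacobson's Jordan coordinatization theorem: a factor of type $\mathrm{I}_n$ ($n \geq 3$) is the Hermitian-matrix algebra $H_n(\cdot)$ over a composition algebra, the associative coordinates $\R,\CC,\mathbb{H}$ give special algebras, and $H_n(\mathbb{O})$ fails the Jordan identity for $n \geq 4$, leaving the octonions $\mathbb{O}$ admissible only at $n=3$. Thus the sole exceptional factor is the Albert algebra $M_3(\mathbb{O})_{\sa}$.

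Finally I would globalise from the factor case to the general case over the center. A homogeneous type $\mathrm{I}_3$ JBW-algebra with exceptional coordinates is a continuous field of Albert algebras over $X$: its octonionic coordinate algebra is a module over $Z(A) = C(X,\R)$ and must be $C(X,\mathbb{O})$, so that
\[
A \;\cong\; H_3\bigl(C(X,\mathbb{O})\bigr) \;\cong\; C\bigl(X, H_3(\mathbb{O})\bigr) \;=\; C\bigl(X, M_3(\mathbb{O})_{\sa}\bigr),
\]
where I use that each fibre is the finite-dimensional Albert algebra and that normality of the JBW-structure corresponds to the sup-structure on sections over the hyperstonean $X$. This yields the desired isomorphism $A \cong C(X, M_3(\mathbb{O}))$ with $X$ hyperstonean.

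The main obstacle is the factor-level dichotomy itself: proving that the only exceptional JBW-factor is the Albert algebra, i.e.\ that there is no infinite-dimensional exceptional JBW-factor. This is the deep content of the JBW structure theory~\cite{hanche1984jordan}, resting on Jordan coordinatization together with the facts that the octonions are the unique non-associative composition algebra and that $H_n(\mathbb{O})$ is not Jordan for $n\geq 4$ --- these together confine the exceptional locus to type $\mathrm{I}_3$ and to finite dimension. The globalisation step, by contrast, is largely bookkeeping once the fibrewise classification and the identification $Z(A)\cong C(X,\R)$ with $X$ hyperstonean are in hand.
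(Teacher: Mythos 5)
The paper offers no proof of this theorem for you to be compared against: it is imported verbatim from Shultz \cite{shultz1979normed} (the textbook treatment is Hanche-Olsen and St{\o}rmer, Theorem 7.2.3), so the only meaningful check is against the literature proof. Measured that way, your outline is essentially that standard argument and is correct in structure: the center of a JBW-algebra is an associative JBW-algebra, hence of the form $C(X,\R)$ with $X$ hyperstonean; the type $\mathrm{I}_1$, $\mathrm{I}_2$, $\mathrm{I}_\infty$, $\mathrm{II}$ and $\mathrm{III}$ summands are JW, and Jordan coordinatization makes type $\mathrm{I}_n$ special for $4\le n<\infty$, so pure exceptionality confines $A$ to type $\mathrm{I}_3$ with octonionic coordinates, the Albert algebra $H_3(\mathbb{O})=M_3(\mathbb{O})_{\sa}$ being the unique exceptional factor.

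The one judgement to push back on is your closing claim that the globalisation over the center is ``largely bookkeeping''. That step is the technical heart of Shultz's theorem. A priori the coordinatization only yields $A\cong H_3(R)$ with $R$ an octonion algebra over $C(X,\R)$, and octonion (equivalently Albert) algebra bundles over a compact base are genuinely nontrivial objects in general --- they are twisted forms, classified by cocycles valued in $G_2$ (respectively $F_4$) --- so the identification $R\cong C(X,\mathbb{O})$ is not automatic. It holds here because $X$, being hyperstonean, is extremally disconnected and in particular zero-dimensional, so covers refine to finite clopen partitions and such cocycles trivialize, and because normality (monotone completeness) is what allows the fibrewise identifications to be patched into a global one. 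So: right proof, but the difficulty is distributed differently than you say --- the factor-level dichotomy you single out as the main obstacle is deep but quotable classical theory, whereas the step you dismiss as routine is exactly where the hyperstonean hypothesis and the W-structure of the algebra do their work.
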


Combining Theorems~\ref{thm:JBW-decomposition}
and~\ref{thm:purely-exceptional-char} we see that any JBW-algebra
splits up into a part that embeds into a von Neumann algebra and a
part that is characterised by a hyperstonean space.

\subsection{Sequential products}\label{sec:seqprod}

Our reconstruction relies heavily on the categorical structures outlined in Section~\ref{sec:effectus}.
However, we will also need some assumptions that are of a more operational nature.
In particular, we consider the operation of `measuring' a predicate. This will take the form of a self-map $\asrt_p\colon A\rightarrow A$ for each predicate $p\in\Pred(A)$ that `asserts' that $p$ is true.
For an effect $p\in B(\mathcal{H})$ on a Hilbert space $\mathcal{H}$ this map is of the form $\asrt_p(q) = \sqrt{p}q\sqrt{p}$.
When given a set of assert maps for each predicate we denote $p\mult q:=q\circ \asrt_p$ for the \emph{sequential product} that can be interpreted as `observe $p$ and then observe $q$'~\cite{gudder2001sequential}.
We will require \& to satisfy a number of assumptions that will make $\Pred(A)$ into a \emph{sequential effect algebra}~\cite{gudder2002sequential}. Before we give the definition, let us motivate some of these conditions.

The sequential product $p\mult q$ of two effects $p$ and $q$ represents the sequential measurement of first $p$ and then $q$.
An important difference between classical and quantum systems is that in a classical system we can measure without disturbance, and hence the order of measurement is not important: $p\mult q = q\mult p$ for all predicates $p$ and $q$.
In a quantum system this is generally not the case, and the order of measurement is important (indeed, this is essentially Heisenberg uncertainty.)
However, what is interesting in quantum theory is that some
measurements are \emph{compatible}, meaning that the order of
measurement for those measurements is not important.
We will use the symbol $p\commu q$ to denote that~$p \mult q = q \mult p$.
% For such compatible predicates $p$, $q$ and $r$ we can expect that doing `classical operations' or `classical post-processing' on these predicates retains compatibility. For instance, if $p\commu q$, and we negate the outcome of $p$ to get $p^\perp$, we still expect $p^\perp \commu q$, as negation can be done classically without interacting with the quantum system. Similarly, we would expect $p\commu q\ovee r$, as addition of predicates corresponds to classically coarse-graining the measurement outcomes.

\begin{definition}
  \label{defn:sea}
    A \Define{sequential effect algebra}
    (SEA)~\cite{gudder2002sequential}~$E$
    is an
     effect algebra with an additional (total)
    binary operation~$\mult$,
        called the \Define{sequential product},
        satisfying the axioms listed below,
        where $a,b,c\in E$.
        Elements~$a$ and~$b$ are said to \Define{commute},
            written~$a \commu b$,
            whenever~$a \mult b = b \mult a$.
    \begin{enumerate}[a)]
        \item
            $a\mult (b\ovee c) = a\mult b \ovee a \mult c$
            whenever~$b\perp c$.
        \item
            $1\mult a = a$.
        \item
            $a\mult b = 0 \implies b\mult a =0$.
        \item
            If $a\commu b$, then $a\commu b^\perp$ and $a\mult
                (b\mult c) = (a\mult b)\mult c$ for all $c$.
        \item
                If $c\commu a$ and $c\commu b$ then also $c\commu
        a\mult b$ and if furthermore $a\perp  b$, then $c\commu a\ovee
        b$.
    \end{enumerate}
    A SEA~$E$ is called \Define{normal}
        when~$E$ is directed complete, and
    \begin{enumerate}[f)]
        \item
            Given directed~$S\subseteq E$ we have
            $a\mult \bigvee S = \bigvee_{s\in S} a\mult s$,
            and $a\commu \bigvee S$ when $a\commu s$ for all~$s\in S$.
    \end{enumerate}
    % We say~$a,b \in E$ are \Define{orthogonal},
    %     provided that~$a \mult b =0$.
\end{definition}

Normal SEAs were studied in~\cite{second}, where they were shown to have many desirable properties. 
Let us note some of these properties here for later reference. Note that we call an effect $p$ idempotent
when $p\mult p = p$.
\begin{lemma}[{\cite{second}}]\label{lem:normal-SEA-properties}
    Let $E$ be a normal SEA and let $a,b\in E$ be arbitrary. Then the following are true
    \begin{enumerate}[a)]
        % \item There is a unique effect $\sqrt{a}$ such that $\sqrt{a}\mult \sqrt{a} = a$.
        \item There is a smallest idempotent effect above $a$, which we denote by $\ceil{a}$.
        \item There is a largest idempotent effect beneath $a$, which we denote by $\floor{a}$.
        \item If $b\mult a = a$, then $b\geq \ceil{a}$.
    \end{enumerate}
\end{lemma}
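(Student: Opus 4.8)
The plan is to build everything out of the floor $\floor{a}$, obtain the ceiling by orthocomplementation, and then read part~c) directly off the floor construction. Throughout, the governing discipline is that $\mult$ is additive and monotone only in its \emph{right} argument (axiom~a), so every manipulation must be routed through commutation in order to move the "variable" factor into the right slot. First I would record the elementary facts the argument leans on: that $a\commu 1$ (from $a\commu a$, axiom~d giving $a\commu a^\perp$, and axiom~e applied to $a\ovee a^\perp=1$), hence $a\mult 1=1\mult a=a$; that for an idempotent $p$ one has $p\mult p^\perp=p^\perp\mult p=0$ and $p^\perp$ is again idempotent; and that the powers $a^n$ (well defined since $a\commu a$, associating via axiom~d) form a decreasing chain $a\ge a^2\ge\cdots$ with $a\commu a^n$. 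The key technical tool is a \emph{downward} continuity statement dual to axiom~f): for downward-directed $S$ with $m=\bigwedge S$, one has $a\mult m=\bigwedge_{s\in S} a\mult s$, and $a\commu m$ whenever $a\commu s$ for all $s$. I would prove this by complementing: $\{s^\perp\}$ is upward directed with supremum $m^\perp$, axiom~f) gives $a\mult m^\perp=\bigvee_s a\mult s^\perp$, and since $a\mult s\ovee a\mult s^\perp=a\mult 1=a$ the relative orthosupplement $x\mapsto a\ominus x$ of the interval $[0,a]$ (an order anti-automorphism) turns this supremum into the infimum, whence $a\mult m=a\ominus a\mult m^\perp=\bigwedge_s a\mult s$.

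For part~b) I would define $\floor{a}:=\bigwedge_n a^n$, which exists by directed completeness and satisfies $\floor{a}\le a$. For idempotency, downward continuity gives $a^m\mult\floor{a}=\bigwedge_n a^{m+n}=\floor{a}$, and since $a\commu\floor{a}$ (the dual commutation clause), hence $\floor{a}\commu a^n$ (axiom~e), I compute $\floor{a}\mult\floor{a}=\bigwedge_n \floor{a}\mult a^n=\bigwedge_n a^n\mult\floor{a}=\floor{a}$. For maximality, given an idempotent $q\le a$, I first show $q\commu a$ and $q\mult a=q$: writing $r:=a\ominus q\le q^\perp$ gives $q\mult r\le q\mult q^\perp=0$ by right-monotonicity, so $q\commu r$, and axiom~e yields $q\commu a$, while right-additivity gives $q\mult a=q\mult q\ovee q\mult r=q$. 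Then $q\mult a^n=q$ for all $n$ (axiom~d), so, as $q\commu a^n$, $q=a^n\mult q\le a^n\mult 1=a^n$, whence $q\le\floor{a}$.

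For the ceiling (part~a) I set $\ceil{a}:=(\floor{a^\perp})^\perp$. Since $x\mapsto x^\perp$ is an order anti-isomorphism that preserves idempotency, the largest idempotent below $a^\perp$ is carried to the smallest idempotent above $a$, giving part~a). Part~c) then falls straight out of the floor construction: if $b\mult a=a$, then $b^n\mult a=a$ for every $n$ (induction using $b\commu b$ and axiom~d), so $a=b^n\mult a\le b^n\mult 1=b^n$ and hence $a\le\bigwedge_n b^n=\floor{b}$; since $\floor{b}$ is an idempotent dominating $a$, minimality of $\ceil{a}$ forces $\floor{b}\ge\ceil{a}$, and therefore $b\ge\floor{b}\ge\ceil{a}$.

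The main obstacle is precisely the asymmetry of $\mult$: because additivity and monotonicity are guaranteed only on the right, expressions of the form $a\mult(\cdots)$ cannot be split on the left, and the entire proof must be arranged so the variable factor sits on the right, with commutation (axioms~d and~e together with the dual of~f) used to put it there. The one genuinely delicate step is the downward-continuity lemma, where the normality axiom~f) has to be transported across the orthosupplement; once that lemma and the right-monotonicity discipline are in place, the remaining steps are routine bookkeeping with the effect-algebra order.
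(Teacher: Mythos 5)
Your proof is correct, and it reconstructs essentially the argument of the cited source~\cite{second} (the paper itself states this lemma without proof, deferring to that reference): there too, $\floor{a}$ is defined as $\bigwedge_n a^n$, normality is dualized through the orthosupplement to get downward continuity of $b \mapsto a \mult b$, the ceiling is obtained as $\ceil{a} = \floor{a^\perp}^\perp$, and part~c) follows from $b \mult a = a \Rightarrow a \le b^n$ for all $n$, hence $a \le \floor{b} \le b$. Your careful handling of the left/right asymmetry of $\mult$ via the commutation axioms is exactly the discipline that proof requires, so there is nothing to correct.
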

\begin{remark}
In~\cite{second} it is also shown that normal SEAs satisfy a spectral theorem, 
and that in particular every effect can be written as a supremum and norm-limit 
of a sequence of \emph{simple} effects, effects that are finite linear combinations 
of idempotent effects. This implies that the sharp effects span a norm-dense set
of effects. These properties will sometimes implicitly be used in our reconstruction,
in particular in proving Lemma~\ref{lem:state-order-lemma}.
\end{remark}

The unit interval of a JBW-algebra is an example of a normal SEA. This is defined in terms of the \Define{quadratic product}. Let $A$ be a JBW-algebra and $a,b\in A$ arbitrary. Then we define $Q_a\colon A\rightarrow A$ as the map $Q_a b = 2a*(a*b)-a^2*b$. While this might look arbitrary, when $A$ is a JW-algebra this boils down to $Q_ab = aba$ using the product in the underlying von Neumann algebra. It is then perhaps not too surprising that the operation $a\mult b := Q_{\sqrt{a}} b$ on the unit interval of a JBW-algebra defines a normal sequential product~\cite{wetering2019commutativity}.

What is perhaps more surprising is that there is also a converse to this.
As it will inform the structure of our reconstruction proof, let us now recall two properties introduced in~\cite{wetering2018sequential} that force a convex normal SEA to have a Jordan algebra structure. 
Analogously to the definition for order unit spaces, we call a map $\omega\colon E\rightarrow [0,1]$ for a convex SEA $E$ a state when~$\omega$ is linear (i.e.~additive and preserves the scalar
    multiplication) and $\omega(1)=1$.

\begin{definition}\label{def:SEA-compressible}
    We say the sequential product of a convex SEA $E$ is
    \Define{compressible} when for all idempotent effects $p\in E$
    the following implication holds for all states~$\omega\colon E\rightarrow
    [0,1]$: if~$\omega(p) = 1$, then~$\omega(p\mult a) = \omega(a)$
    for all $a\in E$.
\end{definition}

What this property says is that if an effect $p$ already holds with certainty on a state~$\omega$, then measuring~$p$ does not affect the probabilities of other effects holding in the state~$\omega$.%
\footnote{The work of Alfsen and Shultz uses the notion of a \emph{compression}, which is a special type of an idempotent map~\cite{alfsen2012geometry}.
The multiplication maps $L_p(a) = p\mult a$ of convex SEAs always satisfy three of the four conditions of being a compression. The fourth condition, namely the implication $\omega\circ L_{p^\perp} = 0 \implies \omega\circ L_p = \omega$, is satisfied iff the SEA is compressible, hence the name.
}
The second property is a weaker version of the \emph{fundamental identity of quadratic Jordan algebras}~\cite{racine1973arithmetics}.
\begin{definition}\label{def:SEA-quadratic}
    We say the sequential product of a SEA $E$ is \Define{quadratic} when for any two idempotents $p,q\in E$ we have $q\mult(p\mult q) = (q\mult p)^2$.
\end{definition}

\begin{theorem}[{\cite[Theorem~4]{wetering2018sequential}}]\label{thm:normalSEAisJB}
    Let $E$ be a convex normal SEA and suppose its sequential product is compressible and quadratic. Then $E$ is order-isomorphic to the unit interval of a directed-complete JB-algebra.%
    \footnote{As far as the authors are aware, there is no known example of a convex sequential effect algebra that is not compressible, nor one that is not quadratic. Hence, it might be that these properties hold for all convex SEAs and thus that the conditions in this theorem can be simplified.}
\end{theorem}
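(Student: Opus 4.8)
The plan is to realise $E$ as the unit interval of an order unit space and then reconstruct a Jordan product on that space using the sequential product, invoking the geometric characterisation of JB-algebras due to Alfsen and Shultz~\cite{alfsen2012geometry}. Since $E$ is a convex effect algebra that is directed complete, Proposition~\ref{prop:convex-OUS-equiv} gives a directed-complete (hence Banach) order unit space $V$ with $E \cong [0,1]_V$. All subsequent structure is built on $V$, and the whole task is to equip $V$ with a commutative bilinear product $*$ satisfying the Jordan identity and the order condition of Definition~\ref{def:JB-algebra}.

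The key observation is that squaring is already available: for $a \in E$ one has $a \mult a$, which in the motivating examples equals $Q_{\sqrt a}\, a = a^2$. I would therefore set $a^2 := a \mult a$ on $E$, extend it homogeneously and by the convex action to all of $V$, and attempt to define the Jordan product by polarisation, $a * b := \tfrac12\bigl((a+b)^2 - a^2 - b^2\bigr)$. Commutativity of $*$ is then immediate from the symmetry of this formula, but bilinearity and the unit law are not: SEA axiom~(a) only gives additivity of $a\mult(-)$ in its \emph{second} argument, so it controls $Q_{\sqrt a}$ as a linear map while giving no direct handle on how $a^2$ behaves as $a$ varies, and hence no control of the cross terms. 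The route around this is to first treat idempotents, where $p\mult q$ is well understood, and then to propagate to general effects using the spectral theorem for normal SEAs noted in the Remark after Lemma~\ref{lem:normal-SEA-properties}: every effect is a supremum and norm-limit of simple effects, so idempotents are norm-dense, and normality (axiom~(f)) lets me pass suprema and limits through the product.

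For the idempotent layer I would show, using the ceilings and floors of Lemma~\ref{lem:normal-SEA-properties}, that the idempotents form an orthomodular lattice and that each map $L_p(a) := p \mult a$ is a \emph{compression} in the sense of Alfsen and Shultz. Three of the four defining properties of a compression hold for any convex normal SEA, as recorded in the footnote to Definition~\ref{def:SEA-compressible}; the compressibility hypothesis supplies exactly the remaining one, namely that $\omega \circ L_{p^\perp} = 0$ forces $\omega \circ L_p = \omega$. The spectral resolutions then upgrade $V$ to a \emph{spectral} order unit space, with a compression for each sharp predicate and every element admitting a spectral decomposition, and at this stage the Alfsen--Shultz representation theorem yields a Jordan-type structure. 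The \emph{quadratic} hypothesis $q \mult (p \mult q) = (q \mult p)^2$ is what I expect to pin this down to a genuine Jordan product, since it is precisely the restriction to idempotents of the fundamental identity $Q_a Q_b Q_a = Q_{Q_a b}$ that characterises the quadratic representation of a JB-algebra; directed completeness of $V$ then transfers to the resulting algebra.

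I expect the main obstacle to be verifying that $V$ meets the full list of Alfsen--Shultz axioms — in particular establishing the spectral resolution with enough uniqueness and the compatibility of the compression lattice — and, within that, showing that the quadratic condition propagates from idempotents to all elements so that the polarised product genuinely satisfies the Jordan identity rather than merely its restriction to sharp effects. Controlling the cross terms in the polarisation formula, where the asymmetry of $\mult$ in its two arguments obstructs naive bilinearity, is the crux, and the norm-density of idempotents together with normality is the tool I would rely on to overcome it.
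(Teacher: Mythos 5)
Your setup is sound as far as it goes: Proposition~\ref{prop:convex-OUS-equiv} does give the Banach order unit space $V$, compressibility does supply exactly the missing fourth compression axiom for $L_p$, and the idempotent-first strategy with norm-density of simple effects is indeed how the real argument is organised. But there is a genuine gap at the step where you hope an ``Alfsen--Shultz representation theorem'' converts a spectral order unit space equipped with compressions into a Jordan algebra: no such theorem exists at that level of generality. Spectral order unit spaces with a compression for every sharp element are strictly weaker than JB-algebras --- Alfsen and Shultz need substantial further axioms (symmetry of transition probabilities, the Hilbert-ball property, and the like) to characterise JB state spaces --- so compressions plus spectral resolution cannot by themselves ``yield a Jordan-type structure''. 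The actual proof of \cite[Theorem~4]{wetering2018sequential}, mirrored in this paper in Propositions~\ref{prop:assert-is-derivation} and~\ref{prop:is-JB-algebra}, supplies the missing mechanism through \emph{order derivations}: one uses compressibility \emph{and} quadraticity together to prove the implication $\omega(a)=0 \implies \omega(p\mult a) = \omega(p^\perp \mult a)$ (cf.\ Lemma~\ref{lem:state-order-lemma}), so that by \cite[Proposition~1.108]{alfsen2012state} each $D_p := \asrt_p - \asrt_{p^\perp}$ for sharp $p$ is an order derivation on $V$; since commutators of order derivations are again order derivations, the algebra in \cite[Theorem~9.48]{alfsen2012geometry} yields $[D_p,D_q]1 = 0$, i.e.\ $T_p q = T_q p$ for $T_p := \frac12(\id + D_p)$, and this commutativity is what makes the Jordan product well defined before extending by linearity and norm continuity (following \cite[Theorem~9.43]{alfsen2012geometry}) to verify the Jordan identity.

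Relatedly, your polarisation formula cannot even be written down before that commutativity is available: $(a+b)^2$ has no prior meaning, since $\mult$ only squares effects, and extending squaring off $E$ already presupposes that the polarised form is bilinear --- the circularity you yourself flag. The resolution is not density-plus-normality pushing polarisation up from idempotents; the product is instead \emph{defined} operator-wise, $a*b := T_a b$ with $T_a := \sum_i \lambda_i T_{p_i}$ for a spectral decomposition $a = \sum_i \lambda_i p_i$, which is linear in $b$ by construction, so that the entire burden falls on the symmetry $T_p q = T_q p$ --- precisely the commutator statement above, which is not automatic and is the deep step. Your observation that the quadratic axiom of Definition~\ref{def:SEA-quadratic} is the idempotent restriction of the fundamental identity is correct, but in the proof it is consumed in deriving the state-order implication that feeds the derivation argument, not in verifying the Jordan identity directly. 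Without the order-derivation mechanism your plan has no tool to establish bilinearity and symmetry simultaneously, and the appeal to a representation theorem for spectral spaces fails.
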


\section{Pure maps and \texorpdfstring{$\diamond$}{diamond}-adjointness}\label{sec:diamond}

With the necessary concepts introduced we will now take a closer look at filters and comprehensions, which form the backbone of our reconstruction.
These will allow us to define the concepts of pure maps and $\diamond$-adjointness in an effectus.

To get a better handle on comprehensions, we introduce the \emph{image} of
    a map.  If we have a filter, then we can extract the predicate being filtered by applying~$\truth \after (\ )$. Analogously, to get the predicate of a comprehension we ask for its image.

\begin{definition}
    Let $f\colon A\rightarrow B$ be a morphism in an effectus. The
    \Define{image} of $f$, when it exists, is the smallest predicate
    $p\colon B\rightarrow I$ such that $p\circ f = \truth\circ f$, i.e.\
    if $q\colon B\rightarrow I$ is also such that $q\circ f = \truth\circ
    f$, then $p\leq q$. We denote the image of $f$ by $\im{f}$. We
    say an effectus  \Define{has images} when all the maps have
    an image.
\end{definition}

\begin{lemma}\label{lem:imageofcomposedmaps}
  Let $f$ and $g$ be composable maps and suppose $\im{f\circ g}$ and $\im{f}$ exist. Then $\im{(f\circ g)} \leq \im{f}$. Furthermore, if $g$ is an isomorphism, then $\im{(f\circ g)} = \im{f}$.
\end{lemma}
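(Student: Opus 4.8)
The entire argument rests only on the defining universal property of the image, namely that $\im{h}$ is the \emph{least} predicate $p$ with $p\circ h = \truth\circ h$. Write $g\colon A\to B$ and $f\colon B\to C$, so that $f\circ g\colon A\to C$ and both $\im{f}$ and $\im{(f\circ g)}$ are predicates on the common codomain $C$; in particular the same top map $\truth = \truth_C$ occurs in the two defining equations. The plan for the inequality is to check that $\im{f}$ is one of the predicates competing to be $\im{(f\circ g)}$, and then invoke minimality; for the equality under the isomorphism hypothesis I would run the symmetric argument after cancelling $g$.

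For the inequality, I start from the defining equation $\im{f}\circ f = \truth\circ f$ and compose on the right with $g$, obtaining $\im{f}\circ (f\circ g) = \truth\circ (f\circ g)$ by associativity. Hence $\im{f}$ is a predicate $p$ satisfying $p\circ(f\circ g)=\truth\circ(f\circ g)$. Since $\im{(f\circ g)}$ is by definition the smallest such predicate, minimality immediately gives $\im{(f\circ g)}\le \im{f}$.

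Now suppose $g$ is an isomorphism. I would take the defining equation $\im{(f\circ g)}\circ(f\circ g)=\truth\circ(f\circ g)$ and compose on the right with $g^{-1}$, which by associativity and $g\circ g^{-1}=\id$ yields $\im{(f\circ g)}\circ f = \truth\circ f$. Thus $\im{(f\circ g)}$ witnesses the defining property of $\im{f}$, and minimality of $\im{f}$ gives $\im{f}\le\im{(f\circ g)}$. Combined with the first part, this proves $\im{(f\circ g)}=\im{f}$.

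I do not expect any genuine obstacle here: the statement is an immediate consequence of the universal property together with associativity of composition, so the ``plan'' is essentially already the proof. The only points requiring a little care are keeping track of the side on which one composes (always on the right, with $g$ or $g^{-1}$) and observing that no issues of partiality or additivity intervene, since images are predicates landing in $I$ and we only ever compose on the right with the fixed maps $f$, $g$, and $g^{-1}$.
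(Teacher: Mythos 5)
Your proof is correct and follows essentially the same route as the paper's: compose the defining equation $\im{f}\circ f=\truth\circ f$ on the right with $g$ and use minimality of $\im{(f\circ g)}$, then for the isomorphism case cancel with $g^{-1}$ to get the reverse inequality (the paper phrases this as applying the first part to the pair $(f\circ g,\,g^{-1})$, which is the same computation you carry out directly). Incidentally, your write-up states the first inequality in the correct direction, whereas the paper's proof contains a slip ("hence $\im{f}\leq\im{f\circ g}$") where the inequality is inadvertently reversed relative to what the computation actually establishes.
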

\begin{proof}
  We of course have $1\circ (f\circ g) = (1\circ f)\circ g = (\im{f}\circ f)\circ g = \im{f}\circ (f\circ g)$, and hence $\im{f}\leq \im{f\circ g}$.

  If $g$ is an isomorphism, then we furthermore have $\im{f} = \im{(f\circ g)\circ g^{-1}} \leq \im{f\circ g} \leq \im{f}$, and hence $\im{f} = \im{f\circ g}$.
\end{proof}

\begin{definition}\label{def:effectus-sharp}
    Let $p\colon A\rightarrow I$ be a predicate. We call $p$ \Define{sharp} when there is some morphism $f\colon B\rightarrow A$ such that $\im{f} = p$. We define SPred$(A)$ to be the poset of sharp predicates of $A$.
\end{definition}

When we have images and comprehensions, we can find for each predicate the largest sharp predicate beneath it.

\begin{definition}\label{def:floorceiling}
    The \Define{floor} of $p$ is defined as $\floor{p}:= \im{\pi}$,
        where~$\pi$ is any comprehension for~$p$.%
        \footnote{%
                This is well-defined
                    as for any two comprehensions~$\pi,\pi'$ for the same predicate,
                        there exists an isomorphism~$\alpha$
                        with~$\pi = \pi' \circ \alpha$
                        so that~$\im \pi \leq \im \pi' \leq \im \pi $
                        by Lemma~\ref{lem:imageofcomposedmaps}.
        }
    The \Define{ceiling} is defined as the De Morgan dual: $\ceil{p} :=
    \floor{p^\perp}^\perp$.
\end{definition}

Note that in this general setting it is not necessarily the case that $\ceil{p}$ is sharp.
This needs to be imposed additionally (see next section).

\begin{proposition}\label{prop:floorceiling}
  In an effectus with images and compressions, the following are true for any predicates $q\leq p$ and composable map $f$.
  \begin{multicols}{2}
  \begin{enumerate}[a)]
    \item $\floor{p}\leq p \leq \ceil{p}$.
    \item $\floor{\floor{p}}=\floor{p}$.
    \item $\floor{q}\leq \floor{p}$ and $\ceil{q}\leq \ceil{p}$.
    % \vfill\null
    % \columnbreak
    \item $\ceil{p\circ f} = \ceil{\ceil{p}\circ f}$.
    \item $\ceil{p}\circ f = 0 \iff p\circ f = 0$.
    \item $p$ is sharp if and only if $\floor{p}=p$.
  \end{enumerate}
  \end{multicols}
\end{proposition}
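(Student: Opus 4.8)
The plan is to route everything through two characterizations that I will establish first: that $\floor p$ is the \emph{largest sharp predicate below $p$}, and dually that $\ceil p$ is the \emph{least predicate above $p$ whose orthosupplement is sharp}. The recurring tools are that comprehensions are total, so that for a comprehension $\pi_p$ of $p$ we have $p\circ\pi_p=\truth\circ\pi_p=\truth$ and hence $\floor p=\im{\pi_p}$ satisfies $\floor p\circ\pi_p=\truth$; the finality of comprehensions; Lemma~\ref{lem:imageofcomposedmaps}; the identity $r\circ f\ovee r^\perp\circ f=\truth\circ f$ obtained by precomposing $r\ovee r^\perp=\truth$ with $f$ (so $r^\perp\circ f=0\iff r\circ f=\truth\circ f$); and that $(\ )^\perp$ is an order anti-isomorphism, which lets me dualize every floor statement into the corresponding ceiling statement.

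I would first dispatch (a) and (f). For $\floor p\le p$: since $p\circ\pi_p=\truth\circ\pi_p$, the predicate $p$ competes in the definition of $\im{\pi_p}$, so $\floor p=\im{\pi_p}\le p$; applying this to $p^\perp$ and taking orthosupplements gives $p\le\ceil p$. For (f), the direction $\floor p=p\Rightarrow p$ sharp is immediate since $\floor p=\im{\pi_p}$ is an image; conversely if $p=\im h$ then $p\circ h=\truth\circ h$, so $h$ factors as $h=\pi_p\circ\bar h$ by finality, and Lemma~\ref{lem:imageofcomposedmaps} yields $p=\im h\le\im{\pi_p}=\floor p\le p$, forcing equality. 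Since $\floor p$ is always sharp (it is an image), (f) then gives $\floor{\floor p}=\floor p$, which is (b).

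Next the characterization and (c). If $r\le p$ is sharp, write $r=\im g$; then $\truth\circ g=r\circ g\le p\circ g\le\truth\circ g$ squeezes $p\circ g=\truth\circ g$, so $g$ factors through $\pi_p$ and $r=\im g\le\im{\pi_p}=\floor p$. Thus $\floor p$ is the largest sharp predicate below $p$. For (c), $\floor q$ is sharp and $\floor q\le q\le p$, hence $\floor q\le\floor p$; the inequality $\ceil q\le\ceil p$ follows by dualizing (using $p^\perp\le q^\perp$). The same dualization turns the floor characterization into the ceiling characterization used in (d): $\ceil q$ is the least $t$ with $t\ge q$ and $t^\perp$ sharp.

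The substantive work is in (e) and (d), which I expect to be the main obstacle. For (e), note $\ceil p\circ f=\floor{p^\perp}^\perp\circ f=0$ iff $\floor{p^\perp}\circ f=\truth\circ f$; the implication $\ceil p\circ f=0\Rightarrow p\circ f=0$ is just $p\le\ceil p$, while for the converse $p\circ f=0$ gives $p^\perp\circ f=\truth\circ f$, so $f$ factors through $\pi_{p^\perp}$, and precomposing $\floor{p^\perp}\circ\pi_{p^\perp}=\truth$ with that factor yields $\floor{p^\perp}\circ f=\truth\circ f$, hence $\ceil p\circ f=0$. For (d), the inequality $\ceil{p\circ f}\le\ceil{\ceil p\circ f}$ is immediate from $p\le\ceil p$ and monotonicity (c). For the reverse I use the ceiling characterization: it suffices to show that any $t$ with $t^\perp$ sharp and $t\ge p\circ f$ already satisfies $t\ge\ceil p\circ f$, so that both ceilings are the least element of the same set of candidates. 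Writing $t^\perp=\im g$, one gets $t\circ g=(\im g)^\perp\circ g=0$, hence $(p\circ f)\circ g=0$, i.e.\ $p\circ(f\circ g)=0$; applying (e) gives $(\ceil p\circ f)\circ g=0$, whence $(\ceil p\circ f)^\perp\circ g=\truth\circ g$ and so $\im g\le(\ceil p\circ f)^\perp$ by minimality of the image, i.e.\ $\ceil p\circ f\le(\im g)^\perp=t$. The delicate point throughout (d) is keeping the orthosupplement bookkeeping straight and making sure the ceiling is genuinely \emph{attained} as the minimum of the candidate set despite ceilings not being sharp in general; this is exactly what the dualized floor characterization secures.
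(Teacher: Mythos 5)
Your proof is correct, and for the substantive parts it takes a genuinely different route from the paper's. Parts (a), (c) and (f) coincide with the paper's arguments (the squeeze $\truth\circ g = r\circ g\le p\circ g\le \truth\circ g$ followed by finality of $\pi_p$ and Lemma~\ref{lem:imageofcomposedmaps} is exactly the mechanism used there). Where you diverge is in the architecture of (b), (d) and (e). The paper proves (b) directly, by showing that $\pi_p$ is itself a comprehension for $\floor{p}$ --- a fact it then reuses in (d), where it factors $f\circ \pi_{(p\circ f)^\perp}$ through $\pi_{p^\perp} = \pi_{\ceil{p}^\perp}\circ\Theta$ to obtain $\ceil{p}\circ f\le \ceil{p\circ f}$, and it finally gets (e) as a two-line corollary of (d) via $\ceil{0}=0$. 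You invert this: (b) becomes an instance of (f) applied to the sharp predicate $\floor{p}$; (e) is proved first and directly (factoring $f$ through $\pi_{p^\perp}$ when $p\circ f = 0$); and (d) is then derived order-theoretically from (e), by characterising $\ceil{r}$ as the least $t\ge r$ with $t^\perp$ sharp and showing the candidate sets for $p\circ f$ and $\ceil{p}\circ f$ coincide --- your key step, that $t^\perp=\im{g}$ forces $p\circ(f\circ g)=0$, hence $\ceil{p}\circ f\circ g=0$ by (e), hence $\ceil{p}\circ f\le t$ by minimality of $\im{g}$, checks out, and the paper's inequality $\ceil{p}\circ f\le\ceil{p\circ f}$ is just the instance $t=\ceil{p\circ f}$ of your general claim. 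Both routes run on the same two engines (finality of comprehensions, minimality of images), but your explicit characterisations --- $\floor{p}$ as the largest sharp predicate below $p$, $\ceil{p}$ as the least predicate above $p$ with sharp orthosupplement --- are reusable facts the paper leaves implicit and make the attainment issue you flag transparent, whereas the paper's route yields the extra fact that $\pi_p$ is a comprehension of $\floor{p}$, which your argument never needs. One cosmetic caveat: you invoke totality of comprehensions ($\truth\circ\pi_p=\truth$), which the paper only states under the additional hypothesis of filters, while this proposition assumes only images and comprehensions; every use you make of it survives with the weaker defining equalities $p\circ\pi_p=\truth\circ\pi_p$ and $\im{\pi_p}\circ\pi_p=\truth\circ\pi_p$ (e.g.\ in (e): $\floor{p^\perp}\circ f = \floor{p^\perp}\circ\pi_{p^\perp}\circ\bar{f} = \truth\circ\pi_{p^\perp}\circ\bar{f} = \truth\circ f$), so you should phrase those steps accordingly rather than lean on totality.
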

\begin{proof}
  Let $p, q\colon  A\rightarrow I$ be predicates, and let $\pi_p\colon A_p \rightarrow A$ be a comprehension for $p$, and $\pi_q\colon A_q \rightarrow A$ a compression for $q$.
  \begin{enumerate}[a)]
    \item Of course $1\circ \pi_p = p\circ \pi_p$, and hence $\floor{p} := \im{\pi_p} \leq p$. Hence also $\floor{p^\perp} \leq p^\perp$, and thus $\ceil{p} := \floor{p^\perp}^\perp \geq (p^\perp)^\perp = p$.
    
    \item We will show that $\pi_p$ is a compression for $\floor{p}$, and hence $\pi_{\floor{p}} = \pi_p \circ\Theta$ for some isomorphism $\Theta$. The result then follows using Lemma~\ref{lem:imageofcomposedmaps}, because $\floor{\floor{p}} := \im{\pi_{\floor{p}}} = \im{\pi_p\circ \Theta} = \im{\pi_p} = \floor{p}$.

    Note first that $\floor{p}\circ \pi_p = \im{\pi_p}\circ \pi_p = 1\circ \pi_p$. Now let $f\colon B\rightarrow A$ be some map with $\floor{p}\circ f = 1\circ f$. As $\floor{p}\leq p$, we then also have $p\circ f = 1\circ f$, and hence by the universal property of $\pi_p$ there is a unique $\cl{f}$ with $f=\pi_p \circ \cl{f}$. Hence, $\pi_p$ is also a compression for $\floor{p}$.

    \item We have $1\circ \pi_q = q\circ \pi_q \leq p\circ \pi_q \leq 1\circ \pi_q$, and hence $p\circ \pi_q = 1\circ \pi_q$. Hence $\pi_q = \pi_p \circ \cl{\pi_q}$ for a unique $\cl{\pi_q}$. With Lemma~\ref{lem:imageofcomposedmaps} we calculate $\floor{q} := \im{\pi_q} = \im{\pi_p\circ\cl{\pi_q}} \leq \im{\pi_p} = \floor{p}$. To show $\ceil{q}\leq \ceil{p}$, we note that as $q\leq p$, we have $p^\perp \leq q^\perp$ and hence $\floor{p^\perp}\leq \floor{q^\perp}$. Then: $\ceil{q}:=\floor{q^\perp}^\perp \leq \floor{p^\perp}^\perp = \ceil{p}$.

    \item First note that since $\ceil{p}\circ f \geq p\circ f$, we have by point c): $\ceil{\ceil{p}\circ f} \geq \ceil{p\circ f}$. It remains to show the inequality in the other direction.

    Because $p\circ (f\circ \pi_{(p\circ f)^\perp}) = 0$, there must be an $h$ with $f\circ \pi_{(p\circ f)^\perp} = \pi_{p^\perp}\circ h$. By point b) there must be some isomorphism $\Theta$ such that $\pi_{p^\perp} = \pi_{\floor{p^\perp}} \circ \Theta = \pi_{\ceil{p}^\perp} \circ \Theta$. We then calculate:
    $$\ceil{p}\circ f\circ \pi_{(p\circ f)^\perp} \ =\  \ceil{p}\circ \pi_{p^\perp}\circ h \ =\  \ceil{p} \circ \pi_{\ceil{p}^\perp} \circ \Theta\circ h \ =\  0.$$
    Hence $\ceil{p}\circ f \leq \im{\pi_{(p\circ f)^\perp}}^\perp = \floor{(p\circ f)^\perp}^\perp = \ceil{p\circ f}$. Using points c) and b): $\ceil{\ceil{p}\circ f}\leq \ceil{\ceil{p\circ f}} = \ceil{p\circ f}$. 

    \item Of course if $\ceil{p}\circ f = 0$, then $p\circ f \leq \ceil{p}\circ f = 0$. For the other direction, we remark that $\ceil{0} = \floor{1}^\perp = 1^\perp = 0$, so that by the previous point: $0 = \ceil{0} = \ceil{p\circ f} = \ceil{\ceil{p}\circ f}$, and hence $\ceil{p}\circ f \leq \ceil{\ceil{p}\circ f} = 0$.

    \item If $\floor{p} = p$, then $p=\im{\pi_p}$, and hence $p$ is sharp. Now suppose $p$ is sharp, and hence is the image of some map $f$: $p=\im{f}$. Then by the universal property of $\pi_p$, there is some $\cl{f}$ such that $f=\pi_p \circ \cl{f}$. We then calculate using Lemma~\ref{lem:imageofcomposedmaps} $p=\im{f} = \im{\pi_p\circ \cl{f}} \leq \im{\pi_p} = \floor{p}$. As $\floor{p}\leq p$ by point a), we are done. \qedhere
  \end{enumerate}
\end{proof}

% \begin{remark}
%   The image $\im{f}$ of $f$ satisfies $\im{f}\circ f = 1\circ f$. 
%   Hence $1\circ f = (\im{f}\ovee \im{f}^\perp)\circ f = \im{f}\circ f \ovee \im{f}^\perp\circ f = 1\circ f \ovee \im{f}^\perp \circ f$. 
%   Cancelling $1\circ f$ on both sides gives $\im{f}^\perp \circ f = 0$. 
%   In fact, $\im{f}^\perp$ is the largest predicate for which $f$ is zero: if $p\circ f = 0$, then $p\leq \im{f}^\perp$. 
%   We will use this alternative characterisation of the image without further reference.  
%   Note that as $p\circ \pi_p = 1\circ \pi_p$ we have for the same reasons $p^\perp\circ \pi_p = 0$.
% \end{remark}

\subsection{\texorpdfstring{$\diamond$}{Diamond}-effectuses}\label{sec:diamond-effectus}

In the setting where we have filters, comprehensions and images, the floor of a predicate is already well-behaved, but the ceiling is not necessarily. In particular, it is not necessarily the case that $\ceil{p}$ is sharp. To ensure this we must require that $p$ is sharp iff $p^\perp$ is sharp. This gives us the following definition.

\begin{definition}[{\cite[Section 3.5]{basthesis}}]\label{def:diamond-effect-theory}
  An effectus is a \Define{$\diamond$-effectus} (pronounced `diamond-effectus') when it has images, filters, comprehensions and if a predicate $p$ is sharp iff $p^\perp$ is sharp.
\end{definition}

The reason we call it a $\diamond$-effectus, is because of the `possibilistic' structure that is present in such an effectus.
\begin{definition}
  Let $A$ and $B$ be objects in a $\diamond$-effectus. 
  For any $f\colon A\rightarrow B$ we define 
  \[f^\diamond\colon \text{SPred}(B)\rightarrow \text{SPred}(A) \ \  \text{and} \ \  f_\diamond\colon\text{SPred}(A)\rightarrow \text{SPred}(B)\]
  by $f^\diamond(p) := \ceil{p \circ f}$ and $f_\diamond(p) := \im(f\circ\pi_p)$,
         where $\pi_p$ is any compression for $p$.
\end{definition}

It will be useful to introduce a third such map:
$f^{\ssquare}(p) := (f^\diamond(p^\perp))^\perp$.
These maps forget the exact probabilities involved in~$f$
    (indeed, for instance~$(\frac{1}{2}f)^\diamond = f^\diamond$),
    and only remember what is possible.

\begin{proposition}\label{prop:galois-properties}
  Let $f\colon A\rightarrow B$ and $g\colon B\rightarrow C$ be maps in a $\diamond$-effectus and let $p\in \text{SEff}(B)$ and $q\in \text{SEff}(A)$. Then the following are true.
  \begin{multicols}{2}
  \begin{enumerate}[a)]
    \item $f^\diamond$ and $f^{\ssquare}$ are monotone.
    \item $f^\diamond(p)\leq q^\perp \iff f_\diamond(q)\leq p^\perp$.
    \item $f_\diamond$ and $f^{\ssquare}$ form a Galois connection.
    \item $f_\diamond$ is monotone.
    % \vfill\null
    % \columnbreak
    \item $f_\diamond\circ f^{\ssquare}\circ f_\diamond = f_\diamond$.
    \item $(\id)^\diamond = (\id)_\diamond = (\id)^{\ssquare} = \id$.
    \item $(f\circ g)^\diamond = g^\diamond \circ f^\diamond$, $(f\circ g)^{\ssquare} = g^{\ssquare} \circ f^{\ssquare}$.
    \item $(f\circ g)_\diamond = f_\diamond \circ g_\diamond$.
  \end{enumerate}
  \end{multicols}
\end{proposition}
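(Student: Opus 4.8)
The plan is to prove (b) first, as it is the adjunction that drives everything else. Fix $f\colon A\to B$, a sharp $p$ on $B$, and a sharp $q$ on $A$ with comprehension $\pi_q$, and establish $f^\diamond(p)\le q^\perp \iff f_\diamond(q)\le p^\perp$ by chaining three elementary observations. First, for any map $g$ and predicate $s$ one has $\im g\le s$ iff $s\circ g=\truth\circ g$ (minimality of the image gives one direction, monotonicity of $s\mapsto s\circ g$ the other); applied to $g=f\circ\pi_q$ and $s=p^\perp$, together with $p^\perp\circ g=\truth\circ g\iff p\circ g=0$, this rewrites $f_\diamond(q)=\im(f\circ\pi_q)\le p^\perp$ as $p\circ f\circ\pi_q=0$. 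Second, the equivalence $\ceil{p\circ f}\circ\pi_q=0\iff (p\circ f)\circ\pi_q=0$ from Proposition~\ref{prop:floorceiling} turns this into $f^\diamond(p)\circ\pi_q=0$. Third, I claim that for sharp $q$ and \emph{any} predicate $r$ on $A$ one has $r\circ\pi_q=0\iff r\le q^\perp$: the forward direction uses that $q\circ\pi_q=\truth\circ\pi_q$ forces $q^\perp\circ\pi_q=0$ by cancellativity, while the converse uses $r\circ\pi_q=0\Rightarrow r^\perp\circ\pi_q=\truth\circ\pi_q$, whence $q=\floor{q}=\im\pi_q\le r^\perp$. Concatenating these with $r=f^\diamond(p)$ yields (b). I expect this to be the main obstacle, since it is the one place where the universal properties of comprehensions and images are used directly.

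From (b) the order-theoretic parts are essentially formal. For (a), monotonicity of $f^\diamond=\ceil{(\ )\circ f}$ is immediate from monotonicity of composition and of $\ceil{\cdot}$ (Proposition~\ref{prop:floorceiling}), and monotonicity of $f^{\ssquare}$ follows because $f^{\ssquare}=(\ )^\perp\circ f^\diamond\circ(\ )^\perp$ is $f^\diamond$ conjugated by the order-reversing involution $(\ )^\perp$. For (c), unfolding $f^{\ssquare}(p)=(f^\diamond(p^\perp))^\perp$ turns $q\le f^{\ssquare}(p)$ into $f^\diamond(p^\perp)\le q^\perp$, and applying (b) with $p^\perp$ in place of $p$ gives $f_\diamond(q)\le p$; this is exactly the Galois connection $f_\diamond\dashv f^{\ssquare}$. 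Parts (d) and (e) are then the standard facts that both adjoints of a monotone Galois connection are monotone and that the lower adjoint satisfies $f_\diamond\circ f^{\ssquare}\circ f_\diamond=f_\diamond$.

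It remains to check the functoriality statements (f)--(h). For (f) I would compute directly using sharpness: $\ceil{p}=p$ and $\floor{p}=p$ for sharp $p$, so $\id^\diamond(p)=\ceil{p}=p$, $\id_\diamond(p)=\im\pi_p=\floor{p}=p$, and $\id^{\ssquare}=(\ )^\perp\circ\id^\diamond\circ(\ )^\perp=\id$. The $\diamond$-half of (g) expresses that $(\ )^\diamond$ reverses composition, and for a sharp $p$ it reduces, after associativity, to the identity $\ceil{p\circ g\circ f}=\ceil{\ceil{p\circ g}\circ f}$ supplied by Proposition~\ref{prop:floorceiling}; the $\ssquare$-half follows by conjugating with the involution $(\ )^\perp$, which preserves the composition-reversing behaviour. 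Finally (h) is cleanest as a consequence of (c) and (g): $f_\diamond$ is the unique lower adjoint of $f^{\ssquare}$, the composite of lower adjoints is the lower adjoint of the reverse composite of upper adjoints, and (g) identifies that reverse composite with the $\ssquare$ of the composite map; uniqueness of adjoints then forces the covariant functoriality of $(\ )_\diamond$. (One can instead prove (h) directly by factoring a map through the comprehension of its image, but the adjoint argument avoids an extra cancellation lemma.)
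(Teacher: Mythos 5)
Your proof is correct and takes essentially the same route as the paper's: the adjunction in (b) is established via images, comprehensions, and Proposition~\ref{prop:floorceiling}.e), parts (c)--(e) are derived as formal order-theoretic consequences, (g) rests on the identity $\ceil{p\circ f\circ g}=\ceil{\ceil{p\circ f}\circ g}$ from Proposition~\ref{prop:floorceiling}.d), and (h) follows by uniqueness of Galois adjoints --- indeed your write-up of (b) is slightly more complete than the paper's, which spells out only one direction of the equivalence. The only blemish is that in your third observation the labels ``forward'' and ``converse'' are interchanged (the cancellativity argument via $q^\perp\circ\pi_q=0$ proves $r\le q^\perp\Rightarrow r\circ\pi_q=0$, while the argument via $q=\im\pi_q\le r^\perp$ proves the reverse implication), but both directions are correctly proven, so this is purely cosmetic.
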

\begin{proof}~
  \begin{enumerate}[a)]
    \item Suppose $p\leq q$. Then $p\circ f\leq q\circ f$ and hence $f^\diamond(p) = \ceil{p\circ f} \leq \ceil{q\circ f} = f^\diamond(q)$. Also $q^\perp \leq p^\perp$ and hence $\ceil{q^\perp \circ f} \leq \ceil{p^\perp\circ f}$. Taking complements again gives $f^{\ssquare}(p)\leq f^{\ssquare}(q)$.
    
    \item Suppose $f^\diamond(p)\leq q^\perp$. Then $p\circ f \leq \ceil{p\circ f} = f^\diamond(p) \leq q^\perp = \im{\pi_q}^\perp$ and hence $p\circ f \circ \pi_q = 0$ so that $p\leq \im{f\circ \pi_q}^\perp$. But then $f_\diamond(q) = \im{f\circ \pi_q} \leq p^\perp$.
    
    \item Suppose $f_\diamond(q)\leq p$. We need to show $q\leq f^{\ssquare}(p)$. The previous point gives $f_\diamond(q)\leq p$ iff $f^\diamond(q^\perp) \leq p^\perp$ and hence $p\leq f^\diamond(q^\perp)^\perp =: f^{\ssquare}(q)$.
    
    \item As $f_\diamond(q)\leq f_\diamond(q)$ the previous point gives $q\leq f^{\ssquare}(f_\diamond(q))$. Now suppose $p\leq q$. Then $p\leq q\leq f^{\ssquare}(f_\diamond(q))$ so that again by the previous point $f_\diamond(p)\leq f_\diamond(q)$.

    \item We have $f_\diamond(f^{\ssquare}(p))\leq p$. Leting $p:= f_\diamond(q)$ we get $f_\diamond f^{\ssquare} f_\diamond (q) \leq f_\diamond(q)$. We also have $q\leq f^{\ssquare} f_\diamond (q)$ and hence applying the monotone $f_\diamond$ to both sides gives the other inequality.

    \item For sharp $p$ we have $\ceil{p} = p = \im{\pi_p}$. The statements then follow easily.
    \item By Proposition~\ref{prop:floorceiling}: $(f\circ g)^\diamond(p) = \ceil{p\circ f\circ g} = \ceil{\ceil{p\circ f}\circ g} = g^\diamond(f^\diamond(p))$. Furthermore $(f\circ g)^{\ssquare}(p) = (f\circ g)^\diamond(p^\perp)^\perp = g^\diamond(f^\diamond(p^\perp))^\perp = g^\diamond(f^{\ssquare}(p)^\perp)^\perp = g^{\ssquare}(f^{\ssquare}(p))$.
    \item $(f\circ g)_\diamond$ is left Galois adjoint to $(f\circ g)^{\ssquare}$. As Galois adjoints are unique it suffices to show that $f_\diamond\circ g_\diamond$ is also left Galois adjoint to $(f\circ g)^{\ssquare}$. We calculate:
    \[f_\diamond(g_\diamond(p))\ \leq\  q \ \iff\  g_\diamond(p) \ \leq\  f^{\ssquare}(q) \ \iff\  p \ \leq\  g^{\ssquare}(f^{\ssquare}(q))\  =\  (f\circ g)^{\ssquare}(q). \qedhere\]
  \end{enumerate}
\end{proof}

The importance of the~$\diamond$-structure
    is that it allows us to have a notion of adjointness
    between arbitrary processes, somewhat like a `dagger':

\begin{definition}
    We say maps $f\colon A\rightarrow B$ and $g\colon B\rightarrow A$ in a $\diamond$-effectus are \Define{$\diamond$-adjoint} when $f^\diamond = g_\diamond$. An endomap $f\colon A\rightarrow A$ is \Define{$\diamond$-self-adjoint} when $f^\diamond = f_\diamond$.
\end{definition}

\begin{lemma}
    $\diamond$-adjointness is a symmetric relation: $f^\diamond = g_\diamond$ iff $g^\diamond = f_\diamond$.
\end{lemma}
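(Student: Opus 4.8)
The plan is to deduce the symmetry directly from part~b) of Proposition~\ref{prop:galois-properties}, which already packages the $(\ )^\diamond$/$(\ )_\diamond$ pair of a single map into a symmetric-looking biconditional. Recall that for $f\colon A\to B$ it reads $f^\diamond(p)\leq q^\perp \iff f_\diamond(q)\leq p^\perp$ for sharp $p$ on $B$ and sharp $q$ on $A$, and that the instance for $g\colon B\to A$ (after renaming so that $q\in\text{SPred}(A)$ and $p\in\text{SPred}(B)$) reads $g^\diamond(q)\leq p^\perp \iff g_\diamond(p)\leq q^\perp$. It suffices to prove one implication, since the converse is obtained by swapping the roles of $f$ and $g$.

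Assume $f^\diamond = g_\diamond$. First I would substitute this equality into the two biconditionals above and chain them: for all sharp $p$ on $B$ and sharp $q$ on $A$,
\[ g^\diamond(q)\leq p^\perp \ \iff\ g_\diamond(p)\leq q^\perp \ \iff\ f^\diamond(p)\leq q^\perp \ \iff\ f_\diamond(q)\leq p^\perp, \]
where the outer equivalences are the instances of part~b) for $g$ and $f$, and the middle one is just the hypothesis $g_\diamond = f^\diamond$. Hence $g^\diamond(q)\leq p^\perp \iff f_\diamond(q)\leq p^\perp$ for every sharp $p$ on $B$. To upgrade this to the equality $g^\diamond(q)=f_\diamond(q)$, I would run the usual order-theoretic step: both $g^\diamond(q)$ and $f_\diamond(q)$ lie in $\text{SPred}(B)$, and since in a $\diamond$-effectus $p$ is sharp iff $p^\perp$ is sharp (Definition~\ref{def:diamond-effect-theory}), the element $p^\perp$ ranges over all of $\text{SPred}(B)$ as $p$ does. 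Taking $p^\perp := g^\diamond(q)$ makes the left-hand side true and so yields $f_\diamond(q)\leq g^\diamond(q)$; taking $p^\perp := f_\diamond(q)$ yields the reverse inequality; antisymmetry then gives $g^\diamond(q)=f_\diamond(q)$. As $q$ was arbitrary, $g^\diamond = f_\diamond$, and the converse follows by interchanging $f$ and $g$ in this argument.

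I expect the only delicate point to be bookkeeping: keeping straight which poset each of $f^\diamond, f_\diamond, g^\diamond, g_\diamond$ maps between (so that the hypothesis and conclusion are well-typed equalities of maps), and the observation that $(\ )^\perp$ restricts to a bijection on the sharp predicates, which is exactly what licenses quantifying over $p^\perp$ as quantifying over all of $\text{SPred}(B)$. Beyond this the proof is purely formal, requiring no further structure of the $\diamond$-effectus than part~b) of Proposition~\ref{prop:galois-properties} and the sharpness-of-orthocomplements clause of the definition.
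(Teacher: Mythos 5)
Your proof is correct and follows essentially the same route as the paper's: both chain Proposition~\ref{prop:galois-properties}.b) for $f$ and for $g$ around the hypothesis $f^\diamond=g_\diamond$, then instantiate the resulting biconditional at $p^\perp:=g^\diamond(q)$ and $p^\perp:=f_\diamond(q)$ to extract the two inequalities and conclude by antisymmetry. Your added remarks on typing and on $(\ )^\perp$ restricting to a bijection of $\mathrm{SPred}(B)$ (licensed by the sharpness clause of Definition~\ref{def:diamond-effect-theory}) merely make explicit what the paper leaves implicit.
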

\begin{proof}
    Suppose $f^\diamond = g_\diamond$.
    Using Proposition~\ref{prop:galois-properties}.b) twice, we calculate:
    \[f_\diamond(p)\ \leq\  q^\perp \ \iff\  f^\diamond(q)\ \leq\  p^\perp\ \iff\ g_\diamond(q)\ \leq\  p^\perp\ \iff\ g^\diamond(p) \ \leq\  q^\perp.\]
    Then take $q:= f_\diamond(p)^\perp$ and $q:=g^\diamond(p)^\perp$ to get inequalities in both directions that proves
    $f_\diamond(p) = g^\diamond(p)$.
\end{proof}

\begin{remark}
In general, a map does not have a unique~$\diamond$-adjoint.
A trivial reason is that~$f^\diamond = (\frac{1}{2}f)^\diamond$.
Interestingly, even if~$f^\diamond = g^\diamond$
    and~$\truth\after f = \truth \after g$, then it does not have to be the case that~$f = g$.
If the previous is true (with fixed~$f$ for arbitrary~$g$),
    then we say~$f$ is \Define{rigid}.
Rigidity for maps between von Neumann algebras is studied in~\cite[\S102]{bramthesis}.
\end{remark}

\begin{example}
For a von Neumann algebra $\mathfrak{A}$ and an element $a\in\mathfrak{A}$, the conjugation endomaps $b\mapsto a^*ba$ and $b\mapsto aba^*$ on $\mathfrak{A}$ are $\diamond$-adjoint and so are the standard comprehension and standard filter of a given projection (see~Ex.~\ref{ex:quotient-comprehension})~\cite[\S 101]{bramthesis}.
\end{example}

We study some additional structure of $\diamond$-effectuses that is not relevant to the reconstruction but might be of interest in its own right in Section~\ref{sec:diamond-additional}.

\subsection{Compatible filters and comprehensions}

So far we have not required any type of coherence between filters and comprehensions. For the next results we however do need to know a bit more about their interplay.
 
\begin{definition}\label{def:compatible-quotients-comprehensions}
  Let $\catC$ be an effectus with filters and comprehensions. We say the filters and comprehensions are \Define{compatible} when for every comprehension $\pi_p$ of a sharp predicate $p$ there exists a filter $\xi^p$ of $p$ such that $\xi^p\circ \pi_p = \id$.
\end{definition}
In the setting we care about, there will be a dagger structure on some of the maps in an effectus which in particular will entail that $\pi_p^\dagger$ is a filter with $\pi_p^\dagger \circ \pi_p = \id$, so that they are indeed compatible.

In an effectus with compatible filters and comprehensions we can for every sharp predicate define a special type of idempotent map that acts as a measurement for this predicate.

\begin{definition}\label{def:sharp-assert}
    Let $\catC$ be an effectus with compatible filters and comprehensions, and let $p\colon A\rightarrow I$ be a sharp predicate.
    We define the \Define{assert map} $\asrt_p\colon A\rightarrow A$ for $p$ to be $\asrt_p := \pi_p\circ \xi^p$ where $\pi_p$ and $\xi^p$ are a compatible pair of a comprehension and filter for $p$.
\end{definition}

Note first that assert maps are uniquely defined, since if $\pi_p'$ and $(\xi^p)'$ are another compatible pair, then $\pi_p' = \pi_p\circ \Theta_1$ and $(\xi^p)' = \Theta_2\circ \xi^p$ for some isomorphisms $\Theta_1$ and $\Theta_2$ and furthermore $\id = (\xi^p)'\circ \pi_p' = \Theta_2\circ \xi^p\circ\pi_p \circ \Theta_1 = \Theta_2 \circ \Theta_1$ so that $\Theta_2 = \Theta_1^{-1}$ and hence $\pi_p'\circ (\xi^p)' = \pi_p\circ \Theta_1\circ \Theta_1^{-1} \circ \xi^p = \pi_p\circ \xi^p = \asrt_p$.
Additionally, since the filter and comprehension are compatible we get $\asrt_p\circ \asrt_p = \asrt_p$.

\begin{example}
    Let $B(\mathcal{H})$ be an object in the opposite category of C$^*$-algebras. A sharp predicate then corresponds to a projector $P:\mathcal{H}\rightarrow \mathcal{H}$. The assert map for $P$ is then given by $\asrt_P(A) = PAP$.
\end{example}

Note that we have $1\circ \asrt_p = \im{\asrt_p} = p$.

\begin{lemma}\label{lem:assert-image}
    Let $\catC$ be an effectus with images and compatible filters and comprehensions.
    Let $p\in\Pred(A)$ be a sharp predicate and let $f\colon B\rightarrow A$ and $g\colon A\rightarrow B$ be morphisms in the effectus. The following are true:
    \begin{enumerate}[a)]
        \item $\im{f}\leq p \iff \asrt_p\circ f = f$.
        \item $1\circ g \leq p \iff g\circ \asrt_p = g$.
    \end{enumerate}
\end{lemma}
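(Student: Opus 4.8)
The plan is to exploit the compatibility identity $\xi^p\circ\pi_p=\id$ together with the universal properties of the filter and comprehension, treating the two items dually: item a) is essentially about the comprehension (its finality) and the image, while item b) is about the filter (its initiality) and postcomposition with $\truth$. Throughout I would use the standing facts that $\asrt_p=\pi_p\circ\xi^p$, that $\truth\circ\pi_p=\truth$ and $\truth\circ\xi^p=p$, and that $\im{\asrt_p}=p$ (the displayed remark just before the lemma). I would also repeatedly use that composition is monotone in each argument, which holds because $\Pred$ is a functor into $\EA^\opp$ and additive maps preserve the order.

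For item a), the implication $\asrt_p\circ f=f\Rightarrow\im{f}\leq p$ is immediate: by Lemma~\ref{lem:imageofcomposedmaps} we get $\im{f}=\im{\asrt_p\circ f}\leq\im{\asrt_p}=p$. For the converse I would first translate $\im{f}\leq p$ into a factorisation condition. Since $\im{f}\circ f=\truth\circ f$ by definition of the image, monotonicity of $(-)\circ f$ turns the chain $\im{f}\leq p\leq\truth$ into $\truth\circ f=\im{f}\circ f\leq p\circ f\leq\truth\circ f$, forcing $p\circ f=\truth\circ f$. This is exactly the hypothesis needed to invoke finality of $\pi_p$, giving a unique $\bar f$ with $\pi_p\circ\bar f=f$. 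Then $\asrt_p\circ f=\pi_p\circ\xi^p\circ\pi_p\circ\bar f=\pi_p\circ\bar f=f$, where the middle step cancels $\xi^p\circ\pi_p=\id$.

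Item b) runs dually. For $\truth\circ g\leq p\Rightarrow g\circ\asrt_p=g$, the inequality $\truth\circ g\leq p$ is precisely the condition that lets me factor $g$ through the filter by initiality: $g=\bar g\circ\xi^p$ for a unique $\bar g$. Then $g\circ\asrt_p=\bar g\circ\xi^p\circ\pi_p\circ\xi^p=\bar g\circ\xi^p=g$, again using $\xi^p\circ\pi_p=\id$. For the converse I would use monotonicity of precomposition: from $g\circ\asrt_p=g$ I get $\truth\circ g=\truth\circ g\circ\asrt_p=(\truth\circ g\circ\pi_p)\circ\xi^p$, and since $\truth\circ g\circ\pi_p\leq\truth$ as a predicate on the comprehension object, precomposing with $\xi^p$ yields $(\truth\circ g\circ\pi_p)\circ\xi^p\leq\truth\circ\xi^p=p$, i.e.\ $\truth\circ g\leq p$.

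None of this is computationally heavy; the only genuine steps are the two translations — reading $\im{f}\leq p$ as the comprehension-factorisation condition $p\circ f=\truth\circ f$ in a), and reading $\truth\circ g\leq p$ as the filter-factorisation condition in b). The main thing to be careful about is the bookkeeping of the (shared, for sharp $p$) filter/comprehension object so that $\xi^p\circ\pi_p=\id$ typechecks, and keeping track of which composite the monotonicity argument is applied to.
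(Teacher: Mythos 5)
Your proof is correct and follows essentially the same route as the paper's: the easy directions via $\im{(\asrt_p\circ f)}\leq\im{\asrt_p}=p$ and monotonicity, and the converses by factoring through the comprehension (after deriving $p\circ f=\truth\circ f$ from $\im{f}\leq p$) respectively the filter, then cancelling with the compatibility identity $\xi^p\circ\pi_p=\id$. The only differences are cosmetic, e.g.\ the paper identifies $\bar{f}=\xi^p\circ f$ and $\bar{g}=g\circ\pi_p$ explicitly before recombining, while you cancel $\xi^p\circ\pi_p$ directly inside the composite.
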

\begin{proof}
    For the first point:
    if $\asrt_p\circ f = f$, then $\im{f} = {\im{\asrt_p\circ f}} \leq \im{\asrt_p} = p$. Conversely, if $\im{f}\leq p$, then $p\circ f = 1\circ f$ so that by the universal property of $\pi_p$ we have $f = \pi_p\circ \cl{f}$ for some $\cl{f}$. Now $\cl{f} = \id\circ \cl{f} = \xi^p\circ \pi_p \circ \cl{f} = \xi^p \circ f$ so that $f = \pi_p \circ \cl{f} = \pi_p\circ \xi^p\circ f = \asrt_p\circ f$.

    For the second point:
    suppose $g\circ \asrt_p = g$. Then $1\circ g = (1\circ g)\circ \asrt_p \leq 1\circ \asrt_p = p$. Conversely, if $1\circ g \leq p$, then by the universal property of $\xi^p$ we have $g = \cl{g}\circ \xi^p$ for some $\cl{g}$. Now $\cl{g} = \cl{g}\circ \id = \cl{g} \circ \xi^p \circ \pi_p = g \circ \pi_p$ so that $g = \cl{g}\circ \xi^p = g\circ \pi_p\circ \xi^p = g\circ \asrt_p$.
\end{proof}

\subsection{Pure maps}\label{sec:pure-maps}
In pure finite-dimensional quantum theory there is
    a well-established notion of \emph{pure map}:
        Kraus rank-1, i.e.~a map of the form~$T \mapsto A^* T A$ for some operator $A$.
        These correspond to the evolution of a system
            that does not include measurement (but may include
            set-up or loss of knowledge).
For infinite-dimensional systems, there are many proposals,
    but we argue for the following (cf.~\cite[\S168]{basthesis}).

\begin{definition}[{\cite[Definition~201II]{basthesis}}]\label{def:pure}
    Let  $f\colon A\rightarrow B$ be a map in an effectus. We say it is \Define{pure} when $f=\pi\circ\xi$ for some filter $\xi$ and comprehension $\pi$.
\end{definition}

\begin{example}
    In the category of von Neumann algebras with normal completely positive
        contractive linear maps in the opposite direction,
        a map~$f\colon B(\mathcal{H}) \to B(\mathcal{K})$
            is pure iff it is Kraus rank-1, as desired.
    In fact, every map factors (in a Stinespring-like fashion)
        as a pure map after a
        normal~$*$-homomorphism~\cite{westerbaan2016paschke}
        and a map is pure iff this $*$-homomorphism is surjective.
\end{example}

Note that the ordering of the filter and comprehension $\pi\circ\xi$ is important. This raises a question whether pure maps are closed under composition. Some compositions always result in a pure map again.

\begin{lemma}
    Let $\catC$ be an effectus with filters. Then the following are true.
    \begin{itemize}
        \item A composition of filters is again a filter.
        \item If $\catC$ also has images and compatible comprehensions, then a composition of comprehensions is again a comprehension.
    \end{itemize}
\end{lemma}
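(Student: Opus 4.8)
The plan is to handle the two bullets separately. In each case I would guess the predicate that the composite filters/comprehends from the behaviour of $\truth\circ(-)$ (resp.\ the image), and then verify the relevant universal property by factoring an arbitrary test map through the two given maps one after the other.

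For the first bullet, let $\xi\colon A\to A_1$ be a filter for $p_1$ and $\zeta\colon A_1\to A_2$ a filter for $p_2$, so $\truth\circ\xi=p_1$ and $\truth\circ\zeta=p_2$. Since $\truth\circ(\zeta\circ\xi)=p_2\circ\xi$, the natural claim is that $\zeta\circ\xi$ is a filter for $r:=p_2\circ\xi$; note $r\le\truth\circ\xi=p_1$ as $p_2\le\truth$. Given $f\colon A\to B$ with $\truth\circ f\le r$, I would first use $\truth\circ f\le r\le p_1$ and the universal property of $\xi$ to get the unique $g\colon A_1\to B$ with $g\circ\xi=f$. To push $g$ through $\zeta$ one needs $\truth\circ g\le p_2$, and here lies the only real difficulty: all that is available is $(\truth\circ g)\circ\xi=\truth\circ f\le r=p_2\circ\xi$, i.e.\ the inequality is known only after precomposing with $\xi$.

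The main obstacle for filters is thus to show that precomposition with a filter reflects the order: if $s\circ\xi\le t\circ\xi$ for $s,t\in\Pred(A_1)$ then $s\le t$. I would prove this by applying $\xi$'s universal property to a map into $I+I$ rather than into $I$. From $s\circ\xi\le t\circ\xi$ together with $(t\circ\xi)\ovee(t^\perp\circ\xi)=(t\ovee t^\perp)\circ\xi=p_1$ one obtains $s\circ\xi\perp t^\perp\circ\xi$ with sum $\le p_1$. Assembling these into $h:=\kappa_1\circ(s\circ\xi)\ovee\kappa_2\circ(t^\perp\circ\xi)\colon A\to I+I$ (summable by the untying axiom) gives $\truth_{I+I}\circ h=[\id,\id]\circ h=s\circ\xi\ovee t^\perp\circ\xi\le p_1$, so $\xi$'s universal property yields $h'\colon A_1\to I+I$ with $h'\circ\xi=h$. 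Since $\xi$ is epic, $\pproj_1\circ h'=s$ and $\pproj_2\circ h'=t^\perp$, which are summable as partial projections of the single map $h'$, whence $s\le t$. Applying this with $s=\truth\circ g$ and $t=p_2$ gives $\truth\circ g\le p_2$, so $\zeta$'s universal property provides $\bar f$ with $\bar f\circ\zeta=g$, hence $\bar f\circ(\zeta\circ\xi)=f$; uniqueness is immediate since $\zeta\circ\xi$ is a composite of epics, hence epic.

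For the second bullet I would first reduce to sharp predicates: by Proposition~\ref{prop:floorceiling}(b) a comprehension $\pi$ for $p$ is equally a comprehension for the sharp predicate $\floor{p}=\im\pi$, so I may take $\pi\colon A_1\to A_0$ and $\rho\colon A_2\to A_1$ to be comprehensions for the sharp predicates $P:=\im\pi$ and $Q:=\im\rho$. Using compatibility (Definition~\ref{def:compatible-quotients-comprehensions}), fix filters $\xi^{P}\colon A_0\to A_1$ and $\xi^{Q}\colon A_1\to A_2$ with $\xi^{P}\circ\pi=\id$, $\xi^{Q}\circ\rho=\id$ and assert maps $\asrt_P=\pi\circ\xi^{P}$, $\asrt_Q=\rho\circ\xi^{Q}$. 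The claim is that $\pi\circ\rho$ is a comprehension for the sharp predicate $s:=\im(\pi\circ\rho)$, the equation $s\circ(\pi\circ\rho)=\truth\circ(\pi\circ\rho)$ holding by definition of the image. For finality, given $f\colon B\to A_0$ with $\im f\le s$, I would set $\bar f:=\xi^{Q}\circ\xi^{P}\circ f$ and verify $\pi\circ\rho\circ\bar f=f$. Using $\rho\circ\xi^{Q}=\asrt_Q$ and $\im f\le s=\im(\pi\circ\rho)\le\im\pi=P$ (Lemma~\ref{lem:imageofcomposedmaps}), so $\asrt_P\circ f=f$ by Lemma~\ref{lem:assert-image}, this reduces entirely to showing $\im(\xi^{P}\circ f)\le Q$: given that, Lemma~\ref{lem:assert-image} gives $\asrt_Q\circ\xi^{P}\circ f=\xi^{P}\circ f$ and hence $\pi\circ\rho\circ\bar f=\pi\circ\xi^{P}\circ f=\asrt_P\circ f=f$.

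The key point, and the place where the hypotheses of images and compatibility are genuinely used, is this inequality $\im(\xi^{P}\circ f)\le Q$. Writing $\im(-)$ as $(-)_\diamond(\truth)$ and invoking Proposition~\ref{prop:galois-properties}(h), one has $\im(\xi^{P}\circ f)=(\xi^{P})_\diamond(\im f)$, which by monotonicity (Proposition~\ref{prop:galois-properties}(d)) is $\le(\xi^{P})_\diamond(s)$; and since $s=\im(\pi\circ\rho)=\pi_\diamond(Q)$, the compatibility identity $\xi^{P}\circ\pi=\id$ collapses the composite via Proposition~\ref{prop:galois-properties}(h),(f): $(\xi^{P})_\diamond(\pi_\diamond(Q))=(\xi^{P}\circ\pi)_\diamond(Q)=(\id)_\diamond(Q)=Q$. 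Uniqueness of $\bar f$ follows because $\pi\circ\rho$ is a composite of comprehensions, hence monic. I expect the comprehension bullet to be the harder of the two, precisely because it needs the assert-map calculus and the functoriality of $(-)_\diamond$ to turn the compatibility equation $\xi^{P}\circ\pi=\id$ into the decisive inequality, whereas the filter bullet only requires the order-reflection trick above.
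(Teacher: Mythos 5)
Your proof is correct, but it takes a genuinely different route from the paper's on both bullets. For filters, the paper never verifies the universal property of $\xi^p\circ\xi^q$ directly: it takes a canonical filter $\xi$ for $p\circ\xi^q$ (available since the effectus has all filters), obtains the comparison map $g$ with $\xi^p\circ\xi^q=g\circ\xi$, and shows $g$ is an isomorphism by factoring $\xi$ back first through $\xi^q$ and then through $\xi^p$ --- crucially using epicness of $\xi^q$ to upgrade $\truth\circ h_1\circ\xi^q=p\circ\xi^q$ to the exact equality $\truth\circ h_1=p$, which is precisely how it sidesteps the order-reflection problem you identified. Your direct verification instead isolates a reusable fact --- precomposition with a filter reflects the order on predicates --- and your proof of it via the $I+I$-valued map, the untying axiom, and the compatible-sum axiom is sound; what it buys is that the factorisation is exhibited concretely and no auxiliary filter for $p_2\circ\xi$ is invoked, whereas the paper's comparison trick needs no new lemma. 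For comprehensions, the paper again argues elementarily: given $f$ with $\im{(\pi_p\circ\pi_q)}\circ f=\truth\circ f$ it factors $f=\pi_p\circ g_1$, then uses the compatible filter to observe $q\circ\xi^p\geq\im{(\pi_p\circ\pi_q)}$, whence $q\circ g_1=\truth\circ g_1$ and $g_1$ factors through $\pi_q$; no assert maps or $\diamond$-calculus appear. Your route via $\bar f:=\xi^Q\circ\xi^P\circ f$, Lemma~\ref{lem:assert-image}, and functoriality of $(-)_\diamond$ is also valid and arguably more conceptual, at the price of heavier machinery.

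One caveat on your comprehension half: you invoke Proposition~\ref{prop:galois-properties}, which the paper states only for $\diamond$-effectuses, while the lemma's hypotheses (filters, images, compatible comprehensions) do not include the axiom that $p$ is sharp iff $p^\perp$ is sharp. This is a scope mismatch rather than a mathematical failure: the parts you use --- (d), (f), (h) for $(-)_\diamond$ on sharp arguments --- have proofs that only need images, filters and comprehensions, the sharpness-complement axiom being required solely to type $f^{\ssquare}$ into $\text{SPred}$, which your argument never needs since every predicate you feed to $(-)_\diamond$ is an image and hence sharp. To be airtight under the lemma's stated hypotheses you should either note this explicitly or re-derive monotonicity and functoriality of $(-)_\diamond$ from Proposition~\ref{prop:galois-properties}(b), whose proof goes through unchanged in this weaker setting.
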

\begin{proof}
    Let $\xi^p$ and $\xi^q$ be filters for $p$ respectively $q$. We claim that $\xi^p\circ\xi^q$ is a filter for $p\circ\xi_q$. To this end we let $\xi$ be a filter for $p\circ\xi^q$. Then there is a unique $g$ such that $\xi^p\circ\xi^q = g\circ \xi$, which we need to show is an isomorphism. As $1\circ\xi = p\circ\xi^q \leq 1\circ\xi^q=q$ we have $\xi = h_1\circ\xi^q$ for a unique $h_1$. Because $1\circ h_1\circ \xi^q = 1\circ\xi = p\circ \xi^q$ we have $1\circ h_1 = p$ because $\xi^q$ is epic and hence $h_1 = h_2\circ \xi^p$. Then
    \[g\circ h_2\circ \xi^p\circ \xi^q \ =\  g\circ \xi \ =\  \xi^p\circ \xi^q\quad \text{and} \quad h_2\circ g \circ \xi \ =\  h_2 \circ \xi^p\circ \xi^q \ =\  h_1\circ \xi^q \ =\  \xi\]
  so that because $\xi^p\circ \xi^q$ and $\xi$ are epic we have $g\circ h_2 = \id$ and $h_2\circ g = \id$.

  Now suppose $\catC$ has images and compatible comprehensions and let $\pi_p$ and $\pi_q$ be comprehensions for sharp predicates $p$ respectively $q$. 
  We will show that $\pi_p\circ \pi_q$ is a comprehension for $\im{(\pi_p \circ \pi_q)}$. 
  To this end let $f$ be any map with $\im{\pi_p\circ \pi_q}\circ f = 1\circ f$. 
  As $\im{\pi_p\circ \pi_q} \leq \im{\pi_p} = p$ we also have $p\circ f = 1\circ f$ and hence $f = \pi_p\circ g_1$ for a unique $g_1$. 
  Let $\xi^p$ be a filter for $p$ such that $\xi^p\circ \pi_p = \id$. 
  Then $q\circ \xi^p\circ\pi_p\circ \pi_q = q\circ \pi_q = 1\circ\pi_q = 1\circ \pi_p \circ \pi_q$ and hence $q\circ \xi^p \geq \im{\pi_p\circ \pi_q}$ so that
  \[q\circ g_1 \ =\  q\circ \xi^p\circ \pi_p \circ g_1 \ =\  q\circ \xi^p \circ f \ \geq\  \im{\pi_p\circ \pi_q} \circ f \ =\  1\circ f \ =\  1\circ g_1.\]
  Hence there is a unique $g_2$ such that $g_1 = \pi_q\circ g_2$ which gives $f = \pi_p\circ g_1 = (\pi_p\circ \pi_q) \circ g_2$. Uniqueness of $g_2$ with the property that $f=(\pi_p\circ \pi_q)\circ g_2$ follows because $\pi_p\circ \pi_q$ is monic.
\end{proof}

Hence, the question whether a composition of pure maps is again pure is reduced to
the question whether a composition~$\xi\circ\pi$ `in the wrong order' can be written as $\pi'\circ\xi'$ for some different comprehension $\pi'$ and filter $\xi'$.
This is true in our main examples from quantum theory (indeed the composition of two Kraus rank-1 maps is again Kraus rank-1). For an effectus it is something that needs to be imposed additionally, i.e.~by demanding that the pure maps form a subcategory. 

In fact, inspired by quantum theory, for our reconstruction we will require that the pure maps form a \emph{dagger category} where for each pure map~$f\colon A\rightarrow B$ we have a pure map $f^\dagger\colon B\rightarrow A$ such that $(f^\dagger)^\dagger = f$ and $(f\circ g)^\dagger = g^\dagger\circ f^\dagger$.
Recall that in a dagger category we say $f$ is \Define{$\dagger$-adjoint} to $g$ when $f^\dagger = g$ and we say $f$ is an \Define{$\dagger$-isometry} when $f^\dagger\circ f = \id$. Finally, we say $f$ is~\Define{$\dagger$-positive} when $f=g\circ g^\dagger$ for some $g$.

\begin{remark}\label{rem:reconstruction-finite}
In Ref.~\cite{wetering2018reconstruction} a reconstruction of finite-dimensional quantum theory is given that can be stated using our language (up to some details) as follows:
Let $\catC$ be a state-separated $\diamond$-effectus with finite tomography where $\Pred(I)\cong [0,1]$ such that
\begin{itemize}
    \item the pure maps form a dagger category,
    \item a comprehension of a sharp predicate is $\dagger$-adjoint to a filter of the same predicate,
    \item and comprehensions are $\dagger$-isometries.
\end{itemize}
Then $\Pred(\catC)$ embeds into the category of \emph{Euclidean Jordan algebras} (and positive contractive linear maps).
If additionally $\catC$ is symmetric monoidal in a suitably compatible way, then $\Pred(\catC)$ embeds into the category of finite-dimensional C$^*$-algebras.
\end{remark}

In the remainder of the paper we will seek other conditions that suffice to get an analogous result, but that also includes infinite-dimensional spaces and doesn't require the assumption that $\Pred(I)\cong[0,1]$. Note that while our conditions (see Definition~\ref{def:sequential-effectus}) will be different, they will imply the points listed here.

\subsection{Further properties of \texorpdfstring{$\diamond$}{diamond}-effectuses}\label{sec:diamond-additional}

In this subsection we discuss some additional structure present in $\diamond$-effectuses that has no further bearing on our reconstruction, but might be of independent interest.

First, sharp predicates in a $\diamond$-effectus are also `sharp' in a more standard sense.
\begin{lemma}\label{lem:ortho-sharp}
    Let $p$ be a sharp predicate in a $\diamond$-effectus. Then $p$ is also \Define{ortho-sharp}, namely $p\wedge p^\perp = \falsity$.
\end{lemma}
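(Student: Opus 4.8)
The plan is to read $p\wedge p^\perp = \falsity$ as the assertion that every common lower bound $q$ of $p$ and $p^\perp$ equals the bottom $\falsity$; since $\falsity$ is itself a lower bound, this exhibits it as the greatest one and establishes the meet. So I would fix $q\in\Pred(A)$ with $q\leq p$ and $q\leq p^\perp$ and aim to show $q=\falsity$. The first move is to reduce to the case that the lower bound is \emph{sharp} by passing to its ceiling. This is exactly where the $\diamond$-effectus hypothesis is used: because $p$ is sharp and therefore (by the defining `$p$ sharp iff $p^\perp$ sharp' clause) $p^\perp$ is sharp as well, Proposition~\ref{prop:floorceiling} gives $\ceil{p}=p$ and $\ceil{p^\perp}=p^\perp$, so monotonicity of $\ceil{\,\cdot\,}$ upgrades $q\leq p$, $q\leq p^\perp$ to $\ceil{q}\leq p$ and $\ceil{q}\leq p^\perp$. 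Moreover $\ceil{q}=\floor{q^\perp}^\perp$ is sharp, again precisely because orthosupplements of sharp predicates are sharp in a $\diamond$-effectus. As $q\leq \ceil{q}$, it suffices to prove $r:=\ceil{q}=\falsity$.

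Next I would exploit the sharpness of $r$ through a comprehension $\pi_r\colon A_r\rightarrow A$, for which $\im{\pi_r}=\floor{r}=r$. Combining the comprehension equation $\truth_A\circ\pi_r = r\circ\pi_r$ with the totality of comprehensions, $\truth_A\circ\pi_r=\truth_{A_r}$, yields $r\circ\pi_r=\truth_{A_r}$. The two constraints on $r$ conspire into a self-summability statement: from $r\leq p$ we get $p^\perp\leq r^\perp$, and combined with $r\leq p^\perp$ this gives the chain $r\leq p^\perp\leq r^\perp$, i.e.\ $r\leq r^\perp$. Pulling this inequality back along the total map $\pi_r$ (using that precomposition with a total map is an effect-algebra morphism and hence preserves orthosupplements) gives $\truth_{A_r}=r\circ\pi_r\leq r^\perp\circ\pi_r=(r\circ\pi_r)^\perp=\truth_{A_r}^\perp=\falsity_{A_r}$, so $\truth_{A_r}=\falsity_{A_r}$.

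To finish, from $\truth_{A_r}=\falsity_{A_r}$ the effectus axiom `$\truth_B\circ f=\falsity$ implies $f=0$' applied to $\id_{A_r}$ forces $\id_{A_r}=0$, whence $\pi_r=\pi_r\circ\id_{A_r}=0$ and therefore $r=\im{\pi_r}=\im 0=\falsity$ (the image of a zero map is $\falsity$). Since $q\leq r=\falsity$, we conclude $q=\falsity$. I expect the two delicate points to be, first, invoking the $\diamond$-hypothesis correctly so that $\ceil{q}$ is \emph{genuinely} sharp --- without it the reduction collapses, since the ceiling of a predicate need not be sharp in a plain effectus with images and comprehensions --- and second, the final degeneracy argument, where one must recognise that $\truth_{A_r}=\falsity_{A_r}$ trivialises the comprehension object and kills $\pi_r$. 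The conceptual heart is the self-summability observation $r\leq r^\perp$: ortho-sharpness is precisely the failure of self-summability for sharp predicates, and the comprehension is exactly the device that converts this self-summability into outright triviality.
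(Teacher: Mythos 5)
Your proof is correct, and it shares the paper's overall skeleton --- fix a common lower bound $q$, pass to $r=\ceil{q}$, use the $\diamond$-hypothesis exactly where you say (to get $\ceil{p}=p$, $\ceil{p^\perp}=p^\perp$, and sharpness of $\ceil{q}$ so that $\im{\pi_r}=\floor{r}=r$), then show the comprehension of $r$ is zero --- but your concluding step is genuinely different from the paper's. The paper never isolates the inequality $r\leq r^\perp$: instead it invokes the universal property of the comprehension $\pi_p$ of $p$ to factor $\pi_{\ceil{q}}=\pi_p\circ f$, and then computes $\truth\circ\pi_{\ceil{q}}=\ceil{q}\circ\pi_p\circ f\leq p^\perp\circ\pi_p\circ f=\falsity$, using that $p^\perp\circ\pi_p=\falsity$ since $p\circ\pi_p=\truth\circ\pi_p=\truth$. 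You instead distill the two bounds into the self-orthogonality $r\leq p^\perp\leq r^\perp$ and pull it back along $\pi_r$ alone, using that precomposition with a total map is a unital (hence orthosupplement-preserving, monotone) effect-algebra morphism, so $\truth_{A_r}=r\circ\pi_r\leq (r\circ\pi_r)^\perp=\falsity_{A_r}$. Your route needs only one comprehension and no universal-property factorization, and it makes the conceptual content (ortho-sharpness as failure of self-summability for sharp predicates) explicit; the paper's route avoids the effect-algebra-morphism argument in favour of a one-line composite computation. Two small remarks: your detour through $\id_{A_r}$ is unnecessary, since $\truth_A\circ\pi_r=\truth_{A_r}=\falsity_{A_r}$ already forces $\pi_r=0$ directly by the effectus axiom applied to $\pi_r$; and your final appeal to $\im 0=\falsity$ is sound but could equally be replaced, as in the paper, by $q\leq\ceil{q}=\im{\pi_{\ceil{q}}}=\im 0=\falsity$.
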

\begin{proof}
    Let $q$ be any predicate on the same object as $p$. 
   Suppose $q\leq p$ and $q\leq p^\perp$. We need to show that $q=\falsity$.
   Note $\ceil{q}\leq \ceil{p} = \floor{p^\perp}^\perp = (p^\perp)^\perp = p$ (using Proposition~\ref{prop:floorceiling}.f)),
   and similarly $\ceil{q}\leq p^\perp$.
   Let $\pi_{\ceil{q}}$ and $\pi_p$ be comprehensions for $\ceil{q}$, respectively $p$. 
   Then by the universal property of $\pi_p$ there is a unique $f$ such that $\pi_{\ceil{q}} = \pi_p\after f$.
   We then calculate
   $$\truth\after\pi_{\ceil{q}} \ =\  \ceil{q}\after \pi_{\ceil{q}} \ =\  \ceil{q}\after \pi_p\after f \ \leq\  p^\perp\after \pi_p\after f \ =\  \falsity.$$
   Hence $\pi_{\ceil{q}}=0$ so that $q\leq \ceil{q} = \im{\pi_{\ceil{q}}} = \falsity$ as desired.
\end{proof}

\begin{lemma}
    Let $p,q\in \text{SPred}(A)$ be two sharp predicates in a $\diamond$-effectus. Then the following holds.
    \begin{itemize}
        \item The supremum $p\vee q$ in $\Pred(A)$ exists and is sharp.
        \item The sharp predicates SPred$(A)$ form an \Define{ortholattice}: a lattice with orthocomplement satisfying $p\wedge p^\perp = \falsity$ and $p\vee p^\perp = \truth$.
    \end{itemize}
\end{lemma}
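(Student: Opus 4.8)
The plan is to reduce everything to the existence of \emph{sharp meets} and then transport this to joins by orthocomplementation, and to genuine suprema in $\Pred(A)$ by a floor argument. First I would record the cheap half: since $(\ )^\perp$ is an order anti-automorphism of $\Pred(A)$ that preserves sharpness in a $\diamond$-effectus, once I know that any two sharp predicates have a largest sharp lower bound $p\wedge q$ I may define $p\vee q:=(p^\perp\wedge q^\perp)^\perp$ and obtain the least sharp upper bound. To see that this $p\vee q$ is in fact the supremum in the whole poset $\Pred(A)$, and not merely among sharp predicates, I use that the floor is monotone and lands in the sharp predicates (Proposition~\ref{prop:floorceiling}): if $t\ge p,q$ then $\floor{t}$ is sharp with $p=\floor{p}\le\floor{t}$, $q\le\floor{t}$ and $\floor{t}\le t$, so $\floor{t}\ge p\vee q$ forces $t\ge p\vee q$. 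This observation is what upgrades a least sharp upper bound into a genuine supremum.

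It thus remains to build the sharp meet, and here the plan is to use a comprehension $\pi:=\pi_p$ for $p$ together with the Galois connection $\pi_\diamond\dashv\pi^{\ssquare}$ of Proposition~\ref{prop:galois-properties}. The candidate is $m:=\pi_\diamond(\pi^{\ssquare}(q))$. Unwinding the definition of $\pi_\diamond$ gives $m=\im(\pi\after\pi_w)$ for a comprehension $\pi_w$ of $w:=\pi^{\ssquare}(q)$, so $m\le\im\pi=\floor{p}=p$ by Lemma~\ref{lem:imageofcomposedmaps}, while $m\le q$ is exactly the Galois counit $\pi_\diamond(\pi^{\ssquare}(q))\le q$. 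The real content is that $m$ is the \emph{largest} sharp lower bound, and this reduces to the retraction identity $\pi_\diamond(\pi^{\ssquare}(s))=s$ for every sharp $s\le p$: granting it, if $s$ is a sharp lower bound of $p$ and $q$, then $s\le q$ gives $\pi^{\ssquare}(s)\le\pi^{\ssquare}(q)$, whence $s=\pi_\diamond(\pi^{\ssquare}(s))\le\pi_\diamond(\pi^{\ssquare}(q))=m$ by monotonicity of $\pi_\diamond$ and $\pi^{\ssquare}$.

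The hard part will be this retraction identity, which I expect to prove by universal-property chasing. First I would simplify $\pi^{\ssquare}$ on sharp predicates: since comprehensions are total, $\Pred(\pi)$ is additive and fixes $\truth$, so $s^\perp\after\pi=(s\after\pi)^\perp$ and hence $\pi^{\ssquare}(s)=\ceil{s^\perp\after\pi}^\perp=\floor{s\after\pi}=:t$. Next, as $s\le p$ the comprehension $\pi_s$ factors as $\pi_s=\pi\after h$; evaluating $\truth$ gives $(s\after\pi)\after h=\truth\after h$, so $\im h\le s\after\pi$ and, $\im h$ being sharp, $\im h\le t$. Hence $h$ factors through the comprehension $\pi_t$ as $h=\pi_t\after h'$, giving $s=\im(\pi\after h)=\im(\pi\after\pi_t\after h')\le\im(\pi\after\pi_t)=\pi_\diamond(t)$. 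For the reverse inequality I use $t\le s\after\pi$: composing with $\pi_t$ and invoking totality yields $\truth\after\pi\after\pi_t\le s\after\pi\after\pi_t\le\truth\after\pi\after\pi_t$, so $\im(\pi\after\pi_t)\le s$, i.e.\ $\pi_\diamond(t)\le s$. Combining the two inequalities gives $\pi_\diamond(\pi^{\ssquare}(s))=s$.

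Finally I would assemble the ortholattice. The previous steps make $\text{SPred}(A)$ a lattice under $\wedge$ and $\vee$, and $(\ )^\perp$ is an involutive, order-reversing, sharpness-preserving map, so it serves as orthocomplement. Lemma~\ref{lem:ortho-sharp} shows the only common lower bound of $p$ and $p^\perp$ is $\falsity$, so $p\wedge p^\perp=\falsity$; dualising gives $p\vee p^\perp=(p^\perp\wedge p)^\perp=\falsity^\perp=\truth$. Together with the first paragraph this establishes both the supremum claim and the ortholattice structure.
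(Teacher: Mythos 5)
Your proposal is correct and follows essentially the same route as the paper: you use the identical candidate meet $(\pi_p)_\diamond(\pi_p^{\ssquare}(q))$, the same Galois-connection bounds for showing it is a lower bound, the same factorization $\pi_s = \pi_p\circ h$ for maximality among sharp lower bounds, and the same passage to joins via orthocomplementation with Lemma~\ref{lem:ortho-sharp} supplying $p\wedge p^\perp = \falsity$. The only (harmless) variations are that you prove the retraction identity $\pi_\diamond(\pi^{\ssquare}(s)) = s$ by an explicit universal-property chase (via the computation $\pi^{\ssquare}(s)=\floor{s\circ\pi}$ and a second factorization through $\pi_t$) where the paper instead invokes $f_\diamond\circ f^{\ssquare}\circ f_\diamond = f_\diamond$ from Proposition~\ref{prop:galois-properties}.e) together with functoriality of $(\ )_\diamond$, and that you upgrade the least sharp upper bound to a genuine supremum in $\Pred(A)$ using floors, which is exactly dual to the paper's ceiling argument ($a\leq\ceil{a}\leq p\wedge q$) on the meet side.
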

\begin{proof}
    Let $p,q \in \text{SPred}(A)$. We claim that $p\wedge q = (\pi_p)_\diamond(\pi_p^{\ssquare}(q))$. First of all, as $q\leq q$, we have $\pi_p^{\ssquare}(q)\leq \pi_p^{\ssquare}(q)$ and thus $(\pi_p)_\diamond(\pi_p^{\ssquare}(q)) \leq q$. Second, $(\pi_p)_\diamond(\pi_p^{\ssquare}(q)) \leq (\pi_p)_\diamond(1) = \im{\pi_p} = p$, so it is indeed a lower bound. Now let $r$ be any sharp element with $r\leq p$ and $r\leq q$. 
    Then as in the previous point $\pi_r = \pi_p \circ h$ for some map $h$. Now using that $(\pi_p)_\diamond = (\pi_p)_\diamond\circ (\pi_p)^{\ssquare} \circ (\pi_p)_\diamond$; cf.~Proposition~\ref{prop:galois-properties}.e):
    \[(\pi_r)_\diamond \ =\  (\pi_p)_\diamond \circ h_\diamond\ =\ 
    (\pi_p)_\diamond\circ (\pi_p)^{\ssquare}\circ (\pi_p)_\diamond
    \circ h_\diamond \ =\  (\pi_p)_\diamond\circ (\pi_p)^{\ssquare}\circ
    (\pi_r)_\diamond\]
    and thus $r = (\pi_r)_\diamond(1) = (\pi_p)_\diamond\circ (\pi_p)^{\ssquare}(r) \leq (\pi_p)_\diamond\circ (\pi_p)^{\ssquare}(q)$.
    Now for a general $a\leq p,q$, we will also have $\ceil{a}\leq p,q$ and hence $a\leq \ceil{a} \leq p\wedge q$.

    Now to show SPred$(A)$ forms an ortholattice, first note that $(p^\perp\vee q^\perp)^\perp = p\wedge q$ so that is indeed a lattice.
    By Lemma~\ref{lem:ortho-sharp} we have $p\wedge p^\perp = \falsity$ and hence $p\vee p^\perp = (p^\perp \wedge p)^\perp = \falsity^\perp = \truth$.
\end{proof}

\begin{proposition}\label{prop:spred-is-oml}
    Let $A$ be an object in a $\diamond$-effectus. Then SPred$(A)$ is a sub-effect-algebra of $\Pred(A)$. Furthermore, SPred$(A)$ is an orthomodular lattice.
\end{proposition}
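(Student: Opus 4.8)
The plan is to show that $\mathrm{SPred}(A)$ is closed under all the effect-algebra operations and then read off orthomodularity from the ortholattice structure already obtained. The easy closure conditions come for free: $\truth=\im\id$ is sharp, hence $\falsity=\truth^\perp$ is sharp by the defining property of a $\diamond$-effectus (a predicate is sharp iff its orthosupplement is), and that same property gives closure under $(-)^\perp$. Everything therefore reduces to one claim: \emph{for sharp $p,q$ with $p\perp q$, the sum $p\ovee q$ formed in $\Pred(A)$ is again sharp, and in fact equals the join $p\vee q$.}

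First I would prove this central claim by a squeezing argument. By the immediately preceding lemma the join $s:=p\vee q$ exists in $\Pred(A)$ and is sharp; since $p,q\le p\ovee q$, the element $p\ovee q$ is an upper bound of $\{p,q\}$ and so $s\le p\ovee q$. Write $w:=(p\ovee q)\ominus s\ge\falsity$, so that $s\ovee w=p\ovee q$. On the one hand, $w\perp s$ gives $w\le s^\perp\le q^\perp$ (using $q\le s$ and that $(-)^\perp$ reverses the order). On the other hand, from $p\le s$ we may write $s=p\ovee(s\ominus p)$, whence $p\ovee(s\ominus p)\ovee w=p\ovee q$; cancelling $p$ by the cancellation law for effect algebras yields $(s\ominus p)\ovee w=q$, so $w\le q$. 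Thus $w\le q\wedge q^\perp=\falsity$ by Lemma~\ref{lem:ortho-sharp}, and therefore $p\ovee q=s=p\vee q$ is sharp, as claimed. This establishes that $\mathrm{SPred}(A)$ is a sub-effect-algebra of $\Pred(A)$, with inherited order equal to the lattice order of the preceding lemma.

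Finally I would deduce the orthomodular lattice structure. Because sharp predicates are ortho-sharp (Lemma~\ref{lem:ortho-sharp}), $\mathrm{SPred}(A)$ is an \emph{orthoalgebra}: if $p\perp p$ then $p\le p^\perp$, so $p=p\wedge p^\perp=\falsity$. Closure under $\ovee$ and $(-)^\perp$ also gives closure under relative complements, since for sharp $p\le q$ one has $q\ominus p=(p\ovee q^\perp)^\perp$ with $p\perp q^\perp$. For the orthomodular law, fix sharp $p\le q$. Then $q\ominus p$ is sharp and $q=p\ovee(q\ominus p)=p\vee(q\ominus p)$ using the central claim. It remains to identify $q\ominus p$ with $q\wedge p^\perp$: the inequality $q\ominus p\le q\wedge p^\perp$ is immediate, while for the reverse I set $r:=q\wedge p^\perp$ and note $p\ovee r=p\vee r\le q=p\ovee(q\ominus p)$, so cancelling $p$ gives $r\le q\ominus p$. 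Hence $q=p\vee(q\wedge p^\perp)$, the orthomodular identity, and together with the ortholattice of the preceding lemma this makes $\mathrm{SPred}(A)$ an orthomodular lattice; consistently, Example~\ref{ex:orthomodularlattice} recovers precisely the effect-algebra structure we have just exhibited. The only genuinely non-formal step is the squeeze $w\le q\wedge q^\perp$ in the central claim, whose whole force is the ortho-sharpness of $q$; I expect this to be the conceptual crux, with all remaining steps being effect-algebra bookkeeping.
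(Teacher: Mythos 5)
Your proof is correct and takes essentially the same route as the paper: closure under $\truth$, $\falsity$ and $(-)^\perp$ is immediate, the crux is that summable sharp $p,q$ satisfy $p\ovee q=p\vee q$ (forced by ortho-sharpness making the meet vanish), and orthomodularity then follows from the ortholattice plus effect-algebra structure. The only difference is that where the paper cites the identity $a\ovee b=(a\wedge b)\ovee(a\vee b)$ from~\cite[Prop.~177]{basthesis} and the fact that an ortholattice which is also an effect algebra is orthomodular~\cite[Prop.~1.5.8]{dvurecenskij2013new}, you prove both inline via correct cancellation/squeeze arguments, making your write-up self-contained.
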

\begin{proof}
    SPred$(A)$ contains $\falsity$, $\truth$ and is closed under the complement so it remains to show that it is closed under sums.
    Let $p,q\in\text{SPred}(A)$ and suppose $p$ and $q$ are summable. We claim that $p\ovee q = p\vee q$ so that $p\ovee q$ is indeed sharp. We have $p\leq q^\perp$ and hence $p\wedge q \leq q^\perp\wedge q = \falsity$. That $p\ovee q = p\vee q$ then follows from the identity $a\ovee b = (a\wedge b)\ovee (a\vee b)$ that holds in effect algebras~\cite[Prop.~177]{basthesis}.
    
    Now any ortholattice that is also an effect algebra is an orthomodular lattice~\cite[Prop.~1.5.8]{dvurecenskij2013new}.
\end{proof}

\begin{definition}[{\cite{jacobs2009orthomodular}}]
    We define \textbf{OMLatGal} to be the category where the objects are orthomodular lattices and the morphisms are Galois connections $(f: A\rightleftarrows B: g $.
\end{definition}

\begin{proposition}
    Let $\catC$ be a $\diamond$-effectus. The assignment $A\mapsto \text{SPred}(A)$ and $f\mapsto (f_\diamond,f^{\ssquare})$ gives a functor $\catC\rightarrow \textbf{OMLatGal}$.
\end{proposition}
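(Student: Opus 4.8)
The plan is to assemble this functor entirely from results already established for the $\diamond$-structure, so that essentially no new computation is required. On objects, Proposition~\ref{prop:spred-is-oml} already shows that $\text{SPred}(A)$ is an orthomodular lattice for every object $A$, so the assignment lands in the objects of $\textbf{OMLatGal}$. On morphisms, given $f\colon A\rightarrow B$ I must check that the pair $(f_\diamond, f^{\ssquare})$ is a genuine morphism of $\textbf{OMLatGal}$, that is, a Galois connection between $\text{SPred}(A)$ and $\text{SPred}(B)$. But this is precisely the content of Proposition~\ref{prop:galois-properties}.c), with $f_\diamond\colon \text{SPred}(A)\rightarrow\text{SPred}(B)$ playing the role of the lower adjoint and $f^{\ssquare}\colon\text{SPred}(B)\rightarrow\text{SPred}(A)$ the upper adjoint. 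Thus the assignment is well-defined on both objects and morphisms.

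It then remains to verify the two functor laws. Preservation of identities is immediate from Proposition~\ref{prop:galois-properties}.f), which gives $(\id)_\diamond = (\id)^{\ssquare} = \id$, so the identity is sent to the identity Galois connection. For composition, recall that the composite in $\textbf{OMLatGal}$ of two Galois connections is formed by composing the lower adjoints covariantly and the upper adjoints in the reverse order. Hence for composable $f$ and $g$ the composite Galois connection has as its lower part the composite of $f_\diamond$ and $g_\diamond$ and as its upper part the composite of $g^{\ssquare}$ and $f^{\ssquare}$ taken in the opposite order; and these are exactly the equalities $(f\circ g)_\diamond = f_\diamond\circ g_\diamond$ and $(f\circ g)^{\ssquare} = g^{\ssquare}\circ f^{\ssquare}$ supplied by Proposition~\ref{prop:galois-properties}.h) and~g) respectively. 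In other words, $\mathord{}_\diamond$ is covariantly functorial and $\mathord{}^{\ssquare}$ contravariantly functorial on the sharp-predicate lattices, which is precisely the composition rule for the lower and upper adjoints of a composite Galois connection. This establishes functoriality.

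There is no genuine obstacle here: every ingredient has already been proved, and the only point requiring a moment's care is the variance bookkeeping. Concretely, one must check that it is the pair whose \emph{lower} adjoint is $f_\diamond$ (rather than $f^{\ssquare}$) that makes the functor covariant, so that the covariant law~h) for $f_\diamond$ and the contravariant law~g) for $f^{\ssquare}$ together reproduce exactly the composition of Galois connections in $\textbf{OMLatGal}$; once this matching is fixed, the functor laws follow without further work.
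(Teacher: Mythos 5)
Your proposal is correct and follows essentially the same route as the paper's own proof: both reduce the statement entirely to Proposition~\ref{prop:spred-is-oml} for the objects and to Proposition~\ref{prop:galois-properties} (the Galois connection in point~c), and the identity and composition laws in points~f), g), h)) for the morphisms. Your only addition is the explicit variance bookkeeping identifying $f_\diamond$ as the lower adjoint, which the paper leaves implicit.
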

\begin{proof}
    SPred$(A)$ is an orthomodular lattice by Proposition~\ref{prop:spred-is-oml}. That $f_\diamond$ and $f^{\ssquare}$ form a Galois connection is Proposition~\ref{prop:galois-properties}.c), and that the assignment is functorial is given by point~g) of that same proposition.
\end{proof}

\section{Decomposing into sharp and convex systems}\label{sec:decomposition}

Our first step in reconstructing quantum theory is showing how we can retrieve convexity and real probabilities from the abstract framework of effectus theory. This relies mostly on showing how properties of the scalars of the effectus lift to the entirety of the category. We already saw examples of that in Section~\ref{sec:OUS}, where if the scalars were $\{0,1\}$ the predicate spaces were orthoalgebras, while if they were $[0,1]$, then the predicate spaces would be convex effect algebras. Here we will generalise these results.

\subsection{Decomposing an effectus}\label{sec:decomposing-effectus}

If the effect monoid of scalars of an effectus is reducible, then we can lift this to the level of the category.
This is because scalars in an effectus have an action on the predicate spaces through composition: given a scalar $s\colon I\rightarrow I$ and a predicate $a\colon A\rightarrow I$, we construct the scaled predicate $s\cdot a:= s\circ a$. 

\begin{proposition}\label{prop:effectus-product}
    Let $\catC$ be an effectus
        with a non-trivial idempotent scalar~$s$, i.e.~$s \notin \{0,1\}$ and~$s^2=s$.
    Then $\Pred(\catC)$ embeds non-trivially into a product of categories.
\end{proposition}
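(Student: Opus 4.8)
The plan is to let the idempotent scalar $s$ act on each predicate space by composition and to read off a splitting of the whole image category from the resulting corner decomposition. For every object $A$, first define $L_s\colon \Pred(A)\to\Pred(A)$ by $L_s(a) := s\cdot a = s\circ a$. Since composition in an effectus is biadditive, $L_s$ is an additive (hence monotone) endomap, and $L_s(L_s(a)) = (s\circ s)\circ a = s\circ a = L_s(a)$ because $s^2=s$, so $L_s$ is idempotent; likewise for $L_{s^\perp}$. From $s\ovee s^\perp = 1$ and distributivity I get $a = L_s(a)\ovee L_{s^\perp}(a)$, while cancellativity of effect algebras applied to $s = s\cdot 1 = s\cdot s \ovee s\cdot s^\perp = s \ovee s\cdot s^\perp$ yields $s\cdot s^\perp = 0$ and hence $L_s\circ L_{s^\perp} = L_{s^\perp}\circ L_s = 0$. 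This is exactly the data needed to lift Example~\ref{cornersexample} from the effect monoid of scalars to its action on each $\Pred(A)$: the map $a\mapsto (s\cdot a,\ s^\perp\cdot a)$ is an isomorphism of effect algebras
\[
\Pred(A)\ \cong\ s\cdot\Pred(A)\ \oplus\ s^\perp\cdot\Pred(A),
\]
where each corner $s\cdot\Pred(A) := \{s\cdot a : a\in\Pred(A)\}$ is an effect algebra with top $s$ and orthosupplement $(s\cdot a)^\perp := s\cdot a^\perp$, all other operations inherited from $\Pred(A)$.

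Next I would check that this decomposition is natural, which is what turns it into a decomposition of the category rather than of the individual objects. For any $f\colon A\to B$ the morphism $\Pred(f) = (\,\cdot\,)\circ f$ commutes with the scalar action by associativity, $L_s(b)\circ f = (s\circ b)\circ f = s\circ(b\circ f) = L_s(b\circ f)$, so $\Pred(f)$ restricts to maps $s\cdot\Pred(B)\to s\cdot\Pred(A)$ and $s^\perp\cdot\Pred(B)\to s^\perp\cdot\Pred(A)$, and it is the direct sum of these two restrictions under the isomorphism above. Hence the assignments $A\mapsto s\cdot\Pred(A)$ and $A\mapsto s^\perp\cdot\Pred(A)$, together with the restricted action on morphisms, define two functors $F_s, F_{s^\perp}$ out of $\Pred(\catC)$ into $\EA^\opp$; write $\catD_1$ and $\catD_2$ for their images. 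The product functor $(F_s,F_{s^\perp})\colon \Pred(\catC)\to \catD_1\times\catD_2$ is then faithful: if two parallel morphisms agree after restriction to both corners, they agree on every $b = s\cdot b\ovee s^\perp\cdot b$ by additivity; and it is injective on objects since $\Pred(A)$ is recovered, up to the fixed isomorphism, as $s\cdot\Pred(A)\oplus s^\perp\cdot\Pred(A)$. This yields the desired embedding into a product.

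Finally, the embedding is non-trivial precisely because $s\notin\{0,1\}$: from $s\neq 0$ we have $s = s\cdot 1\neq 0$ in $s\cdot\Pred(I)$, so $\catD_1$ is not the trivial one-object, one-morphism category, and symmetrically $s^\perp\neq 0$ makes $\catD_2$ non-trivial. I expect the only point requiring genuine care to be the first paragraph: verifying that $a\mapsto(s\cdot a,\ s^\perp\cdot a)$ is an effect-algebra isomorphism onto the direct sum, in particular that each corner is itself an effect algebra and that the map respects orthosupplements. Everything else---naturality, faithfulness, and non-triviality---follows formally from associativity, biadditivity of composition, and the hypothesis $s\neq 0,1$.
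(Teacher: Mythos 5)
Your proof is correct and follows essentially the same route as the paper's (sketched) proof: decompose each $\Pred(A)$ as $s\cdot\Pred(A)\oplus s^\perp\cdot\Pred(A)$ via the idempotent scalar action, check that $\Pred(f)$ respects the corners by associativity, and obtain a faithful functor into the product of the two corner categories. The only cosmetic slip is that the top of the corner $s\cdot\Pred(A)$ should be $s\cdot\truth_A$ rather than $s$ itself (these coincide only for $A=I$); otherwise your write-up supplies the details the paper leaves implicit.
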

\begin{proof}
    We sketch a proof.
    Let $A\in\catC$. For every $p\in\Pred(A)$ we have $p = \truth_I\circ p = (s\ovee s^\perp)\circ p = s\circ p \ovee s^\perp\circ p \equiv p_1\ovee p_2$ and hence we can write $\Pred(A) \cong A_1\oplus A_2 \equiv s\cdot \Pred(A)\oplus s^\perp \cdot \Pred(A)$.
    Let $f\colon A\rightarrow B$ be a morphism in $\catC$. Under the functor $\Pred$ this becomes $\Pred(f)\colon \Pred(B)\rightarrow \Pred(A)$. Using the idempotence of $s$ we note that 
    \begin{align*}
    \Pred(f)(p) &\ =\  \Pred(f)(s\circ p \ovee s^\perp \circ p) \\
    &\ =\ \Pred(f)(s\circ s\circ p \ovee s^\perp\circ s^\perp\circ p) \\
    &\ =\  s\circ \Pred(f)(s\circ p) \ovee s^\perp \circ \Pred(f)(s^\perp\circ p)
    \end{align*}
    and hence $\Pred(f) = (f_1,0) \ovee (0,f_2)$ where $f_i\colon \Pred(B_i)\rightarrow \Pred(A_i)$.
    We then have a faithful functor $\Pred(\catC)\rightarrow \catC_1\times \catC_2$ where $\catC_1$ has objects $s\cdot \Pred(A)$ and morphisms $f\colon s\cdot \Pred(A)\rightarrow s\cdot \Pred(B)$ in $\textbf{EA}$ (or rather the subcategory consisting of effect modules over $s\cdot \Pred(I)$).
    We define $\catC_2$ analogously, but with $s$ and $s^\perp$ interchanged.
\end{proof}

Generally it won't be the case that $\Pred(\catC)$ is equivalent to a product of categories in the above situation. 
However, this is the case if the effectus has some more structure. 
First, let us recall the concept of the \emph{Karoubi envelope}, which allows us to speak of subsystems in a general category, as given by idempotent maps.

\begin{definition}
    Let $\catC$ be any category. We say a morphism $t\colon A\rightarrow A$ in $\catC$ is \Define{idempotent} when $t\circ t = t$.
    We define the \Define{Karoubi envelope} of $\catC$ to be the category $\Split(\catC)$ which has as objects idempotents $t\colon A\rightarrow A$ in $\catC$, and has morphisms $f\colon (t\colon A\rightarrow A)\rightarrow (s\colon B\rightarrow B)$ corresponding to a morphism $f\colon A\rightarrow B$ in $\catC$ satisfying $s\circ f\circ t = f$.
\end{definition}
Note that $\catC$ embeds fully and faithfully into $\Split(\catC)$ via $A\mapsto \id_A$ and $f\mapsto f$. Intuitively we can think of an object $t\colon A\rightarrow A$ as the subobject of $A$ where $t$ holds.
We call this category $\Split(\catC)$ as it makes every idempotent \emph{split}, meaning that for an idempotent $t\colon A\rightarrow A$ we can find a pair of maps $\xi^t\colon A \leftrightarrows A_t\colon \pi_t$ so that $\pi_t\circ \xi^t = t$ and $\xi^t \circ \pi_t = \id_{A_t}$.

\begin{proposition}
    Let $\catC$ be an effectus in partial form.
    Then $\Split(\catC)$ is also an effectus in partial form. Furthermore:
    \begin{itemize}
        \item If $\catC$ is separated by predicates, then so is $\Split(\catC)$.
        \item If $\catC$ is separated by states, then so is $\Split(\catC)$.
        \item If $\catC$ is monoidal, then so is $\Split(\catC)$.
    \end{itemize}
\end{proposition}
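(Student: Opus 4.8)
The plan is to realise every piece of structure on $\Split(\catC)$ as the corresponding structure of $\catC$ restricted along the defining condition $s\after f\after t=f$ of a $\Split$-morphism $f\colon(t,A)\to(s,B)$. First I would record the two identities this condition forces, $s\after f=f$ and $f\after t=f$ (compose with the idempotents and use $s\after s=s$, $t\after t=t$); these do all the work below. Because composition in $\catC$ is biadditive, each homset $\Split(\catC)((t,A),(s,B))$ is a sub-PCM of $\catC(A,B)$ (it contains $0$ and is closed under $\ovee$, since $s\after(f\ovee g)\after t=s\after f\after t\ovee s\after g\after t$), so $\Split(\catC)$ is again enriched over PCMs. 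Finite coproducts are computed as in $\catC$: the initial object is $(\id_0,0)$ and the coproduct of $(t,A)$ and $(s,B)$ is $(t+s,A+B)$, which is idempotent because $+$ is a functor; coprojections and cotuples are the $\catC$-ones pre- and post-composed with the relevant idempotents, and the compatible-sum and untying axioms descend directly as the partial projections and PCM operations are inherited.

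For the unit object I take $(\id_I,I)$. A predicate on $(t,A)$ is then a $\catC$-predicate $p$ with $p\after t=p$, and its top is $\truth_A\after t$, since $p=p\after t\le\truth_A\after t$ for every such $p$ by monotonicity of $\Pred(t)=(-)\after t$. The one nontrivial structural check is that these predicates form an effect algebra with orthosupplement $p\mapsto(\truth_A\after t)\ominus p$: closure under this map uses additivity of $(-)\after t$, and the axiom $x\perp\truth_A\after t\Rightarrow x=0$ follows from the clean computation $(\truth_A\after t)^\perp\after t=\truth_A\after t\ominus\truth_A\after t=0$, whence $p\le(\truth_A\after t)^\perp$ gives $p=p\after t\le 0$. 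The remaining two effectus axioms reduce to those of $\catC$ after rewriting $\truth_{(s,B)}\after f=\truth_B\after f$ via $s\after f=f$.

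Separation by predicates descends cleanly. Given parallel $f,g\colon(t,A)\to(s,B)$ agreeing on every $\Split$-predicate of $(s,B)$, I apply this to the predicates $r\after s$ arising from arbitrary $\catC$-predicates $r$; using $s\after f=f$ and $s\after g=g$ this yields $r\after f=r\after g$ for all $r$, and separation in $\catC$ forces $f=g$. The monoidal case is equally direct: set $(t,A)\otimes(s,B):=(t\otimes s,A\otimes B)$, which is idempotent by bifunctoriality of $\otimes$, keep $(\id_I,I)$ as the tensor unit, and verify biadditivity and $\truth$-preservation pointwise, the only identity worth noting being $\truth_{(t,A)}\otimes\truth_{(s,B)}=(\truth_A\otimes\truth_B)\after(t\otimes s)=\truth_{A\otimes B}\after(t\otimes s)=\truth_{(t\otimes s,A\otimes B)}$.

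The hard part is separation by states, and I expect it to be the main obstacle. Dually to predicates, a total state of $(t,A)$ is a $\catC$-state $\omega$ with $t\after\omega=\omega$, and for any $\catC$-state $\nu$ the composite $t\after\nu$ satisfies $f\after\nu=f\after(t\after\nu)$ since $f\after t=f$. The obstruction is that, unlike $r\after s$ above, $t\after\nu$ need not be total: its mass $\truth_A\after t\after\nu$ can be a proper scalar, so it is not automatically a legal test state. My plan is to show the $t$-invariant total states nevertheless separate: the zero-mass case is immediate (if $\truth_A\after t\after\nu=0$ then $t\after\nu=0$ by the effectus axiom and both sides vanish), and for the remaining states I would try to produce, for each distinguishing $\catC$-state $\nu$, a genuine $t$-invariant total state detecting the difference, using that the retraction $\pi_t\colon(t,A)\to(\id_A,A)$ is total together with a normalisation of the substate $t\after\nu$. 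An alternative framing is to read separation by states as faithfulness of the state functor and invoke that $\Split$ preserves faithful functors, after identifying the states of $(t,A)$ with the fixed points of the induced idempotent on states; this, however, again hinges on exactly the same totality point, which is where the real content of this bullet lies.
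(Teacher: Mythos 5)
Your constructions for the effectus structure, for predicate separation, and for the tensor are essentially identical to the paper's own: the unit is $(\id_I,I)$; the predicates on $(t,A)$ are the $t$-invariant ones with top $\truth_A\after t$; your orthosupplement $(\truth_A\after t)\ominus p$ coincides with the paper's $p^{\perp_t}:=p^\perp\after t$; the coproduct is $t+s$ with conjugated coprojections; your predicate-separation argument via the predicates $r\after s$ is verbatim the paper's; and the tensor is $(t\otimes s,A\otimes B)$ with unit $(\id_I,I)$. The only omission on this side is that the monoidal coherence isomorphisms must also be conjugated (the paper sets $\lambda_t:=t\after\lambda_A\after(t\otimes\id_I)$, since $\lambda_A$ itself is not a $\Split$-morphism); this is routine.

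The state-separation bullet, where your proposal stops, is a genuine gap in the write-up as it stands --- but the obstacle you isolate is real, and in fact neither of your proposed repairs can close it, because with the paper's definition of a state as a \emph{total} map $I\to A$ the bullet is false as literally stated. Take any effectus separated by states whose effect monoid of scalars has a nontrivial idempotent $t$ (for instance the Kleisli category of the subdistribution monad with coefficients in $M=[0,1]\times[0,1]$ and $t=(1,0)$; point masses are total states and separate). The object $(t,I)$ of $\Split(\catC)$ then has \emph{no} states at all: a state $\omega$ must satisfy both $\omega = t\after\omega$ and totality $\truth_{(t,I)}\after\omega = t\after\omega = \id_I$, which together force $\omega=\id_I$ and hence $t=\id_I$, a contradiction. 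Yet $\id_{(t,I)}=t$ and $0$ are distinct parallel maps out of $(t,I)$ that vacuously agree on all states. In particular no normalisation of the substate $t\after\nu$ can exist here (and in a general effect monoid of scalars there is no division anyway), so that route is a dead end; your alternative framing via the state functor founders on the same point, as you note.

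What saves the proposition --- and what the paper's one-sentence ``works analogously'' is silently using --- is reading ``separated by states'' with arbitrary substates $\nu\colon I\to A$ rather than total states. Under that reading your own observation already finishes the proof in one line, exactly dually to the predicate case: for any $\catC$-substate $\nu$, the map $t\after\nu$ is a legitimate substate of $(t,A)$ in $\Split(\catC)$, so $f\after\nu = f\after t\after\nu = g\after t\after\nu = g\after\nu$, and substate-separation in $\catC$ gives $f=g$. Note that separation by total states implies separation by substates (the latter quantifies over more tests), so the hypothesis on $\catC$ is available in either reading; it is only the \emph{conclusion} for $\Split(\catC)$ that must be stated with substates. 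So the correct repair is to change the class of test states, not to normalise; your diagnosis that the ``real content of this bullet'' lies in the totality point is accurate, and it is precisely the point that the paper's own proof sketch glosses over.
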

\begin{proof}
    We just give a sketch of the necessary constructions to prove this.
    The trivial object of $\Split(\catC)$ is $\id_I$. 
    The predicates of an object $t\colon A\rightarrow A$ in $\Split(\catC)$
    then correspond to maps $p\colon A\rightarrow I$ satisfying $p\circ
    t = p$.
    Note that always~$p \leq \truth_A \after t$,
        but that the converse is not necessarily true.
    The truth element is $\truth_t:= \truth_A\circ t$ and
    falsity is just~$\falsity_t := \falsity_A$.
    We set $p^{\perp_t}:=p^\perp\circ t$.
    It is straightforward to verify that~$\Pred(t)$
    is an effect algebra.
    We define the coproduct as $t+s\colon A+B\rightarrow A+B$ in the obvious way. The coproduct maps then become $\kappa_1' := (t+s)\circ \kappa_1\circ t$ and similarly for $\kappa_2'$.
    All the axioms of an effectus are now easily checked.

    Now suppose $\catC$ is separated by predicates. Let $f,g\colon t\rightarrow s$ in $\Split(\catC)$ where $s\colon B\rightarrow B$ and suppose $p'\circ f = p'\circ g$ for all $p'\colon s\rightarrow \id_I$. We need to show that $f=g$. Let $p\colon B\rightarrow I$ be an arbitrary predicate on $B$. Then $p\circ f = p\circ (s\circ f \circ t) = (p\circ s)\circ (s\circ f \circ t) = (p\circ s)\circ f$. Now since $p\circ s$ is a predicate on $s$, we have $(p\circ s)\circ f = (p\circ s)\circ g$. Doing the argument in reverse we then get $p\circ f = p\circ g$. Now predicate separation in $\catC$ gives $f=g$ as desired. 
    The proof for preservation of separation by states works analogously.

    If $\catC$ is monoidal, we define the monoidal structure in $\Split(\catC)$ to be the same as in $\catC$, with the monoidal unit being $\id_I$. We do need to modify the coherence isomorphisms though. For instance, for an object $t\colon A\rightarrow A$ in $\Split(\catC)$ we set $\lambda_t\colon t\otimes \id_I\rightarrow t$ to be $\lambda_t := t\circ \lambda_A\circ (t\otimes \id_I)$. Using the naturality of $\lambda_A$ it is then straightforward to verify that this satisfies the correct equations. The additional equations required for a monoidal effectus are also easily checked.
\end{proof}

\begin{proposition}\label{prop:monoidal-splits}
    Let $\catC$ be a monoidal effectus with a
    non-trivial idempotent scalar~$s$. Then $\Split(\catC)\cong
    \catC_s\times \catC_{s^\perp}$ for some non-trivial effectuses~$\catC_s$
    and $\catC_{s^\perp}$.
\end{proposition}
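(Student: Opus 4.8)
The plan is to exploit that in a monoidal effectus a scalar acts on \emph{every} object, turning the non-trivial idempotent scalar $s$ into a central idempotent endomap-family that cleaves the idempotent-complete category $\Split(\catC)$ into two factors. First I would record the elementary facts about $s$: from distributivity and the unit law in the effect monoid $\Pred(I)$ (cf.~Example~\ref{cornersexample}) one gets $s\cdot s = s$, $s^\perp\cdot s = 0$ and $s^\perp\cdot s^\perp = s^\perp$, so $s^\perp$ is again an idempotent scalar, non-trivial because $s\neq 0,1$. By the previous proposition $\Split(\catC)$ is itself a monoidal effectus with the same scalars as $\catC$, so for every object $X$ of $\Split(\catC)$ the scalar multiplication of Lemma~\ref{lem:monoidal-effectus-facts} yields an endomap $s\cdot\id_X\colon X\to X$. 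Since scalar multiplication respects composition, $(s\cdot\id_X)\circ(s\cdot\id_X) = (s\cdot s)\cdot\id_X = s\cdot\id_X$, so $s\cdot\id_X$ is idempotent, and likewise $s^\perp\cdot\id_X$; moreover $s\cdot\id_X \ovee s^\perp\cdot\id_X = (s\ovee s^\perp)\cdot\id_X = \id_X$. The crucial point is \textbf{centrality}: for any $f\colon X\to Y$ we have $f\circ(s\cdot\id_X) = s\cdot f = (s\cdot\id_Y)\circ f$, again by Lemma~\ref{lem:monoidal-effectus-facts}.

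Next I would define the two factors as full subcategories of $\Split(\catC)$: let $\catC_s$ consist of the objects $X$ with $s\cdot\id_X=\id_X$, and $\catC_{s^\perp}$ of those with $s^\perp\cdot\id_X=\id_X$ (equivalently $s\cdot\id_X=0$). As $\Split(\catC)$ is idempotent-complete, every $X$ carries a splitting $X_s$ of $s\cdot\id_X$, say $s\cdot\id_X=\iota^X_s\circ\rho^X_s$ with $\rho^X_s\circ\iota^X_s=\id_{X_s}$, and $X_s\in\catC_s$; dually one obtains $X_{s^\perp}\in\catC_{s^\perp}$. I would then verify that $\catC_s$ is itself an effectus (in partial form) inheriting the structure of $\Split(\catC)$: its unit object is $I_s$, the splitting of $s\cdot\id_I = s$, so that $\id_{I_s}=s\neq 0$ witnesses non-triviality; and $\catC_s$ is closed under the coproducts of $\Split(\catC)$ because $s\cdot\id_{Y+Y'}=\id_{Y+Y'}$ whenever $Y,Y'\in\catC_s$, while the predicate effect algebras and the finPAC operations restrict. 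The same holds for $\catC_{s^\perp}$, whose non-triviality comes from $s^\perp\neq 0$.

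Finally I would construct the functor $\Phi\colon\Split(\catC)\to\catC_s\times\catC_{s^\perp}$ by $X\mapsto(X_s,X_{s^\perp})$ and $f\mapsto(\rho^Y_s\circ f\circ\iota^X_s,\ \rho^Y_{s^\perp}\circ f\circ\iota^X_{s^\perp})$, and show it is an equivalence. Faithfulness and fullness follow from the orthogonal decomposition together with centrality: expanding $f=(s\cdot\id_Y \ovee s^\perp\cdot\id_Y)\circ f\circ(s\cdot\id_X \ovee s^\perp\cdot\id_X)$ by biadditivity, the two cross terms vanish, e.g.\ $(s^\perp\cdot\id_Y)\circ f\circ(s\cdot\id_X) = (s^\perp\cdot s)\cdot f = 0$, so $f$ is reconstructed from its two diagonal blocks, which are exactly the components of $\Phi(f)$. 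For essential surjectivity I would send $(Y,Z)$ to the coproduct $Y+Z$ in $\Split(\catC)$ and compute, using $s\cdot\kappa_1=\kappa_1\circ(s\cdot\id_Y)=\kappa_1$ and $s\cdot\kappa_2=\kappa_2\circ(s\cdot\id_Z)=0$, that $s\cdot\id_{Y+Z}=\kappa_1\circ\pproj_1$ splits off precisely $Y$, whence $(Y+Z)_s\cong Y$ and $(Y+Z)_{s^\perp}\cong Z$.

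I expect the main obstacle to be not the conceptual step—that a central idempotent scalar splits the category, which is forced cleanly by Lemma~\ref{lem:monoidal-effectus-facts}—but the bookkeeping that $\catC_s$ and $\catC_{s^\perp}$ genuinely are effectuses in partial form with \emph{all} the required structure (finite coproducts, unit, and effect-algebra predicate spaces), together with checking that the summability needed to reassemble morphisms (the orthogonality of the two diagonal blocks) really holds, so that $\Phi$ transports the effectus structure rather than merely the underlying categories. The rest is careful but routine verification.
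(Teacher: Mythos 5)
Your proposal is correct and follows essentially the same route as the paper's proof: both hinge on the central idempotents $s\cdot\id_X$ supplied by Lemma~\ref{lem:monoidal-effectus-facts}, split them using idempotent-completeness, use the orthogonal decomposition $\id_X = s\cdot\id_X \ovee s^\perp\cdot\id_X$ (with the cross terms killed by $s\cdot s^\perp = 0$) to recover morphisms from their diagonal blocks, and realise the inverse equivalence via the coproduct $Y+Z$ with $s\cdot\id_{Y+Z} = \kappa_1\circ\pproj_1$. The only difference is packaging: the paper keeps all objects and restricts morphisms to those of the form $s\cdot f$ (so $s\cdot\id$ serves as identity), while you take the equivalent full subcategory of $\Split(\catC)$ on objects where $s\cdot\id_X=\id_X$.
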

\begin{proof}
Without loss of generality we may assume that~$\catC$ has split idempotents (since otherwise we could take $\Split(\catC)$ instead).

    Define $\catC_s$ to be the category with the same objects as~$\catC$,
        but with morphisms restricted to those of the form~$s \cdot f$
        for~$f\colon A \to B$ in~$\catC$
        and with identity~$s\cdot \id_X$ for an object~$A$.
It is easy to see that the coproduct and enrichtment on~$\catC$
        restricts to~$\catC_s$ and hence~$\catC_s$ is a finPAC.
It has the same distinguished object~$I$,
    but for every object $A$ the maximum element of the predicate space~$\catC_s(A,I)$ is $s\cdot \truth_A$,
            and the orthosupplement for $p$ is~$s\cdot p^\perp$.
    Hence~$\catC_s$ is an effectus as well.
    The same holds for~$\catC_{s^\perp}$ with the obvious definition.

We will show~$\catC \cong \catC_s \times \catC_{s^\perp}$. First we need
     a few definitions.
For every object~$A$ in~$\catC$, pick a splitting~$\xi^s\colon A\leftrightarrows A_s\colon \pi^s$
    of the idempotent~$s\cdot \id_A$.
For any morphism~$f\colon A \to B$,
    we define~$f_s\colon A_s \to B_s$ as~$f_s := \xi^s \after f \after \pi_s$.
    Note~$(\id_A)_s = \id_{A_s}$ and $(f \after g)_s = f_s \after g_s$.
Furthermore~$s \cdot \xi^s = \xi^s \after s \cdot \id
        = \xi^s \after \pi_s \after \xi^s  = \xi^s$
        and so~$s \cdot \id_{A_s} = \pi^s \after (s \cdot \xi^s) =  \id_{A_s}$
        and~$s^\perp \cdot \id_{A_s} = 0$.
    Define~$A_{s^\perp}$, $\pi_{s^\perp}$, $\xi^{s^\perp}$  and~$(\ )_{s^\perp}$ similarly.

Now, we can define the functors~$F \colon \catC_s \times \catC_{s^\perp} \leftrightarrows \catC \colon G$
    by
    \begin{equation*}
    F(A,B) \ =\  A_s +B_{s^\perp} \quad F(f, g) \ =\  f_s + g_{s^\perp} \quad
    G(A) \ =\  (A_s, A_{s^\perp}) \quad G(f) \ = \  (f_s, f_{s^\perp}).
    \end{equation*}
To show that these functors form an equivalence of categories, first observe
that~$F(G(A))=A_s+ A_{s^\perp}$.
We have~$\xi^s \after \pi_s = \id$ so that $\truth \after \pi_s = \truth$, and
    hence~$\truth \after \xi^s = s\cdot\truth$
        and similarly~$\truth \after \xi^{s^\perp} = s^\perp\cdot \truth$.
As these are summable predicates, the sum
map~$\langle \xi^s, \xi^{s^\perp} \rangle := (\kappa_1\circ \xi^s)\ovee (\kappa_2\circ \xi^{s^\perp})
    \colon A \to A_s+A_{s^\perp}$ exists, see~\cite[181\textsubscript{VII}]{basthesis}.
    Note that this map is an isomorphism with inverse~$[\pi_s, \pi_{s^\perp}]$.
    From this it follows that $\alpha := [\pi_s, \pi_{s^\perp}]$ is a natural
        isomorphism~$FG \Rightarrow \id$.

For the other direction of the equivalence, first consider the maps
\begin{equation}\label{eq:zkzppp}
    (A_s + B_{s^\perp})_s \ \stackrel[\pi_s]{\xi^s}{\leftrightarrows} \
        A_s + B_{s^\perp} \ \stackrel[\pproj_1]{\kappa_1}{\leftrightarrows}\ 
        A_s \ \stackrel[\pi_s]{\xi^s}{\leftrightarrows} \ 
         A.
\end{equation}
    Note~$\xi^s \after \kappa_1 \after \xi^s \after \pi_s \after
        \pproj_1 \after \pi_s = \xi^s \after (\id_{A_s} + 0_{B_{s^\perp}}) \after \pi_s$.
    Remember~$s \cdot \id_{B_{s^\perp}} = 0$, hence
    \begin{equation*}
    s\cdot (\xi^s \after \kappa_1 \after \xi^s \after \pi_s \after \pproj_1 \after \pi_s)
            \ =\  \xi^s \after (s\cdot \id_{A_s} + s\cdot \id_{B_{s^\perp}}) \after \pi_s
        \ =\  s\cdot \id.
    \end{equation*}
We can also calculate~$
    s\cdot ( \pi_s \after \pproj_1 \after \pi_s \after
    \xi^s \after \kappa_1 \after \xi^s)
    = s \cdot \id$ so that
        the maps in~\eqref{eq:zkzppp} are eachothers inverse
        in~$\catC_s$.
We can do a similar calculation for the map $\pi_{s^\perp} \after \pproj_2 \after \pi_{s^\perp} \colon 
(A_s + B_{s^\perp})_{s^\perp} \rightarrow B$ to show it is an isomorphism in $\catC_{s^\perp}$
so that we see that the map~$\beta = (\pi_s \after \pproj_1 \after \pi_s, \pi_{s^\perp}
    \after \pproj_2 \after \pi_{s^\perp})$ gives a natural isomorphism~$GF \Rightarrow \id$.
\end{proof}

Note that this result does not in fact require all the idempotents to split, just the ones that correspond to scalar multiplication by an idempotent scalar. We can use this fact to get a different variant of this result.

\begin{proposition}\label{prop:monoidal-splits2}
    Let $\catC$ be a monoidal effectus which has images and compatible filters and comprehensions and let $s$ be a non-trivial idempotent scalar.
    Then $\catC \cong \catC_s\times \catC_{s^\perp}$ for some non-trivial effectuses $\catC_s$ and $\catC_{s^\perp}$.
\end{proposition}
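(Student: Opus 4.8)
The plan is to avoid passing to the Karoubi envelope and instead verify directly that the particular idempotents used in the proof of Proposition~\ref{prop:monoidal-splits} already split inside $\catC$. As noted in the remark preceding the statement, that proof only ever splits idempotents of the form $s\cdot\id_A$ and $s^\perp\cdot\id_A$ for the various objects $A$. So once I exhibit such splittings in $\catC$ itself, the functors $F,G$ and the natural isomorphisms $\alpha,\beta$ constructed there go through verbatim, yielding $\catC\cong\catC_s\times\catC_{s^\perp}$ with the same definitions of $\catC_s$ and $\catC_{s^\perp}$.

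The heart of the argument is a lemma: for every object $A$ the idempotent $s\cdot\id_A$ is an assert map and hence splits. First I would show that $p:=s\cdot\truth_A$ is sharp by identifying it with $\im(s\cdot\id_A)$. Using Lemma~\ref{lem:monoidal-effectus-facts} one computes $\truth\circ(s\cdot\id_A)=s\cdot\truth_A$ and $q\circ(s\cdot\id_A)=s\circ q$ for any predicate $q$; thus if $q\circ(s\cdot\id_A)=\truth\circ(s\cdot\id_A)$ then $s\circ q=s\circ\truth_A$, and the decomposition $q=(s\circ q)\ovee(s^\perp\circ q)=(s\circ\truth_A)\ovee(s^\perp\circ q)\geq s\circ\truth_A$ shows $p$ is the least such predicate, so $p=\im(s\cdot\id_A)$ is sharp. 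Given this, $\asrt_p=\pi_p\circ\xi^p$ is defined through a compatible comprehension and filter, with $\xi^p\circ\pi_p=\id$.

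The main obstacle is then to prove $s\cdot\id_A=\asrt_p$. Since $s$ is an idempotent scalar we have $s\cdot s^\perp=s^\perp\cdot s=0$ in the effect monoid of scalars, so $\truth\circ(s^\perp\cdot\asrt_p)=s^\perp\cdot(\truth\circ\asrt_p)=s^\perp\circ p=(s^\perp\circ s)\circ\truth_A=0$; the effectus axiom $\truth\circ f=\falsity\Rightarrow f=0$ gives $s^\perp\cdot\asrt_p=0$, whence $\asrt_p=(s\ovee s^\perp)\cdot\asrt_p=s\cdot\asrt_p$. Consequently $(s\cdot\id_A)\circ\asrt_p=s\cdot\asrt_p=\asrt_p$. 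On the other hand $\truth\circ(s\cdot\id_A)=p\leq p$, so Lemma~\ref{lem:assert-image}(b) yields $(s\cdot\id_A)\circ\asrt_p=s\cdot\id_A$. Comparing the two gives $\asrt_p=s\cdot\id_A$, so $s\cdot\id_A=\pi_p\circ\xi^p$ splits with $\xi^p\circ\pi_p=\id$. The identical argument applied to $s^\perp$ splits $s^\perp\cdot\id_A$.

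With these splittings in hand I would finish by running the construction of Proposition~\ref{prop:monoidal-splits} word for word: every idempotent it needs to split—including those on objects of the form $A_s+B_{s^\perp}$—is of the shape (idempotent scalar)$\,\cdot\,\id$, hence is covered by the lemma, and the resulting equivalence lands in $\catC$ rather than in $\Split(\catC)$. The only genuinely new work is the identification $s\cdot\id_A=\asrt_{\,s\cdot\truth_A}$; everything after that is a routine appeal to the previous proposition.
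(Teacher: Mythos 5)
Your proof is correct and follows essentially the same route as the paper's: both show $s\cdot\truth_A$ is sharp as the image of $s\cdot\id_A$, form $\asrt_{s\cdot\truth_A}=\pi\circ\xi$ from a compatible pair, deduce $s^\perp\cdot\asrt_{s\cdot\truth_A}=0$ hence $s\cdot\asrt_{s\cdot\truth_A}=\asrt_{s\cdot\truth_A}$, and then invoke Lemma~\ref{lem:assert-image} to conclude $s\cdot\id_A=\asrt_{s\cdot\truth_A}$ before deferring to the construction of Proposition~\ref{prop:monoidal-splits}. The only (immaterial) differences are that you use part (b) of Lemma~\ref{lem:assert-image} where the paper uses part (a), and that you spell out the computation $\im(s\cdot\id_A)=s\cdot\truth_A$ which the paper merely asserts.
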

\begin{proof}
    We show that the idempotent maps $s\cdot \id_A$ split for every object $A$. The rest of the proof then proceeds as in Proposition~\ref{prop:monoidal-splits}.

    Note first that $s\cdot \truth_A$ is sharp as it is the image of $s\cdot \id_A$. Hence, there is a comprehension $\pi_{s\cdot \truth_A}$ and compatible filter $\xi^{s\cdot \truth_A}$ and we can define $\asrt_{s\cdot \truth_A} := \pi_{s\cdot \truth_A}\after \xi^{s\cdot \truth_A}$.
    We have $\truth\circ (s^\perp\cdot \asrt_{s\cdot \truth_A}) = (s^\perp\cdot \truth)\after \asrt_{s\cdot \truth_A} = \falsity$ so that $s^\perp\cdot \asrt_{s\cdot \truth_A} = 0$. 
    Hence, $s\cdot \asrt_{s\cdot \truth_A} = \asrt_{s\cdot \truth_A}$.
    Then using Lemma~\ref{lem:assert-image} we get $s\cdot \id_A = \asrt_{s\cdot \truth_A}\after (s\cdot \id_A) = s\cdot \asrt_{s\cdot \truth_A} = \asrt_{s\cdot \truth_A}$. So $s\cdot \id_A$ splits via $\pi_{s\cdot \truth_A}\after \xi^{s\cdot \truth_A}$.
\end{proof}

Even without monoidal structure we can get a similar result, but then we have to require separation by either predicates or states.

\begin{proposition}\label{prop:predsep-splits}
    Let $\catC$ be an effectus which is separated by states or predicates and which has images and compatible filters and comprehensions and let $s$ be a non-trivial idempotent scalar.
    Then $\catC \cong \catC_s\times \catC_{s^\perp}$ for some non-trivial effectuses $\catC_s$ and $\catC_{s^\perp}$.
\end{proposition}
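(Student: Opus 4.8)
The plan is to reduce, as in Proposition~\ref{prop:monoidal-splits2}, to splitting the idempotents that implement ``multiplication by~$s$'' on each object, and then to copy verbatim the equivalence constructed in Proposition~\ref{prop:monoidal-splits}. The monoidal structure was used there only to build the endomaps~$s\cdot\id_A$, to exhibit~$s\circ\truth_A$ as their image, and to see that~$s\cdot\id_A$ and~$s^\perp\cdot\id_A$ add up to~$\id_A$; the task is to recover all of this from images, compatible filters and comprehensions, and separation. I would first record the one ingredient needing no separation: idempotent scalars are sharp. If~$p\circ s=\truth\circ s=s$, then from~$\truth=p\ovee p^\perp$ and biadditivity of composition we get~$s=p\circ s\ovee p^\perp\circ s=s\ovee(p^\perp\circ s)$, so cancellativity of the effect algebra forces~$p^\perp\circ s=0$. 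The effect-monoid identity that~$a\cdot b=0$ implies~$b=a^\perp\cdot b\le a^\perp$ (since~$b=(a\ovee a^\perp)\cdot b=a^\perp\cdot b$), applied to~$a=p^\perp$ and~$b=s$, then yields~$s\le p$. Hence~$s=\im s$ is sharp, and the same computation gives~$s^\perp\circ s=0$, so~$s^\perp$ is idempotent and sharp as well.

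The crux is to reconstruct, for each~$A$, the ``central'' idempotent that the tensor supplied in the monoidal case. Concretely I would show that~$p_A:=s\circ\truth_A$ and its complement~$p_A^\perp=s^\perp\circ\truth_A$ are both sharp, and that their assert maps partition the identity, $\asrt_{p_A}\ovee\asrt_{p_A^\perp}=\id_A$. Sharpness is what lets me form the assert maps at all (through compatible filters and comprehensions), and the partition is what will make~$A\cong A_s+A_{s^\perp}$. Both facts I would extract from separation by states: every state~$\omega\colon I\to A$ is total, so~$p_A\circ\omega=s\circ(\truth_A\circ\omega)=s$ and dually~$p_A^\perp\circ\omega=s^\perp$, and the plan is to show that the sharp lower bound~$\floor{p_A}$ is already ``seen'' exactly by~$s$ on every state, forcing~$\floor{p_A}=p_A$, and likewise~$\asrt_{p_A}\circ\omega\ovee\asrt_{p_A^\perp}\circ\omega=\omega$; separation then upgrades these pointwise scalar identities to the required equalities of maps. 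The case of separation by predicates I would handle dually on the state side. I expect this to be the main obstacle: without a tensor product there is no a priori scalar multiplication by~$s$ on general morphisms, so its defining idempotent has to be rebuilt by hand, and seeing that states cannot detect any gap between~$\floor{p_A}$ and~$p_A$ is exactly the work that the tensor did for free in Propositions~\ref{prop:monoidal-splits} and~\ref{prop:monoidal-splits2}.

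With the splittings~$t_A:=\asrt_{p_A}=\pi^s_A\circ\xi^s_A$ (where~$\xi^s_A\circ\pi^s_A=\id$) and the complementary~$t_A^\perp$ summing to~$\id_A$ in hand, the equivalence is assembled exactly as in Proposition~\ref{prop:monoidal-splits}. I would take~$\catC_s$ and~$\catC_{s^\perp}$ to be the full subcategories of~$\catC$ on the splitting objects~$A_s$, respectively~$A_{s^\perp}$, set~$G(A)=(A_s,A_{s^\perp})$ and~$F(X,Y)=X+Y$, and build the natural isomorphism~$A\cong A_s+A_{s^\perp}$ from~$\langle\xi^s_A,\xi^{s^\perp}_A\rangle$ with inverse~$[\pi^s_A,\pi^{s^\perp}_A]$, the composite in one direction collapsing to~$t_A\ovee t_A^\perp=\id_A$. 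Here separation (by states or by predicates) supplies the faithfulness of~$G$ and the reconstruction of a morphism from its two components, playing the role that the scalar-multiplication identities of Lemma~\ref{lem:monoidal-effectus-facts} played in the monoidal argument, and the verification of the second natural isomorphism then follows the same bookkeeping as in that proof.
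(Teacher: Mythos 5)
Your overall strategy is the paper's: replace the monoidal maps $s\cdot\id_A$ of Proposition~\ref{prop:monoidal-splits} by the assert maps $\asrt_{s\after\truth_A}=\pi_{s\after\truth_A}\after\xi^{s\after\truth_A}$, and rerun the equivalence, invoking separation only for the partition identity $\asrt_{s\after\truth}\ovee\asrt_{s^\perp\after\truth}=\id$; your packaging of $\catC_s$ as the full subcategory on the splitting objects is equivalent to the paper's same-objects category with identity $\asrt_{s\after\truth_A}$ (any morphism $A_s\to B_s$ automatically lies in the $s$-part, since $s\after\truth_{A_s}=\truth_{A_s}$). On one point you are more careful than the printed proof: you notice that forming $\asrt_{s\after\truth_A}$ requires $s\after\truth_A$ to be sharp, which in Proposition~\ref{prop:monoidal-splits2} came for free from the tensor ($s\cdot\truth_A=\im(s\cdot\id_A)$) and which the paper's proof here passes over in silence. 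Your preliminary lemma (idempotent scalars are sharp) is correct, and your state-separation sketch does complete: writing $p_A=s\after\truth_A$ and letting $\pi_s$ be a comprehension of the sharp scalar $s$ (note $\pi_s=\truth_{I_s}$ since $\truth_I=\id_I$), the map $\omega\after\pi_s$ satisfies $p_A\after(\omega\after\pi_s)=\truth\after(\omega\after\pi_s)$, hence factors through $\pi_{p_A}$, giving $\floor{p_A}\after\omega\after s=s$; combined with $\floor{p_A}\after\omega\leq s$ and the centrality of idempotent scalars (which the paper cites as~\cite[Lemma~20]{first} in its state branch, and which your argument also needs) this forces $\floor{p_A}\after\omega=s=p_A\after\omega$, and separation by states yields sharpness. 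The same device ($\asrt_{p_A}\after\omega=\omega\after s$) gives the partition of the identity pointwise on states.

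The genuine gap is the sentence ``the case of separation by predicates I would handle dually on the state side.'' Dualizing works for the partition identity---that is exactly the paper's computation $p\after\asrt_{s\after\truth}=s\after p$ via Lemma~\ref{lem:assert-image}---but it cannot deliver the sharpness of $s\after\truth_A$. Separation by predicates is vacuous when the two maps being compared are themselves predicates: for $f,g\colon A\rightarrow I$ the test predicates are scalars $t$, and taking $t=\truth_I=\id_I$ reduces the hypothesis ``$t\after f=t\after g$ for all $t$'' back to the conclusion $f=g$. So in the predicate-separated branch your plan has no route to $\floor{s\after\truth_A}=s\after\truth_A$, and hence no route to the splitting maps on which the whole construction rests. (To be fair, the paper's own proof also uses $\asrt_{s\after\truth}$ in that branch without justifying sharpness; but since your argument explicitly derives sharpness \emph{from} separation, the predicate-only case is left genuinely unproven and needs a different idea.)
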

\begin{proof}
    This follows along the same lines as the proof of Proposition~\ref{prop:monoidal-splits}, but instead of the maps $s\cdot \id$ we consider the maps $\asrt_{s\after \truth} = \pi_{s\after \truth}\after \xi^{s\after \truth}$.
    We let the category $\catC_s$ have the same objects as $\catC$ but with morphisms $f\colon A\rightarrow B$ satisfying $\im{f}\leq s\after \truth$ and $\truth\after f \leq s\after \truth$. The identity on an object $A$ in $\catC_s$ is $\asrt_{s\after \truth}$. We define $\catC_{s^\perp}$ analogously. 
    As in Proposition~\ref{prop:monoidal-splits} we define $f_s := \xi^{s\after \truth}\after f \after \pi_{s\after \truth}$, and we again get functors~$F : \catC_s \times \catC_{s^\perp} \leftrightarrows \catC : G$. Showing that these form an equivalence of categories follows entirely analogously, except for a complication with establishing that the right-inverse of $\alpha:=[\pi_{s\after\truth},\pi_{s^\perp\after \truth}]$ is $\langle \xi^{s\after\truth}, \xi^{s^\perp\after\truth}\rangle$, which requires predicate separation to be proven.
    We first calculate 
    $[\pi_{s\after\truth},\pi_{s^\perp\after \truth}]\after \langle \xi^{s\after\truth}, \xi^{s^\perp\after\truth}\rangle =
    \asrt_{s\after \truth}\ovee \asrt_{s^\perp\after \truth}$.
    Now first suppose $\catC$ is separated by predicates.
    For any predicate $p$ we have $p\after\asrt_{s\after \truth} = (s\after p)\after \asrt_{s\after \truth} \ovee (s^\perp\after p)\after \asrt_{s\after \truth} = s\after p$ (by Lemma~\ref{lem:assert-image}), so that $p\after (\asrt_{s\after \truth}\ovee \asrt_{s^\perp\after \truth}) = s\after p \ovee s^\perp \after p = p$. 
    As also $p\circ \id = p$ we get by predicate separation $[\pi_{s\after\truth},\pi_{s^\perp\after \truth}]\after \langle \xi^{s\after\truth}, \xi^{s^\perp\after\truth}\rangle = \id$.
    If instead $\catC$ is separated by states, we can do a similar argument, but with the extra complication that we need $s$ to commute with all scalars. This is however true for any idempotent element in an effect monoid~\cite[Lemma~20]{first}.
\end{proof}

These results could perhaps be generalised so that every effectus can be presented as a presheaf from the Boolean algebra of idempotent scalars to the category of effectuses, akin to the results in Ref.~\cite{barbosa2021sheaf}, but as this is not necessary for our results we leave this as future work.

\subsection{Decomposing a directed-complete effectus}

Recall that in a directed-complete effectus, the effect monoid of scalars splits up into a Boolean algebra and a convex effect algebra. We hence get the following.

\begin{proposition}\label{prop:dc-ortho-convex}
    Let $\catC$ be a directed-complete effectus that is separated by states and let $A\in\catC$. Then $\Pred(A) \cong E_1\oplus E_2$ where $E_1$ is an orthoalgebra and $E_2$ is convex (see Definitions~\ref{def:orthoalgebra} and~\ref{def:convex}).
\end{proposition}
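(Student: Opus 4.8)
The plan is to lift the known decomposition of the scalars to the whole predicate space by exploiting the fact that scalars act on predicates. First I would invoke Corollary~\ref{corscalars}: directed completeness forces $\Pred(I) \cong B \oplus C(X,[0,1])$ for a complete Boolean algebra $B$ and an extremally-disconnected compact Hausdorff space $X$. Let $s\colon I \to I$ be the idempotent scalar corresponding to $(1,0)$ (the Boolean summand), so $s^\perp$ corresponds to $(0,\mathbf{1})$ (the convex summand). The key structural observation is that $\Pred(A)$ is an effect module over the effect monoid $\Pred(I)$ with action $s\cdot p := s\circ p$, the associativity, unitality and biadditivity being immediate from the enrichment over PCMs. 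Note that directed completeness is used \emph{only} to obtain this shape of the scalars; everything afterwards is algebra together with separation by states.

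Second, I would decompose $\Pred(A)$ itself. Analogously to the corner construction of Example~\ref{cornersexample}, but for the module rather than the monoid, I claim that $p \mapsto (s\circ p,\, s^\perp\circ p)$ is an isomorphism of effect algebras $\Pred(A) \cong E_1 \oplus E_2$, where $E_1 := s\cdot\Pred(A)$ and $E_2 := s^\perp\cdot\Pred(A)$ carry the inherited structure with tops $s\circ\truth_A$, $s^\perp\circ\truth_A$ and orthosupplements $s\circ p \mapsto s\circ p^\perp$, $s^\perp\circ p \mapsto s^\perp\circ p^\perp$. Injectivity is immediate from $p = \truth\circ p = s\circ p \ovee s^\perp\circ p$. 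For surjectivity, given $a \leq s\circ\truth_A$ and $b \leq s^\perp\circ\truth_A$, one has $a \leq s\circ\truth_A \leq (s^\perp\circ\truth_A)^\perp \leq b^\perp$, so $a \perp b$ and $a\ovee b$ is sent to $(a,b)$ using the idempotence $s\circ s = s$ and the orthogonality $s\circ s^\perp = 0$. That $\ovee$ and $(\ )^\perp$ are preserved is then a routine check.

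Third, for the convex summand $E_2$ I would produce the $[0,1]$-action demanded by Definition~\ref{def:convex}. The constant functions embed $[0,1]$ into the convex summand of $\Pred(I)$, yielding scalars $c_\lambda$ (corresponding to $(0,\lambda\mathbf{1})$) with $c_1 = s^\perp$, $c_\lambda \ovee c_\mu = c_{\lambda+\mu}$ whenever $\lambda+\mu\leq 1$, and $c_\lambda\, c_\mu = c_{\lambda\mu}$. Setting $\lambda\cdot p := c_\lambda\circ p$ lands in $E_2$ because $s^\perp\, c_\lambda = c_\lambda$, and all four convexity axioms follow directly from biadditivity and associativity of the module action, with $1\cdot p = s^\perp\circ p = p$ for $p \in E_2$.

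Finally — and this is the step carrying the real content — I would show $E_1$ is an orthoalgebra using separation by states. Let $a \in E_1$ satisfy $a \perp a$. Since $a = s\circ a$, for every state $\omega\colon I\to A$ we get $a\circ\omega = s\circ(a\circ\omega) \in s\cdot\Pred(I) = B$. Biadditivity of composition turns $a\perp a$ into $(a\circ\omega)\perp(a\circ\omega)$ in $\Pred(I)$, but this sum lies in the summand $B$, which, being a Boolean algebra, is an orthoalgebra (Definition~\ref{def:orthoalgebra}); hence $a\circ\omega = 0$. As this holds for every state, separation by states gives $a = 0$, so $0$ is the only self-summable element of $E_1$. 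The main obstacle is precisely this last step: because the scalars of the Boolean part are the whole complete Boolean algebra $B$ rather than $\{0,1\}$, one cannot simply cite the earlier orthoalgebra proposition for the irreducible case; the argument instead rests on the observation that states carry elements of $E_1$ into $B$, where orthoalgebra-ness is available.
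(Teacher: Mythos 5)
Your proposal is correct and follows essentially the same route as the paper's proof: both invoke Corollary~\ref{corscalars} to obtain the Boolean idempotent scalar $s$, split $\Pred(A)$ as $s\cdot\Pred(A)\oplus s^\perp\cdot\Pred(A)$ via $p\mapsto (s\circ p,\, s^\perp\circ p)$, equip the second summand with the $[0,1]$-action coming from the constant scalars in the convex part, and prove the first summand is an orthoalgebra by observing that a state $\omega$ sends any self-summable $a = s\circ a$ to a self-summable element of the Boolean summand, forcing $a\circ\omega = 0$ and then $a=0$ by separation by states. You merely fill in routine verifications (injectivity, surjectivity, preservation of the effect-algebra structure) that the paper leaves implicit.
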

\begin{proof}
    Write $M=\Pred(I)$ for the scalars of $\catC$.
    By Corollary~\ref{corscalars}, we know~$M\cong M_1\oplus M_2$ where $M_1$
        is a complete Boolean algebra and $M_2\cong C(X,[0,1])$.
    Let~$s$ be the idempotent that projects onto the Boolean part,
        i.e.~$s \equiv (1,0) \in M_1 \oplus M_2$.
    % Let~$A$ be any object of~$\catC$.
    Each predicate~$p \in E := \Pred(A)$
        splits as~$s \circ p \ovee s^\perp \circ p$.
    Write~$E_1 \equiv s E \equiv \{ s \circ p; p \in E\} $
        and~$E_2 \equiv s^\perp E$.
    As~$s$ is an idempotent,~$sE$ is an effect algebra itself.
    In fact~$\Pred(A)\cong E_1\oplus E_2$.
    We claim that $E_1$ is an orthoalgebra and that $E_2$ is convex.
    The latter statement is easily seen as $M_2$ is convex, and
    hence we can define a convex action on $E_2$ via $\lambda\cdot
    p := (\lambda \cdot s_1)\circ p$.

    To show that $E_1$ is an orthoalgebra we need to prove that
    whenever $p_1\ovee p_1$ is defined for some $p_1\in E_1$
    that then~$p_1 = 0$. So suppose $p_1\ovee p_1$ is defined.
    Then for any state $\omega \colon I\rightarrow A$, the scalar
    $(p_1\ovee p_1)\circ \omega = p_1\circ \omega \ovee
    p_1\circ\omega$ is defined. As $p_1\in E_1$ we have $s_1\circ
    p_1 = p_1$, and hence we have $p_1\circ \omega = s_1\circ
    (p_1\circ \omega)\in M_1$. As $p_1\circ \omega \perp p_1\circ
    \omega$ and $M_1$ is a Boolean algebra, and so in particular
    an orthoalgebra, we must then have $p_1\circ \omega = 0$.
    Now, by separation of states we see that~$p_1 = 0$, as desired.
\end{proof}

Hence, in a directed-complete effectus with separating states, each
predicate space splits up into a sharp part, given by an orthoalgebra,
and a probabilistic part, given by a convex effect algebra. By using
the equivalence of convex effect algebras with vector spaces this
extends to the following.
\begin{proposition}\label{prop:effectus-ortho-OUS}
    Let $\catC$ be a directed-complete effectus that is separated
    by states. $\Pred$ gives a functor $\Pred \colon \catC\rightarrow
    \textbf{OA}^\opp\times \textbf{DCEA}_c^\opp \cong \textbf{OA}^\opp\times
    \textbf{DCOUS}^\opp$, where $\textbf{OA}$ is the category of orthoalgebras and $\textbf{DCOUS}$ is the category of directed-complete order unit spaces.
\end{proposition}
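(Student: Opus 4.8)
The plan is to assemble the object-wise decomposition of Proposition~\ref{prop:dc-ortho-convex} into a functor, in the same way that Proposition~\ref{prop:effectus-product} turns a non-trivial idempotent scalar into a product embedding. I would fix once and for all the idempotent scalar $s$ projecting onto the Boolean summand $M_1$ of the scalars $M = \Pred(I) \cong M_1 \oplus M_2$, exactly as in the proof of Proposition~\ref{prop:dc-ortho-convex}. For every object $A$ this single $s$ induces the splitting $\Pred(A) \cong E_1(A) \oplus E_2(A)$ with $E_1(A) = s \cdot \Pred(A)$ an orthoalgebra and $E_2(A) = s^\perp \cdot \Pred(A)$ convex. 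The crucial point is that the \emph{same} global scalar $s$ is used for every object; this naturality is what upgrades the object-wise decomposition to a genuine functor.

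First I would check that both summands are directed complete. The order on $E_1(A) \oplus E_2(A)$ is the product order with componentwise operations and $0$ as bottom, so directed suprema are computed componentwise; since $\Pred(A)$ is directed complete by hypothesis, so are $E_1(A)$ and $E_2(A)$. In particular $E_2(A)$ is a directed-complete convex effect algebra, i.e.\ an object of $\textbf{DCEA}_c$, while $E_1(A)$ is an orthoalgebra.

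Next I would establish functoriality. For a morphism $f\colon A \to B$ the map $\Pred(f)\colon \Pred(B)\to \Pred(A)$ commutes with pre-composition by any scalar $t$ purely by associativity: $\Pred(f)(t\circ p) = (t\circ p)\circ f = t\circ(p\circ f) = t\circ \Pred(f)(p)$. Taking $t = s$ and $t = s^\perp$ shows that $\Pred(f)$ sends $E_i(B)$ into $E_i(A)$, so it restricts to maps $E_1(B)\to E_1(A)$ and $E_2(B)\to E_2(A)$. The restriction to $E_1$ is an additive map between orthoalgebras, hence a morphism of $\textbf{OA}$; the restriction to $E_2$ is additive and, because the convex action on $E_2$ is given by scalar multiplication, it preserves that action, hence is a morphism of $\textbf{DCEA}_c$. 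As $\Pred$ is contravariant, assembling these restrictions yields a functor $\Pred\colon \catC \to \textbf{OA}^\opp \times \textbf{DCEA}_c^\opp$.

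Finally, applying the equivalence $\textbf{DCEA}_c \cong \textbf{DCOUS}$ of Proposition~\ref{prop:convex-OUS-equiv} to the second factor gives the stated form $\textbf{OA}^\opp \times \textbf{DCOUS}^\opp$. I expect no serious obstacle here, since the substance is already contained in Propositions~\ref{prop:dc-ortho-convex} and~\ref{prop:effectus-product}: what remains is the bookkeeping of verifying that $\Pred(f)$ restricts compatibly to both summands and preserves the convex action. The one point genuinely requiring care is that the decomposition is driven by a single global idempotent $s$ rather than a per-object choice, as only this naturality makes the object-wise decomposition into a well-defined functor.
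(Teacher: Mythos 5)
Your proof is correct and follows exactly the route the paper intends: the paper states Proposition~\ref{prop:effectus-ortho-OUS} without any proof, treating it as an immediate consequence of Propositions~\ref{prop:dc-ortho-convex} and~\ref{prop:convex-OUS-equiv}, and your argument supplies precisely the omitted bookkeeping (componentwise directed suprema, restriction of $\Pred(f)$ to each summand via associativity, preservation of the convex action). Your emphasis on the single global idempotent scalar $s$ being what makes the object-wise splitting natural is exactly the right point to single out.
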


For this result this factorization is merely `internal'
    to the predicate spaces, and not necessarily reflected
    in the structure of the objects of $\catC$,
    as the axioms of a plain effectus do not force
    `predicate sub-spaces' to correspond to objects in the category.
    However, when we have suitable filters and comprehensions 
    we get a stronger result.

\begin{proposition}\label{prop:splits-directed-complete}
    Let $\catC$ be a directed-complete effectus that is separated
    by states and which has images and compatible filters and comprehensions. 
    Then $\catC$ is equivalent to a product of effectuses $\catC\cong \catC_1\times \catC_2$
    and $\Pred$ gives functors $\Pred \colon \catC_1\rightarrow \textbf{OA}^\opp$ 
    and $\Pred \colon \catC_2\rightarrow \textbf{DCEA}_c^\opp \cong \textbf{DCOUS}^\opp$, 
    where $\textbf{OA}$ is the category of orthoalgebras and $\textbf{DCOUS}$ is the category of directed-complete order unit spaces.
\end{proposition}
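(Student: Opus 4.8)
The plan is to combine three earlier results: the splitting of the scalar effect monoid from Corollary~\ref{corscalars}, the categorical splitting established in Proposition~\ref{prop:predsep-splits}, and the internal predicate decomposition of Proposition~\ref{prop:dc-ortho-convex}. By Corollary~\ref{corscalars} the scalars decompose as $\Pred(I) \cong B \oplus C(X,[0,1])$ with $B$ a complete Boolean algebra and $C(X,[0,1])$ a (directed-complete) convex effect algebra. I would let $s \in \Pred(I)$ be the idempotent scalar corresponding to $(1,0)$ in this decomposition, i.e.\ the projection onto the Boolean summand; this is exactly the idempotent used in Proposition~\ref{prop:dc-ortho-convex}.

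First I would dispose of the degenerate case: if $s$ is trivial (that is, $s = \falsity$ or $s = \truth$), then the scalars are either purely convex or purely sharp, and Proposition~\ref{prop:effectus-ortho-OUS} already yields the required functor with one of the two factors taken to be the trivial one-object effectus. So assume $s$ is a non-trivial idempotent scalar. Since $\catC$ is separated by states and has images and compatible filters and comprehensions, Proposition~\ref{prop:predsep-splits} applies and gives an equivalence $\catC \cong \catC_s \times \catC_{s^\perp}$. I would then set $\catC_1 := \catC_s$ and $\catC_2 := \catC_{s^\perp}$, so that it only remains to check that $\Pred$ maps $\catC_1$ into $\textbf{OA}^\opp$ and $\catC_2$ into $\textbf{DCEA}_c^\opp$.

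To do this I would trace through the construction of $\catC_s$ from Proposition~\ref{prop:predsep-splits}: the predicates on an object $A$ in $\catC_s$ are exactly the morphisms $A\to I$ living in $\catC_s$, which are the $p$ with $p \le s \circ \truth_A$, and these form precisely the summand $s\,\Pred(A) = E_1$ appearing in Proposition~\ref{prop:dc-ortho-convex}; that proposition shows $E_1$ is an orthoalgebra. Dually, the predicates of $\catC_{s^\perp}$ on $A$ form the summand $s^\perp\,\Pred(A) = E_2$, which is convex and, being a direct summand of the directed-complete space $\Pred(A)$, is itself directed complete; the convex action $\lambda \cdot p = \lambda \circ p$ is preserved by $\Pred(f)$ because composition is associative. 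Hence $\Pred$ restricts to functors $\catC_1 \to \textbf{OA}^\opp$ and $\catC_2 \to \textbf{DCEA}_c^\opp$, and the final equivalence $\textbf{DCEA}_c^\opp \cong \textbf{DCOUS}^\opp$ is Proposition~\ref{prop:convex-OUS-equiv}.

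The main obstacle I anticipate is not any single hard lemma—the heavy lifting is already done by Propositions~\ref{prop:predsep-splits} and~\ref{prop:dc-ortho-convex}—but rather the bookkeeping needed to verify that the `categorical' splitting $\catC \cong \catC_s \times \catC_{s^\perp}$ is compatible with the `internal' predicate splitting $\Pred(A) \cong E_1 \oplus E_2$. Concretely, the delicate point is to confirm that the predicate effect algebra of an object of $\catC_s$ really is the orthoalgebra summand $sE$ and that of $\catC_{s^\perp}$ the convex summand $s^\perp E$, together with checking that morphisms in each factor induce maps in the respective target category (sum- and order-preserving on the orthoalgebra side, convex-action-preserving on the convex side).
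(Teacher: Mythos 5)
Your proposal is correct and follows essentially the same route as the paper's own proof: take the idempotent scalar $s$ splitting $\Pred(I)$ via Corollary~\ref{corscalars}, apply Proposition~\ref{prop:predsep-splits} to split the category, use Proposition~\ref{prop:dc-ortho-convex} to identify the predicate spaces of the two factors as orthoalgebras and convex effect algebras respectively, and finish with Proposition~\ref{prop:convex-OUS-equiv}. Your explicit handling of the degenerate case where $s$ is trivial, and your verification that the predicates of $\catC_s$ (those $p \le s\circ\truth_A$) coincide with the summand $s\,\Pred(A)$, are bookkeeping details the paper leaves implicit, but they do not change the argument.
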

\begin{proof}
    Let $s$ be the idempotent scalar that splits the scalars of $\catC$ into a Boolean algebra and a convex set.
    Then apply Proposition~\ref{prop:predsep-splits} to get $\catC\cong \catC_1\times \catC_2$ where the scalars of $\catC_1$ are a Boolean algebra and those of $\catC_2$ are convex. Then Proposition~\ref{prop:dc-ortho-convex} shows that the predicate spaces of $\catC_1$ are orthoalgebras and that of $\catC_2$ are convex effect algebras.
    To go from directed-complete convex effect algebras to order unit spaces, apply Proposition~\ref{prop:convex-OUS-equiv}.
\end{proof}

\subsection{Decomposing normal sequential effect algebras}\label{sec:decomposing-SEA}

In our reconstruction we will look at effectuses whose predicate spaces are normal sequential effect algebras (Definition~\ref{defn:sea}). By the previous results, if the effectus has state separation we can consider normal SEAs that are either orthoalgebras or are convex. In the former case, the situation simplifies
    even further.

\begin{lemma}\label{lem:ortho-SEA-boolean}
    Let $E$ be an orthoalgebra that is also a sequential effect algebra.
    Then $E$ is a Boolean algebra with~$a\mult b = a\wedge b$.
\end{lemma}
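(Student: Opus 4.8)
The plan is to establish, in turn, that every element of $E$ is idempotent ($a\mult a = a$), that $\mult$ is commutative, and finally that $a\mult b$ computes binary meets, so that $E$ is a distributive ortholattice, i.e.\ a Boolean algebra. Before invoking the orthoalgebra hypothesis I would record the facts that hold in any SEA: from $0 = 0\ovee 0$ and the distributivity axiom one gets $a\mult 0 = a\mult 0\ovee a\mult 0$, whence $a\mult 0 = 0$ by cancellativity of effect algebras, and then $0\mult a = 0$ by axiom~c); and $\mult$ is monotone in its right argument, since $b\le c$, say $c = b\ovee b'$, gives $a\mult c = a\mult b\ovee a\mult b'\ge a\mult b$ by axiom~a).

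First I would prove $a\mult 1 = a$. Trivially $a\commu a$ (both sides of $a\mult a = a\mult a$ agree), so axiom~d) gives $a\commu a^\perp$; then axiom~e) applied with the commuting $a,a^\perp$ and $a\perp a^\perp$ yields $a\commu(a\ovee a^\perp)$, i.e.\ $a\commu 1$, so $a\mult 1 = 1\mult a = a$ by axiom~b). This is the only place the orthoalgebra hypothesis enters. Expanding with axiom~a) gives
\[ a \ =\  a\mult 1 \ =\  a\mult a\ovee a\mult a^\perp, \qquad a^\perp \ =\  a^\perp\mult 1 \ =\  a^\perp\mult a^\perp\ovee a^\perp\mult a, \]
and since $a\commu a^\perp$ we have $a\mult a^\perp = a^\perp\mult a$. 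Adding the two decompositions,
\[ 1 \ =\  a\ovee a^\perp \ =\  (a\mult a)\ovee(a^\perp\mult a^\perp)\ovee(a\mult a^\perp)\ovee(a\mult a^\perp), \]
so the sub-sum $(a\mult a^\perp)\ovee(a\mult a^\perp)$ is defined; as $E$ is an orthoalgebra this forces $a\mult a^\perp = 0$, and therefore $a\mult a = a$. Thus every element is idempotent.

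Two easy consequences of idempotency remain useful: $a\mult b\le a\mult 1 = a$ by monotonicity, and for $c\le a$ the sandwich $c = c\mult c\le c\mult a\le c\mult 1 = c$ gives $c\mult a = c$. These already show $a\mult(a\mult b) = (a\mult a)\mult b = a\mult b$ (by axiom~d) with $a\commu a$) and $(a\mult b)\mult a = a\mult b$ (since $a\mult b\le a$), so that $a\commu(a\mult b)$. The hard part — and the main obstacle — is to upgrade this to full commutativity $a\mult b = b\mult a$ for all $a,b$. Here the fact that $\mult$ is a \emph{total} operation, forcing $a\mult b$ to be sharp again, is essential: in a genuine quantum example the analogue $pqp$ of $a\mult b$ need not be a projection, which is exactly why projection lattices are excluded. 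This commutativity I would extract from the structure theory of sharp elements in a sequential effect algebra, where two sharp elements commute precisely when their sequential product is sharp~\cite{gudder2002sequential}; since in $E$ every element is sharp, all pairs commute.

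Once commutativity is in hand the rest is routine. For $c\le a$ we have $a\mult c = c\mult a = c$, so if $c\le a$ and $c\le b$ then $c = a\mult c\le a\mult b$, while $a\mult b\le a$ and $a\mult b = b\mult a\le b$; hence $a\mult b = a\wedge b$. Meets therefore exist, joins arise by De Morgan from $(\ )^\perp$, and (using $a\mult a^\perp = 0$ and commutativity) orthogonal elements $p\perp q$ satisfy $p\wedge q = p\mult q = 0$, so $p\ovee q = (p\wedge q)\ovee(p\vee q) = p\vee q$ via the effect-algebra identity $a\ovee b = (a\wedge b)\ovee(a\vee b)$~\cite[Prop.~177]{basthesis}. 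Finally, axiom~a) gives $a = a\mult 1 = (a\wedge b)\ovee(a\wedge b^\perp) = (a\wedge b)\vee(a\wedge b^\perp)$ for all $a,b$, which is precisely the distributive law upgrading an orthomodular lattice to a Boolean algebra. This identifies $E$ as a Boolean algebra with $a\mult b = a\wedge b$.
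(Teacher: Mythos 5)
Your proof is correct, and its crux coincides with the paper's: the orthoalgebra hypothesis is used exactly once, to force $a\mult a^\perp = 0$ from the self-summability of $a\mult a^\perp$ inside $1 = a\ovee a^\perp$, whence every element is idempotent. Where the paper then simply cites \cite[Prop.~45]{second} for ``a SEA in which every element is idempotent is a Boolean algebra'', you unfold that citation: you obtain commutativity from the Gudder--Greechie theorem that sharp elements commute precisely when their sequential product is sharp \cite{gudder2002sequential} --- which is legitimate, and is essentially what the cited proposition rests on --- and then carry out the routine lattice-theoretic endgame ($a\mult b = a\wedge b$, joins by De Morgan, $p\ovee q = p\vee q$ via $a\ovee b = (a\wedge b)\ovee(a\vee b)$, and the compatibility identity $a = (a\wedge b)\vee(a\wedge b^\perp)$ upgrading the orthomodular lattice to a Boolean one). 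A point in your favour: the paper's one-line inequality $(a\ovee a^\perp)\mult(a\ovee a^\perp)\geq 2(a\mult a^\perp)$ reads as though $\mult$ were additive in its \emph{first} argument, which is not an SEA axiom; your two-sided expansion of $a\mult 1$ and $a^\perp\mult 1$ combined with $a\commu a^\perp$ (from the trivial $a\commu a$ and axiom~d)) is precisely the justification the paper leaves implicit. The only blemish is cosmetic: the sentence ``this is the only place the orthoalgebra hypothesis enters'' is misplaced after the derivation of $a\mult 1 = a$ (which uses no orthoalgebra assumption) rather than at the self-summability step where the hypothesis actually acts.
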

\begin{proof}
    For any~$a\in E$ we note that~$a\mult a^\perp$ is summable
    with itself as $1 = 1 \mult 1 = (a\ovee a^\perp) \mult
    (a\ovee a^\perp) \geq 2(a\mult a^\perp)$. Since~$E$ is an
    orthoalgebra we must then have~$a\mult a^\perp = 0$ so
    that~$a$ is an idempotent (as $a = a\mult 1 = a\mult (a\ovee a^\perp) = a^2 \ovee a\mult a^\perp = a^2$). But as every element is then an
    idempotent, $E$ must be a Boolean algebra~\cite[Prop.~45]{second}.
\end{proof}

\begin{proposition}\label{prop:boolean-and-convex}
    Let $\catC$ be a directed-complete effectus with separating states, and let $A\in \catC$. If $\Pred(A)$ is a normal SEA, then $\Pred(A)\cong A_b\oplus A_c$ where $A_b$ is a complete Boolean algebra and $A_c$ is a convex effect algebra, or more specifically, the unit interval of a directed-complete order unit space.%
    \footnote{In~\cite{second} it was shown that any normal SEA
    splits up into three parts: a Boolean algebra, a convex effect algebra
    and a type of effect algebra called `purely almost convex'.
    The previous proposition shows that the `purely almost-convex'
    case does not occur in the context of an effectus. Hence,
    this pathological type of SEA is prevented from existing
    in the more compositional setting of an effectus.}
\end{proposition}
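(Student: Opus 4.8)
The plan is to reduce everything to the effect-algebra splitting already supplied by Proposition~\ref{prop:dc-ortho-convex} and then analyse the two summands separately. Since $\catC$ is directed complete and separated by states, that proposition gives $\Pred(A)\cong E_1\oplus E_2$ with $E_1$ an orthoalgebra and $E_2$ convex, where the splitting is induced by the idempotent scalar $s$ that cuts the scalar monoid $\Pred(I)\cong M_1\oplus M_2$ into its Boolean part $M_1$ and its convex part $M_2\cong C(X,[0,1])$. Concretely $E_1=\{q\in\Pred(A): s\circ q=q\}=[0,u]$ and $E_2=[0,u^\perp]$ for the idempotent predicate $u:=s\cdot\truth_A=s\circ\truth_A$; the identity $E_1=[0,u]$ uses that $s\cdot s^\perp=0$ for the central idempotent $s$ (\cite[Lemma~20]{first}). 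First I would record that directed completeness passes to each summand: if $D\subseteq E_1$ is directed with supremum $q$ in $\Pred(A)$, then from $s\circ q\ovee s^\perp\circ q=(s\ovee s^\perp)\circ q=q$ we get $s\circ q\le q$, while monotonicity gives $s\circ q\ge s\circ d=d$ for all $d\in D$, whence $s\circ q=q$ and $q\in E_1$; likewise for $E_2$. Thus $E_1$ is a directed-complete orthoalgebra and $E_2$ a directed-complete convex effect algebra.

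The convex summand is then immediate: a directed-complete convex effect algebra is, by Proposition~\ref{prop:convex-OUS-equiv}, order-isomorphic to the unit interval of a directed-complete order unit space, so I would take $A_c:=E_2$. For the Boolean summand I would show that the sequential product of $\Pred(A)$ restricts to $E_1$, making $E_1$ itself a SEA that is an orthoalgebra. Lemma~\ref{lem:ortho-SEA-boolean} then forces $E_1$ to be a Boolean algebra with $a\mult b=a\wedge b$. Finally, a directed-complete Boolean algebra is automatically complete: being a lattice it has all finite joins, so for an arbitrary subset $S$ the family of finite joins of elements of $S$ is directed, and its supremum—which exists by directed completeness—is $\bigvee S$. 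Hence $A_b:=E_1$ is a complete Boolean algebra, completing the decomposition $\Pred(A)\cong A_b\oplus A_c$.

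The crux—and the only genuinely nontrivial step—is justifying that $\mult$ descends to the summands, equivalently that $u=s\cdot\truth_A$ is a \emph{central sharp} element of the normal SEA $\Pred(A)$ (that is, $u\mult u=u$ and $u\commu a$ for every $a$); once this is known, the standard fact that a normal SEA splits as $[0,u]\oplus[0,u^\perp]$ along a central idempotent, with $\mult$ acting componentwise, yields the restricted products and hence the SEA structure on $E_1$ used above. This is precisely the point at which the theory of normal SEAs from~\cite{second} is needed, and it is where the `purely almost convex' summand of the general SEA decomposition is eliminated: the effect-algebra splitting of Proposition~\ref{prop:dc-ortho-convex} leaves no room for a summand that is neither an orthoalgebra nor convex. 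I expect the main difficulty to lie exactly in proving sharpness and centrality of $u$ from the bare data—a directed-complete, state-separated effectus together with a normal SEA structure on $\Pred(A)$—since nothing in the hypotheses a priori links the abstract sequential product $\mult$ to the effectus scalar action $s\circ(-)$. The natural route is to identify $u$ with a central idempotent produced intrinsically by the normal SEA (via the operations $\floor{a}$ and $\ceil{a}$ and Lemma~\ref{lem:normal-SEA-properties}) and to match this intrinsic idempotent with the scalar-induced splitting by testing against states, using separation by states to conclude equality.
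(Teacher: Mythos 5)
Your skeleton is the paper's: Proposition~\ref{prop:dc-ortho-convex} gives $\Pred(A)\cong E_1\oplus E_2$ with $E_1$ an orthoalgebra and $E_2$ convex, the convex summand is dispatched via Proposition~\ref{prop:convex-OUS-equiv} (your checks that $E_1=[0,u]$ for $u:=s\cdot\truth_A$, that directed completeness passes to the summands, and that a directed-complete Boolean algebra is complete are all correct and genuinely needed), and the Boolean summand is to come from Lemma~\ref{lem:ortho-SEA-boolean} once $\mult$ restricts to $E_1$. The genuine gap is the step you flag and then leave open: you never prove that $\mult$ descends, and you moreover inflate what must be proved, framing it as \emph{centrality} of $u$ (that $u\commu a$ for all $a$, with $\mult$ acting componentwise on $[0,u]\oplus[0,u^\perp]$), to be obtained by matching an intrinsic central idempotent of the abstract SEA against the scalar-induced splitting via state separation. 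As written this is a conjecture rather than an argument, and it is not clear the proposed route would succeed, since nothing ties the intrinsic central decomposition of the SEA to the scalar action without exactly the kind of reasoning that is missing.

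Centrality is in fact a red herring, and this is how the paper closes the step in one line: since $a\mult b\leq a\mult \truth = a$ for all $a,b$, the product maps the principal downset $E_1=[0,u]$ into itself, so no componentwise statement about all of $\Pred(A)$ is required; $E_1$ with the restricted product is then itself a SEA, and Lemma~\ref{lem:ortho-SEA-boolean} applies directly. The only axiom genuinely at risk in this restriction is the unit law $u\mult a = a$ for $a\leq u$, which holds whenever $u$ is idempotent by a standard SEA fact~\cite{gudder2002sequential} (the remaining axioms restrict by routine checks), and idempotence of $u$ is a one-line consequence of the orthoalgebra property you already possess: by axiom d) of Definition~\ref{defn:sea} we have $u\commu u^\perp$, so $d:=u\mult u^\perp = u^\perp\mult u$ satisfies both $d\leq u$ and $d\leq u^\perp$, hence $d\leq d^\perp$, i.e.\ $d$ is self-summable; as $d\in E_1$, the orthoalgebra property forces $d=0$, whence $u = u\mult(u\ovee u^\perp) = u\mult u$. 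With the unproven centrality claim replaced by this closure-plus-idempotence observation, your proof goes through; the detour through intrinsic central idempotents and state-testing is unnecessary.
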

\begin{proof}
    By Proposition~\ref{prop:dc-ortho-convex} we see that $\Pred(A)\cong
    A_b\oplus A_c$ where $A_b$ is an orthoalgebra and $A_c$ is
    convex.
    As~$A_b$ is a principal downset and~$p \mult q \leq p$,
        we see that~$A_b$ is also a sequential effect algebra
        and so it must then be a Boolean algebra by
    Lemma~\ref{lem:ortho-SEA-boolean}.
\end{proof}

\subsection{Decomposing finite-dimensional effectuses}

Let us briefly demonstrate how we can talk about finite-dimensional systems in effectus theory as
the literature on generalised probabilistic theories often restricts to working with finite-dimensional spaces.
This is motivated by the operational assumption of \emph{finite tomography}.
\begin{definition}\label{def:finite-tomography}
    We say an effectus has \Define{finite tomography} when for each object $A$ there is a finite set of predicates $p_1,\ldots, p_k$ such that for any pair of morphisms $f,g \colon B\rightarrow A$ we have $f=g$ iff $p_i\circ f=p_i\circ g$ for all $i$.
\end{definition}
The reasoning behind this is that we can physically only probe a state transformation through the application of a predicate, and we can only ever do this a finite number of times. Hence, finite tomography says that there is such a finite set of probings that is sufficient to fully determine a transformation.

\begin{proposition}
    Let $\catC$ be a directed-complete effectus which has finite tomography.
    Then there exists a finite set $A$ and $n\in \N$ such that
    $\Pred(I) \cong \mathcal{P}(A)\oplus [0,1]^n$, where $\mathcal{P}(A)$ denotes the powerset of $A$.
\end{proposition}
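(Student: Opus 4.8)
The plan is to first fix the abstract shape of the scalar effect monoid using directed completeness, and then use finite tomography to collapse it to a finite-dimensional one. By Corollary~\ref{corscalars}, directed completeness alone already yields an effect-monoid isomorphism $\Pred(I) \cong B \oplus C(X,[0,1])$, where $B$ is a complete Boolean algebra and $X$ is an extremally-disconnected compact Hausdorff space. Now observe that $\mathcal{P}(A)$ is precisely a complete atomic Boolean algebra with $|A|$ atoms, and that $C(X,[0,1]) \cong [0,1]^n$ exactly when $X$ is a discrete $n$-point space. Hence the whole statement reduces to showing that finite tomography forces $B$ to be finite (equivalently atomic with finitely many atoms) and $X$ to be finite.

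I would carry out this reduction through the single claim that $\Pred(I)$ admits only \emph{finitely many pairwise-orthogonal nonzero idempotents}. The idempotents of $B \oplus C(X,[0,1])$ are exactly the pairs $(b,\chi_U)$ with $b \in B$ and $U \subseteq X$ clopen, so an infinite orthogonal family of nonzero idempotents exists as soon as $B$ is infinite or $X$ is infinite. Conversely, once only finitely many pairwise-orthogonal nonzero idempotents exist, Theorem~\ref{thm:effect-monoids} together with the classification of irreducible directed-complete effect monoids (each such monoid being $\{0\}$, $\{0,1\}$ or $[0,1]$) forces $\Pred(I) \cong \{0,1\}^m \oplus [0,1]^n \cong \mathcal{P}(A) \oplus [0,1]^n$ with $|A| = m$. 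So the entire weight of the argument rests on bounding the number of orthogonal idempotent scalars.

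To establish that bound I would feed finite tomography the finite copowers $n\cdot I := I + \cdots + I$. A partition of unity into pairwise-orthogonal idempotent scalars $e_1,\dots,e_n$, with $\bigovee_{j} e_j = \truth$, determines a sharp state $\omega := \langle e_1,\dots,e_n\rangle \colon I \to n\cdot I$, and distinct partitions give distinct, ``perfectly distinguishable'' states; moreover the partial projections recover the $e_j$. The aim is to show that an infinite family of pairwise-orthogonal nonzero idempotents would generate, on a suitable object, a collection of states and transformations that no finite set of predicates can separate, contradicting finite tomography; directed completeness would be used here to take suprema of increasing orthogonal families so as to stay within the category. Assembling these pieces then yields finiteness of both $B$ and $X$, and hence the claimed isomorphism.

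The hard part will be exactly this last step. Finite tomography is an \emph{injectivity} (separation) condition on morphisms, whereas what is needed is a genuine \emph{finiteness} (or dimension) bound on the scalars, and these are not obviously the same: separation into $I$ is trivialised by $\truth_I = \id_I$, so the constraint on $\Pred(I)$ cannot come from tomography at $I$ itself and must be extracted from the copowers while keeping the fiducial set fixed rather than letting it grow with $n$. One must therefore argue carefully, using the directed-complete structure and the action of the scalars on every predicate space, that an infinite orthogonal family of idempotents really does defeat any single finite fiducial set; pinning down precisely which object exhibits this failure is the crux of the proof.
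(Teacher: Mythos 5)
Your opening reduction coincides with the paper's: invoke Corollary~\ref{corscalars} to get $\Pred(I)\cong B\oplus C(X,[0,1])$, and reduce the proposition to showing that $B$ is finite and $X$ is finite (your reformulation via counting pairwise-orthogonal idempotents is a correct repackaging of that). But from there your proposal is a strategy, not a proof: everything is delegated to the unproven claim that an infinite orthogonal family of idempotent scalars would defeat every finite fiducial set on some copower $n\cdot I$, and you say yourself that ``pinning down precisely which object exhibits this failure is the crux.'' That crux is the entire content of the proposition, so this is a genuine gap. The paper, by contrast, finishes in one stroke at exactly the object you dismissed: it instantiates Definition~\ref{def:finite-tomography} at $A=B=I$ to obtain finitely many $p_1,\dots,p_k\in\Pred(I)$ with $a=b$ iff $p_i\cdot a=p_i\cdot b$ for all $i$, then argues that $\{s\cdot p_i\}$ separating $B$ forces $B$ finite, and that $\{s^\perp\cdot p_i\}$ separating $C(X,[0,1])$ bounds the dimension of $C(X)$ by $k$ (``the $s^\perp\cdot p_i$ can only distinguish elements that lie in their span''), whence $X$ is finite and discrete.

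That said, your stated reason for rejecting the direct route --- that separation into $I$ is trivialised by $\truth_I=\id_I$ --- is a sharp observation about the literal definition, not a misunderstanding: $k=1$ with $p_1=\truth$ witnesses finite tomography at $I$ in \emph{any} effectus, and then $s\cdot\truth=s$ is the top of $B$, so meeting with it separates every complete Boolean algebra; likewise multiplication by an invertible element of $C(X)$ is injective regardless of $\dim C(X)$. The paper's span/dimension count is the one valid for finitely many \emph{real-valued} functionals, i.e.\ it tacitly reads each probe as returning a number rather than a scalar in $\Pred(I)$. The trouble for you is that your replacement programme cannot close this gap under the scalar-valued reading either: one can check that the category of partial functions between the objects $n\times S$ over a fixed infinite set $S$ is a directed-complete effectus with finite tomography at \emph{every} object (the $n$ coordinate predicates separate all maps into $n\times S$), yet its scalars form the infinite complete Boolean algebra $\mathcal{P}(S)$. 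So no choice of witnessing object --- copower or otherwise --- and no use of directed completeness can manufacture the contradiction you are hoping for; an infinite orthogonal family of idempotents is perfectly compatible with a fixed finite fiducial set on every object. The finiteness conclusion genuinely requires the real-valued (statistical) reading of the fiducial probes, and once that reading is granted, the paper's short span argument at $I$ already finishes the job, making the orthogonal-idempotent and copower machinery you set up unnecessary.
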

\begin{proof}
    By finite tomography we know that there is a finite collection $p_1,\ldots p_k\in \Pred(I)$ such that $a,b\in \Pred(I)$ are equal iff $p_i\cdot a = p_i\cdot b$ for all $i$.
    As $\catC$ is a directed complete effectus we know that $\Pred(I)\cong B\oplus C(X,[0,1])$ for some complete Boolean algebra $B$ and extremally-disconnected compact Hausdorff space $X$.
    Let $s$ denote the idempotent scalar splitting these two parts.
    Then $\{s\cdot p_i\}$ is a set of elements of $B$ that
    suffice to separate the elements of $B$. This is only
        possible if $B$ is finite, hence~$\mathcal{P}(A) \cong B$ for some finite set~$A$.

    Similarly, the set~$\{s^\perp\cdot p_i\}$ separates~$C(X,[0,1])$
        and in fact~$C(X)$ itself.
        This is only possible if~$C(X)$ is finite dimensional, viz.~if~$X$ is finite (indeed, the $s^\perp\cdot p_i$ can only distinguish elements that lie in their span, and hence $C(X)$ can have dimension at most $k$).
    The only finite compact Hausdorff spaces are discrete and hence $C(X)\cong \R^n$ for some $n$.
\end{proof}

As $\mathcal{P}(A)$ is equivalent to a product of the trivial Boolean algebra $\{0,1\}$ and $[0,1]^n$ is a product of the `trivial' convex effect algebra $[0,1]$, the above result means that, by iterated application of Proposition~\ref{prop:effectus-product}, for a directed complete effectus with finite tomography $\catC$, the category $\Pred(\catC)$ embeds into a product of categories $B_1\times\cdots\times B_k\times C_1\times\cdots\times C_k$ where each $B_i$ is a category with `scalars' $\{0,1\}$ and each $C_i$ is a category with `scalars' $[0,1]$.
If the effectus additionally has images, compatible filters and comprehensions and state or predicate separation then we can apply Proposition~\ref{prop:predsep-splits} to see that the effectus is equivalent to a product of effectuses where the scalars are either $\{0,1\}$ or $[0,1]$.
For each $C_i$ we can also show that the predicate spaces correspond to finite-dimensional vector spaces using finite tomography, and hence we have retrieved the standard framework of generalised probabilistic theories.

\section{The reconstruction}\label{sec:reconstruction}

We are ready to state our definition of a type of effectus that will lead us to quantum theory.

\begin{definition}\label{def:sequential-effectus}
  A \Define{sequential} effectus is a normal effectus separated by states satisfying
    the following.
  \begin{enumerate}
    \item The effectus has filters and comprehensions.
    \item Comprehensions have images.
    %\item Every pure map has an image.
    \item The pure maps form a dagger category. 
    %In the monoidal setting the maps satisfy in particular $(f\otimes g)^\dagger = f^\dagger\otimes g^\dagger$.
    %\item If $\pi_p$ is a comprehension for an predicate $p$, then $\pi_p^\dagger$ is a filter for $p$.
    \item Every pure map $f$ is $\diamond$-adjoint to $f^\dagger$.
    \item For every predicate $p\in \Pred(A)$ there is a unique $\dagger$-positive pure map $\asrt_p\colon A\rightarrow A$ satisfying $\truth_A\circ \asrt_p = p$ called the \Define{assert map} of $p$.
    \item For every object $A$, the operation $\& \colon \Pred(A)\times \Pred(A)\rightarrow \Pred(A)$ given by $p\& q := q\circ \asrt_p$ is a normal sequential product, making $\Pred(A)$ into a normal SEA.
  \end{enumerate}
\end{definition}

\begin{remark}
    Points 1--4 are variations on the properties outlined in Remark~\ref{rem:reconstruction-finite}. Point 5 is a new assumption. We remark that the uniqueness condition of point 5 can be framed as the implication $\truth\circ f^\dagger\circ f = \truth\circ g^\dagger\circ g \implies f^\dagger\circ f = g^\dagger\circ g$ for any pure $f$ and $g$. In this sense it is similar to the \emph{CPM axiom}~\cite{coecke2010environment} for an \emph{environment structure} (\ie~choice of pure maps) which states that $\truth\circ f = \truth\circ g \iff f^\dagger\circ f = g^\dagger\circ g$. We will see that for sharp predicates this definition of an assert map reduces to that of Definition~\ref{def:sharp-assert}.
\end{remark}

The goal of this section will be to prove the following pair of theorems.

\begin{theorem}\label{thm:JB-embedding}
  Let $\catC$ be a sequential effectus. Then $\catC$ is equivalent to a product of effectuses $\catC\cong \catC_1\times \catC_2$ where the predicate spaces of $\catC_1$ are complete Boolean algebras and those of $\catC_2$ are directed-complete JB-algebras. 
  In particular we have predicate functors $\Pred\colon \catC_1\rightarrow \textbf{CBA}^\opp$ and $\Pred\colon \catC_2\rightarrow \textbf{JB}_{\text{npc}}^\opp$, where $\textbf{CBA}$ denotes the category of complete Boolean algebras and monotone maps that are faithful iff $\catC$ is separated by predicates.
\end{theorem}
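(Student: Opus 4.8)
The plan is to reduce the statement to the decomposition results of Section~\ref{sec:decomposition} together with the Boolean and Jordan characterisations of normal SEAs, so that the work splits into a structural preliminary, the treatment of the Boolean factor, and the treatment of the Jordan factor. First I would record the structural facts that turn a sequential effectus into the input required by Proposition~\ref{prop:splits-directed-complete}. Being normal, $\catC$ is directed complete, and it is separated by states by hypothesis. The dagger on pure maps yields compatibility of filters and comprehensions: a comprehension $\pi_p$ of a sharp $p$ is pure, so $\pi_p^\dagger$ is pure, and using points 3--4 one checks $\pi_p^\dagger$ is a filter for $p$ with $\pi_p^\dagger\circ\pi_p=\id$ (so the assert map of point 5 restricts on sharp predicates to $\pi_p\circ\pi_p^\dagger$, matching Definition~\ref{def:sharp-assert}). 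It remains to see that every map has an image; this can be extracted from the normal-SEA structure, since each $\truth\circ f$ has a ceiling $\ceil{\cdot}$ which is sharp and the assert maps realise the required smallest sharp bound, so that $\catC$ is in particular a $\diamond$-effectus with compatible filters and comprehensions. One also checks that SEA-idempotents coincide with sharp predicates.

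With these preliminaries, Proposition~\ref{prop:splits-directed-complete} applies: taking the idempotent scalar $s$ that splits $\Pred(I)$ into its Boolean and convex parts (Corollary~\ref{corscalars}) gives $\catC\cong\catC_1\times\catC_2$ with $\Pred\colon\catC_1\to\textbf{OA}^\opp$ and $\Pred\colon\catC_2\to\textbf{DCEA}_c^\opp\cong\textbf{DCOUS}^\opp$. For the first factor, each predicate space is simultaneously an orthoalgebra (Proposition~\ref{prop:dc-ortho-convex}) and a normal SEA (point 6), hence a Boolean algebra by Lemma~\ref{lem:ortho-SEA-boolean}; directed completeness then upgrades this to a \emph{complete} Boolean algebra, giving $\Pred\colon\catC_1\to\textbf{CBA}^\opp$. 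Faithfulness of this functor precisely when $\catC$ is separated by predicates is immediate from the definition of predicate separation, and its image lands in monotone maps because additive maps are monotone.

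For the second factor each $\Pred(A)$ is a convex normal SEA, and the aim is to invoke Theorem~\ref{thm:normalSEAisJB}, which demands the sequential product be \emph{compressible} and \emph{quadratic}. Compressibility I would obtain from Lemma~\ref{lem:assert-image}: for sharp $p$ and a state $\omega$ on which $p$ holds with certainty, $\im\omega\le p$ forces $\asrt_p\circ\omega=\omega$, whence $\omega(p\mult a)=a\circ\asrt_p\circ\omega=\omega(a)$; the only care needed is to cover all states of the SEA rather than only those arising from the effectus, which the normal-SEA structure permits. Quadraticity, $q\mult(p\mult q)=(q\mult p)^2$ for sharp $p,q$, is the substantive point, and I would exploit uniqueness of assert maps. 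Writing $\asrt_p=\pi_p\pi_p^\dagger$ and $\asrt_q=\pi_q\pi_q^\dagger$ as self-$\dagger$-adjoint $\dagger$-projections, the composite $\asrt_q\circ\asrt_p\circ\asrt_q=(\asrt_q\pi_p)(\asrt_q\pi_p)^\dagger$ is $\dagger$-positive and pure with $\truth\circ(\asrt_q\asrt_p\asrt_q)=q\mult(p\mult q)$, so by point 5 it equals $\asrt_{q\mult(p\mult q)}$; since $(q\mult p)^2=\truth\circ\asrt_{q\mult p}^2$, the remaining task is to identify $\asrt_{q\mult p}^2$ with $\asrt_q\asrt_p\asrt_q$.

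The hard part is exactly this quadratic identity: unlike compressibility it is not a general SEA fact, so it must be derived from the filter/comprehension/dagger structure by an abstract polar-decomposition argument relating the assert map of the non-sharp effect $q\mult p$ back to $\asrt_p$ and $\asrt_q$ — in effect lifting the operator-algebraic identity $\sqrt{q\mult p}\,(q\mult p)\,\sqrt{q\mult p}=\asrt_q\asrt_p\asrt_q$ to the categorical setting through the uniqueness of $\dagger$-positive pure assert maps. Once compressibility and quadraticity are in hand, Theorem~\ref{thm:normalSEAisJB} makes each $\Pred(A)$ the unit interval of a directed-complete JB-algebra, and normality of $\catC$ ensures the predicate functor sends morphisms to normal positive contractions, yielding $\Pred\colon\catC_2\to\textbf{JB}_{\text{npc}}^\opp$ and completing the proof.
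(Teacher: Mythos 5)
Your overall architecture coincides with the paper's (establish images, compatibility of filters and comprehensions, and the $\diamond$-effectus structure; split via Proposition~\ref{prop:splits-directed-complete}; handle the Boolean factor with Lemma~\ref{lem:ortho-SEA-boolean}; handle the convex factor via compressibility and quadraticity), but your treatment of the convex factor contains a genuine error at precisely the point you dismiss in passing. Compressibility in Definition~\ref{def:SEA-compressible} quantifies over \emph{all} states $\omega\colon V_A\rightarrow\R$ of the order unit space, whereas the argument you give via Lemma~\ref{lem:assert-image} (which is exactly the paper's Proposition~\ref{prop:compressive}) only covers states internal to the effectus, i.e.\ maps $I\rightarrow A$, which correspond to maps $V_A\rightarrow C(X)$ and need not even be real-valued when the scalars are reducible; even in the irreducible case they form only a subset of all states, and nothing forces an arbitrary (possibly non-normal, non-internal) state to satisfy the implication. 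Your claim that this extension is something ``the normal-SEA structure permits'' is unsupported: whether compressibility holds for every convex normal SEA is open (see the footnote to Theorem~\ref{thm:normalSEAisJB}), so Theorem~\ref{thm:normalSEAisJB} cannot be invoked as a black box. The paper instead opens up its proof: Lemma~\ref{lem:state-order-lemma} records the order-derivation implication for internal states only, Proposition~\ref{prop:assert-is-derivation} reproves the Alfsen--Shultz criterion from scratch (a norm/spectral argument showing $1-\lambda D_p$ with $D_p=\asrt_p-\asrt_{p^\perp}$ preserves non-positive elements, using separation by internal states), and Proposition~\ref{prop:is-JB-algebra} then constructs the Jordan product by hand from these order derivations. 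The acknowledgments confirm that exactly this oversight was caught by a referee in an earlier version of the paper, so your proposal reproduces a known hole.

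On quadraticity, your reduction via uniqueness of $\dagger$-positive pure maps is sound and in the paper's spirit: $\asrt_q\circ\asrt_p\circ\asrt_q=(\asrt_q\circ\pi_p)\circ(\asrt_q\circ\pi_p)^\dagger$ is $\dagger$-positive and pure with $\truth\circ(\asrt_q\circ\asrt_p\circ\asrt_q)=q\mult(p\mult q)$, hence equals $\asrt_{q\mult(p\mult q)}$. But the identity you correctly isolate as the crux, $\asrt_{q\mult p}^2=\asrt_q\circ\asrt_p\circ\asrt_q$, is left as an unproven ``abstract polar-decomposition argument'', and this is where the paper's substantive work lies: Lemma~\ref{lem:pure-decomposition} factors every pure map $f$ as $\pi_{\im{f}}\circ\Theta\circ\xi^{\ceil{\truth\circ f}}\circ\asrt_{\truth\circ f}$; applying this to $\asrt_q\circ\asrt_p$ (computing $\truth\circ\asrt_q\circ\asrt_p=p\mult q$ and $\im(\asrt_q\circ\asrt_p)=\ceil{q\mult p}$ via the $\diamond$-calculus), daggering with $\pi_p^\dagger=\xi^p$ (Proposition~\ref{prop:dagger-of-compression}) and $\Theta^\dagger=\Theta^{-1}$ (Corollary~\ref{cor:dagger-of-iso}), and composing makes the middle telescope to $\asrt_{\ceil{p\mult q}}$, which is absorbed by Lemma~\ref{lem:assert-image} to yield $\asrt_{p\mult q}^2=\asrt_{(p\mult q)^2}$. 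Since you flag this yourself, it is an acknowledged incompleteness rather than a mistake, but together with the compressibility error it means your route through the convex factor does not yet constitute a proof; the Boolean factor and the categorical preliminaries in your proposal are correct and match the paper.
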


\begin{theorem}\label{thm:JBW-CBA}
    Let $\catC$ be a sequential effectus with irreducible scalars. Then all predicate spaces are either complete Boolean algebras or they are all the unit interval of a JBW-algebra. Furthermore, there is a functor $\Pred\colon\catC\rightarrow \textbf{D}^\opp$ where $\textbf{D}$ is either $\textbf{CBA}$ of $\textbf{JBW}_{\text{npc}}$, and this functor is faithful iff $\catC$ is separated by predicates.
\end{theorem}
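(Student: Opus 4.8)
The plan is to derive this statement from Theorem~\ref{thm:JB-embedding} by exploiting irreducibility to collapse the product decomposition, and then to upgrade the resulting directed-complete JB-algebras to genuine JBW-algebras using normality and state separation.

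First I would recall that, being a sequential effectus, $\catC$ is directed complete, so its scalars $\Pred(I)$ form a directed-complete effect monoid. Under the irreducibility hypothesis, the classification of irreducible directed-complete effect monoids (the corollary of Theorem~\ref{thm:effect-monoids}) leaves only three possibilities: $\Pred(I)\cong\{0\}$, $\{0,1\}$, or $[0,1]$. The first collapses $\catC$ to the trivial one-object category and may be ignored. In the remaining two cases I would observe that the product $\catC\cong\catC_1\times\catC_2$ furnished by Theorem~\ref{thm:JB-embedding} arises, via Proposition~\ref{prop:splits-directed-complete} and Corollary~\ref{corscalars}, from the idempotent scalar $s$ separating the Boolean part of $\Pred(I)$ from its convex part. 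Since an irreducible effect monoid has no non-trivial idempotents, $s\in\{0,1\}$, so one of the two factors is trivial. Concretely, when $\Pred(I)\cong\{0,1\}$ the convex part vanishes and $\catC\cong\catC_1$ with every predicate space a complete Boolean algebra; when $\Pred(I)\cong[0,1]$ the Boolean part vanishes and $\catC\cong\catC_2$ with every predicate space the unit interval of a directed-complete JB-algebra. This already yields the claimed dichotomy, uniformly across all objects.

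The substantive remaining step is to upgrade, in the convex case, each directed-complete JB-algebra $A_{JB}$ (with $\Pred(A)\cong[0,1]_{A_{JB}}$) to a JBW-algebra; by Definition~\ref{def:JBW-algebra} this requires a separating set of normal states. Here I would use that an effectus-state $\omega\colon I\to A$ induces, under $\Pred$ and the equivalence $\textbf{DCEA}_c\cong\textbf{DCOUS}$ of Proposition~\ref{prop:convex-OUS-equiv}, a positive linear functional on $A_{JB}$ which is unital because $\omega$ is total, hence a state in the sense of Definition~\ref{def:order-separating}. Normality of the effectus, \ie Scott continuity of all maps, makes each such functional preserve directed suprema and thus be normal; and separation by states, which is built into the definition of a sequential effectus, says precisely that these functionals separate the elements of the unit interval, and hence by linearity all of $A_{JB}$. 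Therefore $A_{JB}$ is a JBW-algebra, and the predicate functor lands in $\textbf{JBW}_{\text{npc}}^\opp$, normality of the morphisms $\Pred(f)$ again coming from normality of $\catC$. Setting $\textbf{D}=\textbf{CBA}$ or $\textbf{D}=\textbf{JBW}_{\text{npc}}$ accordingly gives the functor $\Pred\colon\catC\to\textbf{D}^\opp$.

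Finally, faithfulness of $\Pred$ is immediate from the definitions: $\Pred(f)=\Pred(g)$ means $p\circ f=p\circ g$ for all predicates $p$, which forces $f=g$ exactly when $\catC$ is separated by predicates. I expect the main obstacle to be the JB-to-JBW upgrade, and in particular checking carefully that effectus-states yield \emph{normal} states on $A_{JB}$ and that separating the unit interval suffices to separate the whole order unit space; by contrast, the collapse of the product and the faithfulness claim are comparatively formal once Theorem~\ref{thm:JB-embedding} is in hand.
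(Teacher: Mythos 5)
Your proposal is correct and follows essentially the same route as the paper: trichotomize the irreducible directed-complete scalars into $\{0\}$, $\{0,1\}$, $[0,1]$, dispose of the trivial case, obtain complete Boolean algebras respectively directed-complete JB-algebras in the other two, and then upgrade to JBW-algebras by observing that with $\Pred(I)\cong[0,1]$ the internal states $\omega\colon I\to A$ become genuine normal states $V_A\to\R$ (normal by Scott continuity, separating by the state-separation built into Definition~\ref{def:sequential-effectus}), with faithfulness being immediate from the definition of separation by predicates. The only cosmetic difference is that you derive the dichotomy by collapsing the product decomposition of Theorem~\ref{thm:JB-embedding} via the triviality of the splitting idempotent, whereas the paper invokes the orthoalgebra/convexity results for the two scalar cases directly --- but this is the same underlying machinery.
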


\begin{remark}
In the first theorem we talk about directed-complete JB-algebras, while in the second we talk about JBW-algebras, i.e.~those directed-complete JB-algebras that are separated by normal states. The difference comes from the fact that in the first case our scalars can satisfy $\Pred(I)\cong [0,1]_{C(X)}$ where $X$ is an arbitrary Stonean space, and hence $C(X)$ does not have to be a JBW-algebra. This might seem surprising as in Definition~\ref{def:sequential-effectus} we require separation of states and that all maps are normal. However, this talks about states as maps $I\rightarrow A$ in the effectus, while the condition to be a JBW-algebra requires normal states as maps $V_A\rightarrow \R$, where $V_A$ is the order unit space corresponding to $A$. For instance, to satisfy the condition of separation by states on $I$ we only need the map $\id\colon I\rightarrow I$. It is unclear which, categorically nice, condition we could require that forces systems to be separated by normal states in the correct sense, without restricting the scalars to the irreducible ones.
\end{remark}

For the remainder of the section, we will let $\catC$ be a sequential effectus, and we let $A$ denote an arbitrary system in $\catC$.
By assumption $\Pred(A)$ is a normal SEA.
By Proposition~\ref{prop:boolean-and-convex}, it is a direct sum of a complete Boolean algebra and a convex SEA. 
We aim to use a version of Theorem~\ref{thm:normalSEAisJB} to show the convex part is the unit interval of a JB-algebra.
This means that we need to show that the sequential product given by the assert maps is compressible and quadratic (up to some technical modifications that will become apparant later on).

\subsection{Sequential effectuses are \texorpdfstring{$\diamond$}{diamond}-effectuses}

Our definition refers to $\diamond$-adjointness, but this concept is only well-behaved when the effectus is a $\diamond$-effectus. So let us start by showing
that a sequential effectus is indeed a $\diamond$-effectus. This  means we still need to show that images of all maps exist and that $p$ is sharp iff $p^\perp$ is sharp.
For any predicate $p$ we will write $p^2 := p\mult p := p\circ \asrt_p$. Following Definition~\ref{defn:sea} we call predicates with $p^2=p$ \Define{idempotent}.

\begin{proposition}\label{prop:idempotent-is-sharp}
	Let $p\in \Pred(A)$ be any predicate. Then the image of $\asrt_p$ exists and is equal to $\ceil{p}$, where $\ceil{p}$ is both as in Lemma~\ref{lem:normal-SEA-properties} as well as in Definition~\ref{def:floorceiling}. 
    In particular, $\ceil{p}$ is sharp, $p$ is sharp iff $p^\perp$ is sharp, and $p$ is sharp iff it is idempotent.
\end{proposition}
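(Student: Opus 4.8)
The plan is to route everything through the single identity $\ceil{p} = \im{\asrt_p}$ and then read off the remaining claims from it. First I would record that $\asrt_p$ is pure (assumption~5), so it factors as $\asrt_p = \pi\circ\xi$ with $\xi$ a filter and $\pi$ a comprehension. Since filters are epic, the predicates $r$ satisfying $r\circ\asrt_p = \truth\circ\asrt_p$ are exactly those satisfying $r\circ\pi = \truth\circ\pi$; hence $\im{\asrt_p}$ exists and equals $\im{\pi}$, which exists by assumption~2. As the image of a comprehension is a floor, $\im{\asrt_p}$ is automatically sharp by Proposition~\ref{prop:floorceiling}.

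Next I would observe that $\asrt_p$ is $\dagger$-positive (assumption~5), so $\asrt_p^\dagger = \asrt_p$, i.e.\ $\asrt_p$ is $\dagger$-self-adjoint; by assumption~4 it is therefore $\diamond$-self-adjoint, meaning $\ceil{q\circ\asrt_p} = \im(\asrt_p\circ\pi_q)$ for every sharp $q$. Evaluating at $q = \truth$, where $\pi_\truth = \id$, yields the central identity $\ceil{p} = \ceil{\truth\circ\asrt_p} = \im{\asrt_p}$, with $\ceil{p}$ the ceiling of Definition~\ref{def:floorceiling}. This is where the main obstacle sits: $\diamond$-adjointness is only guaranteed to be well-behaved inside a $\diamond$-effectus, which is precisely what we are trying to establish, so invoking assumption~4 is not a priori licit. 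I would resolve this by noting that $\asrt_p$ and its $\dagger$ are pure, that pure maps compose with comprehensions to pure maps (assumption~3), and that pure maps have images by the argument of the first paragraph; thus $(\asrt_p)_\diamond$ is a genuinely well-defined map on sharp predicates, and at $\truth$ it returns the already-constructed $\im{\asrt_p}$. The instance of assumption~4 we actually use is then the concrete equation above, whose right-hand side is manifestly sharp, which both legitimises the step and shows $\ceil{p}$ is sharp.

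With $\ceil{p} = \im{\asrt_p}$ sharp in hand, the remaining claims follow formally. Since $\floor{p^\perp}\leq p^\perp$ is sharp and $\ceil{p} = \floor{p^\perp}^\perp$, sharpness of every ceiling gives: if $p$ is sharp then $\floor{p}=p$, so $\ceil{p^\perp} = \floor{p}^\perp = p^\perp$ is sharp, whence $p$ sharp $\iff p^\perp$ sharp by symmetry. For $p$ sharp $\iff$ idempotent, one direction uses that $p^2=p$ forces $p\circ\asrt_p = \truth\circ\asrt_p$, so $\im{\asrt_p}\leq p\leq\ceil{p} = \im{\asrt_p}$ and $p$ is sharp; the other uses that $p$ sharp makes $p^\perp$ sharp, hence $\ceil{p} = \floor{p^\perp}^\perp = p$, i.e.\ $\im{\asrt_p}=p$, which unwinds to $p^2=p$. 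Finally, to match the two ceilings I would note that $\floor{p^\perp}$ is the largest sharp predicate below $p^\perp$, so $\ceil{p} = \floor{p^\perp}^\perp$ is the smallest sharp predicate above $p$; as sharp $=$ idempotent, this is exactly the smallest idempotent above $p$, the ceiling of Lemma~\ref{lem:normal-SEA-properties}. I expect the only genuinely delicate point to be the careful handling of $\diamond$-self-adjointness before the $\diamond$-effectus structure is available; the rest is bookkeeping with the universal properties of filters, comprehensions and images.
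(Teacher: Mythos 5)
Your proposal is correct, and its central step is the same as the paper's: both proofs obtain the key identity $\ceil{p} = \im{\asrt_p}$ by noting $\asrt_p$ is $\dagger$-positive, hence $\dagger$-self-adjoint, hence (assumption~4) $\diamond$-self-adjoint, and evaluating $(\asrt_p)_\diamond(\truth) = \asrt_p^\diamond(\truth)$; the formal derivations of the ``in particular'' claims are then identical. You diverge from the paper in two sub-arguments, both legitimately. First, for the \emph{existence} of $\im{\asrt_p}$ the paper argues order-theoretically inside the SEA: a predicate $q$ satisfies $q\after\asrt_p = \truth\after\asrt_p$ iff the SEA-ceiling of Lemma~\ref{lem:normal-SEA-properties} lies below $q$, which simultaneously produces the image and identifies it with the smallest idempotent above $p$ in one stroke. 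You instead argue categorically: factor the pure map $\asrt_p = \pi\after\xi$, cancel the epic filter $\xi$ to get $\im{\asrt_p} = \im{\pi}$, and invoke assumption~2. This is cleaner and makes visible exactly where assumption~2 enters, but it defers the identification with the SEA-ceiling, which you then recover at the end via the order characterisation (ceiling $=$ smallest sharp above $=$ smallest idempotent above, once sharp~$=$~idempotent is in hand) --- a slightly longer but perfectly valid detour, and your verification that $\floor{p^\perp}$ is the largest sharp predicate below $p^\perp$ does follow from Proposition~\ref{prop:floorceiling}.c) and f). Second, you explicitly confront the circularity of invoking $\diamond$-adjointness before the $\diamond$-effectus structure is established; the paper only gestures at this in the surrounding prose (``this concept is only well-behaved when the effectus is a $\diamond$-effectus''), while you spell out that the particular instances $(\asrt_p)_\diamond$ and $\asrt_p^\diamond$ are well-defined (compositions of pure maps are pure, so their images exist by your first-paragraph argument, and floors exist by assumption~2), and that assumption~4's equation has a manifestly sharp right-hand side. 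That extra care is a genuine improvement in rigour over the paper's presentation, at the cost of not exercising the SEA machinery (Lemma~\ref{lem:normal-SEA-properties}.c)) that the paper's shorter first step relies on.
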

\begin{proof}
	Let $q\in\Pred(A)$ be a predicate with $\truth\circ \asrt_p = q\circ \asrt_p$. I.e.~$p=q\mult p$. By Lemma~\ref{lem:normal-SEA-properties}.c) we then have $q\geq \ceil{p}$. As $\ceil{p}\mult p = p$ we see that indeed $\im{\asrt_p} = \ceil{p}$.
    In particular the ceiling given in Lemma~\ref{lem:normal-SEA-properties}.c) is a sharp predicate.
    As $\asrt_p$ is $\dagger$-self-adjoint, and thus $\diamond$-self-adjoint we also calculate 
    $\im{\asrt_p} = (\asrt_p)_\diamond(\truth) = \asrt_p^\diamond(\truth) = \ceil{\truth\after\asrt_p} = \ceil{p}$, where here $\ceil{p}$ is as in Definition~\ref{def:floorceiling}, so that the two possible definitions of the ceiling coincide, and in particular $\ceil{p}$ is sharp.
    Now suppose $p$ is sharp, so that $\floor{p} = p$. Then $p^\perp = \floor{p}^\perp = \ceil{p^\perp}$ is sharp. So indeed $p$ sharp iff $p^\perp$ sharp.

    Suppose again that $p$ is sharp. Then $\ceil{p} = p$, and hence $\im{\asrt_p} = p$ so that $p^2 := p\circ\asrt_p = \truth\circ\asrt_p = p$. So $p$ is idempotent. Conversely, if $p^2 = p$, then $p\geq \im{\asrt_p} = \ceil{p} \geq p$ so that $p=\ceil{p}$ and $p$ is sharp.
\end{proof}

% \begin{proposition}
%   A predicate is sharp iff it is idempotent. As a result, $p$ is sharp iff $p^\perp$ is sharp.
% \end{proposition}
% \begin{proof}
%   Lemma~\ref{lem:idempotent-is-sharp} gives the `if' direction.
%   In~\cite[\S 208IX]{basthesis} it is shown that a sharp predicate $p$ in an effectus with comprehensions and images for comprehensions is also `ortho-sharp', meaning that $p\wedge p^\perp = \falsity$. In a SEA the ortho-sharp elements are all idempotent~\cite[Lemma~3.2]{gudder2002sequential}.
% \end{proof}

\begin{proposition}\label{prop:sequential-has-images}
	All morphisms have images.
\end{proposition}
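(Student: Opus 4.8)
The plan is to identify the image of $f\colon A\to B$ with the orthosupplement of the largest sharp predicate annihilating $f$, and to construct the latter as a supremum of sharp predicates, using normality to control that supremum. First I would reformulate the defining condition: since $\truth\circ f = p\circ f \ovee p^\perp\circ f$ for every $p\in\Pred(B)$, cancellativity of effect algebras gives $p\circ f = \truth\circ f$ iff $p^\perp\circ f = 0$. So finding the smallest $p$ with $p\circ f = \truth\circ f$ is the same as finding the largest $v$ with $v\circ f = 0$, after which $\im f := v^\perp$. Note that Proposition~\ref{prop:floorceiling} is available in a sequential effectus, because comprehensions have images, so Definition~\ref{def:floorceiling} makes sense and the proof of Proposition~\ref{prop:floorceiling} only uses images of comprehensions. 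In particular part e) yields $\ceil{v}\circ f = 0 \iff v\circ f = 0$; as $\ceil{v}$ is sharp and $\ge v$, this lets me restrict to \emph{sharp} annihilators. It also settles the comparison against arbitrary predicates: if $p\circ f = \truth\circ f$, i.e.\ $p^\perp\circ f = 0$, then $\ceil{p^\perp}\circ f = 0$, so the dual $\floor{p} = \ceil{p^\perp}^\perp$ is a sharp predicate with $\floor{p}\le p$ and $\floor{p}\circ f = \truth\circ f$. Hence the least sharp predicate $q$ with $q\circ f = \truth\circ f$ automatically lies below every (possibly non-sharp) predicate with that property.

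Using that $\mathrm{SPred}(B)$ is a complete lattice (it is orthomodular by Proposition~\ref{prop:spred-is-oml}, and in a normal SEA directed suprema of sharp predicates are again sharp~\cite{second}), I would set $v_* := \bigvee S$ with $S := \{r\in\mathrm{SPred}(B)\mid r\circ f = 0\}$ and claim $\im f = v_*^\perp$. By the first paragraph it remains only to prove $v_*\circ f = 0$. Writing $v_*$ as the directed supremum of the finite joins $\bigvee F$ over finite $F\subseteq S$, normality of $\catC$ (Scott-continuity of $\Pred(f)$) gives $v_*\circ f = \bigvee_F (\bigvee F)\circ f$, so it suffices to show that every finite join of elements of $S$ again lies in $S$; that is, that $S$ is closed under binary joins.

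The closure of $S$ under binary joins is the main obstacle. Concretely, for sharp $r_1,r_2$ with $r_i\circ f = 0$ I must show $(r_1\vee r_2)\circ f = 0$, equivalently $(s_1\wedge s_2)\circ f = \truth\circ f$ where $s_i := r_i^\perp$. Each hypothesis $s_i\circ f = \truth\circ f$ is equivalent, through the comprehension universal property and the compatibility $\xi^{s_i}\circ\pi_{s_i} = \id$, to $\asrt_{s_i}\circ f = f$ (the converse direction uses $\truth\circ\asrt_{s_i} = s_i$). The task then reduces to passing from $\asrt_{s_1}\circ f = f = \asrt_{s_2}\circ f$ to $\asrt_{s_1\wedge s_2}\circ f = f$, whence $(s_1\wedge s_2)\circ f = \truth\circ\asrt_{s_1\wedge s_2}\circ f = \truth\circ f$. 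For this I would invoke the description in a normal SEA of the meet $s_1\wedge s_2$ as a monotone (downward directed) limit of the iterated sequential products $s_1\mult s_2\mult s_1\mult\cdots$~\cite{second}, corresponding to a limit of the composites $\asrt_{s_1}\circ\asrt_{s_2}\circ\cdots$; since every finite composite fixes $f$, normality forces the limiting map $\asrt_{s_1\wedge s_2}$ to fix $f$ as well. Making this limit precise — tying the SEA meet to the assert maps and applying the correct Scott-continuity — is the delicate step, and is exactly where the normal-SEA structure and the interaction of comprehensions with meets of sharp predicates are needed.
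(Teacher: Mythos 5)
Your proposal follows the same skeleton as the paper's proof: construct $\im{f}$ inside the complete lattice of idempotents of the normal SEA $\Pred(B)$ (the fact from~\cite{second} that the paper also cites), use normality of the effectus to transfer the extremum across $\Pred(f)$, and use Proposition~\ref{prop:floorceiling}.e) together with sharp $=$ idempotent (Proposition~\ref{prop:idempotent-is-sharp}) to dominate arbitrary unsharp certifiers via their floors; your dualization from least certifier to greatest annihilator is cosmetic, by cancellativity of effect algebras. Where you genuinely add something is the directedness step: the paper's proof sets $\im{f} := \bigwedge\{p \colon p^2=p,\ p\circ f = \truth\circ f\}$ and asserts that ``by normality of $f$'' one has $\im{f}\circ f = \truth\circ f$, which tacitly requires the certifying idempotents to be closed under meets (equivalently, your set $S$ of sharp annihilators to be closed under joins) so that the extremum is a directed one; your third paragraph supplies exactly the justification the paper leaves implicit. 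Your sketch of that step is right in substance but misphrased at the end: the framework carries no dcpo structure on homsets, so there is no ``limiting map'' $\asrt_{s_1\wedge s_2}$ arising from the composites $\asrt_{s_1}\circ\asrt_{s_2}\circ\cdots$. The argument should instead close at the level of predicates, where it does go through: with $a_1 := s_1$ and $a_{n+1} := s_1\mult(s_2\mult a_n) = a_n\circ\asrt_{s_2}\circ\asrt_{s_1}$, the hypotheses $\asrt_{s_i}\circ f = f$ give $a_n\circ f = \truth\circ f$ for all $n$ by induction, and then $(s_1\wedge s_2)\circ f = \bigl(\bigwedge_n a_n\bigr)\circ f = \bigwedge_n a_n\circ f = \truth\circ f$, using the identification of $s_1\wedge s_2$ with the infimum of the iterated sequential products from~\cite{second}, together with the fact that $\Pred(f)$, being additive and Scott continuous, also preserves filtered infima (pass through orthosupplements, since $p\circ f \ovee p^\perp\circ f = \truth\circ f$). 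Two bookkeeping points: your equivalence $s\circ f = \truth\circ f \iff \asrt_s\circ f = f$ invokes the compatibility $\xi^s\circ\pi_s = \id$, which is Lemma~\ref{lem:compatible-filter-compression} and appears only \emph{after} this proposition in the paper --- its proof uses purity, uniqueness of $\dagger$-positive maps, and images of comprehensions, not the existence of images of general maps, so there is no circularity, but the dependence should be made explicit; and the completeness of $\mathrm{SPred}(B)$ should be obtained from~\cite{second} via sharp $=$ idempotent rather than from Proposition~\ref{prop:spred-is-oml}, which is stated for $\diamond$-effectuses and is therefore not yet available at this point (it is precisely what this proposition helps to establish).
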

\begin{proof}
	Let $f$ be a morphism in our category.
	The idempotents of a normal SEA form a complete lattice~\cite{second}. Hence, define $\im{f} := \bigwedge \{p^2=p~;~p\circ f = \truth\circ f\}$. By normality of $f$ we have $\im{f}\circ f = \truth\circ f$. Now, if $q\circ f = \truth\circ f$, then also $\floor{q}\circ f = \truth\circ f$ (follows from Proposition~\ref{prop:floorceiling}.e)). As $\floor{q}$ is sharp, it is idempotent, and hence $\im{f}\leq \floor{q}\leq q$ as desired.
\end{proof}

\begin{corollary}
  A sequential effectus is a $\diamond$-effectus.
\end{corollary}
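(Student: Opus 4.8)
The plan is to verify, one at a time, the four conditions in the definition of a $\diamond$-effectus (Definition~\ref{def:diamond-effect-theory}): the existence of images, of filters, of comprehensions, and the requirement that a predicate $p$ be sharp precisely when $p^\perp$ is. Each of these is either assumed directly in the definition of a sequential effectus or has just been established in the two preceding propositions.

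Filters and comprehensions are available by fiat, being item~1 of Definition~\ref{def:sequential-effectus}. Images of arbitrary morphisms are exactly the content of Proposition~\ref{prop:sequential-has-images}. Finally, the biconditional that $p$ is sharp iff $p^\perp$ is sharp is one of the conclusions of Proposition~\ref{prop:idempotent-is-sharp}. Assembling these three facts discharges all four clauses of Definition~\ref{def:diamond-effect-theory}, and the corollary follows immediately.

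The one point that deserves a moment's care --- rather than a genuine obstacle --- is checking that there is no circularity, since Definition~\ref{def:sequential-effectus} already invokes $\diamond$-adjointness in item~4, a notion that is literally defined only inside a $\diamond$-effectus. This is harmless because the maps $f^\diamond$ and $f_\diamond$ needed to state item~4 require only comprehensions (assumed in item~1) together with the relevant images, and these images are supplied in the right order: first for comprehensions (item~2), then for assert maps (Proposition~\ref{prop:idempotent-is-sharp}), and only afterwards --- using the complete lattice of idempotents of a normal SEA and the normality of all maps --- for every morphism (Proposition~\ref{prop:sequential-has-images}). Hence the hypotheses are well-posed, and the corollary amounts to nothing more than collecting these results.

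\begin{proof}
    A sequential effectus has filters and comprehensions by assumption. It has images by Proposition~\ref{prop:sequential-has-images}, and a predicate $p$ is sharp iff $p^\perp$ is sharp by Proposition~\ref{prop:idempotent-is-sharp}. These are exactly the four requirements of Definition~\ref{def:diamond-effect-theory}.
\end{proof}
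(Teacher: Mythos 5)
Your proof is correct and follows essentially the same route as the paper, which likewise obtains the corollary by combining the filters and comprehensions assumed in Definition~\ref{def:sequential-effectus} with Propositions~\ref{prop:idempotent-is-sharp} and~\ref{prop:sequential-has-images}. Your side remark on the well-posedness of item~4 (resolving the apparent circularity in the order: images of comprehensions, then of assert maps, then of all morphisms via the complete lattice of idempotents in a normal SEA) is accurate and makes explicit a point the paper only gestures at in its lead-in to this subsection.
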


\subsection{The sequential product is compressible}

Now we will venture to prove that the sequential product is compressible (cf.~Definition~\ref{def:SEA-compressible}). To do this we need to know more about the relationship between the assert maps, filters, comprehensions and the dagger.

\begin{lemma}
  For all predicates $p\in \Pred(A)$ we have
    $\asrt_p^2 = \asrt_{p^2}$.
\end{lemma}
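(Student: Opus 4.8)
The plan is to reduce everything to the uniqueness clause of the assert map (point~5 of Definition~\ref{def:sequential-effectus}): since $\asrt_{p^2}$ is characterised as the \emph{unique} $\dagger$-positive pure map $h$ with $\truth_A \after h = p^2$, in order to prove $\asrt_p^2 := \asrt_p \after \asrt_p$ equals $\asrt_{p^2}$ it suffices to check that $\asrt_p^2$ is (i) pure, (ii) $\dagger$-positive, and (iii) satisfies $\truth_A \after \asrt_p^2 = p^2$.

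I would dispatch (iii) first, as it is immediate. Using $\truth_A \after \asrt_p = p$ and the definition $p^2 = p \mult p = p \after \asrt_p$, we compute
\[
\truth_A \after \asrt_p^2 \ =\ (\truth_A \after \asrt_p) \after \asrt_p \ =\ p \after \asrt_p \ =\ p \mult p \ =\ p^2 .
\]

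For (i) and (ii) the key observation is that $\asrt_p$ is $\dagger$-self-adjoint. Purity is free: $\asrt_p$ is pure by assumption and the pure maps form a dagger category, hence are closed under composition, so $\asrt_p^2$ is pure. For $\dagger$-positivity, write $\asrt_p = g \after g^\dagger$ (which exists because $\asrt_p$ is $\dagger$-positive) and use $(g \after g^\dagger)^\dagger = g \after g^\dagger$ to obtain $\asrt_p^\dagger = \asrt_p$. Consequently
\[
\asrt_p^2 \ =\ \asrt_p \after \asrt_p \ =\ \asrt_p \after \asrt_p^\dagger ,
\]
which exhibits $\asrt_p^2$ as $\dagger$-positive with witness $\asrt_p$ itself.

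Combining (i)--(iii) with the uniqueness of the assert map associated to $p^2$ then yields $\asrt_p^2 = \asrt_{p^2}$. I do not expect a genuine obstacle here; the only point requiring a little care is that $\dagger$-positivity demands an actual factorisation of the form $g \after g^\dagger$, so the real content is the one-line verification that $\asrt_p$ is $\dagger$-self-adjoint, after which the required factorisation of $\asrt_p^2$ is handed to us for free.
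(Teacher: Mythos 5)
Your proof is correct and follows essentially the same route as the paper's: both establish that $\asrt_p$ is $\dagger$-self-adjoint (since it is $\dagger$-positive), exhibit $\asrt_p^2 = \asrt_p \after \asrt_p^\dagger$ as $\dagger$-positive, compute $\truth \after \asrt_p^2 = p \after \asrt_p = p^2$, and conclude by the uniqueness clause of the assert map. The only difference is that you explicitly verify purity via closure of pure maps under composition, which the paper leaves implicit.
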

\begin{proof}
    By definition $\asrt_p$ is $\dagger$-positive, and so in particular is $\dagger$-self-adjoint. Hence, $\asrt_p^2 = \asrt_p^\dagger\circ \asrt_p$ is also $\dagger$-positive. We have $\truth\circ \asrt_p^2 = p\circ \asrt_p = p^2 = \truth\circ \asrt_{p^2}$ so that by the uniqueness of $\dagger$-positive maps: $\asrt_p^2 = \asrt_{p^2}$.
\end{proof}

% \begin{lemma}
%   For any predicate $p\in \Pred(A)$ there is a unique $\sqrt{p}\in \Pred(A)$ with $\sqrt{p}^2 = p$.
% \end{lemma}
% \begin{proof}
% Follows from Prop.~\ref{prop:normal-SEA-properties}.1).
% \end{proof}

% \begin{lemma}
%     The assert maps $\asrt_p$ are $\diamond$-positive, \ie~there exists a $\diamond$-self-adjoint map $f$ such that $\asrt_p = f\circ f$.
% \end{lemma}
% \begin{proof}
%   By assumption any pure map $f$ is $\diamond$-adjoint to $f^\dagger$.
%     Assert maps are $\dagger$-positive and so in particular are $\dagger$-self-adjoint, so that they are also $\diamond$-self-adjoint. The desired result now follows because $\asrt_p = \asrt_{\sqrt{p}}^2$.
% \end{proof}

\begin{lemma}%[{\cite[211VII]{basthesis}}]
\label{lem:compatible-filter-compression}
  Comprehensions and filters are compatible (cf.~Definition~\ref{def:compatible-quotients-comprehensions}), i.e.~for any comprehension $\pi_p$ of a sharp predicate $p$ there exists a filter $\xi^p$ of $p$ such that $\xi^p \circ \pi_p = \id$.
  Furthermore, $\pi_p\circ \xi^p = \asrt_p$.
\end{lemma}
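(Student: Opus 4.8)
The plan is to exploit the purity of the assert map. Since $p$ is sharp, Definition~\ref{def:sequential-effectus}.5 gives a $\dagger$-positive pure map $\asrt_p$ with $\truth_A \circ \asrt_p = p$, and purity means $\asrt_p = \pi \circ \xi$ for some comprehension $\pi$ and filter $\xi$. I will show these two anonymous factors are forced to be a comprehension and a filter \emph{for $p$ itself}, that they recover $\asrt_p$, and that they compose the other way to the identity.

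First I would record the relevant facts about $\asrt_p$. Being $\dagger$-positive it is $\dagger$-self-adjoint; combined with sharpness $p^2 = p$ and the preceding lemma $\asrt_p^2 = \asrt_{p^2}$, this makes $\asrt_p$ idempotent. By Proposition~\ref{prop:idempotent-is-sharp} we have $\im{\asrt_p} = \ceil{p} = p$ (using that $p$ is sharp, so $\ceil{p}=p$). Next I would identify the factors. Write $\asrt_p = \pi \circ \xi$ with $\pi$ a comprehension for some predicate $q$ and $\xi$ a filter for some predicate $a$. Since filters are epic, the equation $r \circ \pi \circ \xi = \truth \circ \pi \circ \xi$ holds iff $r \circ \pi = \truth \circ \pi$, whence $\im{(\pi \circ \xi)} = \im{\pi}$ and therefore $\im{\pi} = \im{\asrt_p} = p$. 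As $\im{\pi} = \floor{q}$ by Definition~\ref{def:floorceiling}, and a comprehension for $q$ is simultaneously a comprehension for $\floor{q}$ (proof of Proposition~\ref{prop:floorceiling}.b), the map $\pi$ is in fact a comprehension for $\floor{q} = p$; rename it $\pi_p$. Dually, comprehensions are total, so $\truth \circ \xi = \truth \circ \pi_p \circ \xi = \truth \circ \asrt_p = p$, and since a filter for $a$ satisfies $\truth \circ \xi = a$, we get $a = p$; rename it $\xi^p$. This already establishes the ``furthermore'' claim, namely $\pi_p \circ \xi^p = \asrt_p$.

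The identity $\xi^p \circ \pi_p = \id$ then follows by cancellation from idempotency. From
\[
\pi_p \circ \xi^p \circ \pi_p \circ \xi^p \;=\; \asrt_p^2 \;=\; \asrt_p \;=\; \pi_p \circ \xi^p,
\]
cancelling the monic $\pi_p$ on the left gives $\xi^p \circ \pi_p \circ \xi^p = \xi^p = \id \circ \xi^p$, and cancelling the epic $\xi^p$ on the right gives $\xi^p \circ \pi_p = \id$. Finally, to match the universal quantifier in the statement, I would transport along uniqueness of comprehensions: for an arbitrary comprehension $\pi_p'$ of $p$ there is a (unique) isomorphism $\Theta$ with $\pi_p' = \pi_p \circ \Theta$, and setting $(\xi^p)' := \Theta^{-1} \circ \xi^p$ yields a filter for $p$ with $(\xi^p)' \circ \pi_p' = \id$ and $\pi_p' \circ (\xi^p)' = \asrt_p$.

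The only genuinely nontrivial step is the middle one: recognising that the two unnamed factors of $\asrt_p$ are pinned down as a comprehension and filter for $p$. This rests on the image computation $\im{(\pi \circ \xi)} = \im{\pi}$ (using epicness of filters) together with the floor--comprehension relation $\im{\pi} = \floor{q}$; everything else is formal manipulation via the monic/epic and totality properties. I do not expect any hidden obstacle, since all the required structure (images of comprehensions, purity of $\asrt_p$, and the $\diamond$-effectus machinery) is already in place for a sequential effectus.
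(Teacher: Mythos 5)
Your proposal is correct and takes essentially the same route as the paper's proof: factor the pure map $\asrt_p$ as $\pi\circ\xi$, use idempotency of $\asrt_p$ (via $\asrt_p^2=\asrt_{p^2}$ and $p^2=p$) with monic/epic cancellation to get $\xi\circ\pi=\id$, identify $\pi$ and $\xi$ as a comprehension and filter for $p$, and transport to an arbitrary comprehension of $p$ by uniqueness up to isomorphism. The only (harmless) deviation is in pinning down $\pi$: the paper verifies $\im{\pi}=p$ by a direct computation invoking Lemma~\ref{lem:normal-SEA-properties}.c), whereas you route through $\im{\asrt_p}=\ceil{p}=p$ from Proposition~\ref{prop:idempotent-is-sharp}, epicness of $\xi$, and the floor--comprehension relation from the proof of Proposition~\ref{prop:floorceiling}.b) --- an equally valid bookkeeping of facts already established earlier in the paper.
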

\begin{proof}
  Let $p$ be a sharp predicate, which is hence idempotent. 
  As $\asrt_p$ is pure we have $\asrt_p = \pi\circ\xi$ for some comprehension $\pi$ and filter $\xi$. 
  Now, 
  $$\pi\circ \xi 
  \ =\  \asrt_p 
  \ =\  \asrt_{p^2} \ =\  \asrt_p\circ\asrt_p 
  \ = \  \pi\circ \xi\circ\pi\circ \xi$$
  so that $\xi\circ\pi = \id$ 
  (as filters are epic and comprehensions are monic). 
  As $\truth\circ\pi = \truth$ we calculate
  $\truth\circ\xi = \truth\circ\pi\circ\xi = \truth\circ\asrt_p = p$ 
  so that $\xi$ is a filter for $p$. 
  Furthermore, $\im{\pi} = p$ as
  $\truth\circ \pi = \truth\circ\pi\circ\xi\circ\pi = \truth\circ\asrt_p\circ \pi = p\circ\pi$
  and if $q\circ \pi = \truth\circ\pi$ then $p = \truth\circ \asrt_p = \truth\circ\pi\circ\xi = q\circ\pi\circ\xi = q\circ \asrt_p = q\mult p$, so that $p\geq \ceil{q}\geq q$ by Lemma~\ref{lem:normal-SEA-properties}.c).
  Hence $\pi$ is a comprehension for $p$.
  Now let $\pi'$ be another comprehension for $p$. Then $\pi' = \pi\circ\Theta$ for some isomorphism $\Theta$. 
  Define $\xi' := \Theta^{-1}\circ\xi$. 
  Then $\pi'\circ\xi' = \pi\circ\xi = \asrt_p$ and $\xi'\circ\pi' = \id$.
\end{proof}

\begin{proposition}%[{\cite[216VII]{basthesis}}]
\label{prop:dagger-of-compression}
    Let $p$ be a sharp predicate. Then $\pi_p^\dagger = \xi^p$, where $\pi_p$ and $\xi^p$ form a pair of a comprehension and a filter of $p$ with $\xi^p\circ \pi_p = \id$ and $\pi_p\circ\xi^p = \asrt_p$.
\end{proposition}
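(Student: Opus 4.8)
The plan is to prove the identity by showing that $\pi_p^\dagger$ is itself a filter for $p$, and then using the uniqueness of $\dagger$-positive assert maps to collapse the residual isomorphism. Throughout I write $\pi_p,\xi^p$ for the compatible comprehension–filter pair furnished by Lemma~\ref{lem:compatible-filter-compression}, so that $\xi^p\circ\pi_p = \id$ and $\pi_p\circ\xi^p = \asrt_p$. Since comprehensions and filters are pure, $\pi_p^\dagger$ and $(\xi^p)^\dagger$ are pure as well. Applying the dagger to the two relations above, and using that the $\dagger$-positive map $\asrt_p$ is $\dagger$-self-adjoint ($\asrt_p^\dagger = \asrt_p$), I obtain $\pi_p^\dagger\circ(\xi^p)^\dagger = \id$ and $(\xi^p)^\dagger\circ\pi_p^\dagger = \asrt_p$; in particular $\pi_p^\dagger$ is a split epimorphism.

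First I would establish the side condition $\truth\circ\pi_p^\dagger \le p$. As $\pi_p$ is $\diamond$-adjoint to $\pi_p^\dagger$, symmetry of $\diamond$-adjointness gives $(\pi_p^\dagger)^\diamond = (\pi_p)_\diamond$, so $\ceil{\truth\circ\pi_p^\dagger} = (\pi_p^\dagger)^\diamond(\truth) = (\pi_p)_\diamond(\truth) = \im{\pi_p} = p$, where the last equality uses that $p$ is sharp. Hence $\truth\circ\pi_p^\dagger \le \ceil{\truth\circ\pi_p^\dagger} = p$.

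Next I would verify the universal property of a filter for $p$. Given $f$ with $\truth\circ f \le p$, Lemma~\ref{lem:assert-image} yields $f\circ\asrt_p = f$, so $f = f\circ\asrt_p = \bigl(f\circ(\xi^p)^\dagger\bigr)\circ\pi_p^\dagger$, exhibiting a factorisation of $f$ through $\pi_p^\dagger$; uniqueness of the factor follows because $\pi_p^\dagger$ is (split) epic. Thus $\pi_p^\dagger$ is a filter for $p$, and since filters of a fixed predicate are unique up to unique isomorphism, there is an isomorphism $\Theta$ with $\pi_p^\dagger = \Theta\circ\xi^p$. Postcomposing with $\pi_p$ and using $\xi^p\circ\pi_p = \id$ identifies $\Theta = \pi_p^\dagger\circ\pi_p$.

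Finally I would show $\Theta = \id$. Being of the form $\pi_p^\dagger\circ(\pi_p^\dagger)^\dagger$, the map $\Theta$ is $\dagger$-positive and pure; moreover, since $\pi_p^\dagger$ is a filter for $p$ we have $\truth\circ\pi_p^\dagger = p$, whence $\truth\circ\Theta = p\circ\pi_p = \truth\circ\pi_p = \truth$ by the defining equation of the comprehension $\pi_p$ and its totality. Thus $\Theta$ is a total $\dagger$-positive pure endomap, and so is $\id$; by the uniqueness clause for assert maps (point~5 of Definition~\ref{def:sequential-effectus}, applied to the predicate $\truth$, for which $\asrt_\truth = \id$) we conclude $\Theta = \id$, and therefore $\pi_p^\dagger = \xi^p$. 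I expect the main obstacle to be the third step: recognising that the ``wrong-order'' identity $(\xi^p)^\dagger\circ\pi_p^\dagger = \asrt_p$ is precisely what supplies the factorisation needed for the filter's universal property, together with the bookkeeping needed to see that the leftover isomorphism $\Theta$ is total, so that assert-map uniqueness can be invoked.
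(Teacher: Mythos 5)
Your proposal is correct, and while its endpoints coincide with the paper's proof, the middle runs differently. Both arguments open identically: $\diamond$-adjointness of a pure map to its dagger, plus symmetry, gives $\ceil{\truth\circ\pi_p^\dagger} = (\pi_p^\dagger)^\diamond(\truth) = (\pi_p)_\diamond(\truth) = \im{\pi_p} = p$, and both close with the uniqueness clause for $\dagger$-positive pure maps. In between, the paper works on \emph{both} sides at once: it also computes $\im{(\xi^p)^\dagger} = p$, factors $\pi_p^\dagger = h\circ\xi^p$ and $(\xi^p)^\dagger = \pi_p\circ g$ through the two universal properties, daggers $\xi^p\circ\pi_p = \id$ to get $h\circ g = \id$, deduces that $g$ and hence $(\xi^p)^\dagger$ is total, and then applies assert-map uniqueness at the predicate $p$ itself, $(\xi^p)^\dagger\circ\xi^p = \asrt_{\truth\circ\xi^p} = \asrt_p = \pi_p\circ\xi^p$, cancelling the epic $\xi^p$ to conclude $(\xi^p)^\dagger = \pi_p$. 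You stay entirely on the $\pi_p^\dagger$ side: the daggered relation $(\xi^p)^\dagger\circ\pi_p^\dagger = \asrt_p$ together with Lemma~\ref{lem:assert-image}.b) lets you verify the filter universal property for $\pi_p^\dagger$ outright (uniqueness of mediating maps coming from split-epicness), after which filter uniqueness up to isomorphism reduces everything to showing $\Theta = \pi_p^\dagger\circ\pi_p = \id$, which you get from assert-map uniqueness at the predicate $\truth$. Your route buys a cleaner conceptual statement ($\pi_p^\dagger$ \emph{is} a filter for $p$) and avoids the comprehension-side factorisation and the totality bookkeeping for $g$; it costs the full universal-property check and one small point you assert without proof: that $\id$ qualifies as a $\dagger$-positive pure map for the uniqueness clause at $\truth$. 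That point does hold ($\id^\dagger = \id$ by functoriality of the dagger, and $\id$ is a comprehension for $\truth$ after a filter for $\truth$, hence pure), so the gap is harmless; everything else checks out, including your implicit use of the fact that for sharp $p$ the abstract assert map of Definition~\ref{def:sequential-effectus} coincides with $\pi_p\circ\xi^p$, which is exactly what Lemma~\ref{lem:compatible-filter-compression} supplies and what legitimises invoking Lemma~\ref{lem:assert-image}.
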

\begin{proof}
    Let $\pi_p$ be a comprehension of a sharp predicate $p$, and let $\xi^p$ be a filter of $p$ such that $\xi^p\circ\pi_p = \id$ which exists by Lemma~\ref{lem:compatible-filter-compression}.
    By Definition~\ref{def:sequential-effectus}, $\pi_p$ is $\diamond$-adjoint to $\pi_p^\dagger$. 
    Hence 
    $\ceil{\truth\circ \pi_p^\dagger} 
    = (\pi_p^\dagger)^\diamond(\truth) 
    = (\pi_p)_\diamond(\truth) 
    = \im{\pi_p} = p$. 
    Similarly, we calculate $\im{(\xi^p)^\dagger} = p$. 
    By the universal property of filters respectively comprehensions
    there are then unique maps $h$ and $g$ such that 
    $\pi_p^\dagger = h\circ\xi^p$ and 
    $(\xi^p)^\dagger = \pi_p\circ g$. 
    Using $\xi^p\circ\pi_p = \id$ twice we calculate 
    $\id = \id^\dagger 
    = \pi_p^\dagger \circ (\xi^p)^\dagger 
    = h\circ\xi^p \circ \pi_p \circ g 
    = h\circ g$. 
    As a result $\truth=\truth\circ \id = \truth\circ h\circ g \leq \truth\circ g$ so that $g$ is unital, 
    and hence $\truth\circ (\xi^p)^\dagger = \truth\circ\pi_p\circ g = \truth$.

  By uniqueness of $\dagger$-positive maps we have $(\xi^p)^\dagger\circ \xi^p = \asrt_{\truth\circ (\xi^p)^\dagger\circ \xi^p} = \asrt_{\truth\circ \xi^p} = \asrt_p = \pi_p\circ \xi^p$. 
  Because $\xi^p$ is epic we conclude that indeed $(\xi^p)^\dagger = \pi_p$.
\end{proof}

As a consequence of this proposition we note that $\pi_p^\dagger\circ \pi_p = \id$ and $\pi_p\circ\pi_p^\dagger = \asrt_p$ for any sharp $p$. This makes the comprehensions into \emph{dagger-kernels}~\cite{heunen2010quantum}, and furthermore, we have now recovered the conditions specified in Remark~\ref{rem:reconstruction-finite}. We also have the following corollary.

\begin{corollary}\label{cor:dagger-of-iso}
    Let $\Theta$ be an isomorphism. Then $\Theta^\dagger = \Theta^{-1}$.
\end{corollary}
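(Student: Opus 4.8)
The plan is to recognise any isomorphism as a comprehension of the top predicate and then read off the result directly from Proposition~\ref{prop:dagger-of-compression}. The key observation is that the top predicate $\truth_B$ on $B$ is sharp (indeed $\im{\id_B}=\truth_B$, so it is sharp by Definition~\ref{def:effectus-sharp}; equivalently it is idempotent and hence sharp by Proposition~\ref{prop:idempotent-is-sharp}), so that an isomorphism which is a comprehension for $\truth_B$ falls squarely within the hypotheses of that proposition.

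First I would show that an isomorphism $\Theta\colon A\rightarrow B$ is a comprehension for $\truth_B$. The defining equation $\truth_B\circ\Theta = \truth_B\circ\Theta$ holds trivially since the predicate in question is $\truth_B$ itself, and for finality one notes that \emph{every} map $f\colon C\rightarrow B$ satisfies the side condition, and factors as $f=\Theta\circ(\Theta^{-1}\circ f)$ with $\Theta^{-1}\circ f$ the unique such factorisation because $\Theta$, being invertible, is monic. Hence $\Theta$ is a comprehension for the sharp predicate $\truth_B$; in particular $\Theta$ is pure (it is $\Theta\circ\id_A$ with $\id_A$ a filter and $\Theta$ a comprehension), so its dagger is defined, and $\Theta$ is total by the general fact that comprehensions satisfy $\truth\circ\pi=\truth$.

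Now I would apply Proposition~\ref{prop:dagger-of-compression} with $p:=\truth_B$ and $\pi_p:=\Theta$. It yields a compatible filter $\xi^{\truth_B}$ with $\xi^{\truth_B}\circ\Theta=\id$ and the identity $\Theta^\dagger=\xi^{\truth_B}$. Since $\Theta$ is invertible, post-composing $\xi^{\truth_B}\circ\Theta=\id_A$ with $\Theta^{-1}$ forces $\xi^{\truth_B}=\Theta^{-1}$, and therefore $\Theta^\dagger=\Theta^{-1}$, as claimed.

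The only genuinely delicate point is the verification that $\Theta$ is a comprehension for $\truth_B$ rather than some proper sub-predicate; this is where one must be careful that the universal property is checked against the top predicate, so that the side condition $\truth_B\circ f=\truth_B\circ f$ is vacuous and every map factors through $\Theta$. Everything else is a direct instantiation of Proposition~\ref{prop:dagger-of-compression}, and no new computation with assert maps or the sequential product is required.
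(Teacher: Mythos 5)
Your proof is correct and is essentially the paper's own argument, just with the roles mirrored: the paper views $\Theta$ as a filter for $\truth$ and $\Theta^{-1}$ as the compatible comprehension and then invokes Proposition~\ref{prop:dagger-of-compression}, whereas you take $\Theta$ itself as the comprehension for $\truth_B$ and identify the compatible filter as $\Theta^{-1}$ by invertibility. Both are one-line instantiations of the same proposition, so there is no substantive difference.
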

\begin{proof}
    $\Theta$ is a filter for $\truth$, which is sharp, 
    and $\Theta^{-1}$ is a comprehension for $\truth$. 
    As $\Theta^{-1}\circ \Theta = \id$ and $\Theta\circ \Theta^{-1} = \id = \asrt_{\truth}$, they satisfy the conditions of the previous proposition.
\end{proof}

\begin{proposition}\label{prop:compressive}
    The sequential product is compressible (cf.~Definition~\ref{def:SEA-compressible}).
\end{proposition}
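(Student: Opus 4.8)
The plan is to reduce compressibility to a single statement about the assert map, namely that $\asrt_p$ fixes every state that is certain to satisfy a sharp predicate $p$. Recall $p\mult a = a\circ \asrt_p$, so for a state $\omega$ the value $\omega(p\mult a)$ is obtained by precomposing the predicate $a$ with $\asrt_p$ before evaluating. Thus it suffices to prove that $\omega(p)=1$ implies $\asrt_p\circ\omega=\omega$ (as maps $I\to A$): granting this, $\omega(p\mult a)=a\circ\asrt_p\circ\omega=a\circ\omega=\omega(a)$ for all $a$, which is exactly Definition~\ref{def:SEA-compressible}.

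First I would assemble the tools now available for a sequential effectus: all images exist (Proposition~\ref{prop:sequential-has-images}) and filters and comprehensions are compatible (Lemma~\ref{lem:compatible-filter-compression}), so that Lemma~\ref{lem:assert-image} is applicable. The heart of the argument is then a short computation. An idempotent $p$ is sharp by Proposition~\ref{prop:idempotent-is-sharp}, and since $\omega$ is total we have $\truth\circ\omega=\truth_I$; hence the hypothesis $\omega(p)=1$ reads $p\circ\omega=\truth_I=\truth\circ\omega$. Because $\im{\omega}$ is by definition the least predicate $q$ with $q\circ\omega=\truth\circ\omega$, this gives $\im{\omega}\le p$, and Lemma~\ref{lem:assert-image}.a), applied to the sharp $p$ and the morphism $\omega\colon I\to A$, yields $\asrt_p\circ\omega=\omega$ as desired.

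The delicate point---and, I expect, the source of the ``technical modifications'' flagged earlier---is the notion of state: Definition~\ref{def:SEA-compressible} ranges over all linear $\omega\colon\Pred(A)\to[0,1]$, whereas the computation above is carried out for effectus states $\omega\colon I\to A$, where Lemma~\ref{lem:assert-image} operates at the level of morphisms. The main obstacle is therefore to bridge these two notions, showing that the real-valued states relevant to Definition~\ref{def:SEA-compressible} are controlled by effectus states; here I would invoke separation by states together with directed completeness of $\Pred(A)$. It is worth noting that $\asrt_p$ is a positive idempotent map with $\truth\circ\asrt_p=p$, so that $L_p=\Pred(\asrt_p)$ satisfies three of the four compression conditions automatically (cf.\ the footnote to Definition~\ref{def:SEA-compressible}); the computation above supplies the missing fourth condition on effectus states, and only the passage to arbitrary states remains. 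Everything strictly below the level of states is immediate from $\im{\omega}\le p$ and Lemma~\ref{lem:assert-image}.
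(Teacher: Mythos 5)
Your proof is correct and takes essentially the same route as the paper's: the paper likewise reduces compressibility to showing $\asrt_p\circ\omega=\omega$, derives $\im{\omega}\leq p$ from $p\circ\omega=\truth$, and concludes immediately by Lemma~\ref{lem:assert-image}.a). Your closing caveat about internal versus real-valued states is also on target, with the one difference that the paper never bridges the two notions---it proves the proposition only for states internal to the effectus and instead reworks the downstream order-derivation argument (Lemma~\ref{lem:state-order-lemma} and Proposition~\ref{prop:assert-is-derivation}) so that internal states suffice.
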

\begin{proof}
Let $p$ be sharp and let $\omega$ be a state such that $p\circ \omega = 1$. 
We need to show that $\asrt_p\circ\omega = \omega$ as then $(p\mult a)\circ \omega := a\circ\asrt_p\circ\omega = a\circ \omega$ as desired. But as $p\circ \omega = 1$ implies $\im{\omega}\leq p$, this follows immediately from Lemma~\ref{lem:assert-image}.a).
% By the universal property of $\pi_p$ we have $\omega = \pi_p\circ \cl{\omega}$ for some $\cl{\omega}$. 
% Now $\cl{\omega} = \id\circ \cl{\omega} 
% = \pi_p^\dagger\circ \pi_p \circ \cl{\omega} 
% = \pi_p^\dagger \circ \omega$ 
% so that $\omega = \pi_p \circ \cl{\omega} 
% = \pi_p\circ \pi_p^\dagger\circ \omega 
% = \asrt_p\circ \omega$.
\end{proof}

\subsection{The sequential product is quadratic}

That the sequential product is also quadratic (cf.~Defintion~\ref{def:SEA-quadratic}) requires a bit more work.

\begin{lemma}%[{\cite[212III]{basthesis}}]
\label{lem:pure-decomposition}
    Every pure map $f$ factors as 
    $f=\pi_{\im{f}}\circ \Theta\circ \xi^{\ceil{\truth\circ f}}\circ \asrt_{\truth\circ f}$ 
    where $\Theta$ is an isomorphism.
\end{lemma}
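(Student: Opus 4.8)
Set $p := \truth \circ f$. The plan is to peel $f$ apart using Definition~\ref{def:pure}, recognise the filter-part as a filter for $p$, and then rewrite that filter in terms of the assert map $\asrt_p$. A small lemma I will use repeatedly is: \emph{if $e$ is epic and the relevant images exist, then $\im{(g\circ e)} = \im{g}$}. The inequality $\im{(g\circ e)}\leq \im{g}$ is Lemma~\ref{lem:imageofcomposedmaps}; conversely $\im{(g\circ e)}\circ g\circ e = \truth\circ g\circ e$ gives $\im{(g\circ e)}\circ g = \truth\circ g$ by cancelling the epi $e$, whence $\im{g}\leq \im{(g\circ e)}$ by minimality of the image.

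First I would write $f = \pi \circ \xi$ for a filter $\xi$ and comprehension $\pi$ (Definition~\ref{def:pure}). Since comprehensions are total, $p = \truth\circ f = \truth\circ\xi$, so $\xi$ is a filter for $p$; write $\xi = \xi^p\colon A\to A^p$. As filters are epic, the lemma above gives $\im{\pi} = \im{(\pi\circ\xi^p)} = \im{f}$. Now $\pi$ is a comprehension for some predicate $b$ with $\floor{b} = \im{\pi} = \im{f}$, and by Proposition~\ref{prop:floorceiling}.b it is then also a comprehension for $\floor{b} = \im{f}$, which is sharp. Fixing the chosen comprehension $\pi_{\im{f}}\colon B_{\im{f}}\to B$, uniqueness of comprehensions yields an isomorphism $\sigma_1$ with $\pi = \pi_{\im{f}}\circ\sigma_1$, so that $f = \pi_{\im{f}}\circ\sigma_1\circ\xi^p$.

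The key step is to show that $\xi^{\ceil p}\circ\asrt_p$ is again a filter for $p$. By Definition~\ref{def:sequential-effectus}.5 the map $\asrt_p$ is pure, so $\asrt_p = \pi'\circ\xi'$; as before $\xi'$ is a filter for $\truth\circ\asrt_p = p$, and using $\im{\asrt_p} = \ceil{p}$ (Proposition~\ref{prop:idempotent-is-sharp}) with the epic-image lemma, $\pi'$ is a comprehension for the sharp predicate $\ceil{p}$. Choosing a compatible pair $\pi_{\ceil p},\xi^{\ceil p}$ for $\ceil{p}$ with $\xi^{\ceil p}\circ\pi_{\ceil p} = \id$ (Lemma~\ref{lem:compatible-filter-compression}), we may write $\pi' = \pi_{\ceil p}\circ\sigma_2$ for an isomorphism $\sigma_2$, hence $\asrt_p = \pi_{\ceil p}\circ\sigma_2\circ\xi'$. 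Then
\[
  \xi^{\ceil p}\circ\asrt_p \;=\; \xi^{\ceil p}\circ\pi_{\ceil p}\circ\sigma_2\circ\xi' \;=\; \sigma_2\circ\xi',
\]
which, being an isomorphism after the filter $\xi'$ for $p$, is itself a filter for $p$.

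Finally, since $\xi^p$ and $\xi^{\ceil p}\circ\asrt_p$ are both filters for $p$, they differ by an isomorphism $\tau$, giving $\xi^p = \tau\circ\xi^{\ceil p}\circ\asrt_p$. Substituting into $f = \pi_{\im{f}}\circ\sigma_1\circ\xi^p$ and setting $\Theta := \sigma_1\circ\tau$ produces
\[
  f \;=\; \pi_{\im{f}}\circ\Theta\circ\xi^{\ceil p}\circ\asrt_p,
\]
as desired. I expect the only delicate point to be the bookkeeping of the intermediate objects ($A^p$, the common object $A_{\ceil p} = A^{\ceil p}$, and $B_{\im{f}}$) and verifying that the several isomorphisms between them compose into the single isomorphism $\Theta$; once the epic-image lemma and the compatibility relation $\xi^{\ceil p}\circ\pi_{\ceil p} = \id$ are established, everything else is routine.
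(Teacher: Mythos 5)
Your proof is correct, and it shares the paper's overall skeleton---factor $f$ as comprehension-after-filter, show $\xi^{\ceil{p}}\circ\asrt_p$ is again a filter for $p:=\truth\circ f$, and absorb the resulting isomorphisms into $\Theta$---but you establish the central step by a genuinely different mechanism. The paper first checks $\truth\circ(\xi^{\ceil{p}}\circ\asrt_p)=p$ via Lemma~\ref{lem:assert-image}.a), then computes $\im(\xi^{\ceil{p}}\circ\asrt_p)=\truth$ using functoriality of $(\ )_\diamond$, and finally invokes closure of pure maps under composition (point~3 of Definition~\ref{def:sequential-effectus}) to re-decompose the composite as $\pi\circ\xi$ and argue that $\im{\pi}=\truth$ forces $\pi$ to be an isomorphism. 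You instead decompose $\asrt_p=\pi'\circ\xi'$ directly (purity of assert maps, point~5 of the same definition), identify $\xi'$ as a filter for $p$ and $\pi'$ as a comprehension for the sharp predicate $\ceil{p}$---using $\im{\asrt_p}=\ceil{p}$ from Proposition~\ref{prop:idempotent-is-sharp} together with your epi-cancellation lemma for images, which is sound (the $\leq$ direction is Lemma~\ref{lem:imageofcomposedmaps}, and the converse follows by cancelling the epi in $\im{(g\circ e)}\circ g\circ e=\truth\circ g\circ e$)---and then collapse $\xi^{\ceil{p}}\circ\pi_{\ceil{p}}=\id$ from the compatible pair of Lemma~\ref{lem:compatible-filter-compression}. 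This buys independence from the $\diamond$-calculus and from closure of pure maps under composition, and it also makes explicit the first factorisation $f=\pi_{\im{f}}\circ\sigma_1\circ\xi^p$ that the paper dismisses as ``straightforward'' (your route via Proposition~\ref{prop:floorceiling}.b), showing $\pi$ is also a comprehension for $\floor{b}=\im{\pi}=\im{f}$, is exactly right). One minor bookkeeping point, which you already anticipated: you prove the statement for the compatible filter of $\ceil{p}$; an arbitrary filter of $\ceil{p}$ equals $\rho\circ\xi^{\ceil{p}}$ for an isomorphism $\rho$, which is absorbed into $\Theta$, so nothing is lost.
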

\begin{proof}
	As $f$ is pure, it is by definition of the form $f=\pi\circ \xi$ for some comprehension $\pi$ and filter $\xi$. It is then straightforward to show that in fact $f=\pi_{\im{f}}\circ\Theta\circ \xi^{\truth\circ f}$, where $\Theta$ is an isomorphism.
  %Combining Proposition~\ref{prop:effect-decomposition-of-maps} and Corollary~\ref{cor:pure-decomp} we see that . 
  It hence remains to show that $\xi^{\ceil{\truth\circ f}}\circ \asrt_{\truth\circ f}$ is a filter for $\truth\circ f$. 
  Write $p:=\truth\circ f$. 
  Note first of all that 
  $$\truth\circ (\xi^{\ceil{p}}\circ \asrt_p) 
  \ =\  \ceil{p}\circ\asrt_p 
  \ =\  \truth\circ \asrt_{\ceil{p}}\circ \asrt_p 
  \ \stackrel{\ref{lem:assert-image}.a)}{=}\ \truth\circ \asrt_p \ =\  p$$
  so that it remains to show that $\xi^{\ceil{p}}\circ \asrt_p$ is a filter. 
  We see that 
  \begin{align*}
  \im(\xi^{\ceil{p}}\circ \asrt_p)
  &\ =\ (\xi^{\ceil{p}}\circ \asrt_p)_\diamond(\truth) \\
  &\ =\  (\xi^{\ceil{p}})_\diamond((\asrt_p)_\diamond(\truth)) \\
  &\ =\  (\xi^{\ceil{p}})_\diamond(\ceil{p})  \\
  &\ =\  \im(\xi^{\ceil{p}}\circ\pi_{\ceil{p}}) \\
  &\ =\  \im \id  \\
  &\ =\  \truth.
  \end{align*}
  Being a composition of pure maps, 
  $\xi^{\ceil{p}}\circ \asrt_p$ is a pure map itself, 
  and hence is equal to $\pi\circ\xi$ for some comprehension $\pi$ and filter $\xi$. Now we calculate 
  $\truth=\im(\xi^{\ceil{p}}\circ \asrt_p) = \im(\pi\circ\xi) \leq \im{\pi}$ 
  so that $\im{\pi}=\truth$ and hence $\pi$ is an isomorphism. 
  We conclude that $\xi^{\ceil{p}}\circ \asrt_p$ is a filter.
\end{proof}

\begin{proposition}
    Let $p$ and $q$ be arbitrary predicates on the same object. Then 
    $$\asrt_{p\& q}^2 \ =\  \asrt_p \circ \asrt_q^2 \circ \asrt_p.$$
\end{proposition}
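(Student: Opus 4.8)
The plan is to recognise both sides as $\dagger$-positive pure endomaps of $A$ and then to invoke the uniqueness clause of Definition~\ref{def:sequential-effectus} (point~5): a $\dagger$-positive pure map $g\colon A\to A$ is completely determined by the predicate $\truth\circ g$. Hence, once I exhibit $\asrt_{p\mult q}^2$ and $\asrt_p\circ\asrt_q^2\circ\asrt_p$ as $\dagger$-positive pure maps, the whole statement reduces to the single predicate identity $\truth\circ\asrt_{p\mult q}^2 = \truth\circ(\asrt_p\circ\asrt_q^2\circ\asrt_p)$, both sides of which I will compute to be $(p\mult q)^2$.

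For the left-hand side, since $\asrt_{p\mult q}$ is $\dagger$-self-adjoint I can write $\asrt_{p\mult q}^2 = \asrt_{p\mult q}^\dagger\circ\asrt_{p\mult q}$, which is manifestly $\dagger$-positive and pure (a composite of pure maps). Its truth-value is immediate: $\truth\circ\asrt_{p\mult q}^2 = (p\mult q)\circ\asrt_{p\mult q} = (p\mult q)\mult(p\mult q) = (p\mult q)^2$. For the right-hand side I set $k := \asrt_q\circ\asrt_p$, which is pure as a composite of pure maps. Using that asserts are $\dagger$-self-adjoint and that $\dagger$ is involutive and contravariant, $k^\dagger = \asrt_p\circ\asrt_q$, so $k^\dagger\circ k = \asrt_p\circ\asrt_q^2\circ\asrt_p$ is exactly the right-hand side and is $\dagger$-positive and pure. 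Moreover $\truth\circ k = q\circ\asrt_p = p\mult q$.

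The crux is then a general identity for an arbitrary pure map $k$, namely $\truth\circ(k^\dagger\circ k) = (\truth\circ k)^2$, which I would prove using the pure decomposition of Lemma~\ref{lem:pure-decomposition}. Writing $r := \truth\circ k$, that lemma gives $k = \pi_{\im k}\circ\Theta\circ\xi^{\ceil{r}}\circ\asrt_r$ with $\Theta$ an isomorphism. Taking daggers and using $\asrt_r^\dagger=\asrt_r$, $\Theta^\dagger=\Theta^{-1}$ (Corollary~\ref{cor:dagger-of-iso}), and $\pi_s^\dagger=\xi^s$ for sharp $s$ (Proposition~\ref{prop:dagger-of-compression}), I obtain $k^\dagger = \asrt_r\circ\pi_{\ceil{r}}\circ\Theta^{-1}\circ\xi^{\im k}$. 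Composing $k^\dagger\circ k$ and cancelling $\xi^{\im k}\circ\pi_{\im k}=\id$ together with $\Theta^{-1}\circ\Theta=\id$ leaves $k^\dagger\circ k = \asrt_r\circ\asrt_{\ceil{r}}\circ\asrt_r$ (using $\pi_{\ceil{r}}\circ\xi^{\ceil{r}}=\asrt_{\ceil{r}}$). Hence $\truth\circ(k^\dagger\circ k) = r\circ\asrt_{\ceil{r}}\circ\asrt_r$, and since $r\le\ceil{r}$ with $\ceil{r}$ sharp, Lemma~\ref{lem:assert-image}.b) gives $r\circ\asrt_{\ceil{r}}=r$, so this collapses to $r\circ\asrt_r = r^2$. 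This bookkeeping with the daggers of the filter and comprehension factors is the step most likely to hide a slip: the main obstacle is ensuring that every predicate of which a $\dagger$-adjoint is taken is genuinely sharp before invoking Proposition~\ref{prop:dagger-of-compression}, and that the cancellations are licensed by the compatibility of filters and comprehensions.

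Finally, applying this identity to $k=\asrt_q\circ\asrt_p$ yields $\truth\circ(\asrt_p\circ\asrt_q^2\circ\asrt_p)=(\truth\circ k)^2=(p\mult q)^2$, matching the left-hand side. As both $\asrt_{p\mult q}^2$ and $\asrt_p\circ\asrt_q^2\circ\asrt_p$ are $\dagger$-positive pure endomaps of $A$ with the same truth-value $(p\mult q)^2$, the uniqueness in Definition~\ref{def:sequential-effectus} forces them to coincide (both being $\asrt_{(p\mult q)^2}$), which is the claimed equality.
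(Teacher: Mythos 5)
Your proposal is correct and takes essentially the same route as the paper: both arguments apply Lemma~\ref{lem:pure-decomposition} to $k=\asrt_q\circ\asrt_p$, use $\pi_p^\dagger=\xi^p$ (Proposition~\ref{prop:dagger-of-compression}) and $\Theta^\dagger=\Theta^{-1}$ (Corollary~\ref{cor:dagger-of-iso}) to compute $k^\dagger$, and cancel the filter--comprehension pairs to reach $k^\dagger\circ k=\asrt_r\circ\asrt_{\ceil{r}}\circ\asrt_r$ with $r=p\mult q$. The only cosmetic difference is your finish: you compare truth-values of two $\dagger$-positive pure maps and invoke the uniqueness clause of Definition~\ref{def:sequential-effectus} (which is also how the paper proves $\asrt_p^2=\asrt_{p^2}$), whereas the paper collapses $\asrt_{\ceil{p\mult q}}\circ\asrt_{p\mult q}=\asrt_{p\mult q}$ directly via Lemma~\ref{lem:assert-image} to obtain the equality of maps outright --- and your abstraction incidentally avoids the paper's explicit $\diamond$-calculus computation of $\im(\asrt_q\circ\asrt_p)=\ceil{q\mult p}$, since keeping $\im k$ unevaluated suffices for the cancellation $\xi^{\im k}\circ\pi_{\im k}=\id$.
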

\begin{proof}
  First we note that for any assert map $(\asrt_p)^\diamond = (\asrt_p)_\diamond$, as assert maps are $\dagger$-self-adjoint.
  Now we calculate $\truth\circ \asrt_q\circ\asrt_p  =  p\mult q$ and 
  \begin{align*}
  \im{(\asrt_q\circ \asrt_p)} &\ =\  (\asrt_q\circ\asrt_p)_\diamond(\truth) \\
  &\ =\  (\asrt_q)_\diamond\circ(\asrt_p)_\diamond(\truth) \\
  &\ =\  (\asrt_q)_\diamond(\ceil{p}) \\
  &\ =\  (\asrt_q)^\diamond(\ceil{p}) \\
  &\ =\  \ceil{\ceil{p}\after\asrt_q} \\
  &\ =\  \ceil{p\after \asrt_q},
  \end{align*}
  where the last step follows from Proposition~\ref{prop:floorceiling}.d).
  Write $p\after\asrt_q = q\mult p$
  and use Lemma~\ref{lem:pure-decomposition} to get 
  $$\asrt_q\circ\asrt_p \ =\  \pi_{\ceil{q\mult p}}\circ \Theta\circ\xi^{\ceil{p\mult q}}\circ \asrt_{p\mult q}$$
   for some isomorphism $\Theta$. Applying the dagger to both sides and using Proposition~\ref{prop:dagger-of-compression} and Corollary~\ref{cor:dagger-of-iso} gives us:
   $$
   \asrt_p\circ\asrt_q = (\asrt_q\circ \asrt_p)^\dagger 
   = \asrt_{p\mult q}\circ (\xi^{\ceil{p\mult q}})^\dagger \circ \Theta^\dagger \circ \pi_{\ceil{q\mult p}}^\dagger
   = \asrt_{p\mult q}\circ \pi_{\ceil{p\mult q}} \circ \Theta^{-1} \circ \xi^{\ceil{q \mult p}}.
   $$
   Finally, we calculate:
   \begin{align*}
    \asrt_p\circ\asrt_q^2\circ \asrt_p
    &\ =\ \asrt_{p\mult q}\circ \pi_{\ceil{p\mult q}} \circ \Theta^{-1} \circ \xi^{\ceil{q\mult p}}\circ \pi_{\ceil{q\mult p}}\circ \Theta\circ\xi^{\ceil{p\mult q}}\circ \asrt_{p\mult q} \\
    &\ =\  \asrt_{p\mult q}\circ \pi_{\ceil{p\mult q}}\circ\xi^{\ceil{p\mult q}}\circ \asrt_{p\mult q} \\
    &\ =\  \asrt_{p\mult q}\circ \asrt_{\ceil{p\mult q}}\circ \asrt_{p\mult q} \\
    &\ =\ \asrt_{p\mult q}\circ \asrt_{p\mult q} \\
    &\ =\ \asrt_{(p\mult q)^2}.
   \end{align*}
   And hence we are done.
\end{proof}

Now we can conclude that the sequential product in $\Pred(A)$ is quadratic (Dfn.~\ref{def:SEA-quadratic}).

\begin{corollary}\label{cor:assert-is-quadratic}
    Let $p$ and $q$ be sharp predicates. Then $(p\& q)^2 = p\&(q\& p)$.
\end{corollary}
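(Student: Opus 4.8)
The plan is to obtain the quadratic identity by evaluating the operator-level identity $\asrt_{p\mult q}^2 = \asrt_p\after\asrt_q^2\after\asrt_p$ from the preceding proposition on the truth predicate, translating each side back into sequential products. The only genuinely new ingredients beyond that proposition are the normalisation $\truth\after\asrt_r = r$ (point~5 of Definition~\ref{def:sequential-effectus}), the identity $\asrt_r^2=\asrt_{r^2}$ proved just above, and the fact that a sharp predicate is idempotent (Proposition~\ref{prop:idempotent-is-sharp}).

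Concretely, first I would rewrite the left-hand side. Since $\asrt_r^2 = \asrt_{r^2}$ with $r:=p\mult q$, and since $\truth\after\asrt_s = s$ for every predicate $s$, we get $(p\mult q)^2 = \truth\after\asrt_{(p\mult q)^2} = \truth\after\asrt_{p\mult q}^2$. Feeding in the preceding proposition, this becomes $(p\mult q)^2 = \truth\after(\asrt_p\after\asrt_q^2\after\asrt_p)$. Next I would evaluate the right-hand side from the outside in, repeatedly using that $s\after\asrt_r = r\mult s$ is precisely the definition of the sequential product. Using $\truth\after\asrt_p = p$ gives $\truth\after\asrt_p\after\asrt_q^2\after\asrt_p = p\after\asrt_q^2\after\asrt_p$; then $\asrt_q^2 = \asrt_{q^2}$ together with $p\after\asrt_{q^2} = q^2\mult p$ yields $(q^2\mult p)\after\asrt_p$; and one more application gives $p\mult(q^2\mult p)$. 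Hence $(p\mult q)^2 = p\mult(q^2\mult p)$ for arbitrary predicates $p,q$.

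Finally, I would specialise to sharp $p$ and $q$: by Proposition~\ref{prop:idempotent-is-sharp} a sharp predicate is idempotent, so $q^2=q$, and the identity collapses to $(p\mult q)^2 = p\mult(q\mult p)$, which is exactly the quadraticity condition of Definition~\ref{def:SEA-quadratic}. I expect no serious obstacle here, as the substantive work has already been carried out in the preceding proposition and what remains is essentially bookkeeping. The only point demanding care is the convention $p\mult q := q\after\asrt_p$, whose reversed order must be tracked consistently when peeling off assert maps one factor at a time; getting this backwards would swap the roles of $p$ and $q$ in the final identity.
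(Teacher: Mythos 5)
Your proposal is correct and is essentially the paper's own proof: the paper likewise plugs $\truth$ into the identity $\asrt_{p\mult q}^2 = \asrt_p\after\asrt_q^2\after\asrt_p$ and uses $\asrt_q^2=\asrt_q$ for sharp $q$ (equivalently, your $\asrt_q^2=\asrt_{q^2}$ with $q^2=q$). Your more detailed unwinding, including the intermediate identity $(p\mult q)^2 = p\mult(q^2\mult p)$ for arbitrary predicates and the careful tracking of the convention $p\mult q:=q\after\asrt_p$, is just an expanded version of the same one-line argument.
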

\begin{proof}
    Just plug $\truth$ into the expression of the previous proposition and use $\asrt_q^2 = \asrt_q$ for sharp $q$.
\end{proof}

\begin{remark}
    An interesting question to ask is whether the dagger structure we impose on the pure maps is structure or a property, i.e.~whether it is unique given the other assumptions we impose on it. 
    Lemma~\ref{lem:pure-decomposition} shows that each pure map decomposes into an assert map, filter for a sharp predicate, isomorphism, and comprehension for a sharp predicate, so that it suffices to consider the uniqueness of the dagger for these four classes of maps. Using Proposition~\ref{prop:dagger-of-compression} and Corollary~\ref{cor:dagger-of-iso} it is relatively straightforward to show the dagger is uniquely defined for isomorphisms and filters and comprehensions for sharp predicates. This leaves the question as to whether the assert maps (for non-sharp predicates) are independent of the chosen dagger. We require the assert maps to lead to a normal sequential product. Uniqueness of these products is analysed in~\cite{wetering2018characterisation}. In particular, by applying~\cite[Theorem~V.19]{wetering2018characterisation} we can show that our assert maps are unique when the predicate is \emph{simple} (a finite linear combination of sharp predicates) and \emph{invertible}. These predicates form a norm-dense set of the predicates. Without any further conditions it is however not clear whether the assert maps are unique for non-simple predicates. However, our assert maps are also $\diamond$-positive, and we will later see that if we impose that our scalars are irreducible that our predicate spaces are JBW-algebras (Theorem~\ref{thm:JBW-CBA}). In~\cite[Theorem~4.6.17]{vandewetering2021thesis} it is shown that $\diamond$-positivity uniquely determines a map by its action on the unit in a JBW-algebra, so that in this setting the assert maps are uniquely determined.
\end{remark}

\subsection{The predicate spaces are JB-algebras}

By assumption our effectus $\catC$ has filters and comprehensions and is separated by states. By Proposition~\ref{prop:sequential-has-images} it has images, and by Lemma~\ref{lem:compatible-filter-compression} the filters and comprehensions are compatible.
Hence, Proposition~\ref{prop:splits-directed-complete} applies and $\catC\cong \catC_1\times \catC_2$ where $\catC_1$ has Boolean scalars and $\catC_2$ has convex scalars. The predicate spaces of $\catC_1$ are then complete Boolean algebras, so we are done with that part. We may then focus on $\catC_2$. So without loss of generality assume that $\catC$ has convex scalars, so that all predicate spaces are convex normal SEAs. Let $A$ again denote an object of $\catC$ and let $V_A$ denote the order unit space such that $\Pred(A)\cong [0,1]_{V_A}$.

By the previous results the sequential product is compressible and quadratic, so it looks like we can now use Theorem~\ref{thm:normalSEAisJB} to finish our proof that the predicate spaces are JB-algebras. However, a subtle issue now arises. The notion of being compressible as defined in Definition~\ref{def:SEA-compressible} refers to states $V_A\rightarrow \R$, while our notion of state internal to the effectus is a map $V_A\rightarrow C(X)$ where $C(X) =: V_I$ for the trivial object $I$. Furthermore, Definition~\ref{def:SEA-compressible} requires the property to hold for \emph{all} states, while here we have only shown it to hold for states internal to the category.
However, it is still possible to get to the conclusion of Theorem~\ref{thm:normalSEAisJB}. To do so, we have to delve into the details behind Theorem~\ref{thm:normalSEAisJB}.

The crucial part of the proof of Theorem~\ref{thm:normalSEAisJB} is to show that the operators $D_p:=\asrt_p-\asrt_{p^\perp}$ for sharp predicates $p\in\Pred(A)$, when viewed as acting on the order unit space $V_A$, are \emph{order derivations}.
\begin{definition}
    Let $W$ be an order unit space, and let~$\delta \colon W\rightarrow W$ be a bounded linear map. We call $\delta$ an \Define{order derivation} when $e^{t\delta} := \sum_{n=0}^\infty \frac{(t\delta)^n}{n!}$ is an order isomorphism for all $t\in \R$.
\end{definition}

A useful way to prove a map is an order derivation is to use the following proposition.
\begin{proposition}[{\cite[Proposition 1.108]{alfsen2012state}}]
\label{prop:order-derivation-condition}
    Let $W$ be a Banach order unit space, and let~$\delta\colon W\rightarrow W$ be a bounded linear map. Then $\delta$ is an order derivation if and only if for all $a\in W^+$ and states $\omega\colon W\rightarrow \R$ the following implication holds: 
    $$\omega(a) = 0 \ \implies\  \omega(\delta (a)) = 0.$$
\end{proposition}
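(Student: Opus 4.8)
The statement is a \emph{tangency criterion}: the condition says precisely that the vector field $\delta$ is tangent to the positive cone $W^+$ along its boundary, since the supporting functionals at a boundary point $a\in W^+$ are exactly the states $\omega$ with $\omega(a)=0$. The plan is to prove both implications by relating the one-parameter group $e^{t\delta}$ to invariance of the cone, using throughout the standard duality for a Banach order unit space that $a\in W^+$ iff $\omega(a)\ge 0$ for every state $\omega$, together with the weak*-compactness of the state space $K$ (the states form a weak*-closed subset of the dual unit ball, as $\norm{\omega}=\omega(1)=1$).

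\emph{Forward direction.} Suppose $\delta$ is an order derivation, fix $a\in W^+$ and a state $\omega$ with $\omega(a)=0$, and set $f(t):=\omega(e^{t\delta}a)$. Since each $e^{t\delta}$ is an order isomorphism we have $e^{t\delta}a\in W^+$, so $f(t)\ge 0$ for all $t$, while $f(0)=0$. Thus $f$ attains a global minimum at $t=0$, whence $0=f'(0)=\omega(\delta a)$, as required.

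\emph{Reverse direction.} Here I would show the hypothesis forces $e^{t\delta}(W^+)\subseteq W^+$ for every $t\in\R$; as $e^{t\delta}$ is invertible this makes it an order isomorphism. The key step is positivity of the resolvent: I claim $(I-s\delta)^{-1}$ preserves $W^+$ for $0<s<\norm{\delta}^{-1}$ (where the Neumann series converges). Equivalently, if $b:=(I-s\delta)a\in W^+$ then $a\in W^+$. Let $m:=\min_{\omega\in K}\omega(a)$, attained at some $\omega_0$ by weak*-compactness, and suppose toward a contradiction that $m<0$. Then $a':=a-m\cdot 1$ satisfies $\omega(a')\ge 0$ for all states, so $a'\in W^+$, and $\omega_0(a')=0$; the hypothesis applied to $a'$ and $\omega_0$ gives $\omega_0(\delta a')=0$, i.e. $\omega_0(\delta a)=m\,\omega_0(\delta 1)$. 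Substituting into $\omega_0(b)=\omega_0(a)-s\,\omega_0(\delta a)=m\bigl(1-s\,\omega_0(\delta 1)\bigr)$ and using $\lvert\omega_0(\delta 1)\rvert\le\norm{\delta}$ together with $s\norm{\delta}<1$, the factor $1-s\,\omega_0(\delta 1)$ is strictly positive, so $\omega_0(b)<0$, contradicting $b\in W^+$. Hence $m\ge 0$ and $a\in W^+$.

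Finally I would pass from the resolvent to the exponential via $e^{t\delta}=\lim_{n\to\infty}(I-\tfrac{t}{n}\delta)^{-n}$ for $t\ge0$: for large $n$ each factor is positive and $W^+$ is norm-closed, so $e^{t\delta}(W^+)\subseteq W^+$; applying the same to $-\delta$ (which satisfies the hypothesis because the implication is an \emph{equality}) handles $t\le 0$, giving an order isomorphism for all $t$. The main obstacle is exactly this reverse implication: the naive attempt to keep $\omega(e^{t\delta}a)\ge 0$ by a direct differential (Nagumo-style) argument is delicate because the minimizing state $\omega_0$ varies with $t$, and routing through resolvent positivity and the exponential formula is what makes the cone-invariance rigorous while using the tangency hypothesis only at the single supporting state where the minimum is achieved.
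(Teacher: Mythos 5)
Your proof is correct. Note first that the paper does not actually prove this proposition itself---it is quoted from Alfsen and Shultz~\cite{alfsen2012state}---so the only in-paper point of comparison is Proposition~\ref{prop:assert-is-derivation}, where the authors rework the same machinery for states internal to the effectus. Your argument follows essentially the same route as the cited source and as that reworking: the forward direction via the first-order condition at the interior minimum of $t\mapsto\omega(e^{t\delta}a)$, and the reverse direction via positivity of the resolvents $(I-s\delta)^{-1}$ for small $s>0$, followed by the exponential formula $e^{t\delta}=\lim_n (I-\tfrac{t}{n}\delta)^{-n}$ together with norm-closedness of $W^+$, with the case $t\le 0$ handled by the sign symmetry of the hypothesis (legitimate, since the conclusion is an equality). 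The one step where you genuinely diverge from the paper's variant is the proof of resolvent positivity: you exploit the fact that in the classical statement all real-valued states are available, so weak*-compactness of the state space yields a minimizing state $\omega_0$ for $a$, and the tangency hypothesis applied to $a':=a-m\cdot 1$ (which is positive with $\omega_0(a')=0$) gives the contradiction $\omega_0(b)=m\bigl(1-s\,\omega_0(\delta 1)\bigr)<0$. In Proposition~\ref{prop:assert-is-derivation} the authors cannot invoke such a minimizing state---their states are internal, $C(X)$-valued maps---which is why they substitute the spectral theorem for normal SEAs, an idempotent projecting onto the `very negative' part of $y$, and cruder norm estimates with the constant $\lambda<\frac12\norm{\delta}^{-1}$; your sharper bound $s<\norm{\delta}^{-1}$ suffices in the classical setting. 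Two small points worth making explicit if you write this up: the minimum $m$ is attained because $\omega\mapsto\omega(a)$ is weak*-continuous on the weak*-compact state space, and $a-m\cdot 1\in W^+$ uses the duality $a\ge 0$ iff $\omega(a)\ge 0$ for all states $\omega$, which in turn requires the norm-closedness of the positive cone built into the paper's definition of an order unit space.
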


We are interested in taking $\delta:=D_p:=\asrt_p-\asrt_{p^\perp}$ for some sharp predicate $p$. The condition $\omega(a) = 0 \implies \omega(\delta(a)) = 0$ then becomes equivalent to the implication~$\omega(a) = 0 \implies \omega(p\mult a) = \omega(p^\perp\mult a)$. In~\cite{wetering2018sequential} it is shown that this implication follows when the SEA is compressive and quadratic. We have shown these conditions, but only for the states \emph{internal} to the effectus. So we have the following.
\begin{lemma}\label{lem:state-order-lemma}
    Let $p\in\Pred(A)$ be sharp and $a\in\Pred(A)$ arbitrary. Let $\omega\colon I\rightarrow A$ be any state on $A$. Then $a\circ\omega = 0 \implies (p\mult a)\circ \omega = (p^\perp\mult a)\circ \omega$.
\end{lemma}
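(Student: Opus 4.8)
The plan is to recognise this statement as the internal-state version of the order-derivation condition of Proposition~\ref{prop:order-derivation-condition} for the operator $D_p := \asrt_p - \asrt_{p^\perp}$ acting on $V_A$. Unfolding the definition of $\mult$, the claim reads $a\circ\asrt_p\circ\omega = a\circ\asrt_{p^\perp}\circ\omega$ whenever $a\circ\omega = 0$, which is exactly the implication that~\cite{wetering2018sequential} derives from compressibility and quadraticity; so at heart I want to rerun that argument. Two things need care. Our sequential product has been shown compressible only for the states \emph{internal} to the effectus (Proposition~\ref{prop:compressive}), not for arbitrary $\R$-valued states, and an internal state $\omega\colon I\rightarrow A$ induces a map $\hat\omega := (\ )\circ\omega\colon V_A\rightarrow V_I\cong C(X)$ that is $C(X)$-valued rather than $\R$-valued. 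The first point is harmless precisely because, on inspection, the argument only invokes compressibility for the single state appearing in the implication, while quadraticity (which holds unconditionally here by Corollary~\ref{cor:assert-is-quadratic}) is a stateless identity.

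First I would reduce to the case that $a$ is sharp. Writing $s := \im\omega$, which is sharp, Lemma~\ref{lem:assert-image}.a) gives $\asrt_s\circ\omega = \omega$, and since $\omega(s)=1$ compressibility for $\omega$ yields $\omega(s\mult b)=\omega(b)$ for all $b$. Using the spectral description of a normal SEA recalled after Definition~\ref{defn:sea} (every effect is a supremum and norm-limit of simple effects, and $\ceil{a}$ is a supremum of truncated multiples of $a$) together with the normality of all maps, one gets $a\circ\omega = 0 \Rightarrow \ceil{a}\circ\omega = 0$, whence $\ceil{a}\leq s^\perp$. The map $a\mapsto (p\mult a)\circ\omega - (p^\perp\mult a)\circ\omega$ is additive, normal and norm-continuous in $a$; norm-density of the simple effects and the fact that the spectral approximants of $a$ lie below $\ceil{a}\leq s^\perp$ then let me pass from sharp predicates below $s^\perp$, to simple effects below $s^\perp$ (by $\ovee$-additivity of $\mult$ in its right argument), and finally to arbitrary $a$ with $\omega(a)=0$ by continuity. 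Hence it suffices to establish the identity for sharp predicates below $s^\perp$.

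For sharp $a$ the statement becomes an identity between sharp predicates, which I would prove using the lattice structure of $\mathrm{SPred}(A)$ (Proposition~\ref{prop:spred-is-oml}), the relations $\xi^p\circ\pi_p = \id$, $\pi_p\circ\xi^p = \asrt_p$ and $\pi_p^\dagger = \xi^p$ of Lemma~\ref{lem:compatible-filter-compression} and Proposition~\ref{prop:dagger-of-compression}, and quadraticity in the form of Corollary~\ref{cor:assert-is-quadratic}; this is the step where the argument of~\cite{wetering2018sequential} gets transcribed into our categorical language, now run entirely with the internal state $\omega$ via $\asrt_s\circ\omega=\omega$. Throughout, every inequality and supremum is interpreted in the Dedekind-complete ordered space $C(X)$, so no step requires $\R$-valued pointwise evaluation. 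I expect the main obstacle to be exactly this last point: ensuring that neither the reduction nor the core identity ever implicitly normalises by a scalar (which need not be invertible in $C(X)$) nor branches on the total order of $\R$, so that the proof survives the replacement of $\R$-valued states by the $C(X)$-valued internal states. Once the identity holds for all sharp $a\leq s^\perp$, the approximation of the previous paragraph delivers the general case.
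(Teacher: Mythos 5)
Your proposal takes essentially the same route as the paper, whose entire proof is the one-line observation that the argument of Proposition~46 of~\cite{wetering2018sequential} for compressible, quadratic SEAs carries over verbatim to the internal $C(X)$-valued states. The extra care you take---invoking compressibility only for the single state $\omega$ at hand (Proposition~\ref{prop:compressive}), and using the spectral theorem and norm-density of simple effects for normal SEAs to handle the $C(X)$-valued setting---is exactly what the paper leaves implicit, as flagged in the remark following Lemma~\ref{lem:normal-SEA-properties}.
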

\begin{proof}
    This follows in exactly the same way as for compressive quadratic SEAs as shown in Proposition~46 of~\cite{wetering2018sequential}.
\end{proof}

As our internal states do not necessarily correspond to the type of states mentioned in Proposition~\ref{prop:order-derivation-condition}, we need to rework the proof of that statement to make it apply in our situation.
In particular, we need to modify the proof of Theorem~1.106 in~\cite{alfsen2012state} on which Proposition~\ref{prop:order-derivation-condition} depends.

\begin{proposition}\label{prop:assert-is-derivation}
    Let $p\in\Pred(A)$ be a sharp predicate. Then $D_p:=\asrt_p-\asrt_{p^\perp}$ is an order derivation on $V_A$.
\end{proposition}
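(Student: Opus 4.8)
The plan is to verify the criterion of Proposition~\ref{prop:order-derivation-condition}, but since our notion of state is a map $\omega\colon I\to A$ --- which under $\Pred$ becomes a positive unital map $\omega\colon V_A\to C(X)$ rather than a scalar functional $V_A\to\R$ --- I must rework the argument of~\cite{alfsen2012state} underlying that proposition so that it uses only the internal states. First I would record that $D_p$ is a well-defined bounded operator on $V_A$: each $\asrt_p$ is a positive linear contraction acting on $V_A$ as the linear extension of $a\mapsto p\mult a$ (since $\Pred(\asrt_p)(a)=a\after\asrt_p = p\mult a$), so $\lVert D_p\rVert\leq 2$ and, as $V_A$ is Banach, $e^{tD_p}=\sum_n (tD_p)^n/n!$ converges for every $t$. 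Showing $D_p$ is an order derivation then amounts to showing that both $e^{tD_p}$ and $e^{-tD_p}$ preserve the positive cone $V_A^+$ for all $t\geq 0$.

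The tangency input is Lemma~\ref{lem:state-order-lemma}: for sharp $p$, an internal state $\omega$ and $a\in\Pred(A)$, the hypothesis $a\after\omega=0$ gives $(p\mult a)\after\omega=(p^\perp\mult a)\after\omega$, \ie $\omega(D_p a)=0$ in $C(X)$; by homogeneity this extends to all $a\in V_A^+$. The difficulty is that this fires only when $\omega(a)$ vanishes \emph{everywhere} on $X$, whereas the cone is detected pointwise: $a\geq 0$ iff $(\mathrm{ev}_x\after\omega)(a)\geq 0$ for every state $\omega$ and every $x\in X$, where $\mathrm{ev}_x\colon C(X)\to\R$ is evaluation. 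To bridge this I would localise using the idempotent structure of the scalars. Since $\catC$ is separated by states, the family $\{\mathrm{ev}_x\after\omega\}$ detects $V_A^+$, and since states are scalar-linear, $\omega(e\cdot a)=e\cdot\omega(a)$ for any idempotent scalar $e\in C(X)$. Because $X$ is extremally disconnected, the interior of the zero set $\{\omega(a)=0\}$ is clopen; letting $e$ be its indicator we get $\omega(e\cdot a)=0$ globally, so the Lemma applied to $e\cdot a$ --- together with $D_p(e\cdot a)=e\cdot D_p(a)$, as $D_p$ commutes with scalar multiplication --- yields that $\omega(D_p a)$ vanishes on the interior of $\{\omega(a)=0\}$.

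With this local tangency in hand I would rerun the proof of Theorem~1.106 of~\cite{alfsen2012state}, feeding in the detecting family $\{\mathrm{ev}_x\after\omega\}$ in place of the full state space and the clopen tangency above in place of the pointwise hypothesis of Proposition~\ref{prop:order-derivation-condition}. Concretely, one shows the resolvent $(\lambda-D_p)^{-1}$ is positive for $\lambda>\lVert D_p\rVert$ by a minimum-principle argument: if $(\lambda-D_p)a\geq 0$ but $a\not\geq 0$, pick a state $\omega$ for which $\omega(a)$ attains a negative minimum, shift by a multiple of the unit to land on the boundary of the cone, and use the tangency to derive a contradiction. Positivity of $e^{tD_p}$ then follows from the exponential formula, and the same run with $-D_p$ (whose tangency is the reverse equality, also given by the Lemma) handles $e^{-tD_p}$, so that $e^{tD_p}$ is an order isomorphism for every $t$ and $D_p$ is an order derivation.

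The main obstacle is precisely the passage from the \emph{global} tangency that Lemma~\ref{lem:state-order-lemma} supplies to the \emph{pointwise} minimum principle that the Alfsen--Shultz argument consumes: the clopen localisation controls $\omega(D_p a)$ only on the interior of the zero set of $\omega(a)$, so the delicate point is to show, using continuity of $\omega(D_p a)$ and the Dedekind completeness of $C(X)$ coming from extremal disconnectedness of $X$, that this already suffices to preserve positivity of the flow --- equivalently, that the boundary-of-zero-set points at which the localisation is silent cannot break cone invariance. Handling these boundary points, rather than the semigroup theory itself, is where I expect the real work to lie.
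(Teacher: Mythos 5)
Your proposal correctly identifies the central obstacle --- Lemma~\ref{lem:state-order-lemma} supplies the tangency $\omega(a)=0\implies\omega(D_p a)=0$ only when the $C(X)$-valued expression $\omega(a)$ vanishes \emph{globally}, while positivity in $V_A$ is detected pointwise via the functionals $\mathrm{ev}_x\after\omega$ --- but the route you propose does not get past it, and you concede as much in your closing paragraph. Your clopen localisation is sound as far as it goes (idempotent scalars do commute through, since $p\mult(e\cdot a)=e\cdot(p\mult a)$, so $\omega(D_p a)$ indeed vanishes on the interior of the zero set of $\omega(a)$), but at the decisive moment of the minimum-principle argument it can yield nothing: after shifting $a$ by a multiple of the unit so that $\omega(a+c1)$ touches $0$, the zero set of $\omega(a+c1)$ may have empty interior (a single point of $X$, say, or a nowhere dense set), in which case your indicator $e$ is $0$ and the localised tangency is completely silent precisely where the contradiction must be produced. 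Continuity of $\omega(D_p a)$ and Dedekind completeness of $C(X)$ do not repair this, because the hypothesis that fails is the input to the tangency lemma itself, not a convergence issue. So the step you flag as ``where the real work lies'' is a genuine gap, not a technicality: the proposal as written does not prove the proposition.

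The paper's proof avoids the pointwise/global mismatch entirely by doing spectral surgery on the non-positive element and choosing the witness \emph{state} adapted to it, rather than trying to make the tangency pointwise. Given $y\notin V_A^+$, the spectral theorem for normal SEAs yields $y=y^+-y^-$ with $y^+\mult y^-=0$ and, with $\alpha=\norm{y^-}$, a nonzero sharp predicate $q$ with $q\mult y<-\frac{\alpha}{2}q$; separation by states gives $\omega$ with $q\after\omega\neq 0$, and replacing $\omega$ by $\asrt_q\after\omega$ forces $\im{\omega}\leq q$. This makes $\omega(y^+)=0$ hold \emph{identically} in $C(X)$, so Lemma~\ref{lem:state-order-lemma} applies directly at $x:=y^+$ with no localisation and no boundary analysis. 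A quantitative estimate then closes the argument: from $\omega(y)\leq-\frac{\alpha}{2}\omega(1)$ one extracts $\norm{\omega}\leq-\frac{2}{\alpha}\omega(y)$, whence $\omega\bigl((1-\lambda D_p)y\bigr)\leq(1-2\lambda\norm{D_p})\omega(y)<0$ for $\lambda<\frac12\norm{D_p}^{-1}$, so $1-\lambda D_p$ maps non-positive elements to non-positive elements; positivity of the resolvent and then of $e^{tD_p}$ follows by the limit formula, much as in your sketch. Note also that the paper never runs a minimum principle at all --- it replaces that qualitative boundary argument with the explicit norm bound above. If you want to salvage your approach, this is the missing idea: localise in the predicate space $\Pred(A)$ (spectral projections of $y$, and compression of the state by $\asrt_q$) rather than in the scalar algebra $C(X)$.
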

\begin{proof}
    To mimic the notation of Theorem~1.106 of~\cite{alfsen2012state}, write $\delta:= D_p$.
    We need to show that $e^{t\delta}\geq 0$ for all $t\in\R_{>0}$ (the result for $t<0$ follows by repeating the argument with $-\delta$).
    To do this, it suffices to show that $(1 - \lambda\delta)^{-1}\geq 0$ for all $\lambda < \frac12\norm{\delta}^{-1}$, as we can then calculate
    \[e^{t\delta}
        \ =\ (e^{-t\delta})^{-1}
        \ =\  \Bigl(\lim_n(1-\nicefrac{t}{n}\,\delta)^n\Bigr)^{-1} 
        \ =\  \lim_n \bigl((1-\nicefrac{t}{n}\,\delta)^{-1}\bigr)^n\]
    which then indeed is positive as $\nicefrac{t}{n}<\frac12\norm{\delta}^{-1}$ for sufficiently large $n$.

    To prove $(1-\lambda\delta)^{-1} \geq 0$, we can use Eq.~(1.82) of~\cite{alfsen2012state}, which says this is the case precisely when $1-\lambda\delta$ maps non-positive elements to non-positive elements.

    Hence, let $y\in V_A$ with $y\not\in V_A^+$. We will show that $(1-\lambda\delta)y \not\in V_A^+$ when $\lambda <\frac12\norm{\delta}$. We do this by finding a state $\omega$ such that $\omega((1-\lambda\delta)y) < 0$. As we have $\omega(a)\geq 0$ for all $a\geq 0$ this establishes that $(1-\lambda\delta)y$ is indeed not positive.

    Using the spectral theorem of convex normal SEAs write $y=y^+-y^-$ for $y^+,y^-\geq 0$ and $y^+\mult y^- = 0$. 
    Let $\alpha = \norm{y^-}$ be the `absolute value of the minimal eigenvalue of $y$'.
    Again using the spectral theorem we can then find an idempotent effect $p\neq 0$ which projects onto the part where $y$ is `very negative', i.e.~$p\mult y < -\frac\alpha2 p$.
    By separation of states there is a state $\omega\colon I\rightarrow A$ such that $p\circ \omega\neq 0$.
    We may assume that $\im{\omega}\leq p$, as otherwise we can simply take $\omega':=\asrt_p\circ\omega$. Note that we can equivalently view $\omega$ as a positive linear map $\omega\colon V_A\rightarrow C(X)$ where $C(X)$ corresponds to the predicate space of $I$.

    Now, note that $\omega(y) = \omega(p\mult y) \leq \omega(-\frac\alpha2 p) = -\frac\alpha2\omega(1)$.
    Let $y'\in V_A$ be such that $\norm{y-y'}<\frac\alpha2$. In particular, this means that for any state $\sigma$ we have $\sigma(y-y') < \frac\alpha2\sigma(1)$. Then we also have
    $$\omega(y')
   \  =\  \omega(y'-y+y) 
    \ =\  \omega(y'-y) + \omega(y)
    \ <\ \frac\alpha2\omega(1) - \frac\alpha2\omega(1)\ =\ 0.$$

    As a result, for any $z\in V_A$ with $\norm{z}<1$ we see that $\omega(y+\frac\alpha2 z) < 0$ so that $\omega(z) < -\frac2\alpha\omega(y)$. 
    As this holds for all $z$ with $\norm{z}<1$ we get $\norm{\omega}\leq -\frac2\alpha\omega(y)$.

    Set $x:=y^+$. As $x\mult y^- = 0$, we have $\omega(x) = 0$
    so that by Lemma~\ref{lem:state-order-lemma} we have $\omega(\delta x) = 0$. 
    Note furthermore that $\norm{y-x} = \norm{y^-} = \alpha$.

    Let $\lambda\in\R_{>0}$. We calculate: 
    \begin{align*}
    \omega((1-\lambda\delta)y)
    &\ =\  \omega(y)-\lambda\omega(\delta y) \\
    &\ =\  \omega(y)-\lambda\omega(\delta(y-x)) \\
    &\ \leq\ \omega(y) +\lambda\norm{\omega}\,\norm{\delta}\,\norm{y-x} \\
        &\ \leq\ \omega(y) -\frac{2\lambda}\alpha \omega(y)\norm{\delta}\alpha \\
    &\ =\  (1-2\lambda\norm{\delta})\omega(y).
    \end{align*}
    Hence, if $2\lambda\norm{\delta}<1$ we see~$\omega((1-\lambda \delta)y)<0$.
    As $\omega$ is positive, this means that $(1-\lambda\delta)y\not\in V_A^+$ when $\lambda<\frac12\norm{\delta}^{-1}$. 
    As $y$ was an arbitrary non-positive element, we indeed conclude that $1-\lambda\delta$ carries $V_A\backslash V_A^+$ into itself.
\end{proof}

\begin{proposition}\label{prop:is-JB-algebra}
    The space $V_A$ is a JB-algebra.
\end{proposition}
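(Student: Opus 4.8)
The plan is to deduce the claim from Theorem~\ref{thm:normalSEAisJB}, which turns a convex normal SEA with compressible and quadratic sequential product into the unit interval of a directed-complete JB-algebra. We already have a convex normal SEA in $\Pred(A)$, and we have shown its sequential product to be compressible (Proposition~\ref{prop:compressive}) and quadratic (Corollary~\ref{cor:assert-is-quadratic}). The obstruction, flagged above, is that compressibility in Definition~\ref{def:SEA-compressible} ranges over \emph{all} states $V_A\to\R$, whereas Proposition~\ref{prop:compressive} only controls the states internal to the effectus (maps $I\to A$). So Theorem~\ref{thm:normalSEAisJB} cannot be applied as a black box, and the real work is to show that our weaker, internal version of the hypotheses still yields the same conclusion.

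First I would isolate exactly where compressibility is used in the proof of Theorem~\ref{thm:normalSEAisJB} in \cite{wetering2018sequential}. Its sole role there is to make the operators $D_p=\asrt_p-\asrt_{p^\perp}$ (for sharp $p$) into order derivations on $V_A$, via the state-criterion of Proposition~\ref{prop:order-derivation-condition} together with the implication of Lemma~\ref{lem:state-order-lemma}. Proposition~\ref{prop:assert-is-derivation} re-establishes \emph{precisely} this order-derivation property, but from the internal states alone, by reworking the Alfsen--Shultz argument \cite{alfsen2012state} so that it no longer needs states of the form $V_A\to\R$. Thus the single state-dependent step of the original proof has been legitimately replaced by one sound in our setting, and everything downstream of it is state-free.

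It then remains to run the state-free remainder of the proof of Theorem~\ref{thm:normalSEAisJB}. From the self-adjoint order derivations one recovers Jordan multiplication by sharp elements via $L_p:=\tfrac12(D_p+\id)$ (in the associative model $D_p(a)=pa+ap-a$, so $L_p$ is multiplication by $p$), extends $p\mapsto L_p$ linearly and norm-continuously to an assignment $a\mapsto L_a$ using that the sharp effects span a norm-dense set (the spectral theorem of normal SEAs, cf.\ the remark after Lemma~\ref{lem:normal-SEA-properties}), and sets $a*b:=L_a(b)$; the JB-algebra structure on the unit interval $\Pred(A)\cong[0,1]_{V_A}$ then extends to all of $V_A$ since the order unit space is determined by its interval. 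Commutativity, the unit law and the Jordan identity fall out of the commutation relations between the $D_p$ (notably that brackets of order derivations are again order derivations), while $-1\le a\le 1\Rightarrow 0\le a*a\le 1$ follows from positivity and contractivity of the assert maps; directed completeness is already assumed. The main obstacle I anticipate is the bookkeeping of the first step: verifying that compressibility enters the cited proof \emph{only} through the order-derivation property of the $D_p$, so that Proposition~\ref{prop:assert-is-derivation} genuinely plugs the one gap and the rest of the construction carries over unchanged. A lesser worry is confirming well-definedness and bilinearity of $a\mapsto L_a$ on the dense span of sharp effects and its continuous extension to $V_A$, which rests on the spectral theorem rather than on any property of states.
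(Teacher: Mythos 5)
Your proposal is correct and takes essentially the same route as the paper: the paper likewise isolates the order-derivation property of $D_p=\asrt_p-\asrt_{p^\perp}$ as the single state-dependent ingredient, supplies it from internal states via Proposition~\ref{prop:assert-is-derivation}, and then runs the remaining Alfsen--Shultz machinery (Theorems~9.43 and~9.48 of \cite{alfsen2012geometry}), defining multiplication by sharp $p$ as $T_p=\frac12(\id+D_p)$, extending by linearity and norm-density of simple effects, and obtaining commutativity from $[D_p,D_q]1=0$ via the fact that commutators of order derivations are again order derivations. Your $L_p$ is exactly the paper's $T_p$, so the construction and its verification coincide with the paper's proof.
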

\begin{proof}
    This can be shown by invoking Theorem~9.48 of~\cite{alfsen2012geometry} (which itself uses Theorem~9.43 of~\cite{alfsen2012geometry}). We can use this Theorem~9.48 because of Proposition~\ref{prop:assert-is-derivation}. We give a brief sketch of the proof.

    Let $p\in\Pred(A)$ be sharp. We claim that the operator $T_p:= \frac12 (\id + D_p) := \frac12 (\id +\asrt_p - \asrt_{p^\perp})$ acts as the Jordan product operator of $p$, i.e.~that we can define $p*a := T_p a$. For an element $a=\sum_i \lambda_i p_i \in V_A$ we then define its Jordan product operator by linearity as $T_a := \sum_i \lambda_i T_{p_i}$. Each element of $V_A$ can be written as the norm limit of elements of the form $\sum_i \lambda_i p_i$, and hence we get a Jordan product for all elements by continuity.

    The crucial point we need to check is commutativity of the Jordan product. For this it suffices to check commutativity on the sharp predicates: $p*q = q*p$. This translates to $T_p q = T_q p$. Because $T_p 1 = p$ and $T_q 1 = q$ we can write our desired identity as $[T_p, T_q]1 = 0$, where $[f,g]:=f\circ g - g\circ f$ is the standard commutator bracket. The statement $[T_p, T_q]1 = 0$ is easily seen to be equivalent to $[D_p, D_q]1 = 0$. We have shown that $D_p$ and $D_q$ are order derivations (Proposition~\ref{prop:assert-is-derivation}). The commutator of two order derivations is again an order derivation~\cite[Prop.~1.114]{alfsen2012state}. These facts are combined together with some algebra in Theorem~9.48 of~\cite{alfsen2012geometry} to show that indeed $[D_p, D_q]1 = 0$.

    As mentioned above, for elements $a:=\sum_i\lambda_i p_i$ and $b:=\sum_j\mu_j q_j$ in $V_A$ we define $a*b := T_a b$ where $T_a := \sum_i\lambda_i T_{p_i}$. That this is well-defined follows from the commutativity of the $T_{p_i}$ and $T_{q_j}$. We can copy the argument of Theorem~9.43 of~\cite{alfsen2012geometry} to show it is continuous in the norm, so that it extends to all elements of $V_A$. Similarly, we can follow Theorem~9.43 of~\cite{alfsen2012geometry} to show that $*$ satisfies the Jordan identity and the final implication $-1\leq a \leq 1 \implies 0\leq a*a\leq 1$ of Definition~\ref{def:JB-algebra}.
\end{proof}

\subsection{Proof of the main theorems}

We can now prove our main reconstruction results.
First, the statement for sequential effectuses with arbitrary scalars.

\begin{proof}[Proof of Theorem~\ref{thm:JB-embedding}]
By assumption our effectus $\catC$ has filters and comprehensions and is separated by states. By Proposition~\ref{prop:sequential-has-images} it has images, and by Lemma~\ref{lem:compatible-filter-compression} the filters and comprehensions are compatible.
Hence, Proposition~\ref{prop:splits-directed-complete} applies and $\catC\cong \catC_1\times \catC_2$ where the predicate spaces of $\catC_1$ are orthoalgebras and those of $\catC_2$ are convex. By assumption the predicate spaces are normal SEAs, so that by Lemma~\ref{lem:ortho-SEA-boolean} the predicate spaces of $\catC_1$ are complete Boolean algebras. That the predicate spaces of $\catC_2$ are unit intervals of JB-algebras is given by Proposition~\ref{prop:is-JB-algebra}. The theorem now follows easily.
\end{proof}

Unfortunately, we don't know any way in which we can restrict the directed-complete JB-algebras to JBW-algebras in this setting with arbitrary scalars. 
The fact that we require all maps to be normal and that the effectus is separated by states is not enough to give 'separation by normal states' in the sense required for JBW-algebras (cf.~Definition~\ref{def:order-separating}). However, when we restrict the scalars to irreducible scalars, then the concept of state for the effectus corresponds to that for JB-algebras, and hence we do get JBW-algebras.

\begin{proof}[Proof of Theorem~\ref{thm:JBW-CBA}]
    Let $\catC$ be a sequential effectus with irreducible scalars. Then $\Pred(I)=\{0\}$, or $\Pred(I)=\{0,1\}$ or $\Pred(I)=[0,1]$. In the first case, $\catC$ is equivalent to the trivial single-object category, and hence the theorem is trivially true.
    If $\Pred(I)=\{0,1\}$, then the predicate spaces are orthoalgebras, and as they are also normal SEAs, they are complete Boolean algebras, and hence we get $\Pred\colon \catC\rightarrow \textbf{CBA}^\opp$ as required.
    Finally, if $\Pred(I)=[0,1]$ then all the predicate spaces are convex, so that the previous results give $\Pred\colon \catC\rightarrow \textbf{JB}_{\text{npc}}^\opp$.
    Let $A\in\catC$ be an object, then the predicate space $\Pred(A)$ is separated by normal states $\omega\colon I\rightarrow A$. These correspond to normal states $\omega^*\colon V_A\rightarrow \R$ on the JB-algebra $V_A$. As $V_A$ is then directed-complete and separated by normal states, it is a JBW-algebra. Hence, the predicate functor restricts to 
    $\Pred\colon \catC\rightarrow \textbf{JBW}_{\text{npc}}^\opp$.
\end{proof}

With these theorems we see that each object in a sequential effectus splits up into an object whose predicates form a Boolean algebra and an object whose predicates correspond to the unit interval of a JB-algebra, a model for a quantum system.
If the scalars of the effectus are irreducible then the predicate functor restricts to either Boolean algebras or JBW-algebras, meaning that any such category either models deterministic classical logic, or probabilistic quantum logic.

We can however still get a bit closer to quantum theory, and see that each predicate space embeds into a von Neumann algebra.

\section{Reconstruction for monoidal effectuses}\label{sec:monoidal-reconstruction}

One could argue that JBW-algebras are not a proper model for a quantum system since they also include the purely exceptional algebras (cf.~Theorems~\ref{thm:JBW-decomposition} and~\ref{thm:purely-exceptional-char}).
It is however relatively straightforward to show that if our sequential effectus is monoidal in a suitable way---namely when the pure maps and dagger are compatible with the tensor product---then the only allowed JBW-algebras for the predicate spaces are JW-algebras.
This reflects the result in finite dimension, shown in many different contexts, that a distinguishing factor between Jordan algebras and C$^*$-algebras is that the former do not allow well-behaved tensor products~\cite{barnum2016composites,wetering2018sequential,wetering2018reconstruction,selby2018reconstructing}.

\begin{definition}
    A \Define{monoidal sequential effectus} is a sequential effectus that is monoidal and such that
    \begin{itemize}
        \item The tensor product of two pure maps is pure,
        \item for pure $f$ and $g$ we have $(f\otimes g)^\dagger = f^\dagger\otimes g^\dagger$.
    \end{itemize}
\end{definition}

For this section let us assume that we are working with a monoidal sequential effectus with irreducible scalars. Then by Theorem~\ref{thm:JBW-CBA} we may assume all our predicate spaces are either Boolean algebras or JBW-algebras. We are interested in the latter case, so let us assume the scalars are $[0,1]$ instead of $\{0,1\}$ so that our predicate spaces are JBW-algebras.
In this section we will show that the monoidal structure forces our predicate spaces to be JW-algebras, which boils down to showing that the predicate spaces cannot contain exceptional subalgebras.
In this section we will need to use some more structure present in JBW-algebras than before. We will introduce the necessary concepts when needed.

For the remainder of this section let $\catC$ denote a monoidal sequential effectus with scalars $[0,1]$. Let $A$ and $B$ denote objects in $\catC$. By previous results we have JBW-algebras $V_A$ and $V_B$ such that $\Pred(A)\cong [0,1]_{V_A}$ and $\Pred(B)\cong [0,1]_{V_B}$. Additionally, the tensor product $A\otimes B$ has an associated JBW-algebra $V_{A\otimes B}$. Recall that the definition of a monoidal effectus (Definition~\ref{def:monoidal-effectus}) gives us $\truth_A\otimes \truth_B=\truth_{A\otimes B}$ and $(a\ovee b)\otimes c = (a\otimes c)\ovee (b\otimes c)$. The scalar action of $[0,1]$ on the predicates also works nicely with the tensor product (see Lemma~\ref{lem:monoidal-effectus-facts}).
As a result we get a bilinear positive unital map $V_A\times V_B\rightarrow V_{A\otimes B}$, which we will also denote by $\otimes$.

Our first step is getting a better handle on the interaction of the Jordan product with the tensor product.
First we recall from Section~\ref{sec:seqprod} that a JBW-algebra has a sequential product operation given by the quadratic product as $a\mult b = Q_{\sqrt{a}} b$. As is shown in~\cite[Theorem~4.6.17]{vandewetering2021thesis}, the map $Q_{\sqrt{a}}$ is the unique $\diamond$-positive map on a JBW-algebra satisfying $Q_{\sqrt{a}}1=a$. As $\asrt_a$ is a $\diamond$-positive map on the JBW-algebra $V_A$ satisfying $\truth \circ \asrt_a = a$ we must then have $\asrt_a Q_{\sqrt{a}}$ on $V_A$.

\begin{proposition}\label{prop:JBW-tensor-preserves-assert}
  Let $a\in \Pred(A)$ and $b\in \Pred(B)$. Then $\asrt_{a\otimes b} = \asrt_a\otimes \asrt_b$.
  In particular $\asrt_{a\otimes \truth} = \asrt_a\otimes \id$ and $a^2\otimes b^2 = (a\otimes b)^2$.
\end{proposition}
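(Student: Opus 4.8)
The plan is to deduce everything from the uniqueness clause of Definition~\ref{def:sequential-effectus}, point~5: the assert map $\asrt_{a\otimes b}$ is characterised as the \emph{unique} $\dagger$-positive pure map $f\colon A\otimes B\rightarrow A\otimes B$ with $\truth_{A\otimes B}\circ f = a\otimes b$. Thus it suffices to check that the candidate $\asrt_a\otimes\asrt_b$ enjoys all three of these properties, after which the equality $\asrt_{a\otimes b}=\asrt_a\otimes\asrt_b$ is forced.

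First I would verify purity: assert maps are pure by assumption, and in a monoidal sequential effectus the tensor of two pure maps is again pure, so $\asrt_a\otimes\asrt_b$ is pure. Next I would establish $\dagger$-positivity. Writing $\asrt_a = f\circ f^\dagger$ and $\asrt_b = g\circ g^\dagger$ (possible since assert maps are $\dagger$-positive), bifunctoriality of $\otimes$ gives
\[
\asrt_a\otimes\asrt_b \ =\ (f\circ f^\dagger)\otimes(g\circ g^\dagger)\ =\ (f\otimes g)\circ(f^\dagger\otimes g^\dagger)\ =\ (f\otimes g)\circ(f\otimes g)^\dagger,
\]
where the final equality uses the monoidal compatibility axiom $(f\otimes g)^\dagger = f^\dagger\otimes g^\dagger$; hence $\asrt_a\otimes\asrt_b$ is $\dagger$-positive. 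Finally, using $\truth_{A\otimes B}=\truth_A\otimes\truth_B$ from Definition~\ref{def:monoidal-effectus} and bifunctoriality once more,
\[
\truth_{A\otimes B}\circ(\asrt_a\otimes\asrt_b)\ =\ (\truth_A\circ\asrt_a)\otimes(\truth_B\circ\asrt_b)\ =\ a\otimes b,
\]
so uniqueness yields the main identity.

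For the first ``in particular'' I would first note that $\asrt_{\truth_B}=\id_B$: the identity is pure (it is both a filter and a comprehension for $\truth$), it is $\dagger$-positive since $\id=\id\circ\id^\dagger$, and $\truth_B\circ\id=\truth_B$, so uniqueness applies; then $\asrt_{a\otimes\truth}=\asrt_a\otimes\asrt_{\truth}=\asrt_a\otimes\id$. For the second, I would unfold the square through $p^2=p\circ\asrt_p$ and combine the main identity with bifunctoriality:
\[
(a\otimes b)^2\ =\ (a\otimes b)\circ\asrt_{a\otimes b}\ =\ (a\otimes b)\circ(\asrt_a\otimes\asrt_b)\ =\ (a\circ\asrt_a)\otimes(b\circ\asrt_b)\ =\ a^2\otimes b^2.
\]

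I do not expect a genuine obstacle: the entire argument is an application of the uniqueness of $\dagger$-positive pure maps, and the only step that genuinely invokes the monoidal--dagger compatibility (rather than plain bifunctoriality) is the verification of $\dagger$-positivity. The one piece of bookkeeping not to overlook is the coherence isomorphism $I\otimes I\cong I$ implicit in reading $a\otimes b$ and $\truth_A\otimes\truth_B$ as honest predicates and scalars, but this is absorbed into the monoidal structure and leaves the argument unaffected.
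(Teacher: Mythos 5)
Your proposal is correct and follows essentially the same route as the paper: both argue via the uniqueness of $\dagger$-positive pure maps with prescribed truth composite, check $\truth\circ(\asrt_a\otimes\asrt_b)=a\otimes b$ using $\truth_{A\otimes B}=\truth_A\otimes\truth_B$ and bifunctoriality, and obtain the corollaries from $\asrt_\truth=\id$ and $p^2=p\circ\asrt_p$. The only (harmless) deviation is your verification of $\dagger$-positivity, which unfolds the definition directly as $\asrt_a\otimes\asrt_b=(f\otimes g)\circ(f\otimes g)^\dagger$, whereas the paper instead squares $\asrt_{\sqrt{a}}\otimes\asrt_{\sqrt{b}}$ using $\asrt_p^2=\asrt_{p^2}$ and square roots in the JBW-algebra of predicates --- your version is if anything slightly more elementary, since it needs only the monoidal--dagger compatibility axiom.
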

\begin{proof}
  Note $\truth\circ (\asrt_a\otimes \asrt_b) = (\truth\otimes \truth)\circ (\asrt_a\otimes \asrt_b) = (\truth\circ \asrt_a)\otimes (\truth\circ \asrt_b) = a\otimes b$, so by uniqueness of $\dagger$-positive maps, it remains to show that $\asrt_a\otimes \asrt_b$ is $\dagger$-positive. But this follows because $(\asrt_{\sqrt{a}}\otimes \asrt_{\sqrt{b}})^2 = \asrt_{\sqrt{a}}^2 \otimes \asrt_{\sqrt{b}}^2 = \asrt_a\otimes \asrt_b$.

  For $\asrt_{a\otimes \truth} = \asrt_a\otimes \id$ we simply note that $\asrt_\truth = \id$,
  and for $a^2\otimes b^2 = (a\otimes b)^2$ we calculate $a^2\otimes b^2 = (a\circ \asrt_a)\otimes (b\circ \asrt_b) = (a\otimes b)\circ (\asrt_a\otimes \asrt_b) = (a\otimes b)\circ \asrt_{a\otimes b} = (a\otimes b)^2$.
\end{proof}

\begin{corollary}\label{cor:tensor-idempotents}
  Let $p\in \Pred(A)$ and $q\in \Pred(B)$ be sharp predicates.
    Then $p\otimes q$ is sharp.
\end{corollary}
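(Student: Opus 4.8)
The plan is to reduce sharpness to idempotence and then invoke the tensor identity already established in the preceding proposition. Recall from Proposition~\ref{prop:idempotent-is-sharp} that in a sequential effectus a predicate is sharp if and only if it is idempotent (equal to its own square under the sequential product). Since $p\otimes q$ is a predicate on $A\otimes B$, it therefore suffices to show that $p\otimes q$ is idempotent whenever $p$ and $q$ are sharp.

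First I would record that, by the sharp-iff-idempotent characterisation of Proposition~\ref{prop:idempotent-is-sharp}, the hypotheses $p,q$ sharp give $p^2=p$ and $q^2=q$. Then the identity $a^2\otimes b^2=(a\otimes b)^2$, proved in Proposition~\ref{prop:JBW-tensor-preserves-assert}, applies verbatim with $a=p$ and $b=q$ to yield
\[
(p\otimes q)^2 \ =\ p^2\otimes q^2 \ =\ p\otimes q ,
\]
so $p\otimes q$ is idempotent. Applying the converse direction of Proposition~\ref{prop:idempotent-is-sharp} (idempotent implies sharp) once more then immediately gives that $p\otimes q$ is sharp, which is the claim.

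There is essentially no real obstacle here: all the substance has already been packaged into the two cited results, so this is a direct corollary. The only points worth noting are that we are operating under the standing assumptions of the section---a monoidal sequential effectus with scalars $[0,1]$---which guarantee that the predicate spaces $\Pred(A)$, $\Pred(B)$ and $\Pred(A\otimes B)$ are (unit intervals of) JBW-algebras and that the tensor product interacts with the assert maps as needed for Proposition~\ref{prop:JBW-tensor-preserves-assert} to hold; both ingredients used above therefore apply without further hypotheses.
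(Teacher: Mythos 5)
Your proof is correct and is exactly the paper's intended argument: the corollary is stated without proof precisely because, as you observe, sharpness coincides with idempotence (Proposition~\ref{prop:idempotent-is-sharp}) and $(p\otimes q)^2 = p^2\otimes q^2 = p\otimes q$ by Proposition~\ref{prop:JBW-tensor-preserves-assert}. Nothing further is needed.
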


Denote by $*$ the Jordan product on $V_A$ and write $T_a(b) = a*b$ for the Jordan product map of $a$ (and similarly for $V_B$). $T_a$ is not a positive map and hence cannot be part of the effectus. However, if $p$ is sharp (\ie~idempotent) then $T_p = \frac12 (\id + Q_p - Q_{p^\perp})$ so that it is a linear combination of maps that do lie in the effectus, so that we can still speak of tensor products of these maps. Note that additionally $Q_p = \asrt_{p^2} = \asrt_p$.

\begin{proposition}
    Let $a\in V_A$ be arbitrary and $\truth\in V_B$, then $T_{a\otimes \truth} = T_a\otimes \id$. Similarly, for $\truth\in V_A$ and $b\in V_B$ we have $T_{\truth\otimes b} = \id \otimes T_b$.
\end{proposition}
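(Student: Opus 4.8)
The plan is to establish the first identity on \emph{sharp} predicates and then extend it by linearity and norm-continuity; the second identity then follows by symmetry. Throughout I would use the formula $T_r = \frac12(\id + Q_r - Q_{r^\perp})$, which the excerpt records for any sharp (idempotent) $r$, together with the identification $Q_r = \asrt_r$ for sharp $r$ noted just above the statement.

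First I would fix a sharp $p\in\Pred(A)$ and look at $p\otimes\truth\in\Pred(A\otimes B)$. By Corollary~\ref{cor:tensor-idempotents} this is again sharp, so the formula applies to it: $T_{p\otimes\truth} = \frac12(\id + Q_{p\otimes\truth} - Q_{(p\otimes\truth)^\perp})$. Biadditivity of $\otimes$ together with $\truth_A\otimes\truth_B = \truth_{A\otimes B}$ gives $(p\otimes\truth)\ovee(p^\perp\otimes\truth) = (p\ovee p^\perp)\otimes\truth = \truth_{A\otimes B}$, so uniqueness of orthosupplements yields $(p\otimes\truth)^\perp = p^\perp\otimes\truth$. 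Proposition~\ref{prop:JBW-tensor-preserves-assert} applied to the assert maps then gives $Q_{p\otimes\truth} = \asrt_{p\otimes\truth} = \asrt_p\otimes\id = Q_p\otimes\id$ and likewise $Q_{(p\otimes\truth)^\perp} = Q_{p^\perp}\otimes\id$. Substituting, and using $\id_{A\otimes B} = \id_A\otimes\id_B$, I obtain
$$T_{p\otimes\truth} \ =\ \tfrac12\bigl((\id + Q_p - Q_{p^\perp})\otimes\id\bigr) \ =\ T_p\otimes\id,$$
which is the claim for sharp $p$.

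Next I would pass to a \emph{simple} element $a = \sum_i\lambda_i p_i$, a finite real-linear combination of sharp predicates. Since $a\mapsto T_a$ is linear on simple elements and $\otimes$ is bilinear, $a\otimes\truth = \sum_i\lambda_i(p_i\otimes\truth)$ and hence $T_{a\otimes\truth} = \sum_i\lambda_i T_{p_i\otimes\truth} = \sum_i\lambda_i(T_{p_i}\otimes\id) = T_a\otimes\id$ by the sharp case. Finally, since the simple elements are norm-dense in $V_A$ (the spectral theorem for normal SEAs, cf.~the remark after Lemma~\ref{lem:normal-SEA-properties}) and both $a\mapsto T_{a\otimes\truth}$ and $a\mapsto T_a\otimes\id$ are norm-continuous in $a$, the identity extends from simple elements to all of $V_A$. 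The symmetric identity $T_{\truth\otimes b} = \id\otimes T_b$ is obtained by interchanging the roles of $A$ and $B$ via the braiding of the symmetric monoidal structure.

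The main obstacle I anticipate is justifying the continuity of $a\mapsto T_a\otimes\id$ and pinning down the meaning of tensoring an operator with $\id$. The safe route is to note that on simple $a$ the map $T_a\otimes\id$ is a finite linear combination of $\Pred(\id_{A\otimes B})$, $\Pred(\asrt_{p_i}\otimes\id_B)$ and $\Pred(\asrt_{p_i^\perp}\otimes\id_B)$, all of which are $\Pred$-images of genuine positive contractive morphisms of the monoidal effectus; combined with the bound on $\|T_a\|$ in terms of $\|a\|$, this supplies an estimate $\|(T_a\otimes\id)(c) - (T_{a'}\otimes\id)(c)\|\le K\,\|a-a'\|\,\|c\|$ sufficient for continuity. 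Continuity of $a\mapsto T_{a\otimes\truth}$ is immediate, since $a\mapsto a\otimes\truth$ is bounded by positivity of $\otimes$ and $c\mapsto T_c$ is bounded.
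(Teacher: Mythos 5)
Your proof is correct and follows essentially the same route as the paper's: establish the identity for sharp $p$ via $T_p = \frac12(\id + \asrt_p - \asrt_{p^\perp})$ together with Proposition~\ref{prop:JBW-tensor-preserves-assert} and $(p\otimes\truth)^\perp = p^\perp\otimes\truth$, then extend by linearity and norm-density of the simple elements. The paper compresses the extension step into one sentence (``by the norm-continuity and linearity of the Jordan product in the first argument''), so your explicit attention to the boundedness of $a\mapsto T_a\otimes\id$ is a welcome, but not divergent, elaboration.
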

\begin{proof}
    We only show the first equation, as the second follows analogously. We prove the result for $a=p$ sharp. By the norm-continuity and linearity of the Jordan product in the first argument, this is sufficient as the sharp elements span a dense set.

    Note first that $(p\otimes \truth)^\perp = p^\perp \otimes \truth$ and $\id_{A\otimes B} = \id_A\otimes \id_B$.
    We then calculate:
    \begin{align*}
    T_{p\otimes \truth} &
        \ =\  \frac12 \Bigl(\id\otimes \id + \asrt_{p\otimes \truth} - \asrt_{(p\otimes \truth)^\perp}\Bigr)
        \ =\  \frac12 \Bigl(\id \otimes \id + \asrt_p \otimes \id - \asrt_{p^\perp}\otimes \id\Bigr) \\
    &\ =\  \Bigl(\frac12 (\id +\asrt_p - \asrt_{p^\perp})\Bigr)\otimes \id\  =\  T_p \otimes \id. \qedhere
    \end{align*}
\end{proof}

The Jordan product is of course commutative, however, this does not mean that the Jordan product maps $T_a$ all commute. When $T_aT_b=T_bT_a$ we say that $a$ and $b$ \Define{operator commute}.

\begin{corollary}
    For all $a\in V_A$ and $b\in V_B$, $a\otimes \truth$ and $\truth\otimes b$ operator commute.
\end{corollary}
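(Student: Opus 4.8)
The plan is to obtain this immediately from the preceding proposition together with the interchange (bifunctoriality) law for tensor products of linear maps. By that proposition we already have the identities $T_{a\otimes\truth} = T_a\otimes\id$ and $T_{\truth\otimes b} = \id\otimes T_b$ as linear endomaps of $V_{A\otimes B}$. So operator commutativity, which by definition is the equation $T_{a\otimes\truth}\circ T_{\truth\otimes b} = T_{\truth\otimes b}\circ T_{a\otimes\truth}$, reduces to comparing $(T_a\otimes\id)\circ(\id\otimes T_b)$ with $(\id\otimes T_b)\circ(T_a\otimes\id)$.

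First I would compute both composites using the interchange law $(f\otimes g)\circ(f'\otimes g') = (f\circ f')\otimes(g\circ g')$ for tensor products of linear maps, together with $T_a\circ\id = \id\circ T_a = T_a$. In one order this gives $(T_a\otimes\id)\circ(\id\otimes T_b) = (T_a\circ\id)\otimes(\id\circ T_b) = T_a\otimes T_b$, and in the other order $(\id\otimes T_b)\circ(T_a\otimes\id) = (\id\circ T_a)\otimes(T_b\circ\id) = T_a\otimes T_b$. Since both composites equal $T_a\otimes T_b$, they agree, and hence $a\otimes\truth$ and $\truth\otimes b$ operator commute.

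The one point that requires care — and the main obstacle — is justifying that the interchange law genuinely applies to $T_a$ and $T_b$, which are \emph{not} themselves morphisms of the effectus (the Jordan multiplication maps are not positive). The resolution mirrors how $T_a$ was defined in the first place: for sharp $p$ the map $T_p = \tfrac12(\id + \asrt_p - \asrt_{p^\perp})$ is a real-linear combination of the genuine effectus maps $\id$, $\asrt_p$, and $\asrt_{p^\perp}$, whose tensor products are governed by the biadditive tensor product $\otimes$ of the monoidal effectus (extended bilinearly to $V_A\times V_B\to V_{A\otimes B}$). For such combinations the interchange law holds termwise, and since every $a\in V_A$ is a norm-limit of real-linear combinations $\sum_i\lambda_i p_i$ of sharp predicates, the identity extends by real-bilinearity and norm-continuity to arbitrary $a$ and $b$ — exactly the same extension already used to define $T_a$ and to prove $T_{a\otimes\truth} = T_a\otimes\id$. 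With that bookkeeping in place the corollary is a one-line consequence of the interchange law.
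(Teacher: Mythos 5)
Your proof is correct and takes the same route the paper intends: the corollary is left unproved there precisely because it follows immediately from the preceding proposition via the interchange law, exactly as you compute, with both composites reducing to $T_a\otimes T_b$. Your extra care about $T_a$ not being an effectus morphism is the same bookkeeping the paper sets up just before the proposition (writing $T_p=\tfrac12(\id+\asrt_p-\asrt_{p^\perp})$ as a real-linear combination of genuine effectus maps and extending to general $a$ by linearity and norm-density of the span of sharp predicates), so nothing is missing.
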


\begin{proposition}\label{prop:tensor-is-Jordan-hom}
    The maps $a\mapsto a\otimes \truth$ and $b\mapsto \truth\otimes b$ are normal injective Jordan homomorphisms.
\end{proposition}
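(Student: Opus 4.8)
The plan is to verify the three properties — Jordan multiplicativity, injectivity, and normality — separately, working only with the map $\phi\colon V_A\to V_{A\otimes B}$, $\phi(a):=a\otimes\truth$; the argument for $b\mapsto\truth\otimes b$ is symmetric. Note first that $\phi$ is linear (this is just the bilinearity of $\otimes$) and unital, since $\phi(\truth_A)=\truth_A\otimes\truth_B=\truth_{A\otimes B}$ by Definition~\ref{def:monoidal-effectus}.

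For the Jordan homomorphism property I would exploit the preceding proposition, which gives $T_{a\otimes\truth}=T_a\otimes\id$. Using that the tensor product of morphisms acts as $(T_a\otimes\id)(a'\otimes\truth)=T_a(a')\otimes\truth$, this yields
\[
(a\otimes\truth)*(a'\otimes\truth)=T_{a\otimes\truth}(a'\otimes\truth)=(T_a\otimes\id)(a'\otimes\truth)=(a*a')\otimes\truth,
\]
so $\phi$ preserves the Jordan product and is therefore a unital Jordan homomorphism. (Equivalently, one can read off square-preservation $\phi(a^2)=\phi(a)^2$ from the identity $a^2\otimes b^2=(a\otimes b)^2$ of Proposition~\ref{prop:JBW-tensor-preserves-assert} with $b=\truth$, and recover multiplicativity by polarisation.)

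For normality, the key observation is that $\phi$ is itself the image under $\Pred$ of an effectus morphism, namely the discard map $d\colon A\otimes B\to A$ obtained by composing $\id_A\otimes\truth_B\colon A\otimes B\to A\otimes I$ with the natural isomorphism $A\otimes I\cong A$. A short coherence computation shows $\Pred(d)(p)=p\otimes\truth_B$ for every predicate $p\in\Pred(A)$, so $\Pred(d)$ agrees with $\phi$ on $[0,1]_{V_A}$ and hence on all of $V_A$ by linear extension. Since $\catC$ is a \emph{normal} effectus its predicate functor lands in $\textbf{JBW}_{\text{npc}}$, so $\Pred(d)=\phi$ is automatically normal, with no further work.

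Injectivity is where the global state structure enters, and I expect it to be the main obstacle. Suppose $\phi(a)=a\otimes\truth=0$ for some $a\in V_A$. By Theorem~\ref{thm:JBW-CBA} the internal states $\omega\colon I\to A$ form a separating family of normal states on $V_A$, so it suffices to show that each such $\omega$ vanishes on $a$. Fix a state $\nu\colon I\to B$ (one exists since $V_B$, being a nonzero JBW-algebra, has normal states). In the monoidal effectus $\omega\otimes\nu\colon I\cong I\otimes I\to A\otimes B$ is again a total state, and bifunctoriality of $\otimes$ together with $\truth_B\circ\nu=\truth_I$ gives
\[
(a\otimes\truth_B)\circ(\omega\otimes\nu)=(a\circ\omega)\otimes(\truth_B\circ\nu)=a\circ\omega,
\]
whose left-hand side vanishes because $a\otimes\truth_B=0$. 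Hence $a\circ\omega=0$ for all internal states $\omega$, so $a=0$ by separation. The delicate points to nail down are the existence of the product state $\omega\otimes\nu$, the identification of the internal states with the separating normal states of $V_A$, and the exclusion (or separate treatment) of the degenerate case $V_B=0$, where $A\otimes B$ collapses.
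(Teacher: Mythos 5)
Your proposal is correct, and two of its three parts coincide with the paper's proof: the Jordan homomorphism property is read off from $T_{a\otimes\truth}=T_a\otimes\id$ exactly as you do, and injectivity is proven by the same mechanism of pairing with product states $\omega\otimes\nu$ and invoking separation by states (the paper phrases it as $a\otimes\truth=a'\otimes\truth\Rightarrow\omega(a)=\omega(a')$ rather than via the kernel, which is equivalent by linearity; it also tacitly assumes states on $B$ exist, so the degenerate case you flag is glossed over there too). Where you genuinely diverge is normality. The paper disposes of it in one line: an injective unital Jordan homomorphism restricts to an order-isomorphism onto its image, ``hence the maps must be normal.'' This is terse and arguably incomplete as stated, since an order-isomorphism onto the image only controls suprema computed \emph{inside} $\phi(V_A)$, and a directed supremum taken in the ambient algebra $V_{A\otimes B}$ could a priori lie strictly below $\phi(\bigvee S)$; some extra argument about the image subalgebra is needed to close that gap. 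Your route avoids the issue entirely: you realise $\phi$ on the unit interval as $\Pred(d)$ for the effectus morphism $d=\rho_A\circ(\id_A\otimes\truth_B)$ (a short naturality computation with the unitor does give $\Pred(d)(p)=p\otimes\truth_B$), and then Scott continuity of all maps in a \emph{normal} effectus — which a sequential effectus is by definition — hands you normality on $[0,1]_{V_A}$, whence on all of $V_A$ since bounded directed sets can be rescaled into the unit interval. This buys a proof of normality that leans on the categorical hypotheses already in force rather than on a delicate order-theoretic fact about Jordan embeddings, at the small cost of the coherence computation identifying $\phi$ with $\Pred(d)$.
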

\begin{proof}
    We only show this for $a\mapsto a\otimes \truth$ as the other one follows analogously.
    That it is a Jordan homomorphism, i.e.~$(a_1\otimes 1)*(a_2\otimes 1) = (a_1*a_2)\otimes 1$, follows 
    immediately from the previous corollary.

    To show it is injective suppose $a\otimes \truth = a'\otimes \truth$. Let $\omega$ be any state on the first system, and $\omega'$ any state on the second system. Then $\omega(a) = \omega(a)\omega'(\truth) = (\omega\otimes \omega')(a\otimes \truth) = (\omega\otimes \omega')(a'\otimes \truth) = \omega(a')$. Since states separate the predicates, we then necessarily have $a=a'$.

    Now as it is an injective unital Jordan homomorphism, the restriction to their domain is an order-isomorphism, and hence the maps must be normal.
\end{proof}

We now wish to prove that the quadratic product `commutes' with the tensor product for arbitrary elements of the JBW-algebras: $Q_{a\otimes b} = Q_a\otimes Q_b$ (equality here is understood as equality as linear maps on $V_{A\otimes B}$). Proposition~\ref{prop:JBW-tensor-preserves-assert} shows this result for effects $a$ and $b$. To extend this result for arbitrary $a$ and $b$ we need a concept related to the quadratic product, known as the \Define{triple product}. For $a,b,c\in V$ for $V$ a Jordan algebra we define $Q_{a,b} c = (a*b)*c + (c*b)*a - (a*c)*b$. To motivate this, when $V$ is a JW-algebra, this gives $Q_{a,b}c = \frac12(acb+bca)$ where the product is in the underlying von Neumann algebra. The triple product is related to the quadratic product via $Q_a = Q_{a,a}$.
Note that $Q_{a,b}=Q_{b,a}$ and that it is bilinear in its two arguments: $Q_{a_1+a_2,b} = Q_{a_1,b} + Q_{a_2,b}$.
In particular, $Q_{a_1+a_2} = Q_{a_1+a_2,a_1+a_2} = Q_{a_1,a_1} + Q_{a_2,a_2} + 2Q_{a_1,a_2}$.

\begin{proposition}\label{prop:tensor-quadratic}
    Let $a\in V_A$ and $b\in V_B$ be arbitrary. Then $Q_{a\otimes b} = Q_a\otimes Q_b$.
\end{proposition}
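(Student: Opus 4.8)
The plan is to first prove the identity for effects and then bootstrap it to all self-adjoint elements by a polarization argument built on the fact that $x\mapsto Q_x$ is a \emph{quadratic} map. For effects $a\in\Pred(A)$ and $b\in\Pred(B)$, Proposition~\ref{prop:JBW-tensor-preserves-assert} gives $\asrt_{a\otimes b}=\asrt_a\otimes\asrt_b$, which, since $\asrt_x=Q_{\sqrt{x}}$ on a JBW-algebra, reads $Q_{\sqrt{a\otimes b}}=Q_{\sqrt a}\otimes Q_{\sqrt b}$. The identity $x^2\otimes y^2=(x\otimes y)^2$ from that same proposition shows $(\sqrt a\otimes\sqrt b)^2=a\otimes b$, and as $\sqrt a\otimes\sqrt b$ is positive (the tensor map is positive) this forces $\sqrt{a\otimes b}=\sqrt a\otimes\sqrt b$ by uniqueness of positive square roots. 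Reparametrising $x=\sqrt a$, $y=\sqrt b$ and using that $\sqrt{\,\cdot\,}$ is a bijection of the unit interval, I obtain $Q_{x\otimes y}=Q_x\otimes Q_y$ for \emph{all} effects $x\in\Pred(A)$, $y\in\Pred(B)$.

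Next I would fix an effect $b$ and extend the identity in the first variable. Set $H(a):=Q_{a\otimes b}-Q_a\otimes Q_b$, viewed as a linear operator on $V_{A\otimes B}$ depending on $a\in V_A$. Because $a\mapsto a\otimes b$ is linear and $x\mapsto Q_x$ satisfies $Q_{x+y}=Q_x+Q_y+2Q_{x,y}$ and $Q_{\lambda x}=\lambda^2 Q_x$ (the triple-product relations recalled before the statement), the map $H$ is quadratic: $H(\lambda a)=\lambda^2H(a)$, and there is a symmetric bilinear form $B$ with $H(a_1+a_2)=H(a_1)+H(a_2)+2B(a_1,a_2)$. By the previous paragraph $H$ vanishes on effects, and by homogeneity it then vanishes on the whole positive cone $V_A^+$, since every positive element is a nonnegative multiple of an effect.

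Now for arbitrary $a\in V_A$ I write $a=c-d$ with $c,d\in V_A^+$ (for instance $c=a+\norm{a}1$ and $d=\norm{a}1$). Then $H(a)=H(c)+H(d)-2B(c,d)=-2B(c,d)$, whereas $c+d\in V_A^+$ gives $0=H(c+d)=H(c)+H(d)+2B(c,d)=2B(c,d)$; hence $B(c,d)=0$ and $H(a)=0$. This establishes $Q_{a\otimes b}=Q_a\otimes Q_b$ for all $a\in V_A$ and every effect $b$. Repeating the identical polarization argument in the second variable, this time holding an arbitrary $a\in V_A$ fixed, extends the identity to all $b\in V_B$, which completes the proof.

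The only delicate point is the bookkeeping of the polarization: I must ensure $H$ is a genuine quadratic map so that its symmetric bilinear form $B$ is well defined, which rests precisely on the bilinearity of the triple product $Q_{a,b}$ and on the linearity of tensoring with the fixed operator $Q_b$. Beyond this, the argument needs no continuity or spectral machinery—only the decomposition of an element into a difference of positives and the uniqueness of positive square roots—so I expect the obstacle to be purely one of stating the quadratic-form structure cleanly rather than any substantive difficulty.
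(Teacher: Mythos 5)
Your proof is correct, and while it rests on the same two pillars as the paper's argument --- the base case from Proposition~\ref{prop:JBW-tensor-preserves-assert} plus degree-two homogeneity to cover the positive cone, and the bilinearity of the triple product $Q_{x,y}$ to pass from the cone to everything --- it organizes the extension step genuinely differently. The paper first derives the explicit mixed identity $Q_{a_1\otimes b,\,a_2\otimes b}=Q_{a_1,a_2}\otimes Q_b$ by comparing two expansions, then a second four-term identity $2Q_{a_1,a_2}\otimes Q_{b_1,b_2}=Q_{a_1\otimes b_1,\,a_2\otimes b_2}+Q_{a_1\otimes b_2,\,a_2\otimes b_1}$, and finally expands $Q_{(a^+-a^-)\otimes(b^+-b^-)}$ term by term against $Q_{a^+-a^-}\otimes Q_{b^+-b^-}$. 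You instead treat $H(a)=Q_{a\otimes b}-Q_a\otimes Q_b$ as a quadratic map, observe that vanishing of $H$ on the cone forces its polarization $B$ to vanish on pairs of positives (via $H(c+d)=0$), and conclude $H(c-d)=-2B(c,d)=0$; doing this one variable at a time you never need the four-term identity or the final expansion at all. Your $B(c,d)=0$ for positive $c,d$ is in substance the paper's first mixed identity, so the computations are cousins, but your bookkeeping is cleaner and replaces the paper's ``easily verified'' expansion with a two-line polarization. Your base case takes a small detour the paper avoids --- you prove $\sqrt{a\otimes b}=\sqrt{a}\otimes\sqrt{b}$ via positivity of the tensor map and uniqueness of positive square roots, where the paper simply writes $Q_a=\asrt_{a^2}$ and uses $(a\otimes b)^2=a^2\otimes b^2$ directly --- but both routes are valid. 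One shared point of implicit care, which you rightly flag: for non-effect arguments, $Q_a\otimes Q_b$ only makes sense as the bilinear extension of the effectus tensor to real-linear combinations of maps (each $Q_{\text{positive}}$ being a scalar multiple of an assert map), and your proof is exactly as careful about this as the paper's.
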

\begin{proof}
    First suppose $a\in[0,1]_{V_A}$ and $b\in[0,1]_{V_A}$. Then $Q_a = \asrt_{a^2}$ and $Q_b=\asrt_{b^2}$, so that by Proposition~\ref{prop:JBW-tensor-preserves-assert} $Q_{a\otimes b} = \asrt_{(a\otimes b)^2} = \asrt_{a\otimes b}^2 = (\asrt_a\otimes \asrt_b)^2 = \asrt_{a^2}\otimes \asrt_{b^2} = Q_a\otimes Q_b$. Now for an arbitrary positive $a$ $Q_a = Q_{\norm{a} a/\norm{a}} = \norm{a}^2 Q_{a/\norm{a}}$, and hence the desired result also follows when $a\geq 0$ and $b\geq 0$.

    Now suppose $a=a_1+a_2$ where $a_1,a_2\geq 0$. Then $Q_a = Q_{a_1} + Q_{a_2} + 2Q_{a_1,a_2}$. We expand $Q_{a\otimes b}$ in two different ways. First we see that $Q_{a\otimes b} = Q_{a_1\otimes b + a_2\otimes b} = Q_{a_1\otimes b} + Q_{a_2\otimes b} + 2 Q_{a_1\otimes b, a_2\otimes b} = Q_{a_1}\otimes Q_b + Q_{a_2}\otimes Q_b + 2 Q_{a_1\otimes b, a_2\otimes b}$.
    Secondly, $Q_{a\otimes b} = Q_a\otimes Q_b = Q_{a_1}\otimes Q_b + Q_{a_2}\otimes Q_b + 2Q_{a_1,a_2}\otimes Q_b$. Comparing terms in both of these decompositions of $Q_{a\otimes b}$ we see that necessarily $Q_{a_1\otimes b, a_2\otimes b} = Q_{a_1,a_2}\otimes Q_b$.

    We can use this equation, and do a similar trick, but starting with $Q_{a_1\otimes b, a_2\otimes b}$ where $b=b_1 + b_2$ to give us the equation $2Q_{a_1,a_2}\otimes Q_{b_1,b_2} = Q_{a_1\otimes b_1, a_2\otimes b_2} + Q_{a_1\otimes b_2, a_2\otimes b_1}$.

    Finally, suppose $a$ and $b$ are arbitrary. Write $a=a^+ - a^-$ and $b=b^+-b^-$ where $a^+,a^-, b^+, b^- \geq 0$. Now if we expand both the expression $Q_{(a^+-a^-)\otimes (b^+-b^-)}$ and $Q_{a^+-a^-} \otimes Q_{b^+ - b^-}$ as much as possible using linearity and apply the previous rewrite rules, it is easily verified that these two expression are indeed equal.
\end{proof}

In order to proceed we need to use the concept of \emph{universal von Neumann algebras}.

\begin{theorem}[{\cite[Theorem~7.1.9]{hanche1984jordan}}]
    Let $V$ be a JBW-algebra. Then there exists an (up to isomorphism) unique von Neumann algebra $W^*(V)$ and a normal Jordan homomorphism $\psi\colon V\rightarrow W^*(V)_\sa$ such that $\psi(V)$ generates $W^*(V)$ as a von Neumann algebra and if $\mathfrak{B}$ is a von Neumann algebra with a normal Jordan homomorphism $\phi\colon V\rightarrow \mathfrak{B}_\sa$, then there is a unique normal $*$-homomorphism $\hat{\phi}\colon W^*(V)\rightarrow \mathfrak{B}$ such that $\hat{\phi}\circ \psi = \phi$.
\end{theorem}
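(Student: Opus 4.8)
The plan is to build $W^*(V)$ in two stages, first passing through the universal C$^*$-algebra of $V$ and then carving out its normal part. First I would recall (or reprove) the analogous statement one level down: every JB-algebra $V$ admits a universal C$^*$-algebra $C^*(V)$ together with a Jordan homomorphism $\psi_0\colon V\rightarrow C^*(V)_{\sa}$ that is universal among Jordan homomorphisms into self-adjoint parts of C$^*$-algebras. The key input here is that Jordan homomorphisms out of a JB-algebra are automatically norm-contractive, so the Jordan representations of $V$ on Hilbert spaces form a set up to equivalence; taking their direct sum (a ``universal representation'') and letting $C^*(V)$ be the C$^*$-algebra it generates yields the desired object, with its universal property immediate from the construction.

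Next I would pass to von Neumann algebras by taking the enveloping von Neumann algebra $C^*(V)^{**}$ (the bidual), which is a von Neumann algebra, and composing to get a Jordan homomorphism $V\rightarrow C^*(V)^{**}_{\sa}$. This map is not normal, so the main step is to isolate its normal part. Since $V$ is a JBW-algebra it is the dual of its predual and so carries a $\sigma$-weak topology; I would show there is a largest central projection $c\in C^*(V)^{**}$ such that $x\mapsto c\,\psi_0(x)^{**}$ is $\sigma$-weakly continuous, and then set $W^*(V):=c\,C^*(V)^{**}$ with $\psi(x):=c\,\psi_0(x)^{**}$. By construction $\psi$ is a normal Jordan homomorphism; that $\psi(V)$ generates $W^*(V)$ as a von Neumann algebra follows because $\psi_0(V)$ generates $C^*(V)$ and multiplication by the central projection $c$ is $\sigma$-weakly continuous.

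For the universal property, given a von Neumann algebra $\mathfrak{B}$ and a normal Jordan homomorphism $\phi\colon V\rightarrow \mathfrak{B}_{\sa}$, I would first apply the C$^*$-universal property to obtain a unique $*$-homomorphism $\tilde\phi\colon C^*(V)\rightarrow\mathfrak{B}$ with $\tilde\phi\circ\psi_0=\phi$. Taking biduals gives a normal $*$-homomorphism $\tilde\phi^{**}\colon C^*(V)^{**}\rightarrow\mathfrak{B}^{**}$, which I would compose with the canonical normal surjection $\mathfrak{B}^{**}\rightarrow\mathfrak{B}$. Normality of $\phi$ forces this composite to annihilate $c^\perp$ and hence to factor through $W^*(V)=c\,C^*(V)^{**}$ as a normal $*$-homomorphism $\hat\phi$ with $\hat\phi\circ\psi=\phi$; uniqueness is immediate since $\psi(V)$ generates $W^*(V)$ and two normal $*$-homomorphisms agreeing on a generating set agree. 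Uniqueness of the pair $(W^*(V),\psi)$ up to isomorphism is then the usual formal consequence of the universal property.

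The main obstacle I anticipate is the identification of the central projection $c$ carving out the normal summand: one must match the $\sigma$-weak topology coming from the predual of the JBW-algebra $V$ with the weak-$*$ topology on the bidual $C^*(V)^{**}$, and verify that $c$ is genuinely central and maximal with the required continuity property. Establishing that multiplication by $c$ is compatible with both the Jordan and associative structure --- so that $\psi$ lands in the self-adjoint part and the universal property restricts cleanly to the normal summand --- is the delicate heart of the argument; the remaining bidual functoriality and the generation argument are comparatively routine.
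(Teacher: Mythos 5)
The paper offers no proof of this statement at all---it is imported directly from the cited reference (Hanche-Olsen and St\o{}rmer, Theorem~7.1.9)---and your outline reconstructs essentially that reference's own argument: the universal C$^*$-algebra $C^*(V)$ (with the cardinality issue for representations handled by the contractivity of Jordan homomorphisms, as you say), passage to the bidual $C^*(V)^{**}$, a central projection cutting out the normal summand, and the universal property obtained by bidualising the induced $*$-homomorphism $\tilde\phi$ and composing with the normal projection $\mathfrak{B}^{**}\rightarrow\mathfrak{B}$. The one step you leave open, the identification of $c$, is discharged in the standard treatment by setting $c=\psi_0^{**}(z)$, where $z$ is the central projection of the JBW-algebra $V^{**}$ with $zV^{**}\cong V$ (the normal/singular decomposition of the bidual): centrality of $c$ in $C^*(V)^{**}$ follows because operator commutativity with a central projection is an equational property preserved by Jordan homomorphisms and $\psi_0(V)$ generates $C^*(V)^{**}$, the normality of $\psi(x)=\psi_0^{**}(z\hat x)$ is inherited from the weak$^*$-continuity of $\psi_0^{**}$, and the composite $C^*(V)^{**}\rightarrow\mathfrak{B}$ annihilates $c^{\perp}=\psi_0^{**}(z^{\perp})$ precisely because normality of $\phi$ makes $\phi^{**}=\tilde\phi^{**}\circ\psi_0^{**}$ vanish on $z^{\perp}V^{**}$ after projecting to $\mathfrak{B}$---so no separate maximality argument for $c$ is actually needed, and your proposal, completed this way, is correct.
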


\begin{corollary}\label{cor:JW-injective}
    A JBW-algebra $V$ is a JW-algebra if and only if $\psi\colon V\rightarrow W^*(V)$ is injective.
\end{corollary}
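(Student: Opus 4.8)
The plan is to prove the two implications separately, using in the forward direction the universal property of $W^*(V)$ stated above and in the converse direction the decomposition of Theorem~\ref{thm:JBW-decomposition} together with the definition of a purely exceptional algebra. In this way only results already available in the excerpt are needed, and no facts about ultraweak closures of images of Jordan homomorphisms have to be established from scratch.

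For the forward direction, suppose $V$ is a JW-algebra. By definition there is a von Neumann algebra $\mathfrak{B}$ and a Jordan isomorphism $\phi$ of $V$ onto an ultraweakly closed Jordan subalgebra of $\mathfrak{B}_{\sa}$. Viewing $\phi$ as a map $V\to\mathfrak{B}_{\sa}$ it is injective, and I would first verify that it is \emph{normal}: a Jordan isomorphism is an order isomorphism, hence preserves suprema of bounded directed sets, and because its image is ultraweakly closed these suprema agree with the ones computed in $\mathfrak{B}_{\sa}$. Thus $\phi$ is a normal Jordan homomorphism, and the universal property yields a unique normal $*$-homomorphism $\hat{\phi}\colon W^*(V)\to\mathfrak{B}$ with $\hat{\phi}\circ\psi=\phi$. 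Since $\phi$ is injective, so is $\psi$.

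For the converse, suppose $\psi$ is injective. Applying Theorem~\ref{thm:JBW-decomposition} to $V$, write $V=V_{\text{sp}}\oplus V_{\text{ex}}$ with $V_{\text{sp}}$ a JW-algebra and $V_{\text{ex}}$ purely exceptional; it then suffices to show $V_{\text{ex}}=0$. The composite of the inclusion $V_{\text{ex}}\hookrightarrow V$ with $\psi$ is a Jordan homomorphism from $V_{\text{ex}}$ into $W^*(V)_{\sa}$, that is, into the self-adjoint part of a C$^*$-algebra; by the defining property of a purely exceptional algebra this composite must be zero. Hence $V_{\text{ex}}\subseteq\ker\psi=\{0\}$, so $V_{\text{ex}}=0$ and $V=V_{\text{sp}}$ is a JW-algebra.

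The two genuinely necessary, non-formal checks are the compatibility points at the seams of each argument: in the forward direction, that the embedding witnessing the JW-property is automatically normal, so that the universal property applies at all; and in the converse, that the map out of $V_{\text{ex}}$ is a legitimate (possibly non-unital) Jordan homomorphism landing in the self-adjoint part of a C$^*$-algebra, so that pure exceptionality genuinely forces it to vanish. I expect the normality verification in the forward direction to be the most delicate point, since it rests on the suprema in the ultraweakly closed image coinciding with those of the ambient von Neumann algebra; once that is settled, both halves reduce to a direct application of the universal property and of the structure theorem for JBW-algebras.
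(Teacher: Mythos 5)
Your proposal is correct, and your forward direction (JW $\Rightarrow$ $\psi$ injective) is essentially the paper's argument: produce an injective normal Jordan homomorphism $\phi\colon V\to\mathfrak{B}_{\sa}$, factor it as $\hat{\phi}\circ\psi=\phi$ through the universal property, and conclude injectivity of $\psi$. You are in fact more careful than the paper here, since you verify that the embedding witnessing the JW-property is normal (via the order-isomorphism onto an ultraweakly closed subalgebra, in which bounded directed suprema agree with those of the ambient algebra); the paper simply asserts the existence of an injective \emph{normal} Jordan homomorphism. Where you genuinely diverge is the converse. The paper disposes of it in one line --- ``if $\psi$ is injective, then $V$ is of course a JW-algebra'' --- which implicitly invokes the standard but unstated fact that the image of a JBW-algebra under an injective normal Jordan homomorphism into a von Neumann algebra is ultraweakly closed, as the definition of a JW-algebra in the paper demands. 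Your route instead applies Theorem~\ref{thm:JBW-decomposition} to write $V=V_{\text{sp}}\oplus V_{\text{ex}}$, observes that $\psi$ restricted to $V_{\text{ex}}$ is a Jordan homomorphism into the self-adjoint part of a C$^*$-algebra and hence zero by pure exceptionality, and concludes $V_{\text{ex}}=0$ from injectivity. This buys self-containedness: every ingredient is explicitly stated in the paper, and no claim about ultraweak closures of images needs to be established. The trade-off is that your argument leans on the (deep) decomposition theorem to prove a direction the paper treats as immediate, whereas the paper's version is shorter but rests on background JBW theory left implicit. Both arguments are valid.
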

\begin{proof}
    If $\psi$ is injective, then $V$ is of course a JW-algebra. Conversely, if $V$ is a JW-algebra, then there must be an injective normal Jordan homomorphism $\phi\colon V\rightarrow \mathfrak{B}_\sa$ for some von Neumann algebra $\mathfrak{B}$, and hence by the universal property of $W^*(V)$, $\hat{\phi}\circ \psi = \phi$, which shows that $\psi$ must be injective.
\end{proof}

\begin{definition}
    Let $V$ be a JBW-algebra. We call $s\in V$ a \Define{symmetry} when $s^2 = 1$. Two idempotents $p,q \in V$ are \Define{exchangeable by a symmetry} if there exists a symmetry $s$ such that $Q_s p = q$.
\end{definition}

\begin{lemma}[{\cite[Lemma~4.4]{alfsen2012geometry}}]\label{lem:exchangeable-by-4-is-JW}
    Let $V$ be a JBW-algebra where the identity is the sum of at least 4 idempotents that are mutually exchangeable by a symmetry. Then $V$ is a JW-algebra.
\end{lemma}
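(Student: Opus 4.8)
The plan is to rule out any purely exceptional summand of $V$ and then appeal to the canonical decomposition of Theorem~\ref{thm:JBW-decomposition}. Writing $V = A_{\text{sp}}\oplus A_{\text{ex}}$ with $A_{\text{sp}}$ a JW-algebra and $A_{\text{ex}}$ purely exceptional, it suffices to show that the hypothesis forces $A_{\text{ex}} = 0$; then $V = A_{\text{sp}}$ is a JW-algebra by definition.

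First I would unpack the hypothesis. Write $1 = p_1 + \cdots + p_n$ with $n \geq 4$, each $p_i$ an idempotent, and for each pair $i \neq j$ a symmetry $s_{ij}$ with $Q_{s_{ij}} p_i = p_j$. Since the $p_i$ are positive with sum $1$, we have $p_i + p_j \leq 1$, \ie~$p_i \leq p_j^\perp$, so the $p_i$ are mutually orthogonal. Moreover, for a symmetry $s$ the quadratic map $Q_s$ is an involutive Jordan automorphism of $V$, a standard fact I would cite. Because the decomposition $V = A_{\text{sp}}\oplus A_{\text{ex}}$ is unique (Theorem~\ref{thm:JBW-decomposition}) it is invariant under every Jordan automorphism, so each $Q_{s_{ij}}$ restricts to an automorphism of $A_{\text{ex}}$.

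Next I would transport the problem into the fibres. By Theorem~\ref{thm:purely-exceptional-char}, $A_{\text{ex}} \cong C(X, M_3(\mathbb{O}))$ for a hyperstonean $X$, with the Jordan product computed pointwise. Let $p_i'$ and $s_{ij}'$ be the components of $p_i$ and $s_{ij}$ in $A_{\text{ex}}$, so that $\sum_i p_i' = 1_{A_{\text{ex}}}$ and $Q_{s_{ij}'} p_i' = p_j'$. The key observation is that, since the Jordan product is pointwise, so is the quadratic map: $(Q_{s_{ij}'} f)(x) = Q_{s_{ij}'(x)} f(x)$, where each $s_{ij}'(x)$ is a symmetry of the fibre $M_3(\mathbb{O})$. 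Hence at every point $x \in X$ the idempotents $p_1'(x), \ldots, p_n'(x)$ are carried into one another by Jordan automorphisms of $M_3(\mathbb{O})$; as such automorphisms preserve the rank of idempotents, all the $p_i'(x)$ share a common rank $\rho(x)$. Since $M_3(\mathbb{O})$ has rank $3$ and the $p_i'(x)$ are mutually orthogonal with $\sum_i p_i'(x) = I_3$, their ranks sum to $3$, giving $n\,\rho(x) = 3$. With $n \geq 4$ this forces $\rho(x) = 0$ for every $x$, so each $p_i' = 0$ and therefore $1_{A_{\text{ex}}} = \sum_i p_i' = 0$, \ie~$A_{\text{ex}} = 0$.

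The step I would be most careful about is precisely the fibrewise reduction: it is crucial that a symmetry in $C(X,M_3(\mathbb{O}))$ induces, through $Q_{s'}$, a \emph{pointwise} Jordan automorphism that does not permute the base space, since this is what turns ``exchangeable by a symmetry'' into the fibrewise rank equality $\rho_i(x) = \rho_j(x)$ on which the counting rests. This locality is exactly the locality of the Jordan product on $C(X,M_3(\mathbb{O}))$, which I would state explicitly. I note that the classical route of Alfsen and Shultz avoids the structure theorem and instead coordinatises $V$: a frame of $n \geq 3$ connected idempotents realises $V$ as Hermitian $n\times n$ matrices over an alternative $*$-algebra, and for $n \geq 4$ a Peirce-calculus argument forces the coordinate algebra to be associative, whence $V$ is special; there the delicate point is the associativity-from-$n\geq 4$ computation rather than the fibrewise rank count.
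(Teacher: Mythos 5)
Your argument is correct as a derivation \emph{within this paper}, but note that the paper itself offers no proof: the lemma is quoted verbatim from \cite[Lemma~4.4]{alfsen2012geometry}. What you do is run the logic in the opposite direction from the cited source, deducing the lemma from the two structure theorems the paper already states as background (Theorem~\ref{thm:JBW-decomposition} and Theorem~\ref{thm:purely-exceptional-char}). Granting those as black boxes, the steps check out: orthogonality of the $p_i$ from $p_i \leq p_j^\perp$, the standard fact that $Q_s$ is an involutive Jordan automorphism for a symmetry $s$, invariance of the canonical special/exceptional splitting under automorphisms, the fibrewise action of quadratic maps on $C(X, M_3(\mathbb{O}))$, rank-preservation under automorphisms of the rank-$3$ Albert algebra, and additivity of rank over orthogonal idempotents summing to the unit. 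Alfsen and Shultz instead prove the lemma from scratch by coordinatisation ($V \cong H_n(R)$ with $R$ alternative, forced associative for $n \geq 4$, whence $V$ is special and thus a JW-algebra), as you correctly sketch in your closing remark. What your route buys is economy inside this paper --- given the quoted theorems, the lemma falls out in half a page; what the classical route buys is logical independence.

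Two caveats. First, a slip in the final count: $n\,\rho(x) = 3$ with $n \geq 4$ does not ``force $\rho(x) = 0$'' --- that value fails the equation too. The equation simply has no solution in nonnegative integers, so no point $x$ can exist; the clean conclusion is $X = \emptyset$, hence $A_{\text{ex}} = C(\emptyset, M_3(\mathbb{O})) = \{0\}$. Your conclusion survives (your $\rho(x)=0$ branch also yields $I_3 = 0$ in a nonzero fibre, a contradiction), but the sentence should be rephrased. Second, and more structurally: in the literature the implication ``frame of $n \geq 4$ exchangeable idempotents $\Rightarrow$ special'' is itself an ingredient in the proofs of the decomposition theorem \cite[Theorem~7.2.7]{hanche1984jordan} and of Shultz's characterisation of the purely exceptional part (which is exactly what survives of type $\mathrm{I}_3$ after the $n \geq 4$ coordinatisation argument). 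So once the citations are unwound, your proof is circular as a free-standing piece of mathematics, even though it is a legitimate and tidy derivation relative to the background this paper takes as given. It would be worth flagging that dependence explicitly if this argument were ever to replace the citation.
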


\begin{lemma}\label{lem:symmetry-exceptional}
    Let $V\neq \{0\}$ be a purely exceptional JBW-algebra. Then the identity of $V$ is the sum of 3 orthogonal non-zero idempotents exchangeable by a symmetry.
\end{lemma}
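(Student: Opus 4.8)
The plan is to make the statement completely concrete by invoking the classification of purely exceptional JBW-algebras. By Theorem~\ref{thm:purely-exceptional-char} there is a hyperstonean space $X$ with $V \cong C(X, M_3(\mathbb{O}))$, the algebra of continuous functions into the exceptional Albert algebra $M_3(\mathbb{O})_{\sa}$ of self-adjoint octonionic $3\times 3$ matrices, with pointwise Jordan product. Since $V \neq \{0\}$ the space $X$ is non-empty, so it suffices to exhibit the required decomposition of the unit inside the fibre $M_3(\mathbb{O})$ using \emph{constant} functions: the Jordan product, and hence the quadratic product, on $C(X,M_3(\mathbb{O}))$ is computed pointwise, so every identity between constant functions descends from a single identity in $M_3(\mathbb{O})_{\sa}$.

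Concretely, I would take $p_i$ to be the constant function with value the diagonal matrix unit $E_{ii}$ for $i = 1,2,3$. These are non-zero idempotents with $p_1 + p_2 + p_3 = 1$, and they are mutually orthogonal since $E_{ii} * E_{jj} = 0$ for $i \neq j$. For the symmetries I would take the constant functions $s_{ij}$ valued in the permutation matrix that swaps coordinates $i$ and $j$ and fixes the third; each such matrix is self-adjoint with real entries and satisfies $s_{ij}^2 = 1$, hence defines a symmetry in the sense of the definition preceding the lemma. It then remains only to check the exchange relation $Q_{s_{ij}} p_i = p_j$ for each pair $i \neq j$.

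The main obstacle is precisely this last computation, since $M_3(\mathbb{O})$ is exceptional and so I \emph{cannot} evaluate $Q_{s_{ij}} p_i$ as an associative triple product $s_{ij}\,E_{ii}\,s_{ij}$ in any enveloping algebra. The way around this is to observe that the matrices $E_{11}, E_{22}, E_{33}$ and the permutation matrices $s_{ij}$ all have real entries, and therefore lie in the Jordan subalgebra $H_3(\R) = M_3(\R)_{\sa}$ of symmetric real $3\times 3$ matrices sitting inside $M_3(\mathbb{O})_{\sa}$. This subalgebra is \emph{special}: it embeds into the associative algebra $M_3(\R)$, so on it the Jordan quadratic product is the ordinary $Q_a b = a b a$. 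Because the quadratic product is defined purely in terms of the Jordan product, it agrees whether computed in the subalgebra $H_3(\R)$ or in the ambient $M_3(\mathbb{O})_{\sa}$, and the elementary matrix identity $s_{ij} E_{ii} s_{ij} = E_{jj}$ then yields $Q_{s_{ij}} p_i = p_j$. This exhibits $1 = p_1 + p_2 + p_3$ as a sum of three orthogonal non-zero idempotents that are pairwise exchangeable by symmetries, which is exactly the claim.
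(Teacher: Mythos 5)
Your proof is correct and follows the same overall skeleton as the paper's: both start from Theorem~\ref{thm:purely-exceptional-char} to write $V \cong C(X, M_3(\mathbb{O})_{\sa})$ with $X \neq \emptyset$, and both lift a fibre-level decomposition of $1_{M_3(\mathbb{O})_{\sa}}$ to $V$ via constant functions. The one genuine difference is how the fibre-level fact is obtained: the paper cites \cite[Theorem 2.8.3]{hanche1984jordan} for the existence, in the type I$_3$ JBW-factor $M_3(\mathbb{O})_{\sa}$, of three orthogonal non-zero idempotents summing to the unit and mutually exchangeable by symmetries, whereas you construct them explicitly as the diagonal matrix units $E_{ii}$ together with the transposition matrices $s_{ij}$. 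Your workaround for the exceptionality obstruction is sound: the quadratic product $Q_a b = 2a*(a*b) - (a*a)*b$ is defined purely in terms of the Jordan product, hence is computed the same way in the Jordan subalgebra $H_3(\R) \subseteq M_3(\mathbb{O})_{\sa}$ of real symmetric matrices as in the ambient algebra, and in that special subalgebra it is the associative triple product, giving $Q_{s_{ij}} E_{ii} = s_{ij} E_{ii} s_{ij} = E_{jj}$ (one can also just expand $2s*(s*E_{ii}) - s^2 * E_{ii}$ directly, since all entries are real and so all relevant matrix products associate). What your route buys is self-containedness --- an elementary two-line computation in place of a textbook citation --- at the cost of a slightly longer write-up; the lifting step and the use of $V \neq \{0\}$ to guarantee the constant functions are non-zero are identical in both proofs.
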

\begin{proof}
    By Theorem~\ref{thm:purely-exceptional-char} we can write $V=C(X,E)$ where $E=M_3(\mathbb{O})_{\sa}$ for some hyperstonean space $X$. As $X$ is a type I$_3$ JBW-factor there exist orthogonal non-zero idempotents $q_1,q_2,q_3\in E$ mutually exchangeable by a symmetry such that $q_1+q_2+q_3 = 1_E$~\cite[Theorem 2.8.3]{hanche1984jordan}. Let $s_{ij}\in E$ for $i,j\in\{1,2,3\}$ be symmetries so that $Q_{s_{ij}} q_i = q_j$. Define then $f_i\colon X\rightarrow E$ as the constant function $f_i(x) = q_i$, and similarly $g_{ij}\colon X\rightarrow E$ by $g_{ij}(x) = s_{ij}$. Then indeed for every $x\in X\colon (Q_{g_{ij}} f_i)(x) = Q_{s_{ij}} q_i = q_j = f_j(x)$.
\end{proof}

\begin{lemma}\label{lem:tensor-symmetry}
  Let $p_1,q_1\in V_A$ be idempotents exchangeable by a symmetry $s_1\in V_A$, and let $p_2,q_2\in V_B$ be idempotents exchangeable by a symmetry $s_2\in V_B$. Then $p_1\otimes p_2$ and $q_1\otimes q_2$ are idempotents exchangeable by $s_1\otimes s_2$.
\end{lemma}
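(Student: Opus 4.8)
The plan is to reduce everything to Proposition~\ref{prop:tensor-quadratic}, the identity $Q_{a\otimes b}=Q_a\otimes Q_b$, together with the compatibility of the two meanings of $\otimes$: the bilinear map on elements $V_A\times V_B\to V_{A\otimes B}$ and the tensor of maps in the effectus. There are exactly three things to verify: that $p_1\otimes p_2$ and $q_1\otimes q_2$ are idempotents, that $s_1\otimes s_2$ is a symmetry, and that $Q_{s_1\otimes s_2}(p_1\otimes p_2)=q_1\otimes q_2$. All three follow from a single auxiliary computation of Jordan squares, so I would prove that first.

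First I would record the general fact that $\otimes$ is multiplicative on Jordan squares, namely $(a\otimes b)^2=a^2\otimes b^2$ for \emph{arbitrary} $a\in V_A$, $b\in V_B$. Recall that in any Jordan algebra the Jordan square is recovered from the quadratic product by $c^2=Q_c 1$. Since the tensor unit satisfies $1_{A\otimes B}=\truth_{A\otimes B}=\truth_A\otimes\truth_B=1_A\otimes 1_B$, Proposition~\ref{prop:tensor-quadratic} gives
\[
(a\otimes b)^2 \;=\; Q_{a\otimes b}(1_A\otimes 1_B) \;=\; (Q_a\otimes Q_b)(1_A\otimes 1_B) \;=\; (Q_a 1_A)\otimes(Q_b 1_B) \;=\; a^2\otimes b^2.
\]
(This also re-derives, for all elements rather than just effects, the identity already obtained for effects in Proposition~\ref{prop:JBW-tensor-preserves-assert}.) From this the first two claims are immediate: $(p_i\otimes p_j)^2=p_i^2\otimes p_j^2=p_i\otimes p_j$ since the $p_i,q_i$ are idempotent (alternatively one may cite Corollary~\ref{cor:tensor-idempotents} directly), and $(s_1\otimes s_2)^2=s_1^2\otimes s_2^2=1_A\otimes 1_B=1_{A\otimes B}$ since $s_1,s_2$ are symmetries, so $s_1\otimes s_2$ is a symmetry.

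For the exchangeability itself I would again apply Proposition~\ref{prop:tensor-quadratic} and the bifunctoriality of $\otimes$, computing
\[
Q_{s_1\otimes s_2}(p_1\otimes p_2) \;=\; (Q_{s_1}\otimes Q_{s_2})(p_1\otimes p_2) \;=\; (Q_{s_1}p_1)\otimes(Q_{s_2}p_2) \;=\; q_1\otimes q_2,
\]
using $Q_{s_1}p_1=q_1$ and $Q_{s_2}p_2=q_2$ from the hypothesis that each pair is exchangeable by the respective symmetry. The only genuinely load-bearing point, and the one I would be most careful to state explicitly, is the step $(f\otimes g)(x\otimes y)=f(x)\otimes g(y)$ relating the tensor of the linear maps $Q_{s_1},Q_{s_2}$ to the bilinear $\otimes$ on elements; this is just the (bi)functoriality of the monoidal structure and is already used implicitly in the proof of Proposition~\ref{prop:JBW-tensor-preserves-assert}, so it needs no new input. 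With these three verifications in place, $p_1\otimes p_2$ and $q_1\otimes q_2$ are idempotents exchanged by the symmetry $s_1\otimes s_2$, as required.
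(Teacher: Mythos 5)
Your proposal is correct and follows essentially the same route as the paper: the paper likewise cites Corollary~\ref{cor:tensor-idempotents} for idempotency and uses Proposition~\ref{prop:tensor-quadratic} for the identical computations $(s_1\otimes s_2)^2 = Q_{s_1\otimes s_2}1 = s_1^2\otimes s_2^2 = 1$ and $Q_{s_1\otimes s_2}(p_1\otimes p_2) = (Q_{s_1}p_1)\otimes(Q_{s_2}p_2) = q_1\otimes q_2$. Your preliminary observation that $(a\otimes b)^2 = a^2\otimes b^2$ for arbitrary elements is a harmless (and correct) reorganization, not a different argument.
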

\begin{proof}
  That $p_1\otimes p_2$ and $q_1\otimes q_2$ are idempotents follows by Corollary~\ref{cor:tensor-idempotents}. That $s_1\otimes s_2$ is a symmetry follows by Proposition~\ref{prop:tensor-quadratic}, because $(s_1\otimes s_2)^2 = Q_{s_1\otimes s_2} 1 = (Q_{s_1}\otimes Q_{s_2})(1\otimes 1) = s_1^2\otimes s_2^2 = 1\otimes 1 = 1$. By the same proposition: $Q_{s_1\otimes s_2} (p_1\otimes p_2) = (Q_{s_1}\otimes Q_{s_2})(p_1\otimes p_2) = (Q_{s_1}p_1)\otimes (Q_{s_2}p_2) = q_1\otimes q_2$.
\end{proof}

\begin{proposition}
    $V_A$ is a JW-algebra.
\end{proposition}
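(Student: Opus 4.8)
The plan is to reduce the statement to the vanishing of an exceptional summand and then contradict the existence of the tensor product. By Theorem~\ref{thm:JBW-decomposition} I would write $V_A = V_{\mathrm{sp}}\oplus V_{\mathrm{ex}}$ with $V_{\mathrm{sp}}$ a JW-algebra and $V_{\mathrm{ex}}$ purely exceptional; since the direct sum of a JW-algebra with $\{0\}$ is again a JW-algebra, it suffices to prove $V_{\mathrm{ex}}=\{0\}$. So I would assume $V_{\mathrm{ex}}\neq\{0\}$ and set $e:=1_{V_{\mathrm{ex}}}$, a central idempotent of $V_A$. By Lemma~\ref{lem:symmetry-exceptional} we obtain $e = p_1+p_2+p_3$ with the $p_i$ orthogonal nonzero idempotents pairwise exchangeable by symmetries $s_{ik}$ of $V_{\mathrm{ex}}$, so that $s_{ik}^2 = e$ and $Q_{s_{ik}}p_i = p_k$.

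Next I would pass to the self-tensor $A\otimes A$ and work inside the corner $V_{e\otimes e} := Q_{e\otimes e}(V_{A\otimes A})$, whose unit is $e\otimes e$. Consider the nine elements $p_i\otimes p_j$. By Corollary~\ref{cor:tensor-idempotents} each is sharp, and biadditivity of $\otimes$ gives $\bigovee_{i,j}(p_i\otimes p_j) = (\bigovee_i p_i)\otimes(\bigovee_j p_j) = e\otimes e$; as this is a genuine $\ovee$-sum of sharp effects, the $p_i\otimes p_j$ are pairwise orthogonal and sum to the unit of the corner. The computation behind Lemma~\ref{lem:tensor-symmetry}, namely $Q_{s\otimes t} = Q_s\otimes Q_t$ from Proposition~\ref{prop:tensor-quadratic}, then shows that $s_{ik}\otimes s_{jl}$ is a symmetry of $V_{e\otimes e}$ (its square is $s_{ik}^2\otimes s_{jl}^2 = e\otimes e$) exchanging $p_i\otimes p_j$ with $p_k\otimes p_l$; the cases with a repeated index are handled by using $e$ as a trivial symmetry in the relevant factor. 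Thus the unit of $V_{e\otimes e}$ is a sum of nine mutually exchangeable idempotents, and since $9\geq 4$, Lemma~\ref{lem:exchangeable-by-4-is-JW} forces $V_{e\otimes e}$ to be a JW-algebra.

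Finally I would embed $V_{\mathrm{ex}}$ into this JW-corner and invoke pure exceptionality. The assignment $a\mapsto a\otimes e$ lands in $V_{e\otimes e}$ for $a\in V_{\mathrm{ex}}$, because $a = Q_e a$ and hence $Q_{e\otimes e}(a\otimes e) = (Q_e\otimes Q_e)(a\otimes e) = a\otimes e$ by Proposition~\ref{prop:tensor-quadratic}; it is injective by the state-separation argument already used in Proposition~\ref{prop:tensor-is-Jordan-hom}. Composing the resulting injective Jordan homomorphism $V_{\mathrm{ex}}\to V_{e\otimes e}$ with the embedding of the JW-algebra $V_{e\otimes e}$ into the self-adjoint part $\mathfrak{B}_{\sa}$ of a von Neumann algebra (Corollary~\ref{cor:JW-injective}) produces a nonzero Jordan homomorphism $V_{\mathrm{ex}}\to\mathfrak{B}_{\sa}$, contradicting that $V_{\mathrm{ex}}$ is purely exceptional and nonzero. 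Therefore $V_{\mathrm{ex}}=\{0\}$ and $V_A$ is a JW-algebra.

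The step I expect to require the most care is verifying that $a\mapsto a\otimes e$ is genuinely a Jordan homomorphism, since the compression $Q_{e\otimes e}$ is not itself multiplicative. Rather than wrestle with $Q_{e\otimes e}$, I would use the polarization identity $a_1*a_2 = Q_{a_1,a_2}(1)$ together with the intermediate bilinear identity $Q_{a_1\otimes b,\,a_2\otimes b} = Q_{a_1,a_2}\otimes Q_b$ extracted from the proof of Proposition~\ref{prop:tensor-quadratic}; evaluating at $1\otimes 1$ with $b=e$ yields $(a_1\otimes e)*(a_2\otimes e) = (a_1*a_2)\otimes e$ at once. A secondary subtlety, easy to overlook, is that the exchanging symmetries $s_{ik}$ square to $e$ rather than to $1_{V_A}$, so the exchangeability of the $p_i\otimes p_j$ must be asserted inside the corner $V_{e\otimes e}$ (where $e\otimes e$ is the unit) rather than in the ambient algebra $V_{A\otimes A}$.
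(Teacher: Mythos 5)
Your proof is correct, and it follows the same overall strategy as the paper---decompose $V_A = V_{\mathrm{sp}}\oplus V_{\mathrm{ex}}$ via Theorem~\ref{thm:JBW-decomposition}, tensor the three exchangeable idempotents of Lemma~\ref{lem:symmetry-exceptional} with themselves to obtain nine mutually exchangeable ones, invoke Lemma~\ref{lem:exchangeable-by-4-is-JW}, and contradict pure exceptionality through Corollary~\ref{cor:JW-injective}---but you implement the localization step differently. The paper passes to the comprehension object $A_p$ for the central idempotent $p$ corresponding to $V_{\mathrm{ex}}$, so that $\Pred(A_p)\cong[0,1]_{V_{\mathrm{ex}}}$ is the full predicate space of a genuine object of $\catC$; then the tensor lemmas (Corollary~\ref{cor:tensor-idempotents}, Lemma~\ref{lem:tensor-symmetry}, Proposition~\ref{prop:tensor-is-Jordan-hom}) apply verbatim to $A_p\otimes A_p$, since the symmetries square to the honest unit $\truth$ of $V_{A_p}$ and the embedding is the already-established $a\mapsto a\otimes\truth$. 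You instead stay in the ambient algebra $V_{A\otimes A}$ and work inside the Peirce corner $Q_{e\otimes e}(V_{A\otimes A})$, which obliges you to re-derive corner versions of these facts: that $s_{ik}\otimes s_{jl}$ is a symmetry \emph{of the corner} (squaring to $e\otimes e$), and that $a\mapsto a\otimes e$ is a Jordan homomorphism, which you obtain neatly from the polarized identity $Q_{a_1\otimes b,\,a_2\otimes b}=Q_{a_1,a_2}\otimes Q_b$ (itself a formal consequence of Proposition~\ref{prop:tensor-quadratic} by bilinearity of the triple product) evaluated at $1\otimes 1$ with $b=e$. Both subtleties you flag are real, and you handle them correctly; your route buys independence from the comprehension machinery at the cost of extra Jordan-algebraic bookkeeping, while the paper's route makes every tensor lemma directly applicable. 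The one step you use silently is that the corner $Q_{e\otimes e}(V_{A\otimes A})$ is itself a JBW-algebra (so that Lemma~\ref{lem:exchangeable-by-4-is-JW} and Corollary~\ref{cor:JW-injective} apply to it); this is a standard fact---the Peirce-$1$ part of an idempotent is a weak$^*$-closed subalgebra---but in a self-contained write-up it deserves a citation, and it is precisely the fact the paper's comprehension detour renders unnecessary.
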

\begin{proof}
    Since $V_A$ is a JBW-algebra we can write $V_A = V_1\oplus V_2$ where $V_1$ is a JW-algebra and $V_2$ is purely exceptional (Theorem~\ref{thm:JBW-decomposition}). We need to show that $V_2 =\{0\}$. Towards contradiction, suppose that $V_2\neq \{0\}$.

    Let $p\in V_A$ be the central idempotent corresponding to $V_2$. Then we have an object $A_p$ in $\catC$, by considering the comprehension $\pi_p$, so that $\Pred(A_p)\cong [0,1]_{V_2}$. Let $q_1,q_2,q_3$ be a set of idempotents in $A_p$ exchangeable by symmetries $s_{ij}$ for $i,j\in \{1,2,3\}$, which exists by Lemma~\ref{lem:symmetry-exceptional}. Consider the system $A_p\otimes A_p$. By Lemma~\ref{lem:tensor-symmetry} $s_{ik}\otimes s_{jl}$ is a symmetry for all $i,k,j,l\in \{1,2,3\}$. This set of symmetries makes all nine idempotents $\{q_i\otimes q_j~;~i,j\in\{1,2,3\}\}$ in $A_p\otimes A_p$ mutually exchangeable by a symmetry.

  Hence, by Lemma~\ref{lem:exchangeable-by-4-is-JW}, $V_{A_p\otimes A_p}$ must be a JW-algebra. So then $V_{A_p\otimes A_p}$ embeds into $W^*(V_{A_p\otimes A_p})$ via an injective Jordan homomorphism (Corollary~\ref{cor:JW-injective}). 
  But we also have an injective Jordan homomorphism from $V_2$ to $V_{A_p\otimes A_p}$ given by $a\mapsto a\otimes 1$ (Proposition~\ref{prop:tensor-is-Jordan-hom}). Hence, $V_2$ embeds into $W^*(V_{A_p\otimes A_p})$. This contradicts the fact that $V_2$ is purely exceptional, so that we indeed must have had $V_2=\{0\}$.
\end{proof}

Let us denote by $\textbf{JW}_{\text{npc}}$ the full subcategory of $\textbf{JBW}_{\text{npc}}$ consisting of the JW-algebras. Combining what we have seen before we then get the following theorem.

\begin{theorem}\label{thm:JW-algebra}
  Let $\catC$ be a monoidal sequential effectus with irreducible scalars not equal to $\{0,1\}$. Then there is a functor $F\colon \catC\rightarrow \textbf{JW}_{\text{npc}}^\opp$ satisfying $F(\Pred(A))\cong [0,1]_{F(A)}$. This functor is faithful if and only if $\catC$ is separated by predicates.
\end{theorem}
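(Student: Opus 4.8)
The plan is to assemble the theorem from Theorem~\ref{thm:JBW-CBA} together with the proposition just established that each $V_A$ is a JW-algebra; there is no genuinely new content to prove here. First I would dispose of the degenerate possibilities for the scalars. Since $\catC$ has irreducible scalars, $\Pred(I)$ is one of $\{0\}$, $\{0,1\}$, or $[0,1]$, and $\{0,1\}$ is excluded by hypothesis. If $\Pred(I)\cong\{0\}$ then $\catC$ is equivalent to the trivial one-object category and the functor is immediate, so the only substantive case is $\Pred(I)\cong [0,1]$, which is precisely the running assumption of this section.

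In that case Theorem~\ref{thm:JBW-CBA} already supplies a predicate functor $\Pred\colon \catC\rightarrow \textbf{JBW}_{\text{npc}}^\opp$ with $\Pred(A)\cong[0,1]_{V_A}$ for the JBW-algebra $V_A$, and records that it is faithful exactly when $\catC$ is separated by predicates. The only new input is that the codomain can be narrowed from $\textbf{JBW}_{\text{npc}}$ to its full subcategory $\textbf{JW}_{\text{npc}}$, and this is exactly what the preceding proposition delivers: under the monoidal assumptions each $V_A$ is a JW-algebra rather than a general JBW-algebra. Because $\textbf{JW}_{\text{npc}}$ is a \emph{full} subcategory of $\textbf{JBW}_{\text{npc}}$, every normal positive contraction between JW-algebras is automatically a morphism of $\textbf{JW}_{\text{npc}}$, so the morphism part needs no separate check: setting $F:=\Pred$ and reinterpreting each object $V_A$ as living in $\textbf{JW}_{\text{npc}}$ yields a functor $F\colon \catC\rightarrow \textbf{JW}_{\text{npc}}^\opp$. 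The identification of $\Pred(A)$ with the unit interval $[0,1]_{F(A)}$ is then just the JBW-algebra correspondence already in hand, and faithfulness iff separation by predicates is inherited verbatim from Theorem~\ref{thm:JBW-CBA}, since restricting the codomain of a functor to a full subcategory leaves its action on hom-sets, and hence its injectivity there, unchanged.

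Accordingly I do not expect any real obstacle at this stage: the hard work has been front-loaded into the earlier results of the section---in particular Proposition~\ref{prop:tensor-quadratic} (that $Q_{a\otimes b}=Q_a\otimes Q_b$), Lemma~\ref{lem:tensor-symmetry}, and the universal-von-Neumann-algebra argument culminating in the proposition that $V_A$ is a JW-algebra---and the final theorem is a consolidation of those with Theorem~\ref{thm:JBW-CBA}. If anything warrants care, it is only the bookkeeping that $F$ genuinely lands in the \emph{opposite} category $\textbf{JW}_{\text{npc}}^\opp$, i.e.~that the predicate functor is contravariant and that JW-morphisms compose as expected, but this is purely formal.
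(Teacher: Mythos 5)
Your proposal is correct and matches the paper's treatment: the paper gives no separate proof, presenting the theorem as an immediate combination of Theorem~\ref{thm:JBW-CBA} with the preceding proposition that each $V_A$ is a JW-algebra, exactly as you assemble it (including the observation that $\textbf{JW}_{\text{npc}}$ is a full subcategory, so restricting the codomain costs nothing and faithfulness transfers verbatim).
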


Recall that the other option for the irreducible scalars is that they are equal to the Booleans $\{0,1\}$, which results in predicate spaces being complete Boolean algebras. So we either get deterministic classical systems, or probabilistic quantum systems in the form of (subspaces) of von Neumann algebras.

Note that none of our assumptions requires there to be a non-commutative algebra for one of the predicate spaces. The category could be entirely classical in the sense that all the underying von Neumann algebras are commutative. We could impose any of a number of additional assumptions that would require there to be truly quantum systems, such as the existence of dilations, which are explored in the context of effectus theory in~\cite[Section~3.7.1]{basthesis}.

Not all JW-algebras are allowed in our setting, but analysing which ones precisely has proven difficult. By adapting the arguments of~\cite{wetering2018reconstruction} we can show that no finite-dimensional quaternionic systems are allowed, but it is unclear how to adapt the argument for infinite-dimensional quaternionic systems. 
We also conjecture that no `true' spin-factors (those that aren't isomorphic to a matrix algebra) are allowed, which means that our JW-algebras restrict to the \emph{universally reversible} ones, for which a characterisation of the universal von Neumann algebra is known (cf.~Theorem~6.2.5 and Proposition~7.3.3 of~\cite{hanche1984jordan}).

\section{Conclusion}\label{sec:conclusion}
We have shown that an effectus with directed-complete predicate spaces and suitable additional structure embeds into a category of Boolean algebras and a category of JB-algebras. Requiring the scalars to be irreducible allows us to restrict to JBW-algebras, and imposing a tensor product restricts us further to von Neumann algebras.
This demonstrates that quantum theory, including both infinite-dimensional systems as well as mixed quantum-classical systems, can be reconstructed from abstract categorical grounds without even a priori referring to the structure of real numbers or convex sets.
While there have been other categorical approaches to reconstructing quantum theory~\cite{tull2016reconstruction,selby2018reconstructing}, they had to insert the real numbers at some point to get standard quantum theory, making our reconstruction the first to be fully categorical.

For our results we had to require that the predicates spaces were sequential effect algebras. This condition is not as natural as our other assumptions, and also not necessary in the finite-dimensional setting~\cite{wetering2018reconstruction}.
A natural question is therefore whether we could do without this assumption.
It should be noted that the assumptions in Definition~\ref{def:sequential-effectus} are not minimal, and some of the axioms of a sequential product follow `for free'. The ones that require explicit inclusion are related to the commutativity of certain assert maps. Our assumptions could be improved by finding a way to derive these conditions in a more categorically natural way.
However, in~\cite{wetering2019commutativity} these commutativity conditions are derived using the Fuglede--Putnam--Rosenblum theorem, so it is unlikely that such a categorical characterisation would be straightforward if it would imply the same results.

Another open question is how we can get get the systems to correspond to JBW-algebras without restricting the scalars to be irreducible.
Additionally it would be interesting to characterise which subcategories of JW-algebras satisfy our assumptions, and in particular which JW-algebras are not allowed by our assumptions.

\paragraph{Acknowledgments} JvdW is funded by an NWO Rubicon personal fellowship. The authors would like to thank the anonymous LiCS reviewer that found an oversight in an earlier version of our proof, which we fixed by including Proposition~\ref{prop:assert-is-derivation}.

\bibliographystyle{plainnat}
\bibliography{main}

\end{document}